\newtheorem{theorem}{Theorem}[section]
\newtheorem{definition}[theorem]{Definition}
\newtheorem{lemma}[theorem]{Lemma}
\newtheorem{myclaim}[theorem]{Claim}
\newtheorem{remark}[theorem]{Remark}
\newtheorem{corollary}[theorem]{Corollary}
\newtheorem{construction}[theorem]{Construction}
\newtheorem{fact}[theorem]{Fact}
\newtheorem{notation}[theorem]{Notation}
\Crefname{construction}{Construction}{Constructions}
\Crefname{fact}{Fact}{Facts}
\Crefname{myclaim}{Claim}{Claims}
\Crefname{notation}{Notation}{Notations}
\newcommand{\myparagraph}[1]{\vspace{.5em} \noindent \textbf{#1.}\,}
\newcommand{\resp}{resp.,\ }
\newcommand{\ie}{i.e.,\ }
\newcommand{\wrt} {with respect to\ }
\newcommand{\st}{\ \text{s.t.}\ }
\newcommand{\N}{\mathbb{N}}
\renewcommand{\bra}[1]{\langle#1\rvert}
\renewcommand{\braket}[2]{\langle #1 \mid #2 \rangle}
\renewcommand{\ket}[1]{\lvert#1\rangle}
\newcommand{\set}[1]{\{ #1 \}}
\newcommand{\bit}{\{0,1\}}
\newcommand{\cB}{{\mathcal B}}
\newcommand{\cD}{{\mathcal D}}
\newcommand{\cF}{{\mathcal F}}
\newcommand{\cG}{{\mathcal G}}
\newcommand{\cH}{{\mathcal H}}
\newcommand{\cI}{{\mathcal I}}
\newcommand{\cK}{{\mathcal K}}
\newcommand{\cO}{{\mathcal O}}
\newcommand{\cP}{{\mathcal P}}
\newcommand{\cR}{{\mathcal R}}
\newcommand{\cS}{{\mathcal S}}
\newcommand{\cU}{{\mathcal U}}
\newcommand{\bfk}{\mathbf{k}}
\newcommand{\bfz}{\mathbf{z}}
\newcommand{\sfG}{\mathsf{G}}
\newcommand{\secp}{{\lambda}}
\newcommand{\poly}{\mathsf{poly}}
\newcommand{\Exp}{\operatorname*{\mathbb{E}}}
\newcommand{\Ex}{\Exp}
\newcommand{\Enc}{\operatorname{Enc}}
\newcommand{\Dec}{\operatorname{Dec}}
\newcommand{\negl}{\mathsf{negl}}
\newcommand{\Supp}{\operatorname{Supp}}
\newcommand{\TD}{\mathsf{TD}}
\newcommand{\E}{\mathop{\mathbb{E}}}
\newcommand{\Haar}{\mathcal{H}}
\renewcommand{\ketbra}[2]{\ket{#1}\bra{#2}}
\renewcommand{\braket}[2]{\langle #1 | #2 \rangle}
\newcommand{\wt}{\widetilde}
\renewcommand{\bra}[1]{\langle#1\rvert}
\renewcommand{\braket}[2]{\langle #1 | #2 \rangle}
\renewcommand{\ket}[1]{\lvert#1\rangle}
\renewcommand{\ketbra}[2]{\ket{#1}\!\bra{#2}}
\newcommand{\proj}[1]{\ket{#1}\!\bra{#1}}
\newcommand{\reg}[1]{\mathsf{#1}}
\newcommand{\id}{\mathrm{id}}
\newcommand{\haarstates}{\Haar}
\newcommand{\haarunitaries}{\mu}
\newcommand{\Symgp}{\mathsf{Sym}}
\newcommand{\Adversary}{{\cal A}}
\newcommand{\rel}{\mathcal{R}}
\newcommand{\dist}{\mathrm{dist}}
\newcommand{\indic}{\mathds{1}}
\newcommand{\Rinj}{\cR^{\cI \text{-} \mathrm{dist}}}
\newcommand{\RIdist}{\Rinj}
\newcommand{\RDdist}{\cR^{\cD \text{-} \mathrm{dist}}}
\newcommand{\Sspru}{\sfG}
\newcommand{\Ospru}{\cS}
\newcommand{\Osprus}{\cD}
\newcommand{\isolate}{\sf isolate}
\newcommand{\source}{\sf source}
\newcommand{\target}{\sf target}
\newcommand{\pair}{\sf pair}
\newcommand{\extract}{\mathsf{extract}}
\newcommand{\hyb}{\mathbf{H}}
\renewcommand{\op}{\mathrm{op}}
\newcommand{\Dom}{\operatorname{Dom}}
\newcommand{\rcolor}[1]{{\color{red4} #1}}
\newcommand{\bcolor}[1]{{\color{blue4} #1}}
\newcommand{\gcolor}[1]{{\color{green4} #1}}
\title{Pseudorandom Unitaries in the Haar Random Oracle Model\footnote{A preliminary version of a merge of this paper and~\cite{ABGL25a} appears in the proceedings of \textit{the 45th Annual International Cryptology Conference} (\textsc{CRYPTO~2025})~\cite{CRYPTO:ABGL25}. This is Part~I of the full version.}}
\author{Prabhanjan Ananth\thanks{\texttt{prabhanjan@cs.ucsb.edu}}\\ \small{UCSB} \and John Bostanci\thanks{\texttt{johnb@cs.columbia.edu}}\\ \small{Columbia} \and Aditya Gulati\thanks{\texttt{adityagulati@ucsb.edu}}\\ \small{UCSB} \and Yao-Ting Lin\thanks{\texttt{yao-ting\_lin@ucsb.edu}}\\ \small{UCSB}}
\date{}
\begin{document}

\maketitle

\begin{abstract}
    \noindent The quantum Haar random oracle model is an idealized model where every party has access to a single Haar random unitary and its inverse.  We construct strong pseudorandom unitaries in the quantum Haar random oracle model. This strictly improves upon prior works who either only prove the existence of pseudorandom unitaries in the inverseless quantum Haar random oracle model [Ananth, Bostanci, Gulati, Lin, EUROCRYPT 2025] or prove the existence of a weaker notion (implied by strong pseudorandom unitaries) in the quantum Haar random oracle model [Hhan, Yamada, 2024]. Our results also present a viable approach for building quantum pseudorandomness from random quantum circuits and analyzing pseudorandom objects in nature.
\end{abstract}

\newpage
\tableofcontents
\newpage

\section{Introduction}
\noindent Pseudorandom unitaries~\cite{JLS18} (PRUs) are efficiently computable unitaries that satisfy the following property: any quantum polynomial time adversary cannot distinguish whether it has oracle access to a PRU or a unitary sampled from the Haar measure. They generalize the notion of $t$-unitary designs, wherein the number of oracle adversarial queries is upper bounded by $t$ although there are no restrictions on the computational power of the adversary. Both $t$-designs and pseudorandom unitaries have played a vital role in quantum information science and its interplay with other areas. Notably, they have applications to black-hole physics, quantum benchmarking, lower bounds on space complexity in learning theory and finally, in cryptography. 
\par Exploring the design of quantum pseudorandom primitives has been an ongoing active research direction. 
Kretschmer's~\cite{Kretschmer21,KQST23} work suggests one-way functions might plausibly be a stronger assumption than PRUs. This opens the door of basing pseudorandom unitaries on weaker assumptions than one-way functions. 
Despite this oracle separation, actually building PRUs from one-way functions has proven a challenging problem.  It was only last year that a series of works~\cite{LQSYZ24,AGKL,MPSY24,ABFGVZZ24}, culminating in the work of Ma and Huang~\cite{MH25}, were able to finally establish the feasibility of PRUs from classical cryptographic assumptions. In particular, the concurrent works by~\cite{MPSY24,ABFGVZZ24} designed selectively secure PRUs from one-way functions and the subsequent work by~\cite{MH25} achieved the stronger adaptive security under the same assumption. 
\par In addition to studying the relation between classical cryptography and quantum pseudo-randomness, there have been two other lines of works attempting to study the properties of pseudo-random unitaries. The first line of work~\cite{BHHP25} posits a concrete quantum assumption that gives rise to PRUs, but is plausibly weaker than one-way functions.\footnote{We note that in a later version of~\cite{BHHP25}, the claim regarding the existence of PRUs under their proposed assumption was retracted; at present, the existence of PRUs under their assumption is currently only conjectured.} However, these kinds assumptions are yet to be tested and not considered standard by the community at the moment. Another line of work studies properties of pseudorandom unitaries in idealized models. One such idealized model that has recently been garnering some interest~\cite{BFV19,CM24,ABGL24,HY24} is the quantum Haar random oracle model (QHROM). In this model, which is a quantum analogue of the classical random oracle model, all the parties have oracle access to $U,U^{\dagger}$, where $U$ is drawn from the Haar measure. This model can especially come in handy when analyzing cryptographic constructions from random circuits. Our work is geared towards understanding the feasibility of pseudorandom unitaries in the quantum Haar random oracle model.

 \paragraph{Haar Random Oracle Model: Prior Work.} The question of investigating the possibility of quantum pseudorandomness in QHROM was first initiated by Bouland, Fefferman and Vazirani~\cite{BFV19}, who proposed the construction of pseudorandom state generators\footnote{Informally speaking, pseudorandom states (PRS)~\cite{JLS18} are efficiently computable states that are computationally indistinguishable from Haar random states. Importantly, the computational indistinguishability should hold even if the adversary receives many copies of the state. The existence of PRS is implied by the existence of pseudorandom unitaries.} without proof in QHROM. Recently, two independent and concurrent works~\cite{HY24,ABGL24} made further progress and presented provably secure constructions of quantum pseudorandom primitives in QHROM. Hhan and Yamada~\cite{HY24} showed that pseudorandom function-like state generators\footnote{Pseudorandom function-like states (PRFS)~\cite{AGKL}, a generalization of pseudorandom states, allows for generation of many pseudorandom states, each indexed by a binary string, using the same key. The existence of PRFS is implied by the existence of pseudorandom unitaries.}~\cite{AQY22} exist in QHROM. Ananth, Bostanci, Gulati and, Lin~\cite{ABGL24} showed that pseudorandom unitaries existence in a weaker variant of QHROM, referred to as inverseless quantum Haar random oracle model (iQHROM). In this variant, all the parties receive oracle access to only the Haar unitary $U$ (but not its inverse). These works have left a big gap in our understanding on the existence of pseudorandom unitaries in QHROM. A priori it should not even be clear whether pseudorandom unitaries exist in QHROM (i.e., with inverses).~\cite{ABGL24} showed that in the inverseless QHROM setting, any PRU construction making one parallel query to the Haar unitary is insecure. They also provided a matching upper bound and showed that two sequential queries suffice. In contrast, the minimum number of parallel queries needed for the existence of PRU in QHROM has not been established so far.   

 \paragraph{Why study the QHROM?} Every cryptographic model should be subjected to scrutiny and the QHROM is no different. A potential criticism of the QHROM is that constructions proven secure in the QHROM often offer little to no security guarantees when these constructions are realized in the real world. It is important to note that this line of skepticism is not new and is often used to attack the classical random oracle model. While this does suggest that we need to often exercise care when using the QHROM, or its predecessor the classical random oracle model, to justify the security of cryptographic constructions, these models still offer useful insights into properties of heuristic constructions which we otherwise do not have the tools to analyze. Let us take some concrete examples. Random circuits, which are circuits composed of 1-qubit and 2-qubit Haar unitaries, are popularly used in quantum benchmarking and for quantum advantage experiments~\cite{boixo2018characterizing,bouland2019complexity,movassagh2019quantum}. While for deep enough circuits, they are commonly supposed to be indistinguishable from Haar random unitaries, it is often not clear how to reduce the security of cryptographic constructions using random circuits to concrete cryptographic assumptions. Another example is the modeling of physical processes such as black-hole dynamics.  A long line of works have posited that black-holes possess information scrambling and thermalization properties similar to Haar random unitaries, but an exact formulation of black-hole dynamics has yet to be found. These kinds of situations, where scientists suspect that objects posses random-like features, but lack a complete model, are well suited to analysis in the QHROM.
 \par Another reason to study the QHROM model is that, perhaps surprisingly, of its implications to the plain model.~\cite{ABGL24}, leveraged the result of PRU in the inverseless QHROM, to show that any PRU can be transformed into one where the key length is much shorter than the output length. To put this result in context, there have been a few works in the past that have explored the tradeoff between the output length and the key size of quantum pseudorandom primitives. Gunn, Ju, Ma and Zhandry~\cite{GJMZ23} showed that the existence of any (multi-copy) pseudorandom state generator implies the existence of a pseudorandom state generator where the output length is strictly longer than the key size, as long as the adversary receives only one copy of the state. Extending this result to achieve a transformation that preserves the number of copies is an interesting open question. Recently, Levy and Vidick~\cite{levy2024prs} achieved some limited results in this direction but fell short of resolving the question. On the other hand,~\cite{ABGL24} showed that in the context of forward-only pseudorandom unitaries, such a transformation -- that is, generically transforming a multi-query PRU into a PRU with short keys -- is indeed possible.  

  \paragraph{Our Work.} The overarching theme of our work and related works is to address the following question: {\em What are the cryptographic implications of the quantum Haar random oracle model?} This includes constructing useful cryptography, such as PRUs, in the QHROM, but also adopting insights from studying the QRHOM to get novel results in the plain model. 

In this work, we construct strong pseudo-random unitaries in the QHROM.

\begin{theorem}[PRUs in the QHROM]
     Strong pseudo-random unitaries exist in the quantum Haar random oracle model.
\end{theorem}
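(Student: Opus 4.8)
The plan is to exhibit a keyed, oracle-aided unitary family $\{\cO^{U,U^\dagger}_k\}_k$ acting on an $n$-qubit register $\reg{A}$ (with $n,|k| = \poly(\lambda)$) that makes only a constant number of calls to $U$ and $U^\dagger$. The natural starting point is the inverseless-QHROM construction of~\cite{ABGL24}: there, $\cO_k$ pads $\reg{A}$ with ancilla initialized in a key-dependent way, invokes the Haar oracle on the enlarged register, applies a few simple gates (basis permutations and phases that do not touch the oracle), and uncomputes the ancilla — the key $k$ playing the role of a ``tag'' that carves out a Haar-random-looking sub-block of $U$. To obtain a \emph{strong} PRU I would modify this so that one call to $\cO_k$ (resp.\ $\cO_k^\dagger$) makes the \emph{same} number of forward and inverse oracle queries, i.e.\ make the two directions structurally symmetric so there is no query-pattern asymmetry to exploit, while arranging that erasing all $U^\dagger$-calls recovers (something equivalent to) the~\cite{ABGL24} construction, so its forward-security analysis can be reused as a component.

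\textbf{Reformulation via path recording.} The core of the proof replaces the Haar-random $U$ by its purification as a \emph{path-recording oracle}: a Stinespring dilation that, for any adversary making $\poly(\lambda)$-many $U/U^\dagger$-queries, reproduces the Haar experiment up to negligible trace distance while recording, in an auxiliary ``relation'' register, the input/output pairs created so far — a forward query on a basis vector not yet in the relation appends a uniformly random fresh image, and an inverse query looks up and deletes pairs. Running \emph{both} the adversary's direct queries and the internal queries made inside $\cO_k,\cO_k^\dagger$ against this single path-recording oracle turns the real experiment into a tractable combinatorial process over superpositions of relations; the ideal experiment — the adversary given $(U,U^\dagger)$ together with an independent Haar $V$ and $V^\dagger$ — is described through two independent path-recording oracles. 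The theorem is thereby reduced to showing these two path-recording experiments are statistically close for $\poly(\lambda)$ queries.

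\textbf{Domain separation.} The heart of the argument is a domain-separation invariant: because $k$ is uniform and unknown to the adversary and $\cO_k$ only ever queries $U$ on basis vectors carrying the tag $k$ in the ancilla, the relation pairs created \emph{inside} $\cO_k$-calls live — with overwhelming probability over $k$ and the oracle's internal randomness — in a region of the domain disjoint from the pairs the adversary creates by querying $U$ directly and from the pairs created on other input branches of $\cO_k$. Once this holds, the shared path-recording oracle factorizes: its randomness splits into an ``outside'' part, which is exactly what the ideal adversary sees as $(U,U^\dagger)$, and an ``inside'' part, which may be resampled as a fresh Haar oracle — and the few non-oracle gates sandwiching the internal calls are precisely what converts that fresh block into a Haar-random unitary on $\reg{A}$ (this is where the~\cite{ABGL24}-style analysis, or a $\pfc$-ensemble-type argument, enters). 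The remaining work is to maintain this invariant inductively across the adversary's arbitrary adaptive queries, tracking the supports of all registers under the adversary's unitaries and bounding the probability of a tag collision or accidental domain overlap by $\poly(\lambda)/2^{|k|}=\negl(\lambda)$.

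\textbf{Main obstacle.} I expect the hardest part to be exactly the feature absent from the inverseless setting: inverse queries. In the path-recording formalism $U^\dagger$-queries are ``destructive'' — they match and remove existing relation pairs, and their behavior is delicate on inputs not (fully) in the recorded relation, which is where simulation error accumulates. An adaptive adversary can try to use a crafted query to $\cO_k^\dagger$ to probe whether a particular recorded pair was generated internally by an earlier $\cO_k$-call, testing the very domain separation we rely on; and its own direct $U^\dagger$-queries, interleaved with $\cO_k,\cO_k^\dagger$-queries, can move amplitude between the ``inside'' and ``outside'' regions. Controlling this seems to require a genuinely \emph{two-sided} invariant — one simultaneously bounding the forward-image and inverse-image supports of all registers and provably preserved both by the adversary's queries and by the constant-query gadget implementing $\cO_k$ — and I would expect establishing and propagating this two-sided invariant, together with the error accounting for destructive inverse queries, to be the technically heaviest part of the proof, and the reason a self-contained analysis is needed rather than a black-box reduction to~\cite{ABGL24}.
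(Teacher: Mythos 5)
Your proposal has the right instincts—path recording, domain separation, and the centrality of inverse queries—but it diverges from the paper's construction and proof in ways that matter, and the heart of the argument is left as an acknowledged gap.

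\textbf{The construction does not work as described.} You propose padding $\reg{A}$ with ancilla initialized to a key-dependent tag, invoking the Haar oracle, and then uncomputing the ancilla. But a Haar random $U$ entangles the ancilla with the rest of the register: $U\ket{\psi}\ket{k}$ does not factor as $\ket{\phi}\ket{k}$, so there is no reversible ``uncompute'' step, and the map you describe is not a unitary on $\reg{A}$. This is precisely why ancilla-based domain separation, which works fine for pseudorandom \emph{states} (where the ancilla can be discarded or measured), does not port directly to pseudorandom \emph{unitaries}. The paper's Construction~\ref{construction:SPRU} is $G^{\cU}_k = X^{k_3} U X^{k_2} U X^{k_1}$: ancilla-free, two forward calls to $U$ inside a forward call (and two inverse calls inside an inverse call, by taking adjoints), with keys implemented as Pauli-$X$ conjugations. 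Your recollection of the~\cite{ABGL24} construction is also off: it is simply $U X^k U$, likewise ancilla-free. What the present paper adds is conjugation by $X^{k_1}$ and $X^{k_3}$, whose role — the paper is explicit — is precisely to prevent the cancellation attack you worry about (query $U^\dagger$, then $U X^k U$, then $U^\dagger$ recovers $X^k$), by re-randomizing the domain and range so the adversary's direct $U, U^\dagger$ queries and the internal queries live on disjoint parts of the recorded relation with overwhelming probability over $k_1, k_3$.

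\textbf{The proof strategy is a genuinely different route from the paper, and the hard part is left open.} You correctly identify that the technically heavy content is a two-sided invariant controlling how inverse queries interact with the domain-separation structure, but you stop at naming it. The paper's actual mechanism is not an invariant in the style of compressed-oracle ``good database'' arguments. Instead it defines an explicit \emph{merge operator} $\Ospru$ that acts on the traced-out purification registers, mapping the ideal experiment's two independent databases $(L_1,R_1),(L_2,R_2)$ to the real experiment's single merged database $(L,R)$ together with coherently sampled keys $(k_1,k_2,k_3)$ and auxiliary midpoints $\vec z$. Showing $\Ospru$ is a near-isometry (Lemma~\ref{lem:spru:iso}) and that it approximately intertwines each oracle call, i.e.\ $\|(X^{k_3}FX^{k_1}\Ospru - \Ospru F_1)\Pi_{\le t}\|_{\op}$ and the three analogous operator norms are $\negl$ (Lemmas~\ref{lem:1st_oracle} and~\ref{lem:spru:2nd_oracle}), is the core technical work; a query-by-query progress-measure induction (Section~\ref{sec:spru:induction}) then converts these bounds into trace-distance closeness of $\hyb_3$ and $\hyb_4$. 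The combinatorial content — decomposable graphs, robust decodability, the counting lemmas on good tuples — is what makes $\Ospru$ well-defined and approximately reversible even under the deletions caused by inverse queries. None of that is present in your sketch; what you call ``the technically heaviest part'' is, in effect, the entire proof.

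In short: correct awareness of the obstacles, but a construction that fails for a concrete reason, and a proof outline that defers rather than carries out the decisive step.
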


\noindent We note that our construction of strong PRUs, presented in \Cref{sec:spru:construction}, is simple to describe, only makes two queries to the Haar random oracle, and only requires sampling $O(\log^{1+\epsilon}(n))$ bits of randomness, for any constant $\epsilon$, to get security against all adversaries making $\poly(n)$ queries to the strong PRU and Haar random oracle (and their inverses).

\subsection{Related Works}

\paragraph{Quantum Pseudorandomness.}\ 
Work on quantum pseudo-randomness began with the paper of \cite{JLS18}, which first defined pseudorandom states and unitaries.  They also presented the first constructions of pseudo-random states from one-way functions, and presented candidate constructions of pseudo-random unitaries from one-way functions without any security proof.  In the years after the paper of \cite{JLS18}, several works made progress towards building pseudo-random unitaries by considering related pseudo-random objects with security against adversaries with restricted queries~\cite{LQSYZ24,AGKL,BM24}.  

\cite{MPSY24} presented the first construction of non-adaptively secure pseudorandom unitaries (that is, secure against adversaries who only make a single parallel query to the PRU) via the so-called $\mathsf{PFC}$ ensemble, i.e. ``(random) Permutation-(random) Function-(random) Clifford''.  Simultaneously, \cite{CDXBBH24} presented an alternate construction of non-adaptively secure PRUs using random permutations.  In a breakthrough paper, \cite{MH25} proved that the $\mathsf{PFC}$ ensemble and related $\mathsf{C}^{\dagger}\mathsf{PFC}$ ensemble yielded a construction of PRUs and strong PRUs from one-way function.  The paper extended the compressed oracle technique for random functions from \cite{zhandry2019record} to the path-recording formalism for Haar random unitaries.  Since then the path-recording formalism has been used to show that the repeated $\mathsf{FHFHF}\ldots$ ensemble, originally conjectured to be a PRU in \cite{JLS18}, is secure~\cite{BHHP25}.  Using the security of the $\mathsf{PFC}$ ensemble, combined with a novel gluing lemma, \cite{schuster2025random} showed that inverseless PRUs can be formed in surprisingly low depth, assuming sub-exponential LWE.

\paragraph{Idealized Models in the Quantum World.}\ 
The first works on idealized models in the quantum world studied the quantum random oracle model (QROM), in attempt to prove the post-quantum security of a number of classical cryptographic primitives~\cite{boneh2011random,zhandry2015note,zhandry2015secure,targhi2016post, eaton2017leighton, zhandry2019record}.  In this model, parties can make superposition queries to a \emph{classical} random function.  

The quantum auxiliary state model~\cite{MNY24,Qian24}, and related common Haar random state (CHRS) model~\cite{AGL24,chen2024power} are models of computation in which all parties have access to an arbitrary polynomial many copies of a common quantum state.  This is meant to be the quantum equivalent to the common reference string model.  \cite{MNY24,Qian24} show that quantum bit commitments exist in the quantum auxiliary state model, and both \cite{AGL24,chen2024power} show that bounded copy pseudo-random states with short keys exist in the CHRS (which in turn imply the existence of quantum bit commitments).  Furthermore, \cite{AGL24} rules out quantum cryptography with classical communication in the CHRS, and \cite{chen2024power} rule out unbounded copy pseudo-random states.  More recently, these models have been used to provide oracle separations between one-way puzzles and efficiently verifiable one-way state generators and other quantum cryptographic primitives~\cite{BCN25,BMMMY25}.  While this idealized model provides interesting constructions of and black-box separations between primitives, the model can be problematic to instantiate in a realistic setting.  For example, instantiating the model in the real world might require a complicated multi-party computation to compute a shared quantum state, or a trusted third party who can distribute copies of the state.  

The quantum Haar random oracle model (QHROM) was first formally introduced in \cite{CM24}, who provided a construction of succinct commitments in the QHROM. However, \cite{CM24} was not able to analyze the security of their construction in the QHROM.  Separately, \cite{BFV19} considered the QHROM as an idealized model of black hole scrambling, and provided a construction of pseudo-random states in the QHROM.  Similar to \cite{CM24}, \cite{BFV19} present their construction without a security proof, although they sketch how a proof might proceed.  \cite{HY24} later showed that pseudo-random states and the related pseudo-random function-like state generators exist in the QHROM.  Their analysis involves heavy use of the Weingarten calculus and the approximate orthogonality of permutations.  Separately, \cite{ABGL24} consider a modification to the QHROM called the inverseless QHROM (iQHROM).  They construct forward-only PRUs in this model and provide a security proof using the path-recording formalism of \cite{MH25}.  Finally, \cite{Kretschmer21} considers another variant of the QHROM where all parties have access to an exponential number of Haar random unitaries.  They show that in this model, pseudo-random unitaries exist but one-way functions can be broken with $\mathsf{PSPACE}$ computation.  
\section{Technical Overview}

\paragraph{Ma-Huang's Path Recording Framework.} Before we recall the isometries described by~\cite{MH25}, we first set up some notation. A relation $R$ is defined as a \emph{multiset} $R = \set{(x_1,y_1),\ldots,(x_t,y_t)}$ of ordered pairs $(x_i,y_i) \in [N] \times [N]$, for some $N \in \mathbb{N}$. For any relation $R = \set{(x_1,y_1),\ldots,(x_t,y_t)}$, we say that $R$ is \emph{$\cD$-distinct} if the first coordinates of all elements are distinct, and \emph{injective} or \emph{$\cI$-distinct} if the second coordinates are distinct. For a relation $R$, we use $\Dom(R)$ to denote the \emph{set} $\Dom(R) \coloneqq \set{x \colon x \in [N], \, \exists \, y \st (x,y) \in R}$ and $\Im(R)$ to denote the \emph{set} $\Im(R) \coloneqq \set{y \colon y \in [N], \, \exists \, x \st (x,y) \in R}$. 

We define the following two operators (which are partial isometries) such that for any relations $L, R$,\footnote{For an $\cI$-distinct or $\cD$-distinct relation $L = \{(x_1,y_1), \dots, (x_t, y_t) \}$, the corresponding \emph{relation state} $\ket{L}$ is defined to be 
\[
\ket{L} := \frac{1}{\sqrt{t!}} \sum_{\pi \in \Symgp_t} \ket{x_{\pi^{-1}(1)}} \ket{y_{\pi^{-1}(1)}} \dots \ket{x_{\pi^{-1}(t)}} \ket{y_{\pi^{-1}(t)}}.
\] 
In~\cite{MH25}, relation states are defined for arbitrary relations, whereas we will not require them in this work.}
\[
V^L \colon \ket{x}_{\reg{A}}\ket{L}_{\reg{S}}\ket{R}_{\reg{T}} \mapsto \frac{1}{\sqrt{N - |\Im(L\cup R)|}} \sum_{y \notin \Im(L\cup R)} \ket{y}_{\reg{A}} \ket{L \cup \{(x,y)\}}_{\reg{S}} \ket{R}_{\reg{T}},
\]
\[
V^R \colon \ket{y}_{\reg{A}} \ket{L}_{\reg{S}}\ket{R}_{\reg{T}} \mapsto \frac{1}{\sqrt{N - |\Dom(L\cup R)|}} \sum_{x \notin \Dom(L\cup R)} \ket{x}_{\reg{A}} \ket{L}_{\reg{S}} \ket{R \cup \{(x,y)\}}_{\reg{T}}.
\]
\noindent Using $V^L$ and $V^R$, they define the following partial isometry:
\[
V = V^L \cdot (\id - V^R \cdot V^{R,\dagger}) + (\id - V^L \cdot V^{L,\dagger}) \cdot V^{R,\dagger}.
\]
They then showed that oracle access to a Haar random unitary $U$ and its inverse $U^\dagger$ can be simulated by $V$ and $V^\dagger$, respectively. In more detail, consider any oracle algorithm $\Adversary$ described by a sequence of unitaries $\left( A_1,A_2,\dots,A_{2t}\right)$ such that $\Adversary$ alternatively makes $t$ forward queries and $t$ inverse queries. The final state of $\Adversary$ with oracle access to (fixed) $U,U^{\dagger}$ is denoted by
\[
\ket{\Adversary_t^{U, U^{\dagger}}}_{\reg{AB}} \coloneqq \prod_{i=1}^{t} \left( U^{\dagger} A_{2i} U A_{2i-1} \right) \ket{0}_{\reg{A}} \ket{0}_{\reg{B}},
\]
where $\reg{A}$ is the adversary's query register, $\reg{B}$ is the adversary's auxiliary register, and each $A_i$ acts on $\reg{AB}$. They then consider the final joint state of $\Adversary$ and the purification after interacting with $V,V^{\dagger}$:
\[
\ket{\Adversary_t^{V,V^{\dagger}}}_{\reg{ABST}} \coloneqq \prod_{i=1}^{t} \left( V^{\dagger} A_{2i} V A_{2i-1} \right) \ket{0}_{\reg{A}}\ket{0}_{\reg{B}} \ket{\varnothing}_{\reg{S}} \ket{\varnothing}_{\reg{T}}.
\]
\cite{MH25} showed that $\rho_{{\sf Haar}}$ is $O(t^2/N^{1/8})$-close to $\rho_{{\sf MH}}$ in trace distance, where 
\[
\rho_{{\sf Haar}} \coloneqq \E_{U \sim \haarunitaries_n} \left[\ketbra{\Adversary_t^{U,U^{\dagger}}}{\Adversary_t^{U,U^{\dagger}}}_{\reg{AB}}\right]
\quad \text{and} \quad
\rho_{{\sf MH}} \coloneqq \Tr_{\reg{ST}} \left( \ketbra{\Adversary_t^{V,V^{\dagger}}}{\Adversary_t^{V,V^{\dagger}}}_{\reg{ABST}}\right),
\]
$\haarunitaries_n$ denotes the Haar measure over $n$-qubit unitaries and $N = 2^n$.

\subsection{Strong PRUs in the QHROM}

We begin with our construction of strong PRU in the QHROM. On input the security $\secp \in \N$ and key $k \in \bit^{3\secp}$, our construction is described as follows:
\begin{align*}
    G^U(1^\secp,k) \coloneqq X^{k_3} \cdot U_\secp \cdot X^{k_2} \cdot U_\secp \cdot X^{k_1},
\end{align*}
where $U_\secp$ is the $\secp$-qubit Haar random oracle and the key is written as $k \coloneqq k_1||k_2||k_3$, \ie the concatenation of three $\secp$-bit strings. 

To analyze our construction, we consider an adversary $\Adversary$ that has oracle access to $\cO_1, \cO_2$ and their respective inverses $\cO^{\dagger}_1, \cO^{\dagger}_2$. The adversary operates on the registers $\reg{A}\reg{B}$ and is parameterized as a sequence of unitaries $\Adversary = (A_1, A_2, \dots, A_{4t})$ each acting on $\reg{A}\reg{B}$. Without loss of generality, we assume that all oracle queries are made on the register $\reg{A}$, following a fixed sequence of interactions: first querying $\cO_1$, then $\cO_2$, followed by their inverses $\cO^{\dagger}_1$ and $\cO^{\dagger}_2$. The final state of the adversary after making $t$ queries to each oracle is given by:
\[
\ket{\Adversary_t^{\cO_1, \cO_2, \cO^{\dagger}_1, \cO^{\dagger}_2}}_{\reg{AB}} = \prod_{i=1}^t \left( \cO^{\dagger}_2 \cdot A_{4i} \cdot \cO^{\dagger}_1 \cdot A_{4i-1} \cdot \cO_2 \cdot A_{4i-2} \cdot \cO_1 \cdot A_{4i-3} \right) \ket{0}_{\reg{A}} \ket{0}_{\reg{B}}.
\]
In the ideal experiment, $\cO_1 = U_1$ and $\cO_2 = U_2$ where $U_1, U_2$ are independently sampled from the Haar distribution. In the real experiment, $\cO_1 = U$ and $\cO_2 = G^U(1^\secp,k)$ where $U$ is sampled from the Haar distribution and $k$ are sampled uniformly from $\bit^{3\secp}$.

\subsubsection{Using the Path Recording Framework}
The first step in our analysis is to replace the Haar-random unitaries with \textit{path-recording isometries}.

\paragraph{Ideal Experiment:}  
In the ideal experiment, we replace the two independent Haar-random unitaries $U_1$ and $U_2$ with two corresponding \textit{path-recording isometries} $V_1$ and $V_2$. These path-recording isometries operate on \textit{independent databases} stored in the purification registers, denoted as $(L_1, R_1)$ and $(L_2, R_2)$, respectively. 

\paragraph{Real Experiment:}  
In the real experiment, we replace the Haar-random oracle $U$ with a \textit{single path-recording isometry} $V$. Unlike the idealized case, here the purification register maintains a \textit{combined database} $(L, R)$, along with the \textit{keys} $(k_1, k_2, k_3)$ for strong PRU construction.\\

\noindent Note that at the end of the computation, the purification register is traced out. Consequently, if we can construct an isometry that acts on the purification register and that maps the output of one case close to the other, then the adversary's view remains statistically indistinguishable between the two cases.\footnote{This is because applying an isometry on the traced out registers does not change the final state.}

\subsubsection{Defining the Isometry $\Ospru$}
We define an approximate isometry $\Ospru$ such that it acts on the purification register of the adversary in the ideal experiment and maps it close to the state in the real experiment. Specifically, $\Ospru$ serves as a \textit{merge operator}, combining two independent databases into a single unified database while simultaneously simulating keys. 

\noindent Formally, $\Ospru$ acts on the auxiliary registers as follows:
\[
\Ospru: ((L_1, R_1), (L_2, R_2)) \mapsto (L, R, k_1, k_2, k_3)
\]
where:
\begin{itemize}
    \item  $(L, R)$ is the merged database containing all recorded queries
    \item $k_1, k_2, k_3$ are simulated keys compatible with the original PRU construction.
\end{itemize}

\noindent The key challenge in defining $\Ospru$ such that when it is applied to the ideal experiment, it closely mimics the real experiment, while still being an isometry (\ie reversible). We talk more about how we define $\Ospru$ such that it is reversible in~\Cref{sec:spru:def}. The rest of proof focuses on showing that $\Ospru$ maps the state in the real experiment to the state in the ideal experiment. 

\subsubsection{Progress Measure}

The main challenge in demonstrating that \(\Ospru\) approximately maps the state in the real experiment close to the one in the ideal experiment is the difficulty of obtaining a simple closed-form expression, as was possible in the inverseless setting (see~\cite{ABGL24}). Instead, we draw inspiration from the query-by-query analysis approach in the literature of the quantum random oracle model~\cite{zhandry2019record,CMS19,DFMS22}. Specifically, we do query-by-query analysis via defining the \emph{progress measure} as the adversary's distinguishing advantage after each query.

\noindent A key step in our analysis is to show that, for any state \(\ket{\psi}\) (generated using the ideal oracles), the process of first simulating the keys and then making a query to a real oracle (e.g., \(V\)) is close to making a query to a corresponding ideal oracle (e.g., \(V_1\)) first and then simulating the keys. Formally, we show that the following two states are close:
\begin{equation*}
    V \Ospru \ket{\psi}  \quad \text{and} \quad  \Ospru V_1 \ket{\psi}\,,
\end{equation*}
which we establish by proving that the operator norm bound 
\[
\|(V \Ospru - \Ospru V_1) \Pi_{\leq t}\|_{\op} = \negl(n)\,,
\]
where $\Pi_{\leq t}$ denotes the projector acting on the Hilbert spaces labeled by $\reg{S}_1\reg{T}_1\reg{S}_2\reg{T}_2$ that projects onto the space spanned by $\ket{L_1}_{\reg{S}_1}\ket{R_1}_{\reg{T}_1}\ket{L_2}_{\reg{S}_2}\ket{R_2}_{\reg{T}_2}$ such that $L_1, L_2 \in \Rinj_{\leq t}$ and $R_1, R_2 \in \RDdist_{\leq t}$.
\noindent Similarly, we extend this argument to show all the following quantities
\begin{enumerate}
    \item \( \|(V^\dagger \Ospru - \Ospru V_1^\dagger) \Pi_{\leq t}\|_{\op} \)
    \item \( \|(X^{k_3} V X^{k_2} V X^{k_1} \Ospru - \Ospru V_2) \Pi_{\leq t}\|_{\op} \)
    \item \( \|(X^{k_1} V^\dagger X^{k_2} V^\dagger X^{k_3} \Ospru - \Ospru V_2^\dagger) \Pi_{\leq t}\|_{\op} \)
\end{enumerate}
are negligible when $t(n) = \poly(n)$. As the main technical contribution of this work, the details can be found in~\Cref{sec:spru:1st_oracle} and~\Cref{sec:spru:2nd_oracle}. By establishing these bounds, we can inductively analyze the adversary’s distinguishing advantage after each query (for details, see~\Cref{sec:spru:induction}). Hence, we show that \(\Ospru\) approximately maps the state in the real experiment to the one in the ideal experiment.

\subsubsection{Simplifications for the Analysis}

To streamline our analysis, we introduce several simplifications that make computations more manageable. We outline some key steps below:

\paragraph{Leveraging Unitary Invariance of the Haar Measure:}  
In the real experiment, we have access to two oracles: \( U \) and \( X^{k_3} U X^{k_2} U X^{k_1} \), along with their inverses. These oracles are difficult to work with because the second oracle involves two calls to \( U \) and depends on all three keys, whereas the first oracle is comparatively simple. To balance the difficulty in analysis across both oracles, we use the unitary invariance of the Haar measure. Specifically, we apply the transformation 
\[
U \mapsto X^{k_3} U X^{k_1}
\]
and redefine the key \( k_2 \) as  
\[
k_2 \mapsto k_1 \oplus k_2 \oplus k_3\,.
\]
This effectively changes our oracle pair to \( X^{k_3} U X^{k_1} \) and \( U X^{k_2} U \) (along with their inverses), making the analysis easier.

\paragraph{Working with Non-Norm-Preserving Operators Instead of Isometries:}  
Explicitly maintaining normalization coefficients throughout the analysis can lead to unnecessary complications, especially when we only care about asymptotic behavior. To simplify calculations, we work with non-norm-preserving (\ie unnormalized) operators, and then establish that these operators remain close to isometries. For more details, see~\Cref{sec:spru:iso}.

\paragraph{Decoupling \( L \) and \( R \):}  
To further simplify our framework, we modify the operators so that \( L \) and \( R \) become completely independent. Instead of using the partial isometry
\[
V^L: \ket{x}_{\reg{A}} \ket{L}_{\reg{S}} \ket{R}_{\reg{T}} \mapsto \frac{1}{\sqrt{N - |{\rm Im}(L\cup R)|}} \sum_{y \notin \Im(L\cup R)} \ket{y}_{\reg{A}} \ket{L\cup \{(x,y)\}}_{\reg{S}} \ket{R}_{\reg{T}},
\]
we switch to
\[
F^L: \ket{x}_{\reg{A}} \ket{L}_{\reg{S}} \ket{R}_{\reg{T}} \mapsto \frac{1}{\sqrt{N}} \sum_{y \notin \rcolor{\Im(L)}} \ket{y}_{\reg{A}} \ket{L \cup \{(x,y)\}}_{\reg{S}} \ket{R}_{\reg{T}}.
\]
A similar transformation is applied to \( V^R \), replacing it with \( F^R \). The operator $F$ is defined analogously to $V$. Using techniques analogous to those in~\cite{MH25}, we show that these modified operators remain negligibly close to the original ones while significantly simplifying calculations (see~\Cref{sec:FvsV}).

\subsubsection{Overview of Hybrids} 
To prove that our strong pseudorandom unitary (PRU) construction is secure, we go through the following stages from the real experiment (an adversary querying the PRU and Haar random oracle) to the ideal experiment (an adversary querying two Haar random unitaries): 

\begin{itemize}
    \item \textbf{Real Experiment:} The adversary has oracle access to the Haar oracle and the PRU construction and their inverses:
    \[
    \cO_1 = U, \quad \cO_2 = X^{k_3} U X^{k_2} U X^{k_1}.
    \]
    \item \textbf{$H_1$:} By leveraging the unitary invariance of the Haar measure, we equivalently define the following oracles to balance complexity:
    \[
    \cO_1 = X^{k_3} U X^{k_1}, \quad \cO_2 = U X^{k_2} U.
    \]
    \item \textbf{$H_2$:} We replace the Haar-random unitary \( U \) with the path-recording isometry \( V \), allowing us to track queries explicitly:
    \[
    \cO_1 = X^{k_3} V X^{k_1}, \quad \cO_2 = V X^{k_2}V.
    \]
    \item \textbf{$H_3$:} We modify the path-recording isometry to ensure that the registers \( L \) and \( R \) are independent, making calculations simpler:
    \[
    \cO_1 = X^{k_3} F X^{k_1}, \quad \cO_2 = F X^{k_2} F.
    \]
    \item \textbf{$H_4$:} This is where most of our technical contributions lie. We introduce \( \Ospru \) to transition from separate databases to a merged structure while simulating keys. We use query-by-query analysis to show closeness of the $H_3$ and $H_4$:
    \[
    \cO_1 = F_1, \quad \cO_2 = F_2.
    \]
    \item \textbf{$H_5$:} We transition back from \( F_1, F_2 \) to standard path-recording isometries:
    \[
    \cO_1 = V_1, \quad \cO_2 = V_2.
    \]
    \item \textbf{Ideal Experiment:} Finally, we switch from path-recording isometries back to independent Haar-random unitaries:
    \[
    \cO_1 = U_1, \quad \cO_2 = U_2.
    \]
\end{itemize}

\section{Preliminaries} \label{sec:prelim}

We denote the security parameter by $\secp$. We assume that the reader is familiar with fundamentals of quantum computing, otherwise readers can refer to \cite{nielsen_chuang_2010}. We refer to $\negl(\cdot)$ to be a negligible function. 

\subsection{Notation}

\myparagraph{Sets and vectors} For $N \in \N$, we use the notation $[N]$ to refer to the set $\set{1, 2, \dots, N}$. For two binary strings $a, b$ of equal length, we define $a \oplus b$ as their bitwise XOR. For a set of binary strings $A \subseteq \bit^n$ and a binary string $b \in \bit^n$, we define
\[
A \oplus b \coloneqq \set{\, a \oplus b \colon a \in A \,}.
\]
For two sets $A,B \subseteq \bit^n$ of binary strings, we define
\[
A \oplus B \coloneqq \set{\, x \colon \exists a \in A, b \in B \, \st x = a \oplus b \,}.
\]
Given a set $A$ and $t \in \N$, we use the notation $A^t$ to denote the $t$-fold Cartesian product of $A$, and the notation $A^t_{\mathrm{dist}}$ to denote distinct subspace of $A^t$, i.e. the vectors in $A^t$, $\vec{y} = (y_1, \ldots, y_t)$, such that for all $i \neq j$, $y_i \neq y_j$. For any vector $\vec{x}$, we also define the set $\{\vec{x}\} := \bigcup_{i\in[t]} \{x_i\}$. We denote the $i$-th coordinate of $\vec{x}$ by $x_i$. For an ordered set $A$ and an element $x \in A$, we denote by $x \in_i A$ to mean $x$ is the $i$-th largest element in $A$. For any vector $\vec{x} \in A^t$, index $i \in [t]$, and element $y \in A$, let $\vec{x}^{\,(i \gets y)}$ denote the vector obtained by inserting $y$ into the $i$-th coordinate of $\vec{x}$ and shifting all subsequent coordinates one position to the right. For any vector $\vec{x} \in A^t$ and index $i \in [t]$, let $\vec{x}_{-i}$ denote the vector obtained by deleting its $i$-th coordinate and shifting all subsequent coordinates one position to the left.

\myparagraph{Quantum states and distances}
A register $\reg{R}$ is a named finite-dimensional Hilbert space.  If $\reg{A}$ and $\reg{B}$ are registers, then $\reg{AB}$ denotes the tensor product of the two associated Hilbert spaces.  We denote by $\mathcal{D}(\reg{R})$ the density matrices over register $\reg{R}$.  For $\rho_{\reg{AB}} \in \mathcal{D}(\reg{AB})$, we let $\Tr_\reg{B}(\rho_{\reg{AB}}) \in \mathcal{D}(\reg{A})$ denote the reduced density matrix that results from taking the partial trace over $\reg{B}$.  We denote by $\TD(\rho, \rho') = \frac{1}{2} \norm{\rho - \rho'}_1$ the trace distance between $\rho$ and $\rho'$, where $\norm{X}_1 = \Tr(\sqrt{X^{\dagger} X})$ is the trace norm. We use $\|\ket{\psi}\|_2 = \sqrt{\braket{\psi}{\psi}}$ to denote the Euclidean norm. For two pure (and possibly subnormalized) states $\ket{\psi}$ and $\ket{\phi}$, we use  $\TD\qty(\ket{\psi},\ket{\phi})$ as a shorthand for $\TD\qty(\proj{\psi},\proj{\phi})$. We also say that $A \preceq B$ if $B - A$ is a positive semi-definite matrix. We denote by $\haarstates_n$ the Haar distribution over $n$-qubit states, and $\haarunitaries_n$ the Haar measure over $n$-qubit unitaries (i.e. the unique left and right invariant measure).

\subsection{Cryptographic Primitives}

In this section, we define strong pseudo-random unitaries (strong PRU)~\cite{JLS18}, which are the quantum equivalent of a pseudorandom function, in that an adversary can not distinguish the strong PRU from a truly Haar random unitary, even with inverse access to both.

\begin{definition}[Adversaries with Forward and Inverse Access to Two Oracles]
An adversary $\Adversary$ with oracle access to two $n$-qubit unitaries $\cO_1, \cO_2$ and their inverses $\cO^{\dagger}_1, \cO^{\dagger}_2$ is defined as follows. $\Adversary$ has $n$-qubit query register $\reg{A}$ and a finite-size ancilla register $\reg{B}$, and always queries the oracles on the register $\reg{A}$. By padding with dummy queries, we assume that the adversary queries the oracles in the order $(\cO_1, \cO_2, \cO_1^{\dagger}, \cO_2^{\dagger})$. An adversary $\Adversary$ making $t$ queries to each oracle is parameterized by a sequence of unitaries $(A_1, A_2, \dots, A_{4t})$ acting on $\reg{AB}$. We denote the final state of the adversary as 
\[
\ket{\Adversary_t^{\cO_1, \cO_2, \cO^{\dagger}_1, \cO^{\dagger}_2}}_{\reg{AB}} 
:= \prod_{i=1}^{t} 
\left(\cO^{\dagger}_2 \cdot A_{4i} \cdot \cO^{\dagger}_1 \cdot A_{4i-1} \cdot \cO_2 \cdot A_{4i-2} \cdot \cO_1 \cdot A_{4i-3}\right)
\ket{0}_{\reg{A}}\ket{0}_{\reg{B}}.
\]
\end{definition}

\begin{definition}[Strong Pseudorandom Unitaries]\label{def:pru}
We say that $\set{\cG_\secp}_{\secp \in \N}$ is a strong pseudorandom unitary if, for all $\secp \in \N$, $\cG_\secp = \set{G_k}_{k \in \cK_\secp}$ is a set of $m(\secp)$-qubit unitaries where $\cK_\secp$ denotes the key space, satisfying the following:
\begin{enumerate}
    \item {\bf Efficient Computation:} There exists a $\poly(\secp)$-time quantum algorithm that implements $G_k$ for all $k \in \cK_\secp$.
    \item {\bf Indistinguishability from Haar:} For any quantum polynomial-time oracle adversary $\Adversary$, there exists a negligible function $\negl$ such that for all $\secp \in \N$,
    \begin{equation*}
        \left| \Pr_{k \gets \cK_\secp} \left[ 1 \gets \Adversary^{G_k, {G_k^\dagger}}(1^\secp) \right] - \Pr_{U \gets \mu_{m(\secp)}} \left[1 \gets \Adversary^{U, U^{\dagger}}(1^\secp) \right] \right|  
        \leq \negl(\secp)\,.
    \end{equation*} 
\end{enumerate}
In the $\mathsf{QHROM}$, both $G$ and $\Adversary$ have oracle access to an additional family of unitaries $\set{U_{\ell}}_{\ell \in \N}$ sampled independently from the Haar measure on $\ell$ qubits, and their inverses.
\end{definition}

\subsection{Useful Lemmas}
Here we present some standard lemmas.

\begin{lemma}   \label{lem:op_norm}
For any operator $A$ and vector $\ket{\psi}$, $\norm{A \ket{\psi}}_2 \leq \norm{A}_{\op} \norm{\ket{\psi}}_2$.
\end{lemma}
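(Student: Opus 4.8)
This is essentially the defining property of the operator norm, so the proof is a one-line unwinding of the definition, with a trivial edge case split off. Recall that for an operator $A$ acting on a Hilbert space, the operator norm is defined by
\[
\norm{A}_{\op} \coloneqq \sup_{\ket{\phi} \neq 0} \frac{\norm{A\ket{\phi}}_2}{\norm{\ket{\phi}}_2}.
\]
First I would dispose of the degenerate case: if $\ket{\psi} = 0$ (the zero vector), then $A\ket{\psi} = 0$ as well, so both sides of the claimed inequality equal $0$ and the bound holds (with equality).

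For the main case $\ket{\psi} \neq 0$, the vector $\ket{\psi}$ is one of the vectors over which the supremum defining $\norm{A}_{\op}$ ranges, so
\[
\frac{\norm{A\ket{\psi}}_2}{\norm{\ket{\psi}}_2} \;\leq\; \sup_{\ket{\phi}\neq 0}\frac{\norm{A\ket{\phi}}_2}{\norm{\ket{\phi}}_2} \;=\; \norm{A}_{\op}.
\]
Multiplying both sides by $\norm{\ket{\psi}}_2 > 0$ yields $\norm{A\ket{\psi}}_2 \leq \norm{A}_{\op}\norm{\ket{\psi}}_2$, as desired. (If instead one takes as the definition the equivalent form $\norm{A}_{\op} = \sup_{\norm{\ket{\phi}}_2 = 1}\norm{A\ket{\phi}}_2$, the same argument applies after normalizing $\ket{\psi}$ to the unit vector $\ket{\psi}/\norm{\ket{\psi}}_2$ and using homogeneity of $\norm{\cdot}_2$ and linearity of $A$.)

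There is no real obstacle here; the only thing to be mildly careful about is which of the several standard equivalent definitions of $\norm{A}_{\op}$ the paper intends, and the handling of the $\ket{\psi}=0$ case so that the division is well-defined. Since $A$ may be a non-norm-preserving (unnormalized) operator elsewhere in the paper, it is worth noting that the argument uses nothing beyond the definition of the operator norm and the positive homogeneity of the Euclidean norm, so it applies verbatim to arbitrary bounded linear operators.
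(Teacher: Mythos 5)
Your proof is correct and is exactly the standard one-line unwinding of the definition of the operator norm; the paper states this lemma without proof (it is presented as a standard fact), so there is nothing to diverge from, and your handling of the $\ket{\psi}=0$ case and the normalized/unnormalized forms of the definition is fine.
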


\begin{lemma}   \label{lem:op_norm_orthogonal}
Let $A$ be an operator and let $\cB$ be an orthonormal basis of the domain of $A$. 
If $A\ket{i}$ is orthogonal to $A\ket{j}$ for all distinct $\ket{i}, \ket{j} \in \cB$, 
then
\[
  \|A\|_{\op} = \max_{\ket{i} \in \cB} \|A \ket{i}\|_2.
\]
\end{lemma}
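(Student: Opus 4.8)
The plan is to prove the two inequalities $\norm{A}_{\op} \ge M$ and $\norm{A}_{\op} \le M$ separately, where I write $M := \max_{\ket{i}\in\cB}\norm{A\ket{i}}_2$.

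For the lower bound, I would note that each $\ket{i}\in\cB$ is a unit vector, so \Cref{lem:op_norm} gives $\norm{A\ket{i}}_2 \le \norm{A}_{\op}\norm{\ket{i}}_2 = \norm{A}_{\op}$ for every $\ket{i}\in\cB$; taking the maximum over $\cB$ yields $M \le \norm{A}_{\op}$. For the upper bound, take an arbitrary unit vector $\ket{\psi}$ in the domain of $A$ and expand it in the orthonormal basis as $\ket{\psi} = \sum_{\ket{i}\in\cB} c_i \ket{i}$ with $\sum_{\ket{i}} \abs{c_i}^2 = 1$. By linearity, $A\ket{\psi} = \sum_{\ket{i}} c_i A\ket{i}$, and since the family $\set{A\ket{i}}_{\ket{i}\in\cB}$ is pairwise orthogonal by hypothesis, the Pythagorean identity gives $\norm{A\ket{\psi}}_2^2 = \sum_{\ket{i}} \abs{c_i}^2 \norm{A\ket{i}}_2^2 \le M^2 \sum_{\ket{i}} \abs{c_i}^2 = M^2$. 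Taking square roots and then the supremum over all unit vectors $\ket{\psi}$ gives $\norm{A}_{\op} \le M$, and combining the two bounds yields $\norm{A}_{\op} = M$.

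The argument is essentially routine, so there is no real obstacle; the only point deserving a word of care is the interchange of $A$ with the (a priori infinite) sum $\sum_i c_i\ket{i}$ together with the use of the Pythagorean identity for an infinite orthogonal family. This is justified because $\sum_i \abs{c_i}^2 = 1$ makes the partial sums Cauchy and $A$ is bounded whenever $M < \infty$ (the statement being vacuous otherwise), so $A$ commutes with the limit and the norm identity follows by continuity of $\norm{\cdot}_2$. In all of our applications the relevant Hilbert space is in fact finite-dimensional, so this subtlety does not even arise.
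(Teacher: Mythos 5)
Your proof is correct and uses the same core argument as the paper: expand $\ket{\psi}$ in the orthonormal basis, invoke the pairwise orthogonality of $\{A\ket{i}\}$ to apply the Pythagorean identity, and bound by $M^2\sum_i|c_i|^2$. The only difference is that you also spell out the easy lower bound $M\le\|A\|_{\op}$ and the infinite-dimensional convergence caveat, both of which the paper leaves implicit.
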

\begin{proof}
    For any normalized $\ket{\psi} = \sum_{\ket{i} \in \cB} \alpha_{i} \ket{i}$, we have
    \[
        \|A \ket{\psi}\|^2_2 
        = \big\|\sum_{\ket{i} \in \cB} \alpha_{i} \cdot A \ket{i}\big\|^2_2
        = \sum_{\ket{i} \in \cB} |\alpha_{i}|^2 \cdot \|A \ket{i}\|^2_2
        \leq \max_{\ket{i} \in \cB} \|A \ket{i}\|^2_2.  \qedhere
    \]
\end{proof}

\subsection{Path-Recording Framework} \label{sec:path_recording}

We recall the path-recording framework. The following definitions are taken from~\cite{MH25} with modest changes for our purposes. 

\myparagraph{Relations} Relations are an important part of the path recording framework, here we define relations between sets, as well as what it means to be injective and to take the inverse of a relation. A relation $R$ is defined as a \emph{multiset} $R = \set{(x_1,y_1), \dots, (x_t,y_t)}$ of ordered pairs $(x_i,y_i) \in [N] \times [N]$, for some $N \in \mathbb{N}$. For any relation $R = \set{(x_1,y_1),\ldots,(x_t,y_t)}$, we say that $R$ is \emph{$\cD$-distinct} if the first coordinates of all elements are distinct, and \emph{injective} or \emph{$\cI$-distinct} if the second coordinates are distinct. For a relation $R$, we use $\Dom(R)$ to denote the \emph{set} $\Dom(R) \coloneqq \set{x \colon x \in [N], \, \exists \, y \st (x,y) \in R}$ and $\Im(R)$ to denote the \emph{set} $\Im(R) \coloneqq \set{y \colon y \in [N], \, \exists \, x \st (x,y) \in R}$. For any $t \ge 0$, let $\rel_t$ denote the set of all relations of size~$t$. Let $\rel \coloneqq \bigcup_{i=0}^\infty \rel_t$. The size of a relation refers to the number of ordered pairs in the relation, including multiplicities. We denote this by $|R|$, as the size corresponds to the cardinality of $R$ viewed as a multiset. Let $\RIdist_t$ be the set of all $\cI$-distinct relations of size~$t$. Let $\RDdist_t$ be the set of all $\cD$-distinct relations of size~$t$. Let $\RIdist \coloneqq \bigcup_{i=0}^\infty \RIdist_t$ and $\RDdist \coloneqq \bigcup_{i=0}^\infty \RDdist_t$. Let $\RIdist_{\le t} \coloneqq \bigcup_{i=0}^t \RIdist_t$ and $\RDdist_{\le t} \coloneqq \bigcup_{i=0}^t \RDdist_t$.

\myparagraph{Variable-length registers}
For every integer $t \ge 0$, let $\reg{S}^{(t)}$ be a register associated with the Hilbert space
\[
\cH_\reg{S}^{(t)} \coloneqq \bigl( \mathbb{C}^N \otimes \mathbb{C}^N \bigr)^{\otimes t} \, .
\]
Let $\reg{S}$ be a register corresponding to the infinite-dimensional Hilbert space
\[
\cH_\reg{S} \coloneqq \bigoplus_{t=0}^\infty \cH_\reg{S}^{(t)}
= \bigoplus_{t=0}^\infty \bigl( \mathbb{C}^N \otimes \mathbb{C}^N \bigr)^{\otimes t}.
\]
When $t=0$, the space $\bigl( \mathbb{C}^N \otimes \mathbb{C}^N \bigr)^{\otimes 0} = \mathbb{C}$ is a one-dimensional Hilbert space. Thus, $\cH_\reg{S}^{(t)}$ is spanned by the states
\[
\ket{x_1, y_1, \dots, x_t, y_t} \qquad \text{where } x_i, y_i \in [N] .
\]
Note that the relation states $\ket{R}$ for $R \in \cR_t$ span the symmetric subspace of
$\cH_\reg{S}^{(t)}$.
We will sometimes divide up the register $\reg{S}^{(t)}$ as
\[
  \reg{S}^{(t)} := \bigl( \reg{S}^{(t)}_X, \reg{S}^{(t)}_Y \bigr),
\]
where $\reg{S}^{(t)}_X$ refers to the registers containing $\ket{x_1, \dots, x_t}$ and $\reg{S}^{(t)}_Y$ refers to the registers containing $\ket{y_1, \dots, y_t}$. We denote $\reg{S}^{(t)}_{X,i}$ as the register containing $\ket{x_i}$ and $\reg{S}^{(t)}_{Y,i}$ as the register containing $\ket{y_i}$.

Following our convention for defining the length/size of a relation $R$, we say that a state $\ket{x_1, y_1, \dots, x_t, y_t}$ has length/size $t$. Two states of different lengths are orthogonal by definition, since $\cH_\reg{S}$ is a direct sum $\bigoplus_{t=0}^\infty \cH_\reg{S}^{(t)}$.

\begin{notation}
For any $L \in \RIdist \cup \RDdist$, define the relation state
\[
\ket{L} := \frac{1}{\sqrt{t!}} \sum_{\pi \in \Symgp_t} \ket{x_{\pi^{-1}(1)}} \ket{y_{\pi^{-1}(1)}} \dots \ket{x_{\pi^{-1}(t)}} \ket{y_{\pi^{-1}(t)}},
\] 
where $t \coloneqq |L|$.\footnote{In~\cite{MH25}, relation states are defined for arbitrary relations, whereas we will not require them in this work.} For any integer $t \ge 0$, let $\Pi_{\le t}$ denote the projector\footnote{We note that our definition of $\Pi_{\le t}$ differs from that in~\cite{MH25}.}
\[
\bigoplus_{ \substack{
    L \in \RIdist, R \in \RDdist \colon \\
    |L|+|R| \le t
}   }
\proj{L}_\reg{S} \otimes \proj{R}_\reg{T} \, .
\]
\end{notation}

\begin{definition}[Path-Recording Oracle,~{\cite[Definitions~25 and~26]{MH25}}] \label{def:path_recording}
Define the following two operators (which are partial isometries). For any $x \in [N]$ and relations $L,R \in \rel$ such that $|L| + |R| < N$,
\begin{align*}
    & V^L \colon \ket{x}_{\reg{A}}\ket{L}_{\reg{S}}\ket{R}_{\reg{T}} \mapsto \frac{1}{\sqrt{N - |\Im(L\cup R)|}} \sum_{y \notin \Im(L\cup R)} \ket{y}_{\reg{A}} \ket{L \cup \{(x,y)\}}_{\reg{S}} \ket{R}_{\reg{T}}.
\end{align*}
For any $y \in [N]$ and relations $L,R \in \rel$ such that $|L| + |R| < N$,
\begin{align*}
    & V^R \colon \ket{y}_{\reg{A}}\ket{L}_{\reg{S}}\ket{R}_{\reg{T}} \mapsto \frac{1}{\sqrt{N - |\Dom(L\cup R)|}} \sum_{x \notin \Dom(L\cup R)} \ket{x}_{\reg{A}} \ket{L}_{\reg{S}} \ket{R \cup \{(x,y)\}}_{\reg{T}}.
\end{align*}
Define the following operator (which is a partial isometry).
\begin{align*}
    V \coloneqq V^L \cdot (\id - V^R \cdot V^{R,\dagger}) + (\id - V^L \cdot V^{L,\dagger}) \cdot V^{R,\dagger}.
\end{align*}
\end{definition}

\begin{theorem}[{\cite[Theorem~8]{MH25}}]
\label{thm:MH:path}
For any integer $0 \leq t < N$ and adversary $\Adversary = \left( A_1, \dots, A_{2t} \right)$, 
\[
\TD \qty( \Ex_{U \sim \haarunitaries_n}\ketbra{\Adversary_t^{U,U^{\dagger}}}{\Adversary_t^{U,U^{\dagger}}}, \Tr_{\reg{ST}} \left( \ketbra{\Adversary_t^{V,V^{\dagger}}}{\Adversary_t^{V,V^{\dagger}}}\right) ) 
\leq O(t^2/N^{1/8}),
\]
where 
\[
\ket{\Adversary_t^{U,U^{\dagger}}} \coloneqq \prod_{i=1}^{t} \left( U^{\dagger} \cdot A_{2i} \cdot U \cdot A_{2i-1} \right) \ket{0}_{\reg{A}} \ket{0}_{\reg{B}}, \quad \text{and}
\]
\[\ket{\Adversary_t^{V,V^{\dagger}}} \coloneqq \prod_{i=1}^{t} \left( V^{\dagger} \cdot A_{2i} \cdot V \cdot A_{2i-1} \right) \ket{0}_{\reg{A}} \ket{0}_{\reg{B}} \ket{\varnothing}_{\reg{S}} \ket{\varnothing}_{\reg{T}}.
\]
\end{theorem}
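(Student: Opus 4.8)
The plan is to adapt the compressed-oracle technique of \cite{zhandry2019record} from random functions to Haar-random unitaries, and to compare the two states in \Cref{thm:MH:path} one query at a time. Fix an adversary $\Adversary = (A_1,\dots,A_{2t})$ that interleaves $t$ forward and $t$ inverse queries. The map $\ket 0_{\reg{AB}} \mapsto \Ex_{U\sim\haarunitaries_n}\ketbra{\Adversary_t^{U,U^\dagger}}{\Adversary_t^{U,U^\dagger}}$ is a fixed CPTP map, hence has a Stinespring dilation; the theorem asserts that one explicit dilation --- apply $V$ on forward queries and $V^\dagger$ on inverse queries to a fresh purifying system $\reg{ST}$ in the state $\ket\varnothing\ket\varnothing$ --- reproduces this reduced state up to trace distance $O(t^2/N^{1/8})$. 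I would therefore set up a hybrid sequence $H_0,\dots,H_{2t}$, where $H_j$ uses the path-recording oracles for the first $j$ calls and the true Haar oracle for the remaining $2t-j$ calls, so that $H_0$ is (a purification of) the Haar experiment and $H_{2t}$ is the path-recording experiment.

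The heart of the argument is a per-step estimate bounding $\TD(H_{j-1},H_j)$. Since the operations after call $j$ are common to both hybrids and do not increase trace distance, it suffices to bound, for the state $\ket{\phi_j}$ reached just before call $j$, the quantity $\norm{(\mathsf{O}^{\mathrm{Haar}}_j - \mathsf{O}^{V}_j)\ket{\phi_j}}_2$, where $\mathsf{O}^{\mathrm{Haar}}_j$ is the purified true query and $\mathsf{O}^{V}_j \in \{V, V^\dagger\}$. For this I would prove two lemmas. First, a \emph{closeness lemma}: restricted to the ``nice subspace'' of \Cref{def:path_recording} (relations with $\cI$-distinct $L \in \RIdist$, $\cD$-distinct $R \in \RDdist$, and $|L|+|R|\le 2t$), the purified true query and the path-recording query agree up to operator norm $O(t/N^{c})$ for an absolute constant $c$; the honest computation here is a Weingarten / Schur--Weyl expansion of $\Ex_U[U^{\otimes a}\otimes\bar U^{\otimes b}]$ in which the ``identity-pattern'' permutation terms are \emph{exactly} the relation states produced by $V^L$ and $V^R$ (cf.\ the one-sided computation of \cite{MPSY24}), while the remaining pairings, which correspond to collisions among the at most $2t$ recorded values, contribute only a polynomially small correction. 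Second, a \emph{confinement lemma}: running the hybrid from $\ket\varnothing\ket\varnothing$, the state $\ket{\phi_j}$ lies inside this nice subspace up to squared-amplitude $O(t^2/N)$ --- a birthday-type bound saying the freshly sampled images and preimages avoid collisions and never saturate $[N]$.

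Combining the two lemmas, $\TD(H_{j-1},H_j) = O(t/N^{c}) + O(t/\sqrt N)$, and summing over the $2t$ calls gives the claimed bound, with the precise exponent $1/8$ emerging from the chain of square-root losses (trace distance versus Euclidean distance, and the tail estimates in the Weingarten expansion); I would not try to optimize it. I expect the main obstacle to be the two-sided bookkeeping that is invisible in the one-sided setting: because a single Haar unitary answers both forward and inverse queries, the forward-record $L$ and the inverse-record $R$ are correlated --- a forward query may land on an input already fixed by $R$, and vice versa --- and the defining combination $V = V^L\cdot(\id - V^R V^{R,\dagger}) + (\id - V^L V^{L,\dagger})\cdot V^{R,\dagger}$ is engineered precisely to enforce this consistency (``if the queried value is already forced by the opposite-direction record, reuse it; otherwise record it fresh''). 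Verifying that this engineered operator, and not merely its one-sided pieces, matches the true purified oracle on the nice subspace is where the real work of \cite{MH25} lies, and it is carried out there by a careful term-by-term analysis of the overlaps $\bra{L'}\bra{R'}\,V\,\ket{L}\ket{R}$ together with \Cref{lem:op_norm_orthogonal} to pass from those overlaps to the operator-norm bound.
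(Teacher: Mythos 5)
The paper does not prove this statement; it imports it verbatim as \cite[Theorem~8]{MH25} (see the bracketed citation in the theorem header, and the one-line remark in \Cref{claim:H1H2} that the bound ``immediately follows'' from it). So there is no in-paper proof to compare against, and what you have written is a reconstruction of the argument in \cite{MH25}.

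As a reconstruction it has the right overall shape --- a closeness-on-a-nice-subspace lemma plus a confinement/birthday lemma is exactly the compressed-oracle paradigm that \cite{MH25} extends --- but there are two places where the sketch, as written, would not go through without substantially more work. First, your per-call hybrid $H_0,\dots,H_{2t}$ presupposes a \emph{per-query} exact Haar purification $\mathsf O^{\mathrm{Haar}}_j$ acting on the same $\reg{ST}$ registers that $V$ acts on, so that replacing call $j$ is a local operation. But the Weingarten weights that appear in $\Ex_U[U^{\otimes a}\otimes \bar U^{\otimes b}]$ do not factor across queries, so the exact purified oracle is not the obvious analogue of Zhandry's $\mathsf{CStO}$; one must first \emph{define} an operator on relation states whose $t$-fold composition reproduces the Haar moments, and this is a genuine construction, not a bookkeeping step. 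You gesture at this with ``the honest computation is a Weingarten / Schur--Weyl expansion,'' but the hybrid is not well-posed until that operator is exhibited. Second, and more fundamentally: you correctly flag that the two-sidedness is ``where the real work of \cite{MH25} lies,'' but your plan does not say how to do that work. The specific combination $V = V^L(\id - V^R V^{R,\dagger}) + (\id - V^L V^{L,\dagger})V^{R,\dagger}$ does not obviously commute with the exact purified oracle on the nice subspace --- the danger is that a forward query lands on a value already forced by the inverse record $R$, in which case the true oracle must deterministically ``replay'' it while $V^L$ would try to sample a fresh one. \cite{MH25} handle this by an additional reduction step (a random twirl of $U$ to a generic position, made possible by left/right invariance of the Haar measure) before the relation-state comparison; the loss from that reduction is exactly where the anomalous exponent $N^{1/8}$ --- which is \emph{weaker} than the $N^{1/2}$ a clean birthday-plus-Weingarten argument would give, and which you leave unexplained --- comes from. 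Without that reduction, the ``closeness lemma'' you propose is not merely hard to prove, it is not true in the required operator-norm form on all of the nice subspace, because the replay/fresh case distinction on the true oracle is not mirrored by $V$ unless the state has already been generically rotated.

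Finally, two smaller points. The confinement lemma as stated (state lies in the nice subspace up to squared amplitude $O(t^2/N)$) is the right statement for the forward-only case, but for the strong case ``nice'' must additionally encode that $L$ and $R$ do not cross-collide, and one has to check that the engineered $V$ preserves this. And \Cref{lem:op_norm_orthogonal}, which you invoke to pass from overlaps to operator norms, only applies when the images of the basis vectors under the difference operator are mutually orthogonal; this needs to be verified and is in general false for $V - W$ on the full nice subspace, so the actual argument in \cite{MH25} is organized differently.
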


We will work with the following variants of path-recording oracles throughout this work.
\begin{definition}[Operators $F^L$, $F^R$, and $F$] \label{def:FL_FR_F}
For any $x \in [N]$, $L \in \RIdist$ and $R \in \RDdist$ such that $|L| + |R| < N$,
\begin{align}
F^L \colon \ket{x}_{\reg{A}} \ket{L}_{\reg{S}} \ket{R}_{\reg{T}} 
\mapsto \frac{1}{\sqrt{N}} \sum_{y \notin \Im(L)} \ket{y}_{\reg{A}} \ket{L \cup \set{(x,y)}}_{\reg{S}} \ket{R}_{\reg{T}}.
\label{eq:def:F_L}
\end{align}
For any $y \in [N]$, $L \in \RIdist$ and $R \in \RDdist$ such that $|L| + |R| < N$,
\begin{align}
F^R \colon \ket{y}_{\reg{A}} \ket{L}_{\reg{S}} \ket{R}_{\reg{T}} 
\mapsto \frac{1}{\sqrt{N}} \sum_{x \notin \Dom(R)} \ket{x}_{\reg{A}} \ket{L}_{\reg{S}} \ket{R \cup \set{(x,y)}}_{\reg{T}}.
\label{eq:def:F_R}
\end{align}
Define the operator
\begin{align}
F \coloneqq F^L \cdot (\id - F^R \cdot F^{R,\dagger}) + (\id - F^L \cdot F^{L,\dagger}) \cdot F^{R,\dagger}.
\label{eq:def:F}
\end{align}
\end{definition}

\noindent When $N = 2^\secp$ and $t = \poly(\secp)$, we show that $F$ is negligibly close to $V$ in operator norm. The formal statements and their proofs can be found in~\Cref{sec:FvsV}. Notice that $F^L$ and $F^R$ are \emph{not} partial isometries. In fact, they are contractions; that is, the operator norm of $F^L, F^R, F^{L,\dagger}, F^{R,\dagger}$ are all bounded by $1$. Nevertheless, they preserve orthogonality between the standard basis vectors of the domain. Formally, we have the following lemma.

\begin{fact}   \label{fact:image_FL}
For any distinct triples $(x,L,R) \neq (x',L',R')$, the states $F^L \ket{x}_{\reg{A}} \ket{L}_{\reg{S}} \ket{R}_{\reg{T}}$ and $F^L \ket{x'}_{\reg{A}} \ket{L'}_{\reg{S}} \ket{R'}_{\reg{T}}$ are orthogonal. Therefore, the set of subnormalized vectors
\[
\set{F^L \ket{x}_{\reg{A}} \ket{L}_{\reg{S}} \ket{R}_{\reg{T}}}_{(x,L,R)}
\]
form an orthogonal basis for the image of $F^L$ where $(x,L,R)$ ranges over $x \in [N], L \in \RIdist, R \in \RDdist$ such that $|L|+|R| < N$. Similar conditions hold for $F^R$ as well.
\end{fact}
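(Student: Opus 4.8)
The plan is to show that $F^L$ acts injectively on standard basis vectors $\ket{x}_{\reg{A}}\ket{L}_{\reg{S}}\ket{R}_{\reg{T}}$ by exhibiting, for each such vector, a "witness coordinate" in the output that uniquely determines the triple $(x,L,R)$. Concretely, I would first expand $F^L\ket{x}\ket{L}\ket{R}$ using \Cref{eq:def:F_L} and note that every term in the superposition has the form $\ket{y}_{\reg{A}}\ket{L\cup\{(x,y)\}}_{\reg{S}}\ket{R}_{\reg{T}}$ with $y\notin\Im(L)$. Since states of different lengths are orthogonal in $\cH_\reg{S}$, we may assume $|L|=|L'|$ (and then $|L\cup\{(x,y)\}|=|L'\cup\{(x',y')\}|$) and $|R|=|R'|$, and that $R=R'$, for otherwise orthogonality is immediate. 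So the whole question reduces to: if $(x,L)\neq(x',L')$, are the superpositions $\sum_{y\notin\Im(L)}\ket{y}_{\reg{A}}\ket{L\cup\{(x,y)\}}_{\reg{S}}$ and $\sum_{y'\notin\Im(L')}\ket{y'}_{\reg{A}}\ket{L'\cup\{(x',y')\}}_{\reg{S}}$ orthogonal?

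To see this, I would compute the inner product. A cross term between $(y,L\cup\{(x,y)\})$ and $(y',L'\cup\{(x',y')\})$ is nonzero only if $y=y'$ (matching on register $\reg{A}$) and $L\cup\{(x,y)\}=L'\cup\{(x',y)\}$ \emph{as $\cI$-distinct relations}, i.e. the relation states $\ket{L\cup\{(x,y)\}}$ and $\ket{L'\cup\{(x',y)\}}$ coincide — which, since these are $\cI$-distinct, happens iff the underlying multisets are equal. Here the key point is that both $L$ and $L\cup\{(x,y)\}$ are $\cI$-distinct: because $y\notin\Im(L)$, the pair $(x,y)$ is the \emph{unique} pair in $L\cup\{(x,y)\}$ whose second coordinate is $y$. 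Hence from the multiset $L\cup\{(x,y)\}$ together with the value $y$ we can recover $(x,y)$ (as the unique pair with that image) and therefore recover $L = (L\cup\{(x,y)\})\setminus\{(x,y)\}$ and $x$. So if the multisets $L\cup\{(x,y)\}$ and $L'\cup\{(x',y)\}$ are equal, then $(x,L)=(x',L')$, contradicting our assumption. Therefore every cross term vanishes and the two superpositions are orthogonal, establishing orthogonality of $F^L\ket{x}\ket{L}\ket{R}$ and $F^L\ket{x'}\ket{L'}\ket{R'}$ for distinct triples. Since these vectors are nonzero (the sum over $y\notin\Im(L)$ is nonempty as $|L|+|R|<N$ forces $|\Im(L)|\le|L|<N$) and clearly span $\Image(F^L)$, they form an orthogonal basis.

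The argument for $F^R$ is entirely symmetric, swapping the roles of first/second coordinates, $\cI$-distinct/$\cD$-distinct, $\Im/\Dom$, and $L/R$: here $x\notin\Dom(R)$ guarantees $(x,y)$ is the unique pair in $R\cup\{(x,y)\}$ with first coordinate $x$, so the triple $(y,L,R)$ is recoverable from any output term. I do not expect a genuine obstacle here; the only thing requiring care is being explicit that $\cI$-distinctness of both $L$ and $L\cup\{(x,y)\}$ is exactly what makes the "uniquely recover the new pair from its image" step valid, and that different-length states are automatically orthogonal so we never need to compare relations of different sizes.
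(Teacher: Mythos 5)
Your proof is correct and is essentially the unique natural argument here; the paper states this as a Fact without proof (and the formula it gives for $F^{L,\dagger}$ in Equation~\eqref{eq:FL_dagger} already encodes exactly the recovery step you identify: since $L$ is $\cI$-distinct, the pair $(x,y)$ is the unique element of $L$ with second coordinate $y$, so the triple $(x,L,R)$ is read off from any basis component $\ket{y}\ket{L\cup\{(x,y)\}}\ket{R}$ of the output). You also correctly isolate the two ingredients that make the argument work: orthogonality of relation states of different length, and the fact that for $\cI$-distinct relations the relation states are pairwise orthogonal unless the underlying multisets coincide. No gaps.
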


\noindent Their adjoint operators $F^{L,\dagger}$ and $F^{R,\dagger}$ acts as follow: \\

\noindent For any $y \in [N], L \in \Rinj, R \in  \RDdist$,
\begin{equation}    \label{eq:FL_dagger}
F^{L,\dagger} \cdot \ket{y}_{\reg{A}} \ket{L}_{\reg{S}} \ket{R}_{\reg{T}} 
=
\begin{cases}
\frac{1}{\sqrt{N}} \ket{x}_{\reg{A}} \ket{L \setminus \set{(x,y)}}_{\reg{S}} \ket{R}_{\reg{T}} & \text{if } \exists x \in [N] \st (x,y) \in L \\
0 & \text{otherwise}.
\end{cases}    
\end{equation}
For any $x \in [N], L \in \Rinj, R \in  \RDdist$,
\begin{equation}      \label{eq:FR_dagger}
F^{R,\dagger} \cdot \ket{x}_{\reg{A}} \ket{L}_{\reg{S}} \ket{R}_{\reg{T}} 
=
\begin{cases}
\frac{1}{\sqrt{N}} \ket{y}_{\reg{A}} \ket{L}_{\reg{S}} \ket{R  \setminus \set{(x,y)}}_{\reg{T}} & \text{if } \exists y \in [N] \st (x,y) \in R, \\
0 & \text{otherwise}.
\end{cases}    
\end{equation}

\noindent Let $T$ be a partial isometry. It is well known that the operator $T^\dagger T$ is the orthogonal projection onto the domain of $T$. The domains of $V^L$ and $V^R$ are given by the span of all relation states. Although $F^L$ and $F^R$ are not partial isometries, they satisfy the following properties.

\begin{lemma}   \label{lem:domain_FL_FR}
For any integer $t \ge 0$,
\begin{align*}
    \|(F^{L,\dagger} F^L - \id) \Pi_{\le t}\| \le t/N
    \quad \text{and} \quad
    \|(F^{R,\dagger} F^R - \id) \Pi_{\le t}\| \le t/N.
\end{align*}
\end{lemma}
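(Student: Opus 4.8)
The plan is to diagonalise $F^{L,\dagger}F^L$ directly on the relation-state basis, read off its eigenvalues, and then invoke \Cref{lem:op_norm_orthogonal}. First I would fix a basis vector $\ket{x}_\reg{A}\ket{L}_\reg{S}\ket{R}_\reg{T}$ with $L\in\RIdist$, $R\in\RDdist$ and $|L|+|R|<N$, and compute $F^{L,\dagger}F^L$ applied to it. By \Cref{def:FL_FR_F}, $F^L\ket{x}_\reg{A}\ket{L}_\reg{S}\ket{R}_\reg{T}=\frac{1}{\sqrt N}\sum_{y\notin\Im(L)}\ket{y}_\reg{A}\ket{L\cup\{(x,y)\}}_\reg{S}\ket{R}_\reg{T}$. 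For each such $y$, the relation $L\cup\{(x,y)\}$ is again $\cI$-distinct (since $y\notin\Im(L)$) and contains a \emph{unique} pair with second coordinate $y$, namely $(x,y)$; hence formula~\eqref{eq:FL_dagger} for $F^{L,\dagger}$ sends $\ket{y}\ket{L\cup\{(x,y)\}}\ket{R}$ back to $\frac{1}{\sqrt N}\ket{x}\ket{L}\ket{R}$ with no cross terms. Summing over the $N-|\Im(L)|$ admissible values of $y$, and using $|\Im(L)|=|L|$ because $L$ is $\cI$-distinct, gives $F^{L,\dagger}F^L\ket{x}\ket{L}\ket{R}=\tfrac{N-|L|}{N}\,\ket{x}\ket{L}\ket{R}$, so $(F^{L,\dagger}F^L-\id)\ket{x}\ket{L}\ket{R}=-\tfrac{|L|}{N}\ket{x}\ket{L}\ket{R}$.

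To conclude the operator-norm bound, I would observe that the vectors $\ket{x}_\reg{A}\ket{L}_\reg{S}\ket{R}_\reg{T}$, ranging over $x\in[N]$, $L\in\RIdist$, $R\in\RDdist$, are orthonormal (relation states of distinct $\cI$-distinct, resp.\ $\cD$-distinct, relations are supported on disjoint sets of standard basis vectors and are unit vectors), and those with $|L|+|R|\le t$ span the range of $\id_\reg{A}\otimes\Pi_{\le t}$. I then extend this orthonormal set by an orthonormal basis of the orthogonal complement of that range, on which $\Pi_{\le t}$ --- and hence $(F^{L,\dagger}F^L-\id)\Pi_{\le t}$ --- vanishes. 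On every vector of the resulting orthonormal basis $\cB$, the operator $(F^{L,\dagger}F^L-\id)\Pi_{\le t}$ acts as a scalar ($-|L|/N$ or $0$), so in particular it maps distinct basis vectors to orthogonal images; \Cref{lem:op_norm_orthogonal} then gives $\|(F^{L,\dagger}F^L-\id)\Pi_{\le t}\|=\max_{L}|L|/N\le t/N$, the maximum over $L\in\RIdist$ occurring in a pair $(L,R)$ with $|L|+|R|\le t$. The statement for $F^R$ follows from the mirror-image computation, swapping $\Im$ with $\Dom$ and $\RIdist$ with $\RDdist$: one finds $(F^{R,\dagger}F^R-\id)\ket{y}_\reg{A}\ket{L}_\reg{S}\ket{R}_\reg{T}=-\tfrac{|R|}{N}\ket{y}\ket{L}\ket{R}$ using $|\Dom(R)|=|R|$, and the identical orthogonality argument yields the bound. (When $t\ge N$ the claimed bound exceeds $1$ and is immediate since $F^L,F^R$ are contractions, so $\id-F^{L,\dagger}F^L$ and $\id-F^{R,\dagger}F^R$ are positive and $\preceq\id$.)

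I do not expect a genuine obstacle here; the one point that must be handled with care is the term-by-term inversion in the first step --- that $F^{L,\dagger}$ undoes $F^L$ on each summand without interference --- which hinges precisely on $L$ being $\cI$-distinct, so that adjoining $(x,y)$ with $y\notin\Im(L)$ keeps the relation $\cI$-distinct and keeps the preimage of $y$ unique. Once this is in place, the ``defect'' of $F^L$ on the size-$s$ relation states is exactly the scalar $s/N$, and the norm bound follows mechanically.
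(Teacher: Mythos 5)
Your proof is correct and follows essentially the same route as the paper's: compute $F^{L,\dagger}F^L$ on the orthonormal basis of relation-state vectors, observe it acts as the scalar $(N-|L|)/N$ (because adjoining $(x,y)$ with $y\notin\Im(L)$ to an $\cI$-distinct $L$ keeps the preimage of $y$ unique, so $F^{L,\dagger}$ undoes $F^L$ term by term), and then maximize $|L|/N$ over $|L|+|R|\le t$. Your invocation of \Cref{lem:op_norm_orthogonal} and your explicit handling of the orthogonal complement of $\Pi_{\le t}$ and of the edge case $t\ge N$ are slightly more careful than the paper's one-line wrap-up, but there is no substantive difference in method.
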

\noindent The proof can be found in~\Cref{sec:lemmas_F}. The following operators will be used extensively in~\Cref{sec:spru:1st_oracle,sec:spru:2nd_oracle}.

\begin{definition}[Operators $F^L_{\extract}$ and $F^R_{\extract}$]
Define the partial isometry $F_L^{\extract}$ such that for any $L \in \RIdist$ and $y \notin \Im(L)$,
\begin{equation} \label{eq:FL_extract}
F^L_{\extract} \colon \ket{y}_{\reg{A}} \ket{L \cup \set{(x,y)}}_{\reg{S}}
\mapsto \ket{y}_{\reg{A}'} \ket{x}_{\reg{A}} \ket{L}_{\reg{S}},
\end{equation}
where register $\reg{A}'$ labels a Hilbert space with the same dimension as $\reg{A}$. Similarly, define the partial isometry $F_R^{\extract}$ such that for any $R \in \RDdist$ and $x \notin \Dom(R)$,
\begin{equation} \label{eq:FR_extract}
F^R_{\extract} \colon \ket{x}_{\reg{A}} \ket{R \cup \set{(x,y)}}_{\reg{T}} 
\mapsto \ket{x}_{\reg{A}'} \ket{y}_{\reg{A}} \ket{R}_{\reg{T}}.
\end{equation}
\end{definition}
\noindent We will use the following lemma in~\Cref{sec:spru} to bound error terms. It can be viewed as a consequence of the ``monogamy of entanglement''. Intuitively, after applying $F^L$ to a state, the registers $\reg{A}$ and $\reg{S_Y}$ become ``maximally entangled'' (see~\Cref{fact:operator_E}). The monogamy of entanglement then implies that $\reg{A}$ and $\reg{S}$ must be nearly disentangled from $\reg{T}$—that is, they lie almost entirely outside the image of $F^R$, and vice versa.
\begin{lemma}   \label{lem:FLdagger_U_FR:zero}
For any integer $t \ge 0$ and any unitary $U$ acting non-trivially on the register $\reg{A}$,
\begin{align*}
\|F^{L,\dagger} U F^R \Pi_{\leq t}\|_{\op} \leq 3\sqrt{t(t+2)/N}
\quad \text{and} \quad
\|F^{R,\dagger} U F^L \Pi_{\leq t}\|_{\op} \leq 3\sqrt{t(t+2)/N} \, .
\end{align*}
\end{lemma}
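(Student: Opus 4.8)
The plan is to prove the bound by writing out the action of $F^{L,\dagger} U F^R$ explicitly and then applying Cauchy--Schwarz twice; only the first inequality needs to be treated, since the second follows by the symmetric argument interchanging $L \leftrightarrow R$, $\Dom \leftrightarrow \Im$, and $F^L \leftrightarrow F^R$. Since $\norm{F^{L,\dagger} U F^R \Pi_{\le t}}_{\op}$ equals the supremum of $\norm{F^{L,\dagger} U F^R \ket{\psi}}_2$ over unit vectors $\ket{\psi}$ in the range of $\Pi_{\le t}$, fix such a $\ket{\psi} = \sum_{y, L, R} \alpha_{y,L,R} \ket{y}_{\reg{A}} \ket{L}_{\reg{S}} \ket{R}_{\reg{T}}$, with the sum over $y \in [N]$, $L \in \RIdist$, $R \in \RDdist$, $\abs{L} + \abs{R} \le t$, and $\sum_{y,L,R} \abs{\alpha_{y,L,R}}^2 = 1$. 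Applying $F^R$ inserts a fresh first coordinate $x \notin \Dom(R)$ into the $\reg{T}$-relation and records $x$ in $\reg{A}$; applying $U$ sends $\ket{x} \mapsto \sum_{x'} U_{x',x} \ket{x'}$ (with $U_{x',x}$ the entries of the unitary $U$ on $\reg{A}$); and by \eqref{eq:FL_dagger}, $F^{L,\dagger}$ acts nontrivially only on the components with $x' \in \Im(L)$, deleting the pair $(L^{-1}(x'), x')$ from the $\reg{S}$-relation (where $L^{-1}(x')$ is the unique first coordinate paired with $x'$ in $L$, which exists uniquely since $L \in \RIdist$) and recording $L^{-1}(x')$ in $\reg{A}$. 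Hence
\[
F^{L,\dagger} U F^R \, \ket{y}_{\reg{A}} \ket{L}_{\reg{S}} \ket{R}_{\reg{T}} = \frac{1}{N} \sum_{\substack{x \notin \Dom(R) \\ x' \in \Im(L)}} U_{x',x} \, \ket{L^{-1}(x')}_{\reg{A}} \, \ket{L \setminus \set{(L^{-1}(x'), x')}}_{\reg{S}} \, \ket{R \cup \set{(x, y)}}_{\reg{T}} ,
\]
which one may also read off by passing through the extraction isometries $F^R_{\extract}$ and $F^L_{\extract}$.

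Next I would reparameterize the output. The states $\ket{w}_{\reg{A}} \ket{M}_{\reg{S}} \ket{S'}_{\reg{T}}$, over $w \in [N]$, $M \in \RIdist$, $S' \in \RDdist$ with $\abs{M} + \abs{S'} \le t$, are orthonormal, and for a fixed such configuration $(w, M, S')$ the terms of $F^{L,\dagger} U F^R \ket{\psi}$ contributing to it are indexed bijectively by pairs $(x, x') \in \Dom(S') \times \bigl([N] \setminus \Im(M)\bigr)$, via $L = M \cup \set{(w,x')}$, $R = S' \setminus \set{(x, S'(x))}$, $y = S'(x)$ (here $S'(x)$ is the second coordinate paired with $x$ in $S'$, well-defined since $S' \in \RDdist$). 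Therefore
\[
\norm{F^{L,\dagger} U F^R \ket{\psi}}_2^2 = \frac{1}{N^2} \sum_{(w, M, S')} \biggl| \sum_{x \in \Dom(S')} \sum_{x' \notin \Im(M)} \alpha_{S'(x),\, M \cup \set{(w,x')},\, S' \setminus \set{(x, S'(x))}} \, U_{x',x} \biggr|^2 .
\]

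Now I would estimate each inner sum by Cauchy--Schwarz, twice: pulling out the sum over $x$ costs a factor $\abs{\Dom(S')} = \abs{S'} \le t$, and then for each $x$ the sum over $x'$ is bounded using $\sum_{x'} \abs{U_{x',x}}^2 = 1$ (columns of the unitary $U$ are unit vectors), leaving a sum of squared coefficients. Summing over all configurations $(w, M, S')$, a fixed squared coefficient $\abs{\alpha_{y,L,R}}^2$ is counted at most $\abs{L} \cdot (N - \abs{R}) \le tN$ times (choose which of the $\abs{L}$ pairs of $L$ is the pair $(w, x')$, and which of the $N - \abs{R}$ admissible values is $x$), so the whole double sum is at most $t \cdot tN \cdot \sum_{y,L,R} \abs{\alpha_{y,L,R}}^2 = t^2 N$. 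Plugging back, $\norm{F^{L,\dagger} U F^R \ket{\psi}}_2^2 \le t^2 / N$, whence $\norm{F^{L,\dagger} U F^R \Pi_{\le t}}_{\op} \le t/\sqrt{N} \le \sqrt{t(t+2)/N}$ — in fact a little stronger than claimed; $\norm{F^{R,\dagger} U F^L \Pi_{\le t}}_{\op}$ is handled identically.

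The step that needs care is the multiplicity count in the reparameterization: a single input configuration $(y,L,R)$ feeds into as many as $\Theta(tN)$ distinct output configurations, so bounding $\norm{F^{L,\dagger} U F^R}$ on each basis state separately (which yields only $O(\sqrt{t}/N)$ per state) does not give an operator-norm bound. The resolution is that the two Cauchy--Schwarz losses — the factor $\abs{S'} \le t$ and the unit column-norm of $U$ — together with the $1/N^2$ prefactor exactly absorb this multiplicity. This is the quantitative content of the ``monogamy of entanglement'' heuristic stated before the lemma: after $F^R$ the register $\reg{A}$ is maximally correlated with a fresh entry of the $\reg{T}$-database, hence almost uncorrelated with the size-$\le t$ $\reg{S}$-database whose image $F^{L,\dagger}$ needs it to lie in, and the local unitary $U$ on $\reg{A}$ does not change this.
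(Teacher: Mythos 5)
Your proof is correct, and it takes a genuinely different, more elementary route than the paper's --- and in fact yields the slightly sharper bound $t/\sqrt{N}$. The paper argues indirectly in three steps: it replaces $F^L,F^R$ by the exact path-recording partial isometries $V^L,V^R$ (\Cref{lem:FL_VL_FR_VR_close}), then replaces those by auxiliary operators $E^L,E^R$ imported from~\cite{MH25} (\Cref{fact:operator_E}), exploits the closed form of $E^L E^{L,\dagger}$ and $E^R E^{R,\dagger}$ as sums of EPR-type projectors to get $\|E^{L,\dagger}UE^R\Pi_{\le t}\|_{\op}\le\sqrt{t(t+1)}/N$ (\Cref{lem:ELdagger_U_ER:zero}), and assembles the three estimates by the triangle inequality --- which is where the factor $3$ and the form $\sqrt{t(t+2)/N}$ come from. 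You instead compute $F^{L,\dagger}UF^R$ explicitly on a basis state, reindex by the output configuration $(w,M,S')$, and apply Cauchy--Schwarz twice: once over $x\in\Dom(S')$ (cost $\abs{S'}\le t$) and once over the column $U_{\cdot,x}$ (cost $1$, since columns of a unitary are unit vectors). Your multiplicity count is exactly right --- a given $\abs{\alpha_{y,L,R}}^2$ appears at most $\abs{L}\cdot(N-\abs{R})\le tN$ times, coming from choosing which pair of $L$ is $(w,x')$ and which $x\notin\Dom(R)$ to insert --- and so is your observation that \Cref{lem:op_norm_orthogonal} is not applicable here because $F^{L,\dagger}UF^R$ spreads one input basis vector to $\Theta(tN)$ output basis vectors, so a termwise bound would be lossy. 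Your route is self-contained (it never touches the $E$-operators or the $V$-versus-$F$ comparison) and gives a cleaner constant; the paper's route trades that for the convenience of reusing closeness lemmas it must develop anyway for \Cref{lem:path:FV}.
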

\noindent The proof of~\Cref{lem:FLdagger_U_FR:zero} can be found in~\Cref{sec:lemmas_F}.

\section{Strong Pseudorandom Unitaries in the QHROM}
\label{sec:spru}

\subsection{Construction}
\label{sec:spru:construction}

\begin{construction}\label{construction:SPRU}
For every $\secp \in \N$, let $\cK_\secp \coloneqq \bit^{3\secp}$ and $\cG_\secp = \set{G^\cU_k}_{k \in \cK_\secp}$ denote the family of unitaries with access to the Haar random oracle $\cU = \set{U_\ell}_{\ell \in \N}$, defined as follows. For every $\secp \in \N$ and $k \in \cK_\secp$, define the $\secp$-qubit unitary:
\begin{align*}
    G^{\cU}_k \coloneqq X^{k_3} \cdot U_\secp \cdot X^{k_2} \cdot U_\secp \cdot X^{k_1},
\end{align*}
where $k = k_1 \,\|\, k_2 \,\|\, k_3$ with $k_1,k_2,k_3 \in \{0,1\}^{\secp}$, and for a bitstring $s = s_1 \cdots s_\secp$ we set $X^{s} \coloneqq \bigotimes_{i=1}^{\secp} X^{s_i}$ (so $X^0 = \id$, $X^1 = X$).
\end{construction}

\begin{remark}
We observe that~\Cref{construction:SPRU} does not require any ancilla qubits. Moreover, it is optimal in terms of the number of sequential queries to the Haar random oracle. In particular, \cite{ABGL24} constructs a polynomial-query attack that breaks every non-adaptive PRU constructions in the inverseless QHROM, namely constructions that are allowed to make a single parallel query to the Haar random oracle $U$ of the form $U^{\otimes q}$ for an arbitrary polynomial $q(\secp)$. We observe that the same attack also applies in the QHROM.
\end{remark}

\begin{theorem}\label{thm:strong:pru:qhrom}
The family of unitaries defined in~\Cref{construction:SPRU} is a strong pseudorandom unitary in the QHROM. 
\end{theorem}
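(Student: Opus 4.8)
The plan is to follow the hybrid sequence laid out in the technical overview, moving from the real experiment (oracle access to $U$ and $G^U_k = X^{k_3}UX^{k_2}UX^{k_1}$) to the ideal experiment (access to independent Haar unitaries $U_1, U_2$), establishing that each consecutive pair of hybrids is negligibly close in the adversary's view. Concretely, I would first invoke the unitary invariance of the Haar measure to reparametrize: replacing $U \mapsto X^{k_3} U X^{k_1}$ and $k_2 \mapsto k_1 \oplus k_2 \oplus k_3$ turns the oracle pair into $\cO_1 = X^{k_3} U X^{k_1}$ and $\cO_2 = U X^{k_2} U$, which equalizes the query complexity of the two oracles (this gives $\mathbf{H}_1$). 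Next I would apply \Cref{thm:MH:path} to swap the Haar unitary $U$ for the path-recording isometry $V$ (hybrid $\mathbf{H}_2$), paying $O(t^2/N^{1/8})$ in trace distance, and then use the $F$-vs-$V$ closeness results referenced after \Cref{def:FL_FR_F} (from \Cref{sec:FvsV}) to replace $V$ with the decoupled operator $F$ (hybrid $\mathbf{H}_3$), so that the purification registers $L$ and $R$ evolve independently. The analogous two steps would be applied on the ideal side in reverse to connect $\mathbf{H}_5$ (two copies of $F$, namely $F_1, F_2$ on independent databases) to $\mathbf{H}_6$ (two copies of $V$) and finally to the ideal experiment with independent Haar unitaries.

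The heart of the argument is the $\mathbf{H}_3 \leftrightarrow \mathbf{H}_4$ step, where $\mathbf{H}_4$ has $\cO_1 = F_1$, $\cO_2 = F_2$ acting on two independent databases $(L_1,R_1)$ and $(L_2,R_2)$, whereas $\mathbf{H}_3$ has $\cO_1 = X^{k_3} F X^{k_1}$, $\cO_2 = F X^{k_2} F$ acting on a single merged database $(L,R)$ together with sampled keys $(k_1,k_2,k_3)$. I would define the merge isometry $\Ospru$ that maps the purification registers of $\mathbf{H}_4$ to those of $\mathbf{H}_3$, designed carefully so that it is reversible (the precise definition being deferred to \Cref{sec:spru:def}); intuitively it interleaves the two databases into one and reconstructs keys consistent with how a real query to $FX^{k_2}F$ would have recorded entries. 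Since the purification registers are traced out at the end, it suffices to show $\Ospru$ intertwines the two experiments up to negligible error. Following the query-by-query / progress-measure paradigm from the compressed-oracle literature, the key commutation estimates to prove are the four operator-norm bounds stated in the overview: $\|(F \Ospru - \Ospru F_1)\Pi_{\le t}\|_{\op}$, $\|(F^\dagger \Ospru - \Ospru F_1^\dagger)\Pi_{\le t}\|_{\op}$, $\|(X^{k_3} F X^{k_2} F X^{k_1}\Ospru - \Ospru F_2)\Pi_{\le t}\|_{\op}$, and $\|(X^{k_1} F^\dagger X^{k_2} F^\dagger X^{k_3}\Ospru - \Ospru F_2^\dagger)\Pi_{\le t}\|_{\op}$, each being $\negl(n)$ for $t = \poly(n)$. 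These are proved in \Cref{sec:spru:1st_oracle} and \Cref{sec:spru:2nd_oracle}; the tools are \Cref{lem:domain_FL_FR} (to handle the non-isometry slack of $F^L, F^R$), \Cref{lem:FLdagger_U_FR:zero} (the monogamy-of-entanglement bound controlling cross terms where an $L$-type creation meets an $R$-type annihilation through an intervening unitary), and \Cref{fact:image_FL} / \Cref{lem:op_norm_orthogonal} (to bound operator norms by computing on an orthogonal basis). With the four bounds in hand, an induction on the number of queries (carried out in \Cref{sec:spru:induction}) shows the global states of $\mathbf{H}_3$ and $\mathbf{H}_4$ differ by $\poly(t)\cdot\negl(n) = \negl(n)$.

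I would finish by assembling the pieces: the \emph{Efficient Computation} property of \Cref{def:pru} is immediate since $G^U_k$ is just two oracle calls to $U_\secp$ sandwiched between Pauli-$X$ layers determined by $k$, requiring no ancillas and $O(\secp)$ gates besides the two oracle queries; and \emph{Indistinguishability from Haar} follows by the triangle inequality over the chain $\mathrm{Real} \to \mathbf{H}_1 \to \mathbf{H}_2 \to \mathbf{H}_3 \to \mathbf{H}_4 \to \mathbf{H}_5 \to \mathbf{H}_6 \to \mathrm{Ideal}$, where the $\mathbf{H}_1$ step is exact, the $\mathbf{H}_2,\mathbf{H}_6$ steps cost $O(t^2/N^{1/8})$ by \Cref{thm:MH:path}, the $\mathbf{H}_3,\mathbf{H}_5$ steps cost $\negl(n)$ by the $F$-vs-$V$ lemmas, and the $\mathbf{H}_4$ step costs $\negl(n)$ by the merge-isometry analysis. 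One subtlety I would be careful about: the QHROM gives both the construction $G$ \emph{and} the adversary access to the same shared Haar oracle family $\{U_\ell\}$, so I must argue that queries to $U_\ell$ for $\ell \neq \secp$ (which $G$ never touches) are trivially unaffected, and that the reduction only needs to deal with the single relevant copy $U_\secp$ — the other oracles factor out and can be simulated independently on both sides. The expected main obstacle, as flagged in the overview, is precisely the definition of $\Ospru$ and the verification of the four operator-norm bounds in the $\mathbf{H}_3 \leftrightarrow \mathbf{H}_4$ step: unlike the inverseless setting of \cite{ABGL24}, there is no clean closed form for how inverse queries interact with the merged database, so the bounds must be teased out query-by-query using the monogamy bound of \Cref{lem:FLdagger_U_FR:zero} to kill the dangerous cross terms.
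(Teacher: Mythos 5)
Your proposal matches the paper's proof essentially step for step: the same seven-hybrid chain (ideal $\leftrightarrow$ real), the unitary-invariance reparametrization $U \mapsto X^{k_3}UX^{k_1}$ to balance the two oracles, the Ma--Huang swap $U \leftrightarrow V$, the decoupled $F$-vs-$V$ replacement, the merge (partial) isometry $\Ospru$ between the two-database and single-database purifications, the four operator-norm commutation estimates, and the query-by-query induction. The only small imprecision is cosmetic: after reparametrizing, the first-oracle bounds in the body of the paper (Lemma~\ref{lem:1st_oracle}) are stated against $X^{k_3} F X^{k_1}$, not bare $F$, matching the form you correctly use for the second oracle --- but this is exactly what your own $\mathbf{H}_1$ step would produce, so the substance is right.
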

We will prove~\Cref{thm:strong:pru:qhrom} in~\Cref{sec:strong:proof:pru:qhrom}. Looking ahead, inspecting the proof shows that if we shorten the key blocks to $n(\secp)=\omega(\log \secp)$—that is, $k_1,k_2,k_3 \in \{0,1\}^{n(\secp)}$—and restrict $X$ to act on only $n(\secp)$ qubits, the modified construction remains a strong PRU in the QHROM. This yields the following corollary.

\begin{corollary}\label{cor:stretch_pru}
    Let $n(\secp) = \omega(\log\secp), \cK_\secp := \bit^{3n(\secp)}$ for every $\secp \in \N$ and $\cF_\secp = \set{F^\cU_k}_{k \in \cK_\secp}$ denote the family of unitaries with access to the Haar random oracle $\cU = \set{U_\ell}_{\ell \in \N}$, defined as follows. For every $\secp \in \N$ and $k \in \cK_\secp$, define the $\secp$-qubit unitary:
\begin{align*}
    F^{\cU}_k \coloneqq (X^{k_3} \otimes \id_{\secp - n}) \cdot U_\secp \cdot (X^{k_2} \otimes \id_{\secp - n}) \cdot U_\secp \cdot (X^{k_1} \otimes \id_{\secp - n}),
\end{align*}
where $k = k_1||k_2||k_3$ with $k_1, k_2, k_3 \in \bit^{n(\secp)}$. Then $\set{\cF_\secp}_{\secp \in \N}$ is a strong PRU in the QHROM.
\end{corollary}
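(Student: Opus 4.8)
The construction $\cF_\secp$ differs from \Cref{construction:SPRU} only in that the three key blocks $k_1,k_2,k_3$ now have length $n \coloneqq n(\secp) = \omega(\log\secp)$ rather than $\secp$, and each mask $X^{k_i}$ is padded by $\id_{\secp-n}$ so that it acts on a fixed $n$-qubit block of the $\secp$ qubits carried by the Haar oracle $U_\secp$. Efficient computation is immediate, since $F^\cU_k$ still makes exactly two queries to $\cU$ and applies at most $\secp$ single-qubit $X$ gates. For indistinguishability from Haar, the plan is \emph{not} to invoke \Cref{thm:strong:pru:qhrom} as a black box --- its keys are full length, so a naive padding reduction does not apply --- but to re-run the proof of \Cref{thm:strong:pru:qhrom} from \Cref{sec:strong:proof:pru:qhrom} with $n$ in place of $\secp$ as the key-block length, and to check that the resulting bound is still negligible for $n = \omega(\log\secp)$.

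I would follow the same hybrid chain $\mathrm{Real}\to H_1\to H_2\to H_3\to H_4\to H_5\to\mathrm{Ideal}$. The transitions $\mathrm{Real}\to H_1$ (unitary invariance of the Haar measure), $H_1\to H_2$ (replacing $U_\secp$ by the path-recording isometry $V$ via \Cref{thm:MH:path}), and $H_2\to H_3$, $H_5\to\mathrm{Ideal}$ (swapping $V$ for $F$ and $F$-isometries for Haar unitaries, via \Cref{sec:FvsV}) all incur errors controlled purely by $N = 2^\secp$ and the query count $t = \poly(\secp)$; the key blocks appear in these hybrids only inside the fixed single-qubit unitaries $X^{k_i}$, and their length does not enter any of these estimates. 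So these hybrid distances stay negligible for every $n \le \secp$.

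The only place the key-block length enters is the transition $H_3 \to H_4$, where the merge operator $\Ospru$ simulates the keys and the query-by-query induction of \Cref{sec:spru:induction} rests on the four operator-norm estimates of \Cref{sec:spru:1st_oracle,sec:spru:2nd_oracle} (the bound on $\|(V\Ospru-\Ospru V_1)\Pi_{\le t}\|_\op$ and its three analogues). I expect that tracking $n$ through \Cref{sec:spru:def,sec:spru:1st_oracle,sec:spru:2nd_oracle} shows each such estimate is bounded by a term depending only on $N$ and $t$ (already negligible), plus a term of the form $\poly(t)\cdot 2^{-\Omega(n)}$ arising from ``bad coincidences'' between the uniformly random keys $k_1,k_2,k_3\in\bit^{n}$ and the at most $2t$ recorded query points --- events like ``$k_i$ equals a value fixed by the transcript'' or ``$k_i\oplus k_j$ equals a difference of two recorded points'', each of probability at most $2^{-n}$ and union-bounded over $\poly(\secp)$ possibilities. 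Because $n = \omega(\log\secp)$ makes $2^{n}$ super-polynomial, $\poly(t)\cdot 2^{-\Omega(n)} = \negl(\secp)$; hence the four estimates remain negligible, the induction goes through, and a triangle inequality over the constantly many hybrids yields a negligible distinguishing advantage against all $\poly(\secp)$-query adversaries.

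The main obstacle will be the step in the previous paragraph: one must verify that the key-block length enters the analysis \emph{only} through harmless $2^{-\Omega(n)}$ factors, and never through a term such as $t^2\cdot 2^{-(\secp-n)}$ that would force $n$ to stay bounded away from $\secp$. Concretely, this is a property of how $\Ospru$ is defined in \Cref{sec:spru:def}: the keys must be recoverable from the recorded-query database with failure probability governed by the size $2^{n}$ of the key space, not by the number of ``spare'' qubits $\secp-n$. Since the masks $X^{k_i}$ act on a fixed $n$-qubit block, any spurious linking pattern that $\Ospru$ has to rule out must agree on all $\secp$ coordinates while being consistent with an incorrect key on the first $n$ coordinates, so the extra $\secp - n$ coordinates can only make bad events \emph{less} likely. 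Granting this, the proof of \Cref{thm:strong:pru:qhrom} goes through verbatim with $\secp$ replaced by $n(\secp) = \omega(\log\secp)$, which is exactly \Cref{cor:stretch_pru}.
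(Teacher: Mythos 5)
Your proposal is correct and takes essentially the same approach the paper sketches in the one-sentence remark preceding the corollary (``inspecting the proof shows\ldots''). You correctly identify that only the $\hyb_3\to\hyb_4$ transition depends on the key-block length, and that the bad-key fractions there (in \Cref{lem:number_of_good_tuples} and the surrounding counting lemmas) degrade gracefully to $\poly(t)\cdot 2^{-n}$ while the $\vec{z}$- and $y$-dependent fractions stay at $\poly(t)/N$, so every bound remains negligible for $n=\omega(\log\secp)$.
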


\subsection{Security Proof: Proving~\Cref{thm:strong:pru:qhrom}}
\label{sec:strong:proof:pru:qhrom}

\noindent Fix $\secp$ and let $N = 2^\secp$. Consider an adversary $\Adversary = (A_1,A_2, \dots, A_{4t})$ in the strong PRU security experiment. We define the following hybrid states on registers $(\reg{A},\reg{B})$. Although $\Adversary$ has access to the Haar random oracle of all lengths, \Cref{construction:SPRU} make queries only to $U_{\secp}$, which is independent of the oracles of other lengths. We may, without loss of generality, assume that $\Adversary$ queries only the Haar random oracle on $\secp$ qubits.

\paragraph{Hybrid $\hyb_1$:} This is the ideal experiment. Namely, the adversary is interacting with two independent Haar random unitaries $(U_1,U_2)$. The final state of the adversary is the following:
\[
\rho_1 \coloneqq \Ex_{U_1,U_2 \sim \haarunitaries_\secp} \left[\ketbra{\Adversary_t^{U_1, U_2, U_1^{\dagger}, U_2^{\dagger}}}{\Adversary_t^{U_1, U_2, U_1^{\dagger}, U_2^{\dagger}}}\right].
\]

\paragraph{Hybrid $\hyb_2$:} Same as Hybrid $\hyb_1$ except Haar unitaries $(U_1, U_2)$ are simulated by path-recording oracles $(V_1, V_2)$ defined in~\Cref{def:path_recording}. Define the following state: 
\[
\ket{\hyb_2}_{\reg{ABS_1T_1S_2T_2}} 
\coloneqq \prod_{i=1}^{t} \left(V_2^{\dagger} \cdot A_{4i} \cdot V_1^{\dagger} \cdot A_{4i-1} \cdot V_2 \cdot A_{4i-2} \cdot V_1 \cdot A_{4i-3}\right) 
\ket{0}_{\reg{A}} \ket{0}_{\reg{B}} 
\ket{\varnothing}_{\reg{S_1}} \ket{\varnothing}_{\reg{T_1}} 
\ket{\varnothing}_{\reg{S_2}} \ket{\varnothing}_{\reg{T_2}},
\]
where $V_1$ acts on the registers $\reg{A}, \reg{S}_1, \reg{T}_1$, and $V_2$ acts on the registers $\reg{A}, \reg{S}_2, \reg{T}_2$. Define 
\[
\rho_2 \coloneqq \Tr_{\reg{S_1T_1S_2T_2}}(\ketbra{\hyb_2}{\hyb_2}).
\]
\paragraph{Hybrid $\hyb_3$:} Same as Hybrid $\hyb_2$ except $(V_1, V_2)$ are replaced by $(F_1, F_2)$ defined in~\Cref{def:FL_FR_F}. Define the following state:
\[
\ket{\hyb_3}_{\reg{ABS_1T_1S_2T_2}} 
\coloneqq \prod_{i=1}^{t} \left( F_2^{\dagger} \cdot A_{4i} \cdot F_1^{\dagger} \cdot A_{4i-1} \cdot F_2 \cdot A_{4i-2} \cdot F_1 \cdot A_{4i-3} \right)
\ket{0}_{\reg{A}}\ket{0}_{\reg{B}}
\ket{\varnothing}_{\reg{S_1}}\ket{\varnothing}_{\reg{T_1}}
\ket{\varnothing}_{\reg{S_2}}\ket{\varnothing}_{\reg{T_2}},
\]
where $F_1$ acts on the registers $\reg{A}, \reg{S}_1, \reg{T}_1$, and $F_2$ acts on the registers $\reg{A}, \reg{S}_2, \reg{T}_2$. Define
\[
\rho_3 \coloneqq \Tr_{\reg{S_1T_1S_2T_2}}(\ketbra{\hyb_3}{\hyb_3}).
\]
\paragraph{Hybrid $\hyb_4$:} Define the following state:
\begin{multline*}
\ket{\hyb_4}_{\reg{ABSTK_1K_2K_3}} \coloneqq \\
\frac{1}{\sqrt{N^3}} \sum_{k_1,k_2,k_3 \in \bit^\secp} 
\prod_{i=1}^{t} \left(F^{\dagger} X^{k_2} F^{\dagger} \cdot A_{4i} \cdot X^{k_1} F^{\dagger} X^{k_3} \cdot A_{4i-1} \cdot F X^{k_2} F \cdot A_{4i-2} \cdot X^{k_3} F X^{k_1} \cdot A_{4i-3} \right) \\
\cdot \ket{0}_{\reg{A}} \ket{0}_{\reg{B}} \ket{\varnothing}_{\reg{S}} \ket{\varnothing}_{\reg{T}} 
\ket{k_1}_{\reg{K_1}} \ket{k_2}_{\reg{K_2}} \ket{k_3}_{\reg{K_3}},
\end{multline*}
where $F$ acts on the registers $\reg{A}, \reg{S}, \reg{T}$, and $X^{k_j}$ acting on register $\reg{A}$ for $j = 1, 2, 3$. Define
\[
\rho_4 \coloneqq \Tr_{\reg{STK_1K_2K_3}}(\ketbra{\hyb_4}{\hyb_4}).
\]
\paragraph{Hybrid $\hyb_5$:} Same as Hybrid $\hyb_4$ except $F$ is replaced by $V$. Define the following state:
\begin{multline*}
\ket{\hyb_5}_{\reg{ABSTK_1K_2K_3}} \coloneqq \\
\frac{1}{\sqrt{N^3}} \sum_{k_1,k_2,k_3 \in \bit^\secp} 
\prod_{i=1}^{t} \left(V^{\dagger} X^{k_2} V^{\dagger} \cdot A_{4i} \cdot X^{k_1} V^{\dagger} X^{k_3} \cdot A_{4i-1} \cdot V X^{k_2} V \cdot A_{4i-2} \cdot X^{k_3} V X^{k_1} \cdot A_{4i-3} \right) \\
\cdot \ket{0}_{\reg{A}} \ket{0}_{\reg{B}} \ket{\varnothing}_{\reg{S}} \ket{\varnothing}_{\reg{T}} 
\ket{k_1}_{\reg{K_1}} \ket{k_2}_{\reg{K_2}} \ket{k_3}_{\reg{K_3}},
\end{multline*}
where $V$ acts on the registers $\reg{A}, \reg{S}, \reg{T}$, and $X^{k_j}$ acting on register $\reg{A}$ for $j = 1, 2, 3$. Define
\[
\rho_5 \coloneqq \Tr_{\reg{STK_1K_2K_3}}(\ketbra{\hyb_5}{\hyb_5}).
\]
\paragraph{Hybrid $\hyb_6$:} Same as Hybrid $\hyb_5$ except $V$ is replaced by a $\secp$-qubit Haar random unitary $U$, and no purifications are introduced. Define
\[
\rho_6 \coloneqq \Ex_{\substack{k_1,k_2,k_3 \gets \bit^\secp,\, U \sim \haarunitaries_\secp \\ \cO_1 \equiv X^{k_3} U X^{k_1},\, \cO_2 \equiv U X^{k_2} U}}
\left[ \ketbra{\Adversary_t^{\cO_1, \cO_2, \cO_1^\dagger, \cO_2^\dagger}}{\Adversary_t^{\cO_1, \cO_2, \cO_1^\dagger, \cO_2^\dagger}} \right].
\]

\paragraph{Hybrid $\hyb_7$:} This is the real experiment. Namely, the adversary is interacting with the Haar random oracle and the strong PRU construction $G^{\cU}_k$ defined in~\Cref{construction:SPRU}. The final state of the adversary is
\[
\rho_7 \coloneqq \Ex_{\substack{k_1,k_2,k_3 \gets \bit^\secp,\, U \sim \haarunitaries_\secp \\ \cO_1 \equiv U,\, \cO_2 \equiv X^{k_3} U X^{k_2} U X^{k_1}}}
\left[ \ketbra{\Adversary_t^{\cO_1, \cO_2, \cO_1^\dagger, \cO_2^\dagger}}{\Adversary_t^{\cO_1, \cO_2, \cO_1^\dagger, \cO_2^\dagger}} \right].
\]

\paragraph{Statistical Indistinguishability of Hybrids.}

\noindent We prove the closeness as follows:

\begin{myclaim} \label{claim:H1H2}
$\TD(\rho_1,\rho_2) = O\left(\frac{t^2}{N^{1/8}}\right)$ and $\TD(\rho_5,\rho_6) = O\left(\frac{t^2}{N^{1/8}}\right)$.
\end{myclaim}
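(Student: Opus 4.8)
The plan is to derive both estimates from the single-oracle path-recording simulation theorem of Ma and Huang (\Cref{thm:MH:path}), by reorganizing each of the two situations so that exactly one Haar oracle at a time is being replaced by its path-recording counterpart, and then invoking convexity of the trace distance (it is jointly convex and nonincreasing under partial trace). Throughout I would use \Cref{thm:MH:path} in its general form, valid for an arbitrary interleaving of forward and inverse queries (the alternating pattern displayed in the excerpt is only a special case), together with the harmless observation that the statement is unaffected if some of the non-oracle operations of the ``adversary'' are partial isometries that preserve the span of relation states rather than genuine unitaries (one may always dilate such an isometry to a unitary on the larger register, since the dynamics never leave its domain).

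\textbf{The bound $\TD(\rho_1,\rho_2)=O(t^2/N^{1/8})$.} I would insert an intermediate state $\rho_{1.5}$ obtained from $\rho_1$ by replacing only $U_1$ with the path-recording isometry $V_1$ acting on fresh registers $\reg{S_1T_1}$ (keeping $U_2$ Haar, and then tracing out $\reg{S_1T_1}$). For $\rho_1\to\rho_{1.5}$: conditioning on $U_2$ and absorbing $U_2,U_2^{\dagger}$ into the adversary's unitaries leaves an adversary on $\reg{AB}$ that makes $t$ forward and $t$ inverse queries to $U_1$, so \Cref{thm:MH:path} bounds the trace distance between its Haar output and its $V_1$-output (with $\reg{S_1T_1}$ traced out) by $O(t^2/N^{1/8})$; averaging over $U_2$ and using joint convexity preserves the bound. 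For $\rho_{1.5}\to\rho_2$: dilating $V_1$ to a unitary on $\reg{AS_1T_1}$ and absorbing it, together with $\reg{S_1T_1}$, into the adversary leaves an adversary that makes $t$ forward and $t$ inverse queries to the second oracle; \Cref{thm:MH:path} (replacing $U_2$ by $V_2$) together with monotonicity of the trace distance under $\Tr_{\reg{S_1T_1}}$ gives $O(t^2/N^{1/8})$. The triangle inequality completes this half. (Equivalently, one may record once a two-oracle version of \Cref{thm:MH:path}, whose proof is precisely this two-step reduction.)

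\textbf{The bound $\TD(\rho_5,\rho_6)=O(t^2/N^{1/8})$.} Since the registers $\reg{K_1K_2K_3}$ in $\ket{\hyb_5}$ carry the uniform superposition over key triples $(k_1,k_2,k_3)$ and are acted on only through controlled-$X^{k_j}$ gates before being traced out, I would first argue that $\rho_5=\Ex_{k_1,k_2,k_3\gets\bit^{\secp}}\Tr_{\reg{ST}}[\proj{\Psi_V(k)}]$, where $\ket{\Psi_V(k)}$ is the corresponding state built from $V$ with a fixed key triple $k$, while $\rho_6=\Ex_{k}\Ex_{U}[\proj{\Psi_U(k)}]$ by definition. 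Fix $k$ and fold the $X^{k_j}$'s into the adversary's unitaries. Then each round of interaction queries the single oracle $1$ time inside $\cO_1=X^{k_3}(\cdot)X^{k_1}$, $2$ times inside $\cO_2=(\cdot)X^{k_2}(\cdot)$, and $1$ and $2$ times inside their inverses, i.e.\ $3$ forward and $3$ inverse queries per round, hence $3t$ of each over $t$ rounds. Applying \Cref{thm:MH:path} with this query count to replace $U$ by $V$ (purifying into $\reg{ST}$) bounds the trace distance between $\Tr_{\reg{ST}}\proj{\Psi_V(k)}$ and $\Ex_U\proj{\Psi_U(k)}$ by $O((3t)^2/N^{1/8})=O(t^2/N^{1/8})$, uniformly in $k$; averaging over $k$ and using joint convexity concludes.

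\textbf{Main obstacle.} There is no deep obstacle: the substantive work of this section lies in the later claims relating $\rho_2$ through $\rho_5$. The points that need care are purely organizational --- (i) verifying that $\rho_5$ really is the uniform mixture over key triples of the path-recording states, so that conditioning on $k$ is legitimate; (ii) the ``six-queries-per-round'' accounting, which only affects the constant hidden in $O(\cdot)$; and (iii) confirming that \Cref{thm:MH:path} tolerates non-alternating query orders and the unitary-dilation trick for $V_1$ --- all of which are standard features of the path-recording framework of~\cite{MH25}.
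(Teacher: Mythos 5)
Your proposal is correct and is exactly the standard argument that the paper leaves implicit when it says Claim~\ref{claim:H1H2} ``immediately follows'' from \Cref{thm:MH:path}: a one-oracle-at-a-time hybrid (conditioning on the untouched Haar unitary, then dilating the already-replaced path-recording isometry into the adversary) for $\rho_1\to\rho_2$, and a fix-the-key-then-count-queries reduction for $\rho_5\to\rho_6$. The organizational points you flag (joint convexity, monotonicity under partial trace, tolerating non-alternating queries by padding, the dilation of $V$, and the $3t$-per-direction query accounting) are all the right ones and all go through.
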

\begin{proof}
It immediately follows from~\Cref{thm:MH:path}.
\end{proof}

\begin{myclaim} \label{claim:H2H3}
$\TD(\rho_2,\rho_3) = O\left(\frac{t^2}{N^{1/2}}\right)$ and $\TD(\rho_4,\rho_5) = O\left(\frac{t^2}{N^{1/2}}\right)$.
\end{myclaim}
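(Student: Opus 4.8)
The plan is to bound $\TD(\rho_2,\rho_3)$ and $\TD(\rho_4,\rho_5)$ by exhibiting, in each case, a single global state on the purification registers whose $\reg{AB}$-marginal is the relevant $\rho$, and then comparing the two global states directly in Euclidean norm before tracing out. Concretely, since $\rho_2 = \Tr_{\cdots}\proj{\hyb_2}$ and $\rho_3 = \Tr_{\cdots}\proj{\hyb_3}$, and trace distance is non-increasing under partial trace, it suffices to show $\TD(\ket{\hyb_2},\ket{\hyb_3}) \le \frac{1}{2}\|\ket{\hyb_2}-\ket{\hyb_3}\|_2 \cdot (\text{const})$; more precisely, for (sub)normalized pure states $\TD(\ket{\psi},\ket{\phi}) \le \|\ket{\psi}-\ket{\phi}\|_2$. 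So the whole claim reduces to bounding $\|\ket{\hyb_2}-\ket{\hyb_3}\|_2$ and $\|\ket{\hyb_4}-\ket{\hyb_5}\|_2$.

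The key tool is a hybrid argument over the $4t$ oracle applications combined with the operator-norm closeness of $F$ and $V$. For $\ket{\hyb_2}$ vs.\ $\ket{\hyb_3}$: both states are built by applying the same sequence of $4t$ operators, each of which is either an adversary unitary $A_j$ (identical in both) or one of $V_1,V_2,V_1^\dagger,V_2^\dagger$ (resp.\ $F_1,F_2,F_1^\dagger,F_2^\dagger$). I would interpolate one oracle slot at a time: define intermediate states where the first $j$ oracle slots use $V$ and the rest use $F$ (or vice versa), and telescope. Each swap contributes an error governed by $\|(V_b - F_b)\ket{\phi_j}\|_2$ where $\ket{\phi_j}$ is the partial state at that point. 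Here I need the promised bound from \Cref{sec:FvsV} that $\|(V - F)\Pi_{\le t'}\|_{\op}$ is small — plausibly $O(t/N)$ or $O(t/\sqrt N)$ per application — together with the fact that the partial state at oracle slot number $\le 4t$ lies (exactly, or up to negligible error) in the image of $\Pi_{\le t'}$ for $t' \le 2t$, since each forward query adds at most one pair to the combined database. Summing $4t$ such errors, each of size $O(t/N^{1/2})$, and also accounting for the fact that $F,F^\dagger$ are only contractions (so composing them cannot blow up norms — each has operator norm $\le 1$), yields a total of $O(t^2/N^{1/2})$. The same telescoping works verbatim for $\ket{\hyb_4}$ vs.\ $\ket{\hyb_5}$, now with a single oracle $F$ vs.\ $V$ appearing in $4t$ slots (two $F$'s per $\cO_2$ application, so really up to $\sim 6t$ slots), and with the uniform average over $k_1,k_2,k_3$ pulled outside: since the $X^{k_j}$ are unitaries that do not touch the database registers, $\|(V-F)\Pi_{\le t'}\|_{\op}$ controls each swap regardless of the key values, and convexity of the norm over the uniform key distribution gives the same $O(t^2/N^{1/2})$ bound.

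The main obstacle I anticipate is the bookkeeping around the support projector $\Pi_{\le t'}$: I must verify that at every intermediate point in the hybrid sequence the running state is supported (up to negligible error) on relations of bounded size, so that the operator-norm bound $\|(V-F)\Pi_{\le t'}\|_{\op}$ actually applies. This is where I need to track that $V_1,V_2$ (resp.\ $F_1,F_2$) each only grow their own database by one pair per query, that $V_b^\dagger$ (resp.\ $F_b^\dagger$) only shrink databases, and that the adversary unitaries $A_j$ act trivially on $\reg{S},\reg{T}$; hence after $\le 4t$ operations the databases have total size $\le 2t$, and the state is in the range of $\Pi_{\le 2t}$ — exactly, in the $V$ case, and up to the accumulated swap error in the mixed case, which is fine since we are already summing errors. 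Once this support bookkeeping is in place, the remainder is a routine triangle-inequality sum, and I would cite \Cref{lem:op_norm} to convert each operator-norm bound into a bound on the norm of the difference of partial states. A minor subtlety worth a sentence is that $\ket{\hyb_2},\ldots,\ket{\hyb_5}$ may be slightly subnormalized (because $F$ is not an isometry), so I would use the version of the pure-state trace-distance bound valid for subnormalized vectors, noting that the norm defect is itself $O(t^2/N)$ and thus absorbed into the stated error.
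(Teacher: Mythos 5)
Your proposal is correct and follows essentially the same route as the paper: the paper proves the claim by invoking \Cref{lem:path:FV}, whose proof is exactly the telescoping hybrid argument you describe (swap $V \leftrightarrow F$ one query at a time, bound each swap by $\|(V-F)\Pi_{\le t'}\|_{\op} = O(t/\sqrt N)$ from \Cref{lem:F_V_close}, sum $O(t)$ terms, then use that trace distance is non-increasing under the partial trace and that $\TD$ between pure states is bounded by Euclidean distance). Your bookkeeping observations — that both $V$ and $F$ change total relation size by at most one per application so the $\Pi_{\le t'}$ condition holds exactly (not just approximately) at every intermediate hybrid, and that the subnormalization of the $F$-states is harmless — are the right things to check and are consistent with the paper's treatment.
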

\begin{proof}
It immediately follows from~\Cref{lem:path:FV}.
\end{proof}

\begin{myclaim} \label{claim:H6H7}
$\rho_6 = \rho_7$. 
\end{myclaim}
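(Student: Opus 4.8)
The plan is to prove $\rho_6=\rho_7$ by showing that the \emph{joint distribution} of the oracle pair $(\cO_1,\cO_2)$ is identical in $\hyb_6$ and $\hyb_7$. Indeed, in both hybrids the adversary's final state $\ketbra{\Adversary_t^{\cO_1,\cO_2,\cO_1^\dagger,\cO_2^\dagger}}{\Adversary_t^{\cO_1,\cO_2,\cO_1^\dagger,\cO_2^\dagger}}$ is obtained by applying one and the same fixed map — determined by $(A_1,\dots,A_{4t})$ — to the pair $(\cO_1,\cO_2)$ (note that $\cO_1^\dagger,\cO_2^\dagger$ are themselves functions of $\cO_1,\cO_2$), so equality of the two distributions immediately yields equality of the expected states.

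To establish the distributional equality I would start from $\hyb_7$, where $\cO_1=U$ and $\cO_2=X^{k_3}UX^{k_2}UX^{k_1}$ with $U\sim\haarunitaries_\secp$ and $k_1,k_2,k_3\gets\bit^\secp$ independent. By left- and right-invariance of the Haar measure, for every fixed $(k_1,k_3)$ the distribution of $X^{k_3}\wt U X^{k_1}$ with $\wt U\sim\haarunitaries_\secp$ is again $\haarunitaries_\secp$; hence I may write $U=X^{k_3}\,\wt U\,X^{k_1}$ for a fresh $\wt U\sim\haarunitaries_\secp$ independent of $(k_1,k_2,k_3)$. Substituting and simplifying via $X^aX^b=X^{a\oplus b}$ and $X^{s}X^{s}=\id$ gives
\[
\cO_1=X^{k_3}\,\wt U\,X^{k_1},\qquad
\cO_2=X^{k_3}\bigl(X^{k_3}\wt U X^{k_1}\bigr)X^{k_2}\bigl(X^{k_3}\wt U X^{k_1}\bigr)X^{k_1}=\wt U\,X^{k_1\oplus k_2\oplus k_3}\,\wt U .
\]
Finally I would relabel $\wt k_2\coloneqq k_1\oplus k_2\oplus k_3$: since $(k_1,k_2,k_3)\mapsto(k_1,\wt k_2,k_3)$ is a bijection of $(\bit^\secp)^3$, the triple $(k_1,\wt k_2,k_3)$ is again uniform and independent of $\wt U$. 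Thus in $\hyb_7$ the pair $(\cO_1,\cO_2)$ is distributed exactly as $\bigl(X^{k_3}\wt U X^{k_1},\ \wt U X^{\wt k_2}\wt U\bigr)$ with $\wt U\sim\haarunitaries_\secp$ and $(k_1,\wt k_2,k_3)\gets(\bit^\secp)^3$ independent of $\wt U$ — precisely the distribution defining $\hyb_6$, after the cosmetic renaming $\wt U\to U$, $\wt k_2\to k_2$. Hence $\rho_6=\rho_7$.

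There is no substantive obstacle here; the only points needing care are (i) invoking \emph{both} left and right invariance of the Haar measure, so that $U\mapsto X^{k_3}UX^{k_1}$ is legitimate for each fixed $(k_1,k_3)$ and the resulting $\wt U$ is genuinely Haar-distributed and independent of the keys, and (ii) checking that $(k_1,k_2,k_3)\mapsto(k_1,\,k_1\oplus k_2\oplus k_3,\,k_3)$ is a bijection of $(\bit^\secp)^3$, which is what preserves the uniformity — and independence from $\wt U$ — of the relabeled keys.
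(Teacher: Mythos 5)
Your proposal is correct and takes essentially the same route as the paper: both reduce the claim to showing that the pair $(\cO_1,\cO_2)$ is identically distributed in $\hyb_6$ and $\hyb_7$, both use bi-invariance of the Haar measure to absorb $X^{k_3}$ and $X^{k_1}$ into $U$, and both then relabel $k_1\oplus k_2\oplus k_3$ as a fresh uniform $k_2$. The only cosmetic difference is that you phrase the invariance step as introducing a fresh $\wt U$ while the paper phrases it as a direct substitution $U\mapsto X^{k_3}UX^{k_1}$; the algebra and the concluding bijection argument are the same.
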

\begin{proof}
We prove a stronger result by showing that the oracles in both hybrids are identically distributed. For any fixed choice of $k_1, k_2, k_3 \in [N]$, $(U, X^{k_3} U X^{k_2} U X^{k_1})$ is identically distributed to $(X^{k_3} U X^{k_1}, X^{k_3} \cdot X^{k_3} U X^{k_1} \cdot X^{k_2} \cdot X^{k_3} U X^{k_1} \cdot X^{k_1}) = (X^{k_3} U X^{k_1}, U X^{k_1 \oplus k_2 \oplus k_3} U)$ by the unitary invariance of the Haar measure. Next, after averaging over $k_2$, $(X^{k_3} U X^{k_1}, U X^{k_1 \oplus k_2 \oplus k_3} U)$ is identically distributed to $(X^{k_3} U X^{k_1}, U X^{k_2} U)$ since $k_2$ is uniformly random and independent of $U$, $k_1$ and $k_3$. Finally, averaging over $k_1$ and $k_3$ completes the proof.
\end{proof}

\begin{lemma} \label{lem:spru:induction}
$\TD(\rho_3, \rho_4) = O\left(\frac{t^2}{N^{1/2}}\right)$. 
\end{lemma}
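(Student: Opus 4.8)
\noindent The plan is to realize the step from $\hyb_3$ to $\hyb_4$ by an explicit \emph{merge isometry} $\Ospru$ acting only on the purification registers, and then to reduce the claim to a query-by-query intertwining estimate for $\Ospru$. The isometry $\Ospru$, constructed in~\Cref{sec:spru:def}, maps the two independent path-recording databases $(\reg{S_1T_1},\reg{S_2T_2})$ of $\hyb_3$ to the merged database $\reg{ST}$ together with the three key registers $\reg{K_1K_2K_3}$ of $\hyb_4$, and on the all-empty input it returns $\ket{\varnothing}_{\reg{S}}\ket{\varnothing}_{\reg{T}}\otimes\frac{1}{\sqrt{N^3}}\sum_{k_1,k_2,k_3}\ket{k_1}_{\reg{K_1}}\ket{k_2}_{\reg{K_2}}\ket{k_3}_{\reg{K_3}}$, which is precisely the purification part of the initial state of $\hyb_4$. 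Since $\Ospru$ is an isometry supported on registers that are traced out at the end, applying $\id_{\reg{AB}}\otimes\Ospru$ before that partial trace leaves $\rho_3$ unchanged; hence, using that the partial trace does not increase trace distance and that $\TD(\ket{\psi},\ket{\phi})\le\|\ket{\psi}-\ket{\phi}\|_2$ for subnormalized pure states, it suffices to show $\bigl\|(\id_{\reg{AB}}\otimes\Ospru)\ket{\hyb_3}-\ket{\hyb_4}\bigr\|_2=O(t^2/\sqrt{N})$.

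To prove that norm bound I would run a hybrid argument that commutes $\id_{\reg{AB}}\otimes\Ospru$, initially sitting to the left of the whole $\hyb_3$ circuit, rightward past the $4t$ oracle calls one at a time. The ingredients are: (i) $\Ospru$ commutes with every adversary unitary $A_j$, as they act on disjoint registers; (ii) the four operator-norm intertwining bounds proved in~\Cref{sec:spru:1st_oracle,sec:spru:2nd_oracle}, namely that each of
\[
\bigl\|(X^{k_3}FX^{k_1}\,\Ospru-\Ospru\,F_1)\,\Pi_{\le 2t}\bigr\|_{\op},\qquad
\bigl\|(X^{k_1}F^{\dagger}X^{k_3}\,\Ospru-\Ospru\,F_1^{\dagger})\,\Pi_{\le 2t}\bigr\|_{\op},
\]
\[
\bigl\|(FX^{k_2}F\,\Ospru-\Ospru\,F_2)\,\Pi_{\le 2t}\bigr\|_{\op},\qquad
\bigl\|(F^{\dagger}X^{k_2}F^{\dagger}\,\Ospru-\Ospru\,F_2^{\dagger})\,\Pi_{\le 2t}\bigr\|_{\op}
\]
is $O(t/\sqrt{N})$ (here $X^{k_j}$ is controlled on $\reg{K_j}$, and $F_1,F_2$ denote the copies of $F$ acting on $\reg{AS_1T_1}$ and $\reg{AS_2T_2}$); and (iii) the observation that, along the $\hyb_3$ circuit, each database $\reg{S_jT_j}$ undergoes at most $2t$ path-recording queries (one forward and one inverse call to the $j$-th oracle in each of the $t$ rounds), and each such query changes $|L_j|+|R_j|$ by at most one, so every intermediate state is supported on relation states with $|L_j|,|R_j|\le 2t$ and hence lies in the image of $\Pi_{\le 2t}$. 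Commuting $\Ospru$ past the $r$-th oracle call $\cO_r\in\set{F_1,F_1^{\dagger},F_2,F_2^{\dagger}}$ turns the factor $\Ospru\,\cO_r$ into $\wt{\cO}_r\,\Ospru$, where $\wt{\cO}_r\in\set{X^{k_3}FX^{k_1},\,X^{k_1}F^{\dagger}X^{k_3},\,FX^{k_2}F,\,F^{\dagger}X^{k_2}F^{\dagger}}$ is the matching $\hyb_4$ oracle; by~\Cref{lem:op_norm}, the fact that the remaining factors are contractions or unitaries, and the intertwining bound applied to the (legal) trailing state, this move costs $O(t/\sqrt{N})$ in Euclidean norm. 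After the last of the $4t$ moves $\Ospru$ sits immediately to the left of $\ket{\mathrm{init}_3}$, and $\Ospru\ket{\mathrm{init}_3}=\ket{\mathrm{init}_4}$ turns the circuit into exactly the one defining $\ket{\hyb_4}$; summing the $4t$ telescoping terms gives $\bigl\|(\id_{\reg{AB}}\otimes\Ospru)\ket{\hyb_3}-\ket{\hyb_4}\bigr\|_2=O(t^2/\sqrt{N})$. The residual slack from $F$ being a contraction rather than a partial isometry (so that $F^{L,\dagger}F^{L}$ and $F^{R,\dagger}F^{R}$ only approximate the identity on $\Pi_{\le 2t}$) contributes an additional $O(t^2/N)$ via~\Cref{lem:domain_FL_FR}, which is absorbed into the stated bound.

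\noindent I expect the main obstacle to lie entirely in the two prerequisites rather than in this telescoping: constructing $\Ospru$ so that it is simultaneously an isometry (hence reversible) and a faithful ``merge databases / simulate keys'' map, and then proving the four intertwining bounds, which is the progress-measure analysis identified as the main technical contribution and carried out in~\Cref{sec:spru:1st_oracle,sec:spru:2nd_oracle}. Within the present argument the only delicate points are confirming that the path-recording state always lies in the domain $\Pi_{\le 2t}$ on which those bounds hold (this is the size-growth invariant used in ingredient (iii)) and correctly pairing each $\hyb_4$ oracle with the copy of $F$ it simulates, \ie $X^{k_3}FX^{k_1}$ with $F_1$ and $FX^{k_2}F$ with $F_2$.
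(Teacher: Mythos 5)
Your proposal follows essentially the same telescoping/induction strategy as the paper — commute $\Ospru$ past the $4t$ oracle calls one at a time using the intertwining bounds of \Cref{lem:1st_oracle,lem:spru:2nd_oracle}, then convert Euclidean distance to trace distance — but it contains one genuine gap: you assert that ``$\Ospru$ is an isometry supported on registers that are traced out,'' and use this to claim that applying $\id_{\reg{AB}}\otimes\Ospru$ before the partial trace leaves $\rho_3$ unchanged. That premise is false for the $\Ospru$ actually constructed in \Cref{def:Ospru}. The paper explicitly notes that $\cS_{k_1,k_3},\cS_{k_2},\cS_{\vec z}$ are mere contractions (their images are strict subspaces because good keys omit a small ``bad'' set of a fraction $O(t^2/N)$), so $\Ospru = \Osprus^{\dagger}\cS_{\vec z}\cS_{k_2}\cS_{k_1,k_3}$ is \emph{not} a partial isometry, and in particular $\Tr_{\reg{STK}}\bigl(\Ospru\ketbra{\hyb_3}{\hyb_3}\Ospru^{\dagger}\bigr)\neq\rho_3$ in general. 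Without repairing this, your chain ``$\rho_3 = \Tr(\Ospru\cdot\Ospru^{\dagger}) \approx \Tr(\ketbra{\hyb_4}{\hyb_4}) = \rho_4$'' does not go through.

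The paper's fix is \Cref{lem:spru:iso}: there is a genuine partial isometry $\wt{\Ospru}$ (obtained by renormalizing each of $\cS_{k_1,k_3},\cS_{k_2},\cS_{\vec z}$) with $\|(\wt{\Ospru}-\Ospru)\Pi_{\le t}\|_{\op}=O(\sqrt{t^2/N})$. The intertwining bounds are established for the cleaner operator $\Ospru$, and only at the very last step does the argument swap $\Ospru$ for $\wt{\Ospru}$, using $\|\wt{\Ospru}\ket{\psi_{4t}}-\ket{\phi_{4t}}\|_2\le\|(\wt{\Ospru}-\Ospru)\Pi_{\le t}\|_{\op}+\|\Ospru\ket{\psi_{4t}}-\ket{\phi_{4t}}\|_2$, so that the ``isometry on traced-out registers preserves the reduced state'' step applies legitimately to $\wt{\Ospru}$. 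You hint at this issue in your closing paragraph (``constructing $\Ospru$ so that it is simultaneously an isometry...''), but the paper's $\Ospru$ is deliberately \emph{not} an isometry — it sacrifices exact isometry for an exact and manageable action on basis states (\Cref{lem:Ospru}) — and the reconciliation happens via $\wt{\Ospru}$ rather than in the construction of $\Ospru$ itself. So the missing idea is not a better construction of $\Ospru$; it is the extra approximation step \Cref{lem:spru:iso} that reconnects your contraction-based intertwining argument to the partial-trace argument. Your $\Pi_{\le 2t}$ bookkeeping and the accounting of the $F^{L,\dagger}F^L\approx\id$ slack via \Cref{lem:domain_FL_FR} are fine (if anything slightly more careful than the paper's notation).
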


\noindent Proving~\Cref{lem:spru:induction} is the main technical step of proving~\Cref{thm:strong:pru:qhrom}. Toward the proof, we begin by defining an approximate isometry $\Ospru$ in~\Cref{sec:spru:def}, which we then use to prove~\Cref{lem:spru:induction} in~\Cref{sec:spru:induction}.

\begin{proof}[Proof of~\Cref{thm:strong:pru:qhrom}]
It immediately follows from a standard hybrid argument, \Cref{claim:H1H2,claim:H2H3,claim:H6H7} and~\Cref{lem:spru:induction}.
\end{proof}

\subsection{Auxiliary Definitions}
\label{sec:spru:def}

\subsubsection{Overview} Our goal is to show the closeness between the adversary's final states in hybrids $\hyb_3$ and $\hyb_4$. We start by noting that in hybrid $\hyb_3$, the purification register contains two pairs of ``databases'' on registers $(\reg{S_1}, \reg{T_1})$ and $(\reg{S_2}, \reg{T_2})$, whereas in hybrid $\hyb_4$, the purification register contains a single pair of databases on registers $(\reg{S},\reg{T})$ along with the key registers $(\reg{K}_1, \reg{K}_2, \reg{K}_3)$. Indeed, two quantum states are equal if their purifications are related by an isometry acting solely on the traced-out registers. Thus, it suffices to find an (approximate) isometry $\Ospru$ that maps the purification registers $(\reg{S}_1, \reg{T}_1, \reg{S}_2, \reg{T}_2)$ in hybrid $\hyb_3$ to the purification registers $(\reg{S}, \reg{T}, \reg{K}_1, \reg{K}_2, \reg{K}_3)$ in hybrid $\hyb_4$.

Before defining $\Ospru$, we present a ``classical'' analogue which, while not exact, serves to motivate the upcoming definitions. Suppose $\Adversary$ in hybrid $\hyb_3$ makes one query $x$ to $F_1$. Informally, the action of $F_1$ can be viewed first ``sampling'' a uniform $y \notin \Im(L_1)$, then ``adding'' the pair $(x,y)$ to $L_1$, and finally returning $y$ back to $\Adversary$, all in superposition. On the other hand, suppose $\Adversary$ in hybrid $\hyb_4$ makes one query $x$ to $X^{k_3} F X^{k_1}$. Similarly, $(x \oplus k_1, y)$ is added to $L$, and $y \oplus k_3$ is returned to $\Adversary$. We can relabel $y  \oplus k_3 \mapsto y$ to have that $(x \oplus k_1, y \oplus k_3)$ is added to $L$, and $y$ is returned to $\Adversary$. Now, suppose $\Adversary$ in hybrid $\hyb_3$ makes $q$ queries to $F_1$, so that $L_1$ becomes $\set{(x_1,y_1),\dots,(x_q,y_q)}$. By inspection, the corresponding $L$ is $\set{(x_1 \oplus k_1, y_1 \oplus k_3),\dots,(x_q \oplus k_1, y_q \oplus k_3)}$, which is identical to that of $L_1$ except that each element in the domain is XOR-ed with $k_1$, and each element in the range is XOR-ed with $k_3$. We refer to it as the \emph{augmented relation of $L_1$} parameterized by $(k_1, k_3)$. We denote it by $L_1^{\ell,(k_1,k_3)}$.

Next, consider a query $x$ to $F_2$ in hybrid $\hyb_3$. In this case, a uniform $y \notin \Im(L_2)$ is sampled, the pair $(x,y)$ is added to $L_2$, and $y$ it returned to $\Adversary$, all in superposition. By contrast, in hybrid $\hyb_4$, a query $x$ to $F X^{k_2} F$ proceeds as follows: $F$ samples a uniform $z \notin \Im(L)$, adds the pair $(x,z)$ to $L$, XORs $z$ by $k_2$, samples a uniform $y \notin \Im(L)$, adds the pair $(z \oplus k_2, y)$ to $L$, and finally returns to $\Adversary$. Similarly, suppose $\Adversary$ in hybrid $\hyb_3$ makes $q$ queries to $F_2$, so that $L_2$ becomes $\set{(x_1,y_1),\dots,(x_q,y_q)}$. The corresponding $L$ is $\set{(x_1, z_1), (z_1 \oplus k_2, y_1), \dots, (x_q, z_q), (z_q \oplus k_2, y_q)}$, where $\vec{z} = (z_1, \dots,z_q)$ is a vector of distinct coordinates. An important observation, also noted in~\cite{ABGL24}, is that the elements are ``paired'' by $k_2$ and are referred to as \emph{$k_2$-correlated pairs}. We refer to it as the \emph{augmented relation of $L_2$} parameterized by $(\vec{z}, k_2)$. We denote it by $L_2^{\ell,(\vec{z},k_2)}$.

Suppose $\Adversary$ in hybrid $\hyb_3$ makes $q$ queries to $F_1$ and $F_2$ respectively, leading to $L_1 = \set{(x_1,y_1), \ldots, \allowbreak (x_q,y_q)}$ and $L_2 = \set{(x'_1,y'_1),\ldots,(x'_q,y'_q)}$. Now, in order to map $(L_1,L_2)$ to $(L,k_1,k_2,k_3)$\footnote{Since $R_1$ and $R_2$ are empty, we ignore them in the exposition here.} almost injectively, we approach is as follows. We first sample $(\vec{z},k_1,k_2,k_3)$ and let $L = L_1^{\ell,(k_1,k_3)} \cup L_2^{\ell,(\vec{z},k_2)}$ such that one can uniquely recover $(L_1,L_2)$ given the information of $(L,k_1,k_2,k_3)$. Following the idea in~\cite{ABGL24}, as long as the number of $k_2$-correlated pairs in $L$ is exactly $q$, we know that $L_2^{\ell,(\vec{z},k_2)}$ consists of the $k_2$-correlated pairs in $L$. Suppose $q$ is polynomial. An elementary combinatorial argument shows that the fraction of ``bad'' keys for which unique recoverability fails is negligible. 

Careful readers may notice that the previous argument does not rely on keys $(k_1,k_3)$. Indeed, the PRU construction in the inverseless QHROM of~\cite{ABGL24} does not use $(k_1,k_3)$; on key $k$, their construction is simply $U X^k U$. However, when we move to the strong PRU setting, this construction is insecure--one can learn the key by sequentially querying (1) $U^\dagger$ (2) $U X^k U$ and (3) $U^\dagger$. More generally, queries to $U^\dagger$ may lead to unintended cancellation in the databases. The role of $(k_1,k_3)$ is precisely to prevent such event from happening. Looking ahead, when defining $\Ospru$, the condition on $(k_1,k_3)$ ensures the image and domain of $L_1^{\ell,(k_1,k_3)}$ and $L_2^{\ell,(\vec{z},k_2)}$ are distinct.

\subsubsection{Graph-theoretic definitions}

In this subsection, we let $N = 2^\secp$, and use $[N]$ and $\bit^\secp$ interchangeably. For the purposes of the proofs in later sections, it is convenient—and also intuitive—to introduce graph-theoretic languages. In particular, the following type of \emph{directed graph} will play an important role.

\begin{definition}[Decomposable Graphs] \label{def:decomp_graphs}
A directed graph $G$, possibly with self-loops, is \emph{decomposable} if it contains no self-loops and no two edges share a common vertex. In addition, we can uniquely partition its vertex set into three disjoint subsets: 
\begin{itemize}
    \item $V_{\isolate}(G)$: the set of vertices with no incoming or outgoing edges
    \item $V_{\source}(G)$: the set of vertices with an outgoing edge
    \item $V_{\target}(G)$: the set of vertices with an incoming edge
\end{itemize}
Moreover, we have $|V_{\source}(G)| = |V_{\target}(G)|$. 
\end{definition}

\noindent The following definition connects relation–key pairs to directed graphs.
\begin{definition}[Relation-Key-Induced Graphs] \label{def:rel_key_induced_graph}
For any relation $L$ and $k \in [N]$, define the directed graphs, possibly with self-loops, $G^{\ell}_{L,k}$ as follows. The vertex set of $G^{\ell}_{L,k}$ is equal to $L$.\footnote{Note that $L$ may contain repeated elements. Equivalently, one can regard the vertex set as having size $|L|$, with each vertex labeled by an element of $L$.} For any two vertices $(x, y)$ and $(x', y')$, there is a directed edge from $(x, y)$ to $(x', y')$ if and only if $x' = y \oplus k$.

Similarly, for any relation $R$ and $k \in [N]$, define the directed graphs, possibly with self-loops, $G^{r}_{R,k}$ as follows. The vertex set of $G^{r}_{R,k}$ is equal to $R$. For any two vertices $(x, y)$ and $(x', y')$, there is a directed edge from $(x, y)$ to $(x', y')$ if and only if $y' = x \oplus k$.\footnote{Notice that this is the opposite of defining edges in $G^{\ell}_{L,k}$.}
\end{definition}

\subsection{Augmented Relations and (Robust) Decodability}

We define augmented relations mentioned in the overview below.

\begin{definition}[Augmented Relations] \label{def:aug_relations}
For any $L_1, L_2 \in \Rinj$, $R_1, R_2 \in \RDdist$, $k_1, k_2, k_3 \in [N]$, $\vec{z}_L \in [N]^{|L_2|}_\dist$, and $\vec{z}_R \in [N]^{|R_2|}_\dist$, define the corresponding \emph{augmented relations}:\footnote{Similar to how we define relations, augmented relations are also defined as multisets.}
\begin{align}
L_1^{\ell,(k_1,k_3)} & \coloneqq \set{ (x \oplus k_1, y \oplus k_3): {(x, y) \in L_1} },
\label{eq:aug_L1} \\
L_2^{\ell,(k_2,\vec{z}_L)} & \coloneqq \set{ (x_i, z_{L,i}), (z_{L,i} \oplus k_2,y_i): {(x_i, y_i) \in L_2} }, 
\label{eq:aug_L2} \\
R_1^{r,(k_1,k_3)} & \coloneqq \set{(x \oplus k_1, y \oplus k_3): {(x, y) \in R_1} }, 
\label{eq:aug_R1} \\
R_2^{r,(k_2,\vec{z}_R)} & \coloneqq \set{ (x_i, z_{R,i} \oplus k_2),(z_{R,i},y_i): {(x_i, y_i) \in R_2} },
\label{eq:aug_R2}
\end{align}
where we impose an ordering on the elements of $L_2$ (\resp $R_2$) by the canonical ordering on $\Im(L_2)$ (\resp $\Dom(R_2)$). That is, we use $(x_i, y_i) \in L_2$ to indicate the element in $L_2$ such that $y_i$ is the $i$-th largest element in $\Im(L_2)$, and $(x_i, y_i) \in R_2$ indicate the element in $R_2$ such that $x_i$ is the $i$-th largest element in $\Dom(R_2)$.
\end{definition}

\begin{remark}
As expressions like $L_1^{r,(k_1,k_3)}$ or $L_1^{\ell,(k_2,\vec{z}_L)}$ never appear in this work, we omit the superscript $\ell,r$ in the augmented relations when it is clear from the context. 
\end{remark}

\noindent Now, given any $L_1, L_2 \in \Rinj$, $R_1, R_2 \in \RDdist$, which can be viewed as a view in hybrid~$\hyb_3$. We aim to map it to a view in hybrid~$\hyb_4$ by:
\begin{enumerate}
    \item sampling $(k_1,k_2,k_3,\vec{z}_L,\vec{z}_R)$ from an appropriate distribution;
    \item outputting $(L_1^{(k_1,k_3)} \cup L_2^{(k_2,\vec{z}_L)}, R_1^{(k_1,k_3)} \cup R_2^{(k_2,\vec{z}_R)}, k_1, k_2, k_3)$.
\end{enumerate}
To define an (approximate) isometry $\Ospru$ that performs this map coherently, we must ensure that the mapping is (almost) reversible. To this end, we proceed in two steps: 
\begin{enumerate}
    \item Define a ``decoder'' which, on input $(L,R,k_1,k_2,k_3)$, outputs $(L_1,R_1,L_2,R_2,\vec{z}_L,\vec{z}_R,k_1,k_2,k_3)$.
    \item For any $L_1, L_2 \in \Rinj$, $R_1, R_2 \in \RDdist$, identify conditions on $(k_1,k_2,k_3,\vec{z}_L,\vec{z}_R)$ such that applying the decoder to $(L_1^{(k_1,k_3)} \cup L_2^{(k_2,\vec{z}_L)}, R_1^{(k_1,k_3)} \cup R_2^{(k_2,\vec{z}_R)}, k_1, k_2, k_3)$ yields the correct $(L_1,R_1,L_2,R_2,\vec{z}_L,\vec{z}_R,$ $k_1,k_2,k_3)$.
\end{enumerate}

\begin{definition}[Function $\Dec$ and Operator $\cD$] \label{def:Ospru_split}
The deterministic function (algorithm) $\Dec$ is defined as follows:  \\
{\bf Input:} Two relations $(L,R)$ and $(k_1, k_2, k_3) \in [N]^3$
\begin{enumerate}
    \item If $G^{\ell}_{L,k_2}$ is not decomposable {\bf or} $G^{r}_{R,k_2}$ is not decomposable, then output $\bot$ indicating fail.
    \item If $V_{\target}(G^{\ell}_{L,k_2}) \notin \RIdist$ {\bf or} $V_{\target}(G^{r}_{R,k_2}) \notin \RDdist$, then output $\bot$.
    \item Suppose
    \begin{itemize}
        \item $V_{\isolate}(G^{\ell}_{L,k_2}) = \set{(a_1, b_1), \dots, (a_s, b_s)}$
        \item $V_{\source}(G^{\ell}_{L,k_2}) = \set{(x_1, e_1), \dots, (x_\ell, e_\ell)}$
        \item $V_{\target}(G^{\ell}_{L,k_2}) = \set{(e_1 \oplus k_2, y_1), \dots, (e_\ell \oplus k_2, y_\ell)}$ such that $y_1 < y_2 < \dots < y_\ell$\footnote{Since it passes Step~2, there exists an ordering in the image of $V_{\target}(G^{\ell}_{L,k_2})$.}
        \item $V_{\isolate}(G^{r}_{R,k_2}) = \set{(c_1, d_1), \dots, (c_t, d_t)}$
        \item $V_{\source}(G^{r}_{R,k_2}) = \set{(f_1, v_1), \dots, (f_r, v_r)}$
        \item $V_{\target}(G^{r}_{R,k_2}) = \set{(u_1, f_1 \oplus k_2), \dots, (u_r, f_r \oplus k_2)}$ such that $u_1 < u_2 < \dots < u_r$
    \end{itemize}
    \item Let
    \begin{itemize}
        \item $L_{\isolate} \coloneqq \set{(a_1 \oplus k_1, b_1 \oplus k_3), \ldots, (a_s \oplus k_1, b_s \oplus k_3)}$
        \item $L_{\pair} \coloneqq \set{(x_1, y_1), \ldots, (x_\ell, y_\ell)}$
        \item $\vec{m}_L \coloneqq (e_1, \ldots, e_\ell)$
        \item $R_{\isolate} \coloneqq \set{(c_1 \oplus k_1, d_1 \oplus k_3), \ldots, (c_t \oplus k_1, d_t \oplus k_3)}$
        \item $R_{\pair} \coloneqq \set{(u_1, v_1), \ldots, (u_r, v_r)}$
        \item $\vec{m}_R \coloneqq (f_1, \ldots, f_\ell)$
    \end{itemize}
    \item If $L_{\isolate} \notin \RIdist$ {\bf or} $L_{\pair} \notin \RIdist$ {\bf or} $R_{\isolate} \notin \RDdist$ {\bf or} $R_{\pair} \notin \RDdist$ {\bf or} $\vec{m}_L$ has repeated coordinates {\bf or} $\vec{m}_R$ has repeated coordinates, then output $\bot$.
    \item If $\Im(L_{\pair}) \cap \set{\vec{m}_L} \neq \varnothing$ {\bf or} $\Dom(R_{\pair}) \cap \set{\vec{m}_R} \neq \varnothing$, then output $\bot$.\footnote{Recall that for a vector $\vec{v} = (v_1, v_2, \ldots)$, we denote the set $\bigcup_i \set{v_i}$ by $\set{\vec{v}}$.}
    \item Output $(L_{\isolate},R_{\isolate},L_{\pair},R_{\pair},\vec{m}_L,\vec{m}_R,k_1, k_2, k_3)$
\end{enumerate}
We denote by $\Osprus$ the operator that performs $\Dec$ coherently,\footnote{Since $(k_1,k_2,k_3)$ is classical information, we require $\Dec$ to also output $(k_1,k_2,k_3)$.} and maps an input to the zero vector if $\Dec$ evaluates to $\bot$ on that input.
\end{definition}

\begin{lemma}[$\Osprus$ is a partial isometry]  \label{lem:D_is_partial_isometry}
Let $\Supp(\Dec)$ denote the set of $(L,R,k_1,k_2,k_3)$ such that $\Dec(L,R,$ $k_1,k_2,k_3) \neq \bot$. Then $\Dec$ is injective when restricted to $\Supp(\Dec)$. As a corollary, $\Osprus$ is a partial isometry.
\end{lemma}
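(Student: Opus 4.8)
The plan is to produce an explicit left inverse of $\Dec$ on $\Supp(\Dec)$; injectivity is then immediate, and the ``partial isometry'' corollary follows from the standard fact that an operator which sends a subset of an orthonormal basis injectively to orthonormal vectors and annihilates the rest of the basis is a partial isometry (its initial space being the span of that subset).

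Since $\Dec$ copies $(k_1,k_2,k_3)$ into its output verbatim, it suffices to recover $(L,R)$ from the remaining output data $(L_{\isolate},R_{\isolate},L_{\pair},R_{\pair},\vec m_L,\vec m_R)$ together with $(k_1,k_2,k_3)$; the $L$-side (built from $G^{\ell}_{L,k_2}$) and the $R$-side (built from $G^{r}_{R,k_2}$) are symmetric, so I will only describe the recovery of $L$. On any input in $\Supp(\Dec)$ the graph $G^{\ell}_{L,k_2}$ is decomposable (Step~1 did not fail), so its vertex multiset, which is exactly $L$, splits canonically as $L = V_{\isolate}(G^{\ell}_{L,k_2}) \sqcup V_{\source}(G^{\ell}_{L,k_2}) \sqcup V_{\target}(G^{\ell}_{L,k_2})$. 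I recover the three blocks as follows. The isolated block equals $\set{(p\oplus k_1,\, q\oplus k_3) : (p,q)\in L_{\isolate}}$, because $\Dec$ built $L_{\isolate}$ by XOR-ing the isolated vertices coordinatewise with $(k_1,k_3)$ and XOR is an involution. For the other two blocks, note that the validity checks passed by inputs in $\Supp(\Dec)$ force $L_{\pair}\in\RIdist$, so listing the elements of $L_{\pair}$ in increasing order of their (distinct) second coordinates reproduces the indexing $(x_1,y_1),\dots,(x_\ell,y_\ell)$ with $y_1<\cdots<y_\ell$ that Step~3 fixed, and in that same order the coordinates of $\vec m_L$ are $e_1,\dots,e_\ell$. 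Hence $V_{\source}(G^{\ell}_{L,k_2}) = \set{(x_i,e_i)}_{i\in[\ell]}$ and $V_{\target}(G^{\ell}_{L,k_2}) = \set{(e_i\oplus k_2,\, y_i)}_{i\in[\ell]}$ are determined by $(L_{\pair},\vec m_L,k_2)$, and their disjoint union with the isolated block recovers $L$.

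Running the symmetric computation with $G^{r}_{R,k_2}$, $R_{\isolate}$, $R_{\pair}$, $\vec m_R$ recovers $R$, so the full input $(L,R,k_1,k_2,k_3)$ is a function of the output: if two elements of $\Supp(\Dec)$ have the same image under $\Dec$ they agree on $(k_1,k_2,k_3)$ and therefore, by the recovery procedure, on $(L,R)$ as well. This proves $\Dec|_{\Supp(\Dec)}$ is injective. For the corollary, $\Osprus$ sends $\ket{L,R,k_1,k_2,k_3}$ to $\ket{\Dec(L,R,k_1,k_2,k_3)}$ for $(L,R,k_1,k_2,k_3)\in\Supp(\Dec)$ and to $0$ otherwise; by injectivity these nonzero images are distinct standard basis vectors, hence orthonormal, so $\Osprus$ is an isometry on the span of $\set{\ket{L,R,k_1,k_2,k_3} : (L,R,k_1,k_2,k_3)\in\Supp(\Dec)}$ and vanishes on its orthogonal complement, i.e.\ is a partial isometry.

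The only step that needs real care is verifying that the indexing conventions internal to $\Dec$, namely the ordering of $V_{\target}$ by its image in Step~3 and the matching order of the entries of $\vec m_L$ (and $\vec m_R$), are recoverable from the output alone; this is precisely where the check $L_{\pair}\in\RIdist$ (so that $\Im(L_{\pair})$ has exactly $\ell$ distinct, orderable elements) is used. Everything else — the XOR-inversion for the isolated part, reading off $(k_1,k_2,k_3)$, and the canonical three-way split of a decomposable graph's vertex multiset — is routine bookkeeping.
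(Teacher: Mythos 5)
Your proposal is correct and takes essentially the same approach as the paper: both prove injectivity by exhibiting an explicit left inverse that reconstructs $(L,R)$ from the output, with your blockwise reconstruction of $V_{\isolate}, V_{\source}, V_{\target}$ being the same map the paper packages as $\Enc$ via the augmented-relation notation $L_{\isolate}^{(k_1,k_3)} \cup L_{\pair}^{(k_2,\vec m_L)}$. The one modest addition you make — spelling out why ``injective on $\Supp(\Dec)$, zero elsewhere'' gives a partial isometry, and flagging that $L_{\pair}\in\RIdist$ is exactly what makes the Step~3 indexing recoverable — is accurate and, if anything, more explicit than the paper.
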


\begin{proof}
We prove the lemma by constructing the inverse of $\Dec$ on $\Supp(\Dec)$. We denote this inverse by $\Enc$. On input $(L_{\isolate}, R_{\isolate}, L_{\pair}, R_{\pair}, \vec{m}_L, \vec{m}_R, k_1, k_2, k_3)$, $\Enc$ outputs 
\[
(L_{\isolate}^{(k_1,k_3)} \cup L_{\pair}^{(k_2,\vec{m}_L)}, R_{\isolate}^{(k_1,k_3)} \cup R_{\pair}^{(k_2,\vec{m}_R)}, k_1, k_2, k_3).
\]
It suffices to show that for any $(L,R,k_1,k_2,k_3) \in \Supp(\Dec)$, the following holds:
\begin{align*}
    (L,R,k_1,k_2,k_3) = \Enc(\Dec((L,R,k_1,k_2,k_3))).
\end{align*}
Suppose in Step~3 of $\Dec(L,R,k_1,k_2,k_3)$, we have
\begin{itemize}
    \item $V_{\isolate}(G^{\ell}_{L,k_2}) = \set{(a_1, b_1), \dots, (a_s, b_s)}$
    \item $V_{\source}(G^{\ell}_{L,k_2}) = \set{(x_1, e_1), \dots, (x_\ell, e_\ell)}$
    \item $V_{\target}(G^{\ell}_{L,k_2}) = \set{(e_1 \oplus k_2, y_1), \dots, (e_\ell \oplus k_2, y_\ell)}$ such that $y_1 < y_2 < \dots < y_\ell$
    \item $V_{\isolate}(G^{r}_{R,k_2}) = \set{(c_1, d_1), \dots, (c_t, d_t)}$
    \item $V_{\source}(G^{r}_{R,k_2}) = \set{(f_1, v_1), \dots, (f_r, v_r)}$
    \item $V_{\target}(G^{r}_{R,k_2}) = \set{(u_1, f_1 \oplus k_2), \dots, (u_r, f_r \oplus k_2)}$ such that $u_1 < u_2 < \dots < u_r$,
\end{itemize}
and the input satisfies 
\begin{align*}
& L = \set{(a_1, b_1), \dots, (a_s, b_s)} \cup \set{(x_1, e_1), \dots, (x_\ell, e_\ell)} \cup \set{(e_1 \oplus k_2, y_1), \dots, (e_\ell \oplus k_2, y_\ell)}, \\
& R = \set{(c_1, d_1), \dots, (c_t, d_t)} \cup \set{(f_1, v_1), \dots, (f_r, v_r)} \cup \set{(u_1, f_1 \oplus k_2), \dots, (u_r, f_r \oplus k_2)}.
\end{align*}
In Step~4 of $\Dec((L,R,k_1,k_2,k_3))$, we have
\begin{itemize}
    \item $L_{\isolate} \coloneqq \set{(a_1 \oplus k_1, b_1 \oplus k_3), \ldots, (a_s \oplus k_1, b_s \oplus k_3)}$
    \item $L_{\pair} \coloneqq \set{(x_1, y_1), \ldots, (x_\ell, y_\ell)}$
    \item $\vec{m}_L \coloneqq (e_1, \ldots, e_\ell)$
    \item $R_{\isolate} \coloneqq \set{(c_1 \oplus k_1, d_1 \oplus k_3), \ldots, (c_t \oplus k_1, d_t \oplus k_3)}$
    \item $R_{\pair} \coloneqq \set{(u_1, v_1), \ldots, (u_r, v_r)}$
    \item $\vec{m}_R \coloneqq (f_1, \ldots, f_\ell)$.
\end{itemize}
Recall~\Cref{def:aug_relations}. It is straightforward to verify that applying $\Enc$ to the above input yields $L$ and $R$.
\end{proof}

\begin{definition}[Decodability]  \label{def:decodable_tuples} 
For any $L_1, L_2 \in \Rinj$, $R_1, R_2 \in \RDdist$, a tuple $(k_1, k_2, k_3, \vec{z}_L, \vec{z}_R)$ said to be \emph{decodable} (\wrt $(L_1, L_2, R_1, R_2)$) if
\begin{align}   \label{eq:dec_correct}
\Dec(L_1^{(k_1,k_3)} \cup L_2^{(k_2,\vec{z}_L)}, R_1^{(k_1,k_3)} \cup R_2^{(k_2,\vec{z}_R)}, k_1, k_2, k_3)
= (L_1,R_1,L_2,R_2,\vec{z}_L,\vec{z}_R,k_1,k_2,k_3).
\end{align}
\end{definition}

\begin{remark}
There are cases that $\Dec(L_1^{(k_1,k_3)} \cup L_2^{(k_2,\vec{z}_L)}, R_1^{(k_1,k_3)} \cup R_2^{(k_2,\vec{z}_R)}, k_1, k_2, k_3) \neq \bot$ while does not match the correct $(L_1, R_1, L_2, R_2, \vec{z}_L, \vec{z}_R, k_1, k_2, k_3)$. For example, consider the case where $L_1 = \set{(a,b),(b \oplus k_2, c)}$ and $L_2 = R_1 = R_2 = \varnothing$, then $\Dec$ will output $L_{\isolate} = \varnothing$ and $L_{\pair} = \set{(a,c)}$.
\end{remark}

\noindent For technical reasons, we require a stronger condition than decodability. Given a decodable tuple $(L_1, L_2, \allowbreak R_1, R_2, \vec{z}_L, \vec{z}_R, k_1, k_2, k_3)$, we use $L$ and $R$ as shorthand for $L_1^{(k_1,k_3)} \cup L_2^{(k_2,\vec{z}_L)}$ and $R_1^{(k_1,k_3)} \cup R_2^{(k_2,\vec{z}_R)}$, respectively. In particular, we are concerned with the \emph{robustness} of these tuples. Suppose we delete an element $v$ from $L$. We are interested in whether $\Dec(L \setminus \set{v}, R, k_1, k_2, k_3)$ remains non-$\bot$. 

\begin{definition}[Robust Decodability]  \label{def:robust_decodable_tuples} 
For any $L_1, L_2 \in \Rinj$, $R_1, R_2 \in \RDdist$, a tuple $(k_1, k_2, k_3, \vec{z}_L, \vec{z}_R)$ said to be \emph{robustly decodable} (\wrt $(L_1, L_2, R_1, R_2)$) if it is decodable and satisfies
\begin{enumerate}
    \item $\Dec\left( L_1^{(k_1,k_3)} \cup L_2^{(k_2,\vec{z}_L)} \setminus \set{v}, R_1^{(k_1,k_3)} \cup R_2^{(k_2,\vec{z}_R)}, k_1, k_2, k_3 \right) \neq \bot$\, for all $v \in L_1^{(k_1,k_3)} \cup L_2^{(k_2,\vec{z}_L)}$, and
    \item $\Dec\left(L_1^{(k_1,k_3)} \cup L_2^{(k_2,\vec{z}_L)}, R_1^{(k_1,k_3)} \cup R_2^{(k_2,\vec{z}_R)} \setminus \set{u}, k_1, k_2, k_3 \right) \neq \bot$\, for all $u \in R_1^{(k_1,k_3)} \cup R_2^{(k_2,\vec{z}_R)}$.
\end{enumerate}
\end{definition}

\noindent In the following, we list the sufficient conditions for robustly decodable tuples and introduce the shorthands $\bfz \coloneqq (\vec{z}_L, \vec{z}_R)$ and $\bfk \coloneqq (k_1, k_2, k_3)$.

\begin{lemma}[Sufficient conditions for robust decodability] \label{lem:conditions_robust_decodability}
Let $L_1, L_2 \in \Rinj$, $R_1, R_2 \in \RDdist$. Every $5$-tuple $(k_1, k_2, k_3, \vec{z}_L, \vec{z}_R) \in [N] \times [N] \times [N] \times [N]^{|L_2|} \times [N]^{|R_2|}$ that satisfies all the following conditions is decodable:
\begin{enumerate}
    \item \textbf{Distinctness:} $L_1^{(k_1,k_3)}, L_2^{(k_2,\vec{z}_L)} \in \Rinj$ and $R_1^{(k_1,k_3)}, R_2^{(k_2,\vec{z}_R)} \in \RDdist$
    \item \textbf{Disjointness:} $\Im(L_1^{(k_1,k_3)}) \cap \Im(L_2^{(k_2,\vec{z}_L)}) = \varnothing$ and $\Dom(R_1^{(k_1,k_3)}) \cap \Dom(R_2^{(k_2,\vec{z}_R)}) = \varnothing$
    \item \textbf{No extra $k_2$-correlated pairs:} There are exactly $|L_2|$ number of pairs $((x, y), (x', y'))$ with $(x, y),$ $(x', y') \in L_1^{(k_1,k_3)} \cup L_2^{(k_2,\vec{z}_L)}$ such that $x' = y \oplus k_2$, and there are exactly $|R_2|$ number of pairs $((x, y), (x', y'))$ with $(x, y), (x', y') \in R_1^{(k_1,k_3)} \cup R_2^{(k_2,\vec{z}_L)}$ such that $y' = x \oplus k_2$
\end{enumerate}
Furthermore, for all such tuples, it holds that
\begin{align}      \label{eq:Osplit_on_augmented_relation}
\Osprus \ket{L^{(k_1,k_3)}_1 \cup L_2^{(k_2,\vec{z}_L)}}_{\reg{S}} 
\ket{R^{(k_1,k_3)}_1 \cup R_2^{(k_2,\vec{z}_R)}}_{\reg{T}} 
\ket{\bfk}_{\reg{K}} = 
\ket{L_1}_{\reg{S}_1} \ket{R_1}_{\reg{T}_1} 
\ket{L_2}_{\reg{S}_2} \ket{R_2}_{\reg{T}_2} 
\ket{\bfz}_{\reg{Z}}
\ket{\bfk}_{\reg{K}}.
\end{align}
-- For $(x,y) \in L_1^{(k_1,k_3)}$,
\begin{align}
& \Osprus \ket{L^{(k_1,k_3)}_1 \cup L_2^{(k_2,\vec{z}_L)} \setminus \set{(x,y)}}_{\reg{S}} 
\ket{R^{(k_1,k_3)}_1 \cup R_2^{(k_2,\vec{z}_R)}}_{\reg{T}} 
\ket{\bfk}_{\reg{K}} 
\notag\\
&\hspace{.3\textwidth} = \ket{L_1 \setminus \set{(x \oplus k_1, y \oplus k_3)}}_{\reg{S}_1} \ket{R_1}_{\reg{T}_1} 
\ket{L_2}_{\reg{S}_2} \ket{R_2}_{\reg{T}_2} \ket{\bfz}_{\reg{Z}} \ket{\bfk}_{\reg{K}}.
\label{eq:robust_isolate}
\end{align}
-- For $(x,y) \in L_{2,\source}^{(k_2,\vec{z}_L)}$, suppose $i \in |L_2|$ is the index such that $y = z_{L,i}$. Let $y_i$ denote the $i$-th largest element in $\Im(L_2)$ and $x_i$ is the unique element such that $(x_i,y_i) \in L_2$. Then $x = x_i$ and
\begin{align}
& \Osprus \ket{L^{(k_1,k_3)}_1 \cup L_2^{(k_2,\vec{z}_L)} \setminus \set{(x,y)}}_{\reg{S}} 
\ket{R^{(k_1,k_3)}_1 \cup R_2^{(k_2,\vec{z}_R)}}_{\reg{T}} 
\ket{\bfk}_{\reg{K}} 
\notag\\
&\hspace{.15\textwidth} = \ket{L_1 \cup \set{(z_{L,i} \oplus k_2 \oplus k_1, y_i \oplus k_3)}}_{\reg{S}_1} \ket{R_1}_{\reg{T}_1} 
\ket{L_2 \setminus \set{(x_i, y_i)}}_{\reg{S}_2} \ket{R_2}_{\reg{T}_2} \ket{\bfz}_{\reg{Z}} \ket{\bfk}_{\reg{K}}.
\label{eq:robust_source}
\end{align}
-- For $(x,y) \in L_{2,\target}^{(k_2,\vec{z}_L)}$, let $i \in [|L_2|]$ be the index such that $y$ is the $i$-th largest element in $\Im(L_2)$, denoted by $y_i$, and let $x_i$ be the unique value satisfying $(x_i,y_i) \in L_2$. Then $x = z_{L,i} \oplus k_2$ and
\begin{align}
& \Osprus \ket{L^{(k_1,k_3)}_1 \cup L_2^{(k_2,\vec{z}_L)} \setminus \set{(x,y)}}_{\reg{S}} 
\ket{R^{(k_1,k_3)}_1 \cup R_2^{(k_2,\vec{z}_R)}}_{\reg{T}} 
\ket{\bfk}_{\reg{K}}
\notag\\
&\hspace{.15\textwidth} = \ket{L_1 \cup \set{(x_i \oplus k_1, z_{L,i} \oplus k_3)}}_{\reg{S}_1} \ket{R_1}_{\reg{T}_1} 
\ket{L_2 \setminus \set{(x_i, y_i)}}_{\reg{S}_2} \ket{R_2}_{\reg{T}_2} \ket{\bfz}_{\reg{Z}} \ket{\bfk}_{\reg{K}}.
\label{eq:robust_target}
\end{align}
\end{lemma}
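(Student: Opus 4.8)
The plan is to push everything through the deterministic algorithm $\Dec$ of \Cref{def:Ospru_split}: since $\Osprus$ is by definition its coherent implementation and is a partial isometry (\Cref{lem:D_is_partial_isometry}), it suffices to pin down the induced graphs of \Cref{def:rel_key_induced_graph} attached to the augmented databases and then mechanically trace $\Dec$ through its seven steps, first on the full databases (for decodability and \eqref{eq:Osplit_on_augmented_relation}) and then on each one-element puncturing of $L$ (for \eqref{eq:robust_isolate}–\eqref{eq:robust_target}). Write $L\coloneqq L_1^{(k_1,k_3)}\cup L_2^{(k_2,\vec{z}_L)}$ and $R\coloneqq R_1^{(k_1,k_3)}\cup R_2^{(k_2,\vec{z}_R)}$. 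I would first observe that the \emph{Distinctness} and \emph{Disjointness} conditions force $L\in\Rinj$ and $R\in\RDdist$, so the vertex set of $G^{\ell}_{L,k_2}$ may be identified with the elements of $L$ without multiset ambiguity. By \Cref{def:aug_relations} the pairs $\bigl((x_i,z_{L,i}),(z_{L,i}\oplus k_2,y_i)\bigr)$, $i=1,\dots,|L_2|$, are edges of $G^{\ell}_{L,k_2}$; \emph{Distinctness} (the $z_{L,i}$ are distinct and $\{\vec{z}_L\}\cap\Im(L_2)=\varnothing$, since $L_2^{(k_2,\vec{z}_L)}\in\Rinj$) shows these are genuine non-loop edges, pairwise vertex-disjoint, while \emph{No extra $k_2$-correlated pairs} says there are no further $k_2$-links at all — in particular no self-loops, no stray edge inside $L_1^{(k_1,k_3)}$, and no cross edge between the two parts. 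Hence $G^{\ell}_{L,k_2}$ is decomposable with $V_{\isolate}=L_1^{(k_1,k_3)}$, $V_{\source}=\{(x_i,z_{L,i})\}_i$ and $V_{\target}=\{(z_{L,i}\oplus k_2,y_i)\}_i$, the three being disjoint by \emph{Disjointness} and $|V_{\source}|=|V_{\target}|=|L_2|$; the same reasoning applies verbatim to $G^{r}_{R,k_2}$.

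With the graph structure in hand, the decodability claim is bookkeeping. Step~1 of $\Dec$ passes by the above; Step~2 passes because $V_{\target}(G^{\ell}_{L,k_2})$ has second coordinates $\{y_i\}$, which are distinct as $L_2\in\Rinj$ (and mirrored on the $R$-side). In Step~3 the canonical ordering of $\Im$ used by $\Dec$ to enumerate $V_{\target}$ coincides with the ordering used in \Cref{def:aug_relations}, so the triples $(e_j,x_j,y_j)$ read off are exactly $(z_{L,j},x_j,y_j)$; thus Step~4 reconstructs $L_{\isolate}=L_1$, $L_{\pair}=L_2$, $\vec{m}_L=\vec{z}_L$ and symmetrically $R_{\isolate}=R_1$, $R_{\pair}=R_2$, $\vec{m}_R=\vec{z}_R$. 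Steps~5 and~6 pass since $L_1,L_2\in\Rinj$, $R_1,R_2\in\RDdist$, the coordinates of $\vec{z}_L,\vec{z}_R$ are distinct (again by \emph{Distinctness}), and $\Im(L_{\pair})\cap\{\vec{m}_L\}=\Im(L_2)\cap\{\vec{z}_L\}=\varnothing$ (likewise for $R$). Therefore $\Dec(L,R,\bfk)=(L_1,R_1,L_2,R_2,\vec{z}_L,\vec{z}_R,k_1,k_2,k_3)$, which is precisely the decodability condition, and applying the coherent version of $\Dec$ gives \eqref{eq:Osplit_on_augmented_relation}.

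For the deletion equations I would rerun $\Dec$ on $(L\setminus\{v\},R,\bfk)$ for each $v\in L$; as $L\setminus\{v\}\subseteq L$, the graph $G^{\ell}_{L\setminus\{v\},k_2}$ is $G^{\ell}_{L,k_2}$ with the vertex $v$ and its at most one incident edge removed. If $v\in L_1^{(k_1,k_3)}$ it is isolated, no edge is lost, $V_{\isolate}$ merely drops $v$, and $\Dec$ outputs $L_{\isolate}=L_1\setminus\{(x\oplus k_1,y\oplus k_3)\}$ with all else unchanged, i.e.\ \eqref{eq:robust_isolate}. If $v=(x_i,z_{L,i})\in L_{2,\source}^{(k_2,\vec{z}_L)}$, deleting $v$ removes the $i$-th designed edge, so its head $(z_{L,i}\oplus k_2,y_i)$ (which has no outgoing edge, by \emph{Distinctness}) becomes isolated; $\Dec$ then reports $L_{\isolate}=L_1\cup\{(z_{L,i}\oplus k_2\oplus k_1,y_i\oplus k_3)\}$, $L_{\pair}=L_2\setminus\{(x_i,y_i)\}$ and $\vec{m}_L$ equal to $\vec{z}_L$ with its $i$-th coordinate deleted, which is \eqref{eq:robust_source}. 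The case $v=(z_{L,i}\oplus k_2,y_i)\in L_{2,\target}^{(k_2,\vec{z}_L)}$ is the mirror image, with the tail $(x_i,z_{L,i})$ becoming isolated and contributing $(x_i\oplus k_1,z_{L,i}\oplus k_3)$ to $L_{\isolate}$, giving \eqref{eq:robust_target}. In each case one must recheck Steps~1,2,5,6 on the punctured database; the only obligation beyond the previous paragraph is $L_{\isolate}\in\Rinj$ for the newly isolated vertex, which reduces to $y_i\oplus k_3\notin\Im(L_1)$ (resp.\ $z_{L,i}\oplus k_3\notin\Im(L_1)$) — an instance of the \emph{Disjointness} condition $\Im(L_1^{(k_1,k_3)})\cap\Im(L_2^{(k_2,\vec{z}_L)})=\varnothing$.

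The one conceptually nontrivial step is the graph-structure claim, namely that the induced graph carries \emph{exactly} the designed edge set: this is where \emph{No extra $k_2$-correlated pairs} does all the work, simultaneously forbidding stray $k_2$-links within $L_1^{(k_1,k_3)}$, cross edges between the $L_1$- and $L_2$-parts, and degenerate self-loops, and one should check carefully that the counting it provides genuinely accounts for self-loops so that $\Dec$'s Step~1 is never triggered. After that the argument is a mechanical classification of each (possibly punctured) database element into $V_{\isolate}$, $V_{\source}$ or $V_{\target}$ followed by a pass through $\Dec$, and the only hypothesis reused in the deletion analysis is \emph{Disjointness}, keeping the new isolated vertex from colliding with $\Im(L_1)$ inside $L_{\isolate}$.
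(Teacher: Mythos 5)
Your proposal is correct and follows essentially the same route as the paper's own proof: identify the induced graph structure (with the three-way vertex partition into $V_{\isolate}=L_1^{(k_1,k_3)}$, $V_{\source}$, $V_{\target}$ coming from Distinctness, Disjointness, and the exact-count condition), trace $\Dec$ through its seven steps to get decodability, and then rerun $\Dec$ on each one-element puncturing, observing that the only nontrivial new obligation is that the newly isolated vertex's contribution to $L_{\isolate}$ keeps it $\cI$-distinct, which is exactly what Disjointness provides. Your remark about self-loops is a sound reading of the counting hypothesis, and your observation that deleting a vertex can only remove edges (so decomposability, Step~2's distinctness, and Steps~5–6 are inherited) matches the paper's implicit bookkeeping.
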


\begin{proof}
The proof consists of two steps. First, we establish decodability. Then, by reusing the same argument, we prove the remaining part of the lemma. \\

\noindent {\bf Step~1:\;} Fix the tuple. Let $L$ be shorthand for $L^{(k_1,k_3)}_1 \cup L_2^{(k_2,\vec{z}_L)}$. Recall~\Cref{def:decomp_graphs,def:rel_key_induced_graph}. We first show that $G^{\ell}_{L,k_2}$ is decomposable. The first and second conditions together imply that each vertex in $G^{\ell}_{L,k_2}$ has a distinct label. Next, note that from the definition of augmented relation $L_2^{(k_2,\vec{z}_L)}$ (see~\Cref{def:aug_relations}), there are at least $|L_2|$ edges in $G^{\ell}_{L,k_2}$. Together with the third condition, they imply that these $|L_2|$ edges form a perfect matching of vertices in $L_2^{(k_2,\vec{z}_L)}$. Thus, $G^{\ell}_{L,k_2}$ is decomposable. In particular, we have
\begin{equation}    \label{eq:three_parts_good_tuples}
\begin{aligned}
    & V_{\isolate}(G^{\ell}_{L,k_2}) = L_1^{(k_1,k_3)} \\
    & V_{\source}(G^{\ell}_{L,k_2}) = \set{(x,y) \in L_2^{(k_2,\vec{z}_L)} \colon y \in \set{z_{L,1}, \ldots, z_{L,|L_2|}}} \\
    & V_{\target}(G^{\ell}_{L,k_2}) = L_2^{(k_2,\vec{z}_L)} \setminus V_{\source}(G^{\ell}_{L,k_2})
\end{aligned}
\end{equation}
By symmetry, we can show that $G^{r}_{R,k_2}$ is also decomposable, so the input will pass Step~1 of~$\Dec$ in~\Cref{def:Ospru_split}. Next, the first condition ensures the image of $V_{\target}(G^{\ell}_{L,k_2})$ is $\cI$-distinct and $V_{\target}(G^{r}_{L,k_2})$ is $\cD$-distinct, so the input will pass Step~2. By inspection, after Step~4, we obtain $L_{\isolate} = L_1, R_{\isolate} = R_1, L_{\pair} = L_2, R_{\pair} = R_2, \vec{m}_L = \vec{z}_L, \vec{m}_L = \vec{z}_R$. Finally, the first condition ensures that the input passes Steps~5 and 6. This shows that the tuple is decodable and~\Cref{eq:Osplit_on_augmented_relation}. \\

\noindent{\bf Step~2:\;} In the view of relation-key induced graphs, deleting $v = (x,y)$ from $L$ is equivalently as deleting $v$ from $G^{\ell}_{k_2,L}$, together with all edges incident to $v$, that is, all edges either entering or leaving $v$. We denote this resulting graph by $G^{\ell}_{k_2,L}\!-\!v$. Suppose $v \in V_{\isolate}(G^{\ell}_{L,k_2})$ and $L_1$ satisfies $L_1 = L'_1 \cup \set{(x \oplus k_1, y \oplus k_3)}$ for some $L'_1$. From~\Cref{eq:three_parts_good_tuples}, it is not hard to verify that
\begin{align*}
    & V_{\isolate}(G^{\ell}_{L,k_2}\!-\!v) = (L'_1)^{(k_1,k_3)} \\
    & V_{\source}(G^{\ell}_{L,k_2}\!-\!v) = \set{(x,y) \in L_2^{(k_2,\vec{z}_L)} \colon y \in \set{z_{L,1}, \ldots, z_{L,|L_2|}}} \\
    & V_{\target}(G^{\ell}_{L,k_2}\!-\!v) = L_2^{(k_2,\vec{z}_L)} \setminus V_{\source}(G^{\ell}_{L,k_2})
\end{align*}
and they pass Steps~5 and~6 of~$\Dec$. Thus, we proved~\Cref{eq:robust_isolate}.

When $v \in V_{\source}(G^{\ell}_{L,k_2})$. Suppose $L_2 = \set{(x_i,y_i)}_{i \in [|L_2|]}$ and $L_2^{(k_2,\vec{z}_L)} = \set{(x_i,z_i),(z_i \oplus k_2, y_i)}_{i \in [|L_2|]}$. In the former case, $(x,y) = (x_{i^*},z_{i^*})$ for some $i^* \in [|L_2|]$. As a result, the vertex $(z_{i^*} \oplus k_2, y_{i^*})$ in the graph $G^{\ell}_{L,k_2}\!-\!v$ becomes ``unpaired''. We may view $(z_{i^*} \oplus k_2, y_{i^*})$ being assigned to the isolated vertices. Namely,
\begin{align*}
    & V_{\isolate}(G^{\ell}_{L,k_2}\!-\!v) = (L_1 \cup \set{(z_{i^*} \oplus k_2 \oplus k_1, y_{i^*} \oplus k_3)})^{(k_1,k_3)} \\
    & V_{\source}(G^{\ell}_{L,k_2}\!-\!v) = \set{(x_i,z_i)}_{i \neq i^*} \\
    & V_{\target}(G^{\ell}_{L,k_2}\!-\!v) = \set{(z_i \oplus k_2, y_i)}_{i \neq i^*}
\end{align*}
At first sight, the resulting graph remains decomposable. However, a caveat is that the newly added element $(z_{i^*} \oplus k_2 \oplus k_1,\, y_{i^*} \oplus k_3)$ may coincide with an element of $L_1$, which would cause Step~5 of the decoder $\Dec$ (\Cref{def:Ospru_split}) to fail. Fortunately, Item~2 prevent this event from happening. This proves~\Cref{eq:robust_source}. 

Finally, if $v \in V_{\target}(G^{\ell}_{L,k_2})$, we have $(x,y) = (z_{i^*} \oplus k_2, y_{i^*})$ for some $i^* \in [|L_2|]$ and
\begin{align*}
    & V_{\isolate}(G^{\ell}_{L,k_2}\!-\!v) = (L_1 \cup \set{(x_{i^*} \oplus k_1, z_{i^*} \oplus k_3)})^{(k_1,k_3)} \\
    & V_{\source}(G^{\ell}_{L,k_2}\!-\!v) = \set{(x_i,z_i)}_{i \neq i^*} \\
    & V_{\target}(G^{\ell}_{L,k_2}\!-\!v) = \set{(z_i \oplus k_2, y_i)}_{i \neq i^*}.
\end{align*}
Similarly, Item~2 prevent this event from happening. This proves~\Cref{eq:robust_target}. 
\end{proof}

\subsection{Good Tuples and Their Combinatorial Properties}

\noindent In the following, we define good tuples.

\begin{definition}[Good tuples] \label{def:good_tuples}
For any $L_1, L_2 \in \Rinj$, $R_1, R_2 \in \RDdist$, let $\Sspru\left(\substack{L_1,L_2 \\ R_1,R_2}\right)$ denote the set of $5$-tuples $(k_1, k_2, k_3, \vec{z}_L, \vec{z}_R) \in [N] \times [N] \times [N] \times [N]^{|L_2|} \times [N]^{|R_2|}$ satisfying all the following conditions:\footnote{Recall that for a set $A \subseteq \bit^n$ and a string $x \in \bit^n$, we denote $A \oplus x \coloneqq \set{a \oplus x : a \in A}$.}
\begin{enumerate}
    \item \label{item:k1} 
    $k_1 \notin \Big( \Dom(L_1) \oplus \Dom(L_2) \Big) \cup \Big( \Dom(R_1) \oplus \Dom(R_2) \Big)$
    
    \item \label{item:k2} \(
        k_2 \notin 
        \Bigg( \bigg( \Big( \Dom(L_1) \oplus k_1 \Big) \cup \Dom(L_2) \bigg) 
        \oplus 
        \bigg( \Big( \Im(L_1) \oplus k_3 \Big) \cup \Im(L_2) \bigg) \Bigg) \\
        \cup 
        \Bigg( \bigg( \Big( \Dom(R_1) \oplus k_1 \Big) \cup \Dom(R_2) \bigg) 
        \oplus 
        \bigg( \Big( \Im(R_1) \oplus k_3 \Big) \cup \Im(R_2) \bigg) \Bigg)
    \)
    
    \item \label{item:k3} 
    $k_3 \notin \Big( \Im(L_1) \oplus \Im(L_2) \Big) \cup \Big( \Im(R_1) \oplus \Im(R_2) \Big)$
    
    \item \label{item:zL_dist} 
    $\vec{z}_L \in [N]^{|L_2|}_{\dist}$ \; and \; $\vec{z}_R \in [N]^{|R_2|}_{\dist}$
    
    \item \label{item:zL}   
    $\set{\vec{z}_L}$ and $\Big( \Im(L_1) \oplus k_3 \Big) \cup \Im(L_2) \cup \Bigg( \bigg( \Big( \Dom(L_1) \oplus k_1 \Big) \cup \Dom(L_2) \bigg) \oplus k_2 \Bigg)$ are disjoint.
    
    \item \label{item:zR} 
    $\set{\vec{z}_R}$ and $\Big( \Im(R_1) \oplus k_3 \Big) \cup \Im(R_2) \cup \Bigg( \bigg( \Big( \Dom(R_1) \oplus k_1 \Big) \cup \Dom(R_2) \bigg) \oplus k_2 \Bigg)$  are disjoint.
\end{enumerate}
\end{definition}

\begin{lemma}  \label{lem:good_tuples_satisfy_conditions}
For any $L_1, L_2 \in \Rinj$, $R_1, R_2 \in \RDdist$, every tuple in $\Sspru\left(\substack{L_1,L_2 \\ R_1,R_2}\right)$ satisfies all conditions in~\Cref{lem:conditions_robust_decodability} and is therefore robustly decodable.
\end{lemma}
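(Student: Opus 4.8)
The plan is a direct (if bookkeeping-heavy) verification: fix an arbitrary tuple $(k_1,k_2,k_3,\vec z_L,\vec z_R)\in\Sspru\left(\substack{L_1,L_2\\R_1,R_2}\right)$ and check that it satisfies each of the three hypotheses---\textbf{Distinctness}, \textbf{Disjointness}, and \textbf{No extra $k_2$-correlated pairs}---of \Cref{lem:conditions_robust_decodability}; robust decodability, together with the accompanying identities for $\Osprus$, is then exactly the conclusion of that lemma. Throughout I would work with the explicit descriptions coming from \Cref{def:aug_relations}: on the $L$-side, $\Dom(L_1^{(k_1,k_3)})=\Dom(L_1)\oplus k_1$, $\Im(L_1^{(k_1,k_3)})=\Im(L_1)\oplus k_3$, $\Dom(L_2^{(k_2,\vec z_L)})=\Dom(L_2)\cup(\set{\vec z_L}\oplus k_2)$, $\Im(L_2^{(k_2,\vec z_L)})=\set{\vec z_L}\cup\Im(L_2)$, and symmetrically on the $R$-side with $\Dom$ and $\Im$ interchanged.

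For \textbf{Distinctness}, one observes that applying a fixed XOR mask to the domain or image of a relation preserves $\cD$-distinctness and $\cI$-distinctness, so $L_1^{(k_1,k_3)}\in\Rinj$ and $R_1^{(k_1,k_3)}\in\RDdist$ hold automatically; for $L_2^{(k_2,\vec z_L)}$ the image is $\set{\vec z_L}\cup\Im(L_2)$, whose elements are pairwise distinct by \Cref{item:zL_dist} (distinctness of $\vec z_L$), distinctness of $\Im(L_2)$, and the disjointness of $\set{\vec z_L}$ from $\Im(L_2)$ contained in \Cref{item:zL}; the $R_2^{(k_2,\vec z_R)}$ case is the mirror image via \Cref{item:zL_dist} and \Cref{item:zR}. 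For \textbf{Disjointness}, $\Im(L_1^{(k_1,k_3)})\cap\Im(L_2^{(k_2,\vec z_L)})=(\Im(L_1)\oplus k_3)\cap(\set{\vec z_L}\cup\Im(L_2))$: the intersection with $\Im(L_2)$ is empty precisely because $k_3\notin\Im(L_1)\oplus\Im(L_2)$ (\Cref{item:k3}), and the intersection with $\set{\vec z_L}$ is empty because \Cref{item:zL} keeps $\set{\vec z_L}$ off $\Im(L_1)\oplus k_3$; the $R$-side statement $\Dom(R_1^{(k_1,k_3)})\cap\Dom(R_2^{(k_2,\vec z_R)})=\varnothing$ uses the $R$-clause of \Cref{item:k1} (that $k_1\notin\Dom(R_1)\oplus\Dom(R_2)$) together with \Cref{item:zR}.

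The substantive step is \textbf{No extra $k_2$-correlated pairs}, which I would phrase via the relation-key-induced graph $G^{\ell}_{L,k_2}$ of \Cref{def:rel_key_induced_graph}, writing $L\coloneqq L_1^{(k_1,k_3)}\cup L_2^{(k_2,\vec z_L)}$: a $k_2$-correlated pair is exactly a directed edge of $G^{\ell}_{L,k_2}$, and the $|L_2|$ ``planned'' pairs $(x_i,z_{L,i})\to(z_{L,i}\oplus k_2,y_i)$ internal to $L_2^{(k_2,\vec z_L)}$ are $|L_2|$ distinct edges whose endpoints are all distinct (using \Cref{item:zL_dist,item:zL} to rule out endpoint collisions and self-loops among them). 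One then exhausts every other potential edge by which block ($L_1^{(k_1,k_3)}$ or $L_2^{(k_2,\vec z_L)}$) its source and target lie in: (i) both inside $L_1^{(k_1,k_3)}$ forces $k_2\in(\Dom(L_1)\oplus k_1)\oplus(\Im(L_1)\oplus k_3)$; (ii) source in $L_1^{(k_1,k_3)}$ and target in $L_2^{(k_2,\vec z_L)}$ forces $k_2\in(\Im(L_1)\oplus k_3)\oplus\Dom(L_2)$ or $(\Im(L_1)\oplus k_3)\cap\set{\vec z_L}\neq\varnothing$; (iii) source in $L_2^{(k_2,\vec z_L)}$ and target in $L_1^{(k_1,k_3)}$ forces $k_2\in\Im(L_2)\oplus(\Dom(L_1)\oplus k_1)$ or $\set{\vec z_L}\cap((\Dom(L_1)\oplus k_1)\oplus k_2)\neq\varnothing$; and (iv) any unplanned edge with both endpoints in $L_2^{(k_2,\vec z_L)}$ forces $k_2\in\Dom(L_2)\oplus\Im(L_2)$, or $\set{\vec z_L}\cap\Im(L_2)\neq\varnothing$, or $\set{\vec z_L}\cap(\Dom(L_2)\oplus k_2)\neq\varnothing$, or two coordinates of $\vec z_L$ coinciding. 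Every $k_2$-membership statement in (i)--(iv) lies in the explicit set excluded by \Cref{item:k2} (after distributing $\oplus$ over the unions appearing there), every $\set{\vec z_L}$-coincidence is excluded by \Cref{item:zL}, and the coordinate collision is excluded by \Cref{item:zL_dist}; hence $G^{\ell}_{L,k_2}$ has exactly $|L_2|$ edges. The completely symmetric argument on $G^{r}_{R,k_2}$ (with $\Dom\leftrightarrow\Im$, using \Cref{item:k1,item:k2,item:zL_dist,item:zR}) gives exactly $|R_2|$ edges.

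With all three hypotheses of \Cref{lem:conditions_robust_decodability} verified, that lemma yields that the tuple is robustly decodable, with $\Osprus$ acting as in \eqref{eq:Osplit_on_augmented_relation} and \eqref{eq:robust_isolate}--\eqref{eq:robust_target} (and the symmetric $R$-deletion identities), which is the claim. The only real obstacle is the exhaustiveness of the case analysis in the third step: one must be sure the ``planned'' edges are counted with the correct multiplicity, that self-loops and shared-vertex configurations are genuinely excluded, and that the $R$-side really does follow by the $\Dom\leftrightarrow\Im$ symmetry rather than requiring a separate computation---but each case is pinned to a single clause of \Cref{def:good_tuples}, so no new idea is needed beyond patient unwinding of the definitions.
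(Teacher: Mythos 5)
Your overall strategy---clause-by-clause verification against the three hypotheses of \Cref{lem:conditions_robust_decodability}, then invoking that lemma---is exactly what the paper does; the paper compresses the entire verification into ``each condition can be verified directly'' after color-coding the clauses of \Cref{def:good_tuples}, and you fill in the bookkeeping. Your treatment of the $L$-side is correct and complete: the explicit $\Dom/\Im$ descriptions of $L_1^{(k_1,k_3)}$ and $L_2^{(k_2,\vec z_L)}$ are right, distinctness and disjointness land on \Cref{item:k3,item:zL_dist,item:zL}, and your case analysis of the edges of $G^{\ell}_{L,k_2}$ is exhaustive (it also correctly handles self-loops and shared vertices, which deserve to be mentioned explicitly since ``decomposable'' forbids them).

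The gap is on the $R$-side, which you dispatch by appealing to $\Dom\leftrightarrow\Im$ symmetry and citing \Cref{item:zR}. But \Cref{item:zR} as written in \Cref{def:good_tuples} is a verbatim copy of \Cref{item:zL} with $L_i$ replaced by $R_i$; it is \emph{not} the $\Dom\leftrightarrow\Im$ (and $k_1\leftrightarrow k_3$) mirror that the symmetry argument requires. Concretely, $\Dom(R_2^{(k_2,\vec z_R)})=\Dom(R_2)\cup\set{\vec z_R}$, so $R_2^{(k_2,\vec z_R)}\in\RDdist$ needs $\set{\vec z_R}\cap\Dom(R_2)=\varnothing$, while \Cref{item:zR} only gives $\set{\vec z_R}\cap\Im(R_2)=\varnothing$ and $\set{\vec z_R}\cap(\Dom(R_2)\oplus k_2)=\varnothing$; similarly the disjointness condition needs $\set{\vec z_R}\cap(\Dom(R_1)\oplus k_1)=\varnothing$, not $\set{\vec z_R}\cap(\Dom(R_1)\oplus k_1\oplus k_2)=\varnothing$. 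Taking \Cref{item:zR} literally, the lemma is false: with $L_1=L_2=R_1=\varnothing$, $R_2=\set{(1,100)}$, $\vec z_R=(1)$, $k_1=50$, $k_2=200$, $k_3=30$, all six clauses of \Cref{def:good_tuples} hold, yet $R_2^{(k_2,\vec z_R)}=\set{(1,201),(1,100)}\notin\RDdist$. What the $R$-side actually uses is that $\set{\vec z_R}$ avoid $(\Dom(R_1)\oplus k_1)\cup\Dom(R_2)\cup\bigl(((\Im(R_1)\oplus k_3)\cup\Im(R_2))\oplus k_2\bigr)$, i.e., the genuine mirror of \Cref{item:zL}. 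You are implicitly reading \Cref{item:zR} that way---which is surely the intended definition---but you should flag the discrepancy explicitly (it is almost certainly a typo in \Cref{def:good_tuples}) rather than assert that ``the $R$-side follows by symmetry.'' As a proof of the paper's literal statement, this step does not close.
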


\noindent The proof proceeds by a direct verification of all conditions and is given in~\Cref{sec:missing_defs_proofs}. \\

\begin{remark}
Indeed, one could have alternative definitions for good tuples. In particular, we only require them to be robustly decodable and satisfy all properties introduced in the following.
\end{remark}

\myparagraph{Enumerating good tuples} Here we describe how to enumerate tuples from $\Sspru\left(\substack{L_1,L_2 \\ R_1,R_2}\right)$. First, for any $L_1, L_2 \in \Rinj$ and $R_1, R_2 \in \RDdist$, we define the sets $\cB_1(L_1, L_2, R_1, R_2)$ and $\cB_3(L_1, L_2, R_1, R_2)$, which can be viewed as ``bad $k_1$ and $k_3$'':
\begin{align*}
& \cB_1(L_1, L_2, R_1, R_2) 
\coloneqq \big( \Dom(L_1)\oplus\Dom(L_2) \big) \cup \big( \Dom(R_1) \oplus\Dom(R_2) \big), \\
& \cB_3(L_1, L_2, R_1, R_2) 
\coloneqq \big( \Im(L_1)\oplus \Im(L_2) \big) \cup \big( \Im(R_1) \oplus \Im(R_2) \big).
\end{align*}
Next, for any $L_1, L_2 \in \Rinj, R_1, R_2 \in \RDdist$, and ``good $(k_1, k_3)$'' pair, namely, $k_1 \notin \cB_1(L_1, L_2, R_1, R_2)$ and $k_3 \notin \cB_3(L_1, L_2, R_1, R_2)$, define the set $\cB_2(L_1, L_2, R_1, R_2, k_1, k_3)$ consisting of ``bad $k_2$'':
\begin{align*}
\cB_2(L_1, L_2, R_1, R_2, k_1, k_3) 
& \coloneqq \Bigg( \bigg( \Big( \Dom(L_1) \oplus k_1 \Big) \cup \Dom(L_2) \bigg) \oplus \bigg( \Big( \Im(L_1) \oplus k_3 \Big) \cup \Im(L_2) \bigg) \Bigg) 
\notag \\
& \qquad \cup \Bigg( \bigg( \Big( \Dom(R_1) \oplus k_1 \Big) \cup \Dom(R_2) \bigg) \oplus \bigg( \Big( \Im(R_1) \oplus k_3 \Big) \cup \Im(R_2) \bigg) \Bigg).
\end{align*}
Finally, for any $L_1, L_2 \in \Rinj, R_1, R_2 \in \RDdist$ and ``good $(k_1, k_2, k_3)$'' such that $k_1 \notin \cB_1(L_1, L_2, R_1, R_2)$, $ k_3 \notin \cB_3(L_1, L_2, R_1, R_2)$, and $k_2 \notin \cB_2(L_1, L_2, R_1, R_2, k_1, k_3)$, define the sets consisting of ``bad $\vec{z}_L$ and $\vec{z}_R$'' as
\begin{align*}
& \cB_L \left(\substack{L_1,L_2,\\ R_1,R_2,\\ k_1,k_2,k_3}\right) \coloneqq 
\Bigg\{ \vec{z}_L \in [N]^{|L_2|}: \vec{z}_L \notin [N]^{|L_2|}_\dist
\notag \\
& \qquad \lor \ \exists i \in [|L_2|] \st z_{L,i} \in 
\Big( \Im(L_1) \oplus k_3 \Big) 
\cup \Im(L_2) 
\cup \Bigg( \bigg( \Big( \Dom(L_1) \oplus k_1 \Big) \cup \Dom(L_2) \bigg) \oplus k_2 \Bigg) 
\Bigg\},
\end{align*}
\begin{align*}
& \cB_R \left(\substack{L_1,L_2,\\ R_1,R_2,\\ k_1,k_2,k_3}\right) \coloneqq
\Bigg\{ \vec{z}_R \in [N]^{|R_2|}: \vec{z}_R \notin [N]^{|R_2|}_{\dist}
\notag \\ 
& \qquad \lor \ \exists i \in [|R_2|] \st z_{R,i} \in 
\Big( \Im(R_1) \oplus k_3 \Big)
\cup \Im(R_2) 
\cup \Bigg( \bigg( \Big( \Dom(R_1) \oplus k_1 \Big) \cup \Dom(R_2) \bigg) \oplus k_2 \Bigg)
\Bigg\}.
\end{align*}
Therefore, one can enumerate elements in $\Sspru\left(\substack{L_1,L_2 \\ R_1,R_2}\right)$ by first choosing $(k_1,k_3)$, then $k_2$ second, and finally $(\vec{z}_L,\vec{z}_R)$, ensuring that each choice avoids being bad. In addition, if all $L_1, L_2, R_1, R_2$ are all of polynomial size, then most choices of $k_1, k_2, k_3, \vec{z}_L$ and $\vec{z}_R$ are good. This is simply because the number of good keys they can eliminate is at most quadratic in the size of the relations.

\begin{lemma}   \label{lem:number_of_good_tuples}
For any integer $t \geq 0$, $L_1, L_2 \in \Rinj_{\leq t}$, $R_1, R_2 \in \RDdist_{\leq t}$, the following holds. 
\begin{enumerate}
    \item $\cB_1(L_1,L_2,R_1,R_2)$ and $\cB_3(L_1,L_2,R_1,R_2)$ each occupy at most a $2t^2/N$ fraction of the universe $[N]$.
    \item For any $k_1 \notin \cB_1(L_1,L_2,R_1,R_2)$ and $k_3 \notin \cB_3(L_1,L_2,R_1,R_2)$, the set $\cB_2(L_1,L_2,R_1,R_2,k_1,k_3)$ occupies at most a $8t^2/N$ fraction of the universe $[N]$.
    \item For any $k_1 \notin \cB_1(L_1,L_2,R_1,R_2)$, $k_3 \notin \cB_3(L_1,L_2,R_1,R_2)$, and $k_2 \notin \cB_2(L_1,L_2,R_1,R_2,k_1,k_3)$, the sets $\cB_L\!\left(\substack{L_1,L_2\\ R_1,R_2\\ k_1,k_2,k_3}\right)$ and $\cB_R\!\left(\substack{L_1,L_2\\ R_1,R_2\\ k_1,k_2,k_3}\right)$ each occupy at most a $5t^2/N$ fraction of their respective universes, $[N]^{|L_2|}$ and $[N]^{|R_2|}$.
\end{enumerate}
As a corollary, all but a $22t^2/N$ fraction of the elements in $[N]^3 \times [N]^{|L_2|} \times [N]^{|R_2|}$ are good tuples.
\end{lemma}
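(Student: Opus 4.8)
The plan is to prove the three enumerated density bounds by elementary counting---each ``bad'' set is a union of XOR-sumsets or of low-probability coordinate events, and every relation in sight has size at most $t$---and then to deduce the corollary by a chained union bound over the five coordinates $(k_1,k_3,k_2,\vec z_L,\vec z_R)$ chosen in the order used by the enumeration discussion preceding the lemma. The only facts needed are that $|A \oplus B| \le |A|\cdot|B|$ for sets $A,B \subseteq [N]$, that XOR-ing a set by a fixed string preserves cardinality, and that $|\Dom(S)|, |\Im(S)| \le |S|$ for any relation $S$.

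For the first bound, $|\Dom(L_i)| \le |L_i| \le t$ and $|\Dom(R_i)| \le |R_i| \le t$, so $\Dom(L_1)\oplus\Dom(L_2)$ and $\Dom(R_1)\oplus\Dom(R_2)$ each have at most $t^2$ elements, whence $|\cB_1| \le 2t^2$; the bound on $\cB_3$ is identical with $\Im$ in place of $\Dom$. For the second bound, with $k_1,k_3$ fixed the sets $(\Dom(L_1)\oplus k_1)\cup\Dom(L_2)$ and $(\Im(L_1)\oplus k_3)\cup\Im(L_2)$ each have size at most $2t$, so their XOR-sumset has size at most $4t^2$, and adding the symmetric contribution of $R_1,R_2$ yields $|\cB_2| \le 8t^2$. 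Dividing by $N$ gives the stated fractions.

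For the third bound, $\cB_L$ is the union of (a) the non-distinct vectors in $[N]^{|L_2|}$ and (b) the vectors with some coordinate in the forbidden set $S_L := (\Im(L_1)\oplus k_3)\cup\Im(L_2)\cup\big(((\Dom(L_1)\oplus k_1)\cup\Dom(L_2))\oplus k_2\big)$. A union bound over the at most $\binom{|L_2|}{2} \le t^2/2$ coordinate pairs shows (a) has density at most $t^2/(2N)$; since $|S_L| \le t+t+2t = 4t$, a union bound over the at most $t$ coordinates shows (b) has density at most $4t^2/N$. Hence the density of $\cB_L$ is at most $9t^2/(2N) \le 5t^2/N$, and $\cB_R$ is handled symmetrically.

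For the corollary, sample $k_1,k_2,k_3 \gets [N]$, $\vec z_L \gets [N]^{|L_2|}$, and $\vec z_R \gets [N]^{|R_2|}$ independently and uniformly; unwinding \Cref{def:good_tuples}, the tuple fails to be good exactly when one of $\{k_1 \in \cB_1\}$, $\{k_3 \in \cB_3\}$, $\{k_2 \in \cB_2(k_1,k_3)\}$, $\{\vec z_L \in \cB_L(k_1,k_2,k_3)\}$, $\{\vec z_R \in \cB_R(k_1,k_2,k_3)\}$ holds. I would bound the probability of this union by the chain of terms $\Pr[k_1 \in \cB_1]$, $\Pr[k_3 \in \cB_3]$, $\Pr[k_2 \in \cB_2 \mid k_1,k_3 \text{ good}]$, $\Pr[\vec z_L \in \cB_L \mid k_1,k_2,k_3 \text{ good}]$, and $\Pr[\vec z_R \in \cB_R \mid k_1,k_2,k_3 \text{ good}]$, which are at most $2t^2/N$, $2t^2/N$, $8t^2/N$, $5t^2/N$, and $5t^2/N$ by the three bounds above, for a total of $22t^2/N$. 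The one point needing care---and essentially the only obstacle in an otherwise routine argument---is this conditioning: the second and third bounds hold only for already-good earlier choices, so the union bound must be arranged so that the earlier bad events are excluded before those bounds are applied, which the chained form does.
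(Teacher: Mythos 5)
Your proof is correct and follows the same approach the paper sketches — bounding each bad set via $|A\oplus B|\le|A|\,|B|$ and then a chained union bound over $(k_1,k_3,k_2,\vec z_L,\vec z_R)$. The paper's proof is just a one-liner citing the sumset bound for Item~1 and saying Items~2 and~3 "follow similarly"; you supply the full arithmetic (including the $9t^2/(2N)\le 5t^2/N$ slack in Item~3) and the ordering of the union bound for the corollary, all of which is sound.
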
    
\begin{proof}
Notice that for any sets $A,B$, the size of the set $A \oplus B$ is at most $|A| \cdot |B|$. Thus, Item~1 follows. Items~2 and~3 can be proved in a similar manner.
\end{proof}

Good tuples satisfy the following property.
\begin{lemma}[Monotonicity]  \label{lem:good_tuples_monotonicity}
For any $x \in [N], L_1, L_2 \in \Rinj$, $R_1, R_2 \in \RDdist$, the following holds:
\begin{enumerate}
    \item There do not exist $(k_1, k_2, k_3, \vec{z}_L, \vec{z}_R) \notin \Sspru\left(\substack{L_1,L_2 \\ R_1,R_2}\right)$ and $y \notin \Im(L_1)$ such that $(k_1, k_2, k_3, \vec{z}_L, \vec{z}_R) \in \Sspru\left(\substack{L_1 \cup \set{(x,y)},L_2 \\ R_1,R_2}\right)$.
    \item There do not exist $(k_1, k_2, k_3, \vec{z}_L, \vec{z}_R) \notin \Sspru\left(\substack{L_1,L_2 \\ R_1,R_2}\right), z \in [N]$, and $y \notin \Im(L_2)$ such that $(k_1, k_2, k_3, \allowbreak \vec{z}_L^{\,(i \gets z)}, \vec{z}_R) \in \Sspru\left(\substack{L_1, L_2  \cup \set{(x,y)} \\ R_1,R_2}\right)$, where $i$ is the index such that $y$ is the $i$-th largest element in $\Im(L_2) \cup \set{y}$ and $\vec{z}_L^{\,(i \gets z)}$ denotes the vector obtained by inserting $z$ into the $i$-th coordinate of $\vec{z}_L$ and shifting all subsequent coordinates one position to the right.
\end{enumerate}
As a corollary, for any $x \in [N], L_1, L_2 \in \Rinj$, $R_1, R_2 \in \RDdist$ and $(k_1, k_2, k_3, \vec{z}_L, \vec{z}_R) \notin \Sspru\left(\substack{L_1,L_2 \\ R_1,R_2}\right)$, the following holds:
\begin{enumerate}
    \item For every $y \notin \Im(L_1)$, $\Sspru\left(\substack{L_1 \cup \set{(x,y)},L_2 \\ R_1,R_2}\right) \subseteq \Sspru\left(\substack{L_1,L_2 \\ R_1,R_2}\right)$.
    \item For every $y \notin \Im(L_2)$, $(k_1, k_2, k_3, \vec{z}_L, \vec{z}_R) \in \Sspru\left(\substack{L_1, L_2  \cup \set{(x,y)} \\ R_1,R_2}\right)$, it holds that $(k_1, k_2, k_3, \vec{z}_{L,-i}, \vec{z}_R) \in \Sspru\left(\substack{L_1, L_2 \\ R_1,R_2}\right)$, where $i$ is the index such that $y$ is the $i$-th largest element in $\Im(L_2) \cup \set{y}$ and $\vec{z}_{L,- i}$ denotes the vector obtained by deleting its $i$-th coordinate and shifting all subsequent coordinates one position to the left.
\end{enumerate}
\end{lemma}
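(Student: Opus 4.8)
The plan is to prove both non-existence statements via their contrapositives: adding a pair to $L_1$ (\resp to $L_2$, together with the corresponding insertion of a coordinate into $\vec{z}_L$) can only \emph{shrink} the good-tuple set. This reduces the entire lemma to inspecting the six conditions of \Cref{def:good_tuples} one at a time and checking that each is \emph{monotone}: the ``forbidden region'' it carves out is non-decreasing under the enlargement, and the distinctness condition is unaffected. The only ingredients I need are elementary: $\Dom(\cdot)$ and $\Im(\cdot)$ grow (or stay the same) when a pair is added to a relation; $A \oplus B$ is monotone in each argument; and passing from $\vec{z}_L$ to $\vec{z}_L^{\,(i \gets z)}$ enlarges the underlying set from $\set{\vec{z}_L}$ to $\set{\vec{z}_L} \cup \set{z}$, with every coordinate of $\vec{z}_L$ still a coordinate of $\vec{z}_L^{\,(i \gets z)}$.

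For the first claim I would fix $(x,y)$ with $y \notin \Im(L_1)$ (so that $L_1 \cup \set{(x,y)} \in \Rinj$) and put $L_1' \coloneqq L_1 \cup \set{(x,y)}$. Since $\Dom(L_1') \supseteq \Dom(L_1)$ and $\Im(L_1') \supseteq \Im(L_1)$, while $|L_2|$ and $|R_2|$ are unchanged, every forbidden region in Conditions~\ref{item:k1}--\ref{item:k3} and~\ref{item:zL} is non-decreasing under $L_1 \mapsto L_1'$, the distinctness Condition~\ref{item:zL_dist} and Condition~\ref{item:zR} (which depends only on $R$) are untouched, and the $R$-dependent parts of Conditions~\ref{item:k1}--\ref{item:k3} are untouched. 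Hence a tuple violating some condition for $(L_1,L_2,R_1,R_2)$ violates the same condition for $(L_1',L_2,R_1,R_2)$, so it remains outside the good-tuple set; this is exactly the first claim.

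For the second claim I would fix $(x,y)$ with $y \notin \Im(L_2)$, let $i$ be the rank of $y$ in $\Im(L_2) \cup \set{y}$, and put $L_2' \coloneqq L_2 \cup \set{(x,y)}$. Monotonicity of $\Dom$ and $\Im$ again takes care of Conditions~\ref{item:k1}--\ref{item:k3}. For Condition~\ref{item:zL_dist}, a repeated coordinate in $\vec{z}_L$ produces a repeated coordinate in $\vec{z}_L^{\,(i \gets z)}$; for Condition~\ref{item:zL}, a coordinate of $\vec{z}_L$ lying in the forbidden set for $(L_1,L_2,\dots)$ is still a coordinate of $\vec{z}_L^{\,(i \gets z)}$ and lies in the (at least as large) forbidden set for $(L_1,L_2',\dots)$; Condition~\ref{item:zR} and the $R$-parts are unchanged. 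Thus $(k_1,k_2,k_3,\vec{z}_L,\vec{z}_R)$ outside the good-tuple set for $(L_1,L_2,R_1,R_2)$ forces $(k_1,k_2,k_3,\vec{z}_L^{\,(i \gets z)},\vec{z}_R)$ outside the good-tuple set for $(L_1,L_2',R_1,R_2)$, for every $z$; this is the second claim.

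The corollary then follows immediately: item~1 is the set-inclusion restatement of the first claim (the tuple parametrization does not change when $L_1$ grows), and for item~2, given a good tuple $(k_1,k_2,k_3,\vec{z}_L,\vec{z}_R)$ for $(L_1, L_2 \cup \set{(x,y)}, R_1, R_2)$ (here $\vec{z}_L$ has length $|L_2|+1$), I would write $\vec{z}_L = (\vec{z}_{L,-i})^{\,(i \gets z_{L,i})}$ with $i$ the rank of $y$ in $\Im(L_2) \cup \set{y}$ and apply the contrapositive of the second claim with vector $\vec{z}_{L,-i}$ and $z = z_{L,i}$, obtaining $(k_1,k_2,k_3,\vec{z}_{L,-i},\vec{z}_R) \in \Sspru\left(\substack{L_1,L_2 \\ R_1,R_2}\right)$. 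I do not foresee a real obstacle here; the only point requiring care is to keep the index $i$ (the rank of $y$ in $\Im(L_2)\cup\set{y}$) consistent between the insertion/deletion operations on $\vec{z}_L$ and the ordering conventions of \Cref{def:aug_relations}, so that inserting a coordinate at position $i$ on the $\hyb_3$ side corresponds exactly to the $i$-th image element of $L_2$ on the $\hyb_4$ side. The rest is a routine verification, condition by condition.
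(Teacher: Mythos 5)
Your proposal is correct and follows essentially the same argument as the paper: both proofs take the contrapositive and verify, condition by condition in \Cref{def:good_tuples}, that a violated condition remains violated after enlarging $L_1$ (resp.\ $L_2$ together with the corresponding insertion into $\vec{z}_L$), using that $\Dom$, $\Im$, and the sets $\set{\vec{z}_L}$ only grow and that the conditions depend on $\vec{z}_L$ only through its underlying set. Yours is a bit more explicit about which conditions are untouched versus monotone, but there is no real difference in method.
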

\begin{proof}
Since $(k_1, k_2, k_3, \vec{z}_L, \vec{z}_R) \notin \Sspru\left(\substack{L_1,L_2 \\ R_1,R_2}\right)$, some condition in~\Cref{def:good_tuples} must be violated. To prove Item~1, simply note that the same condition remains violated after adding $(x,y)$ to $L_1$. To prove Item~2, observe that the conditions in~\Cref{def:good_tuples} are insensitive to the ordering of $\vec{z}_L$ and $\vec{z}_R$. Thus, although inserting $z$ into $\vec{z}_L$ changes its size and ordering, the same condition remains violated.
\end{proof}

The following lemmas will be used in~\Cref{sec:spru:1st_oracle}.

\begin{lemma} \label{lem:count_x}
For any $t \ge 0, x \in [N], L_1, L_2 \in \Rinj_{\leq t}$, $R_1, R_2 \in \RDdist_{\leq t}$, there are at most a $t(22t+4)/N$ fraction of the elements $(\bfk,\bfz) \in [N]^3 \times [N]^{|L_2|} \times [N]^{|R_2|}$ for which no $y \notin \Im(L_1)$ satisfies $(\bfk,\bfz) \in \Sspru\!\left(\substack{L_1 \cup \{(x,y)\},L_2 \\ R_1,R_2}\right)$.
\end{lemma}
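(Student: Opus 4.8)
The plan is to bound the ``bad'' set $\bad \subseteq [N]^3 \times [N]^{|L_2|} \times [N]^{|R_2|}$ — the tuples $(\bfk,\bfz)$ for which \emph{no} $y \notin \Im(L_1)$ puts $(\bfk,\bfz)$ into $\Sspru\!\left(\substack{L_1\cup\{(x,y)\},L_2\\R_1,R_2}\right)$ — by a union of a few explicit ``bad events,'' each of controlled density. First I would dispose of the degenerate regime. If $t=0$ all four relations are empty, $\Sspru\!\left(\substack{\varnothing,\varnothing\\\varnothing,\varnothing}\right)$ is all of $[N]^3$, and for any $(k_1,k_2,k_3)$ any $y \ne x\oplus k_1\oplus k_2\oplus k_3$ (which exists since $N=2^\secp\ge 2$) gives $(k_1,k_2,k_3)\in\Sspru\!\left(\substack{\{(x,y)\},\varnothing\\\varnothing,\varnothing}\right)$, so $\bad=\varnothing$; and if $t\ge 1$ and $N\le 5t+1$ then $t(22t+4)/N>1$ and there is nothing to prove. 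So I may assume $t\ge 1$ and $N\ge 5t+2$.

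The first bad event is $E_0 \coloneqq [N]^3 \times [N]^{|L_2|} \times [N]^{|R_2|} \setminus \Sspru\!\left(\substack{L_1,L_2\\R_1,R_2}\right)$. By the corollary inside \Cref{lem:good_tuples_monotonicity}, $\Sspru\!\left(\substack{L_1\cup\{(x,y)\},L_2\\R_1,R_2}\right) \subseteq \Sspru\!\left(\substack{L_1,L_2\\R_1,R_2}\right)$ for every $y\notin\Im(L_1)$, so any bad tuple outside $E_0$ already lies in $\Sspru\!\left(\substack{L_1,L_2\\R_1,R_2}\right)$; and $E_0$ has density at most $22t^2/N$ by \Cref{lem:number_of_good_tuples}. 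For a tuple $(\bfk,\bfz)$ that \emph{does} lie in $\Sspru\!\left(\substack{L_1,L_2\\R_1,R_2}\right)$, I would expand all six conditions of \Cref{def:good_tuples} with $L_1$ replaced by $L_1\cup\{(x,y)\}$, using that $\Dom$ gains the element $x$ and $\Im$ gains the element $y$ (as $y\notin\Im(L_1)$) while $L_2,R_1,R_2$ and the sizes $|L_2|,|R_2|$ are unchanged. Since the original six conditions already hold, each new condition collapses to a few extra constraints; the key step is to split these into the ones independent of $y$ — namely (a) $k_1\notin\Dom(L_2)\oplus x$ (Condition~1), (b) $k_2\notin\bigl((\Im(L_1)\oplus k_3)\cup\Im(L_2)\bigr)\oplus(x\oplus k_1)$ (Condition~2), and (c) $x\oplus k_1\oplus k_2\notin\{\vec z_L\}$ (Condition~5) — and those depending on $y$, which together exclude at most $5t+1$ values of $y$ (about $t$ from $y\notin\Im(L_1)$, at most $2t+1$ from Condition~2, at most $t$ from Condition~3, at most $t$ from Condition~5).

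Granting this, if $(\bfk,\bfz)\in\Sspru\!\left(\substack{L_1,L_2\\R_1,R_2}\right)$ and (a), (b), (c) all hold, then at least $N-(5t+1)\ge 1$ values of $y$ satisfy every constraint, so $(\bfk,\bfz)\notin\bad$. Hence $\bad\subseteq E_0\cup E_a\cup E_b\cup E_c$, where $E_a,E_b,E_c$ are the (unconditional) violation sets of (a), (b), (c). Finally I count: $E_a$ forces $k_1$ into a set of size $\le|\Dom(L_2)|\le t$, density $\le t/N$; for each fixing of $(k_1,k_3,\vec z_L,\vec z_R)$, $E_b$ forces $k_2$ into a set of size $\le|(\Im(L_1)\oplus k_3)\cup\Im(L_2)|\le 2t$, density $\le 2t/N$; and $E_c$ forces $k_2$ into $\{\vec z_L\}\oplus(x\oplus k_1)$, of size $\le|L_2|\le t$, density $\le t/N$. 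A union bound gives density at most $22t^2/N+t/N+2t/N+t/N=t(22t+4)/N$, as claimed. (All size estimates use $|A\oplus B|\le|A|\,|B|$ and $|\{\vec z_L\}|\le|L_2|$.)

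The main obstacle is the middle step: faithfully expanding all six conditions of \Cref{def:good_tuples} under $L_1\mapsto L_1\cup\{(x,y)\}$ and correctly separating the resulting extra constraints into their $y$-free and $y$-dependent parts. In particular one must not overlook the slightly non-obvious $y$-free constraint $x\oplus k_1\oplus k_2\notin\{\vec z_L\}$ from Condition~5 — it is exactly the new element that the set $((\Dom(L_1)\oplus k_1)\cup\Dom(L_2))\oplus k_2$ acquires when $x$ enters $\Dom(L_1)$ — which supplies the third ``$+t$'' and is needed to reach precisely $22t^2+4t$. Everything downstream is routine counting and a union bound.
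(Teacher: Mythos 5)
Your proof matches the paper's: after excluding the non-good tuples (density at most $22t^2/N$ by \Cref{lem:number_of_good_tuples}), both proofs identify exactly the same three $y$-free bad events — your (a), (b), (c) correspond precisely to the paper's events $\mathsf{K}_1$, $\mathsf{K}_2$, $\mathsf{Z}_L$ — and conclude by a union bound giving $22t^2/N + t/N + 2t/N + t/N$. Your write-up is in fact slightly more careful than the paper's: you verify explicitly that the $y$-dependent constraints exclude at most $5t+1$ values of $y$ (hence a valid $y$ exists once $N \ge 5t+2$, with the small-$N$ and $t=0$ regimes disposed of separately), whereas the paper simply asserts ``otherwise, there always exists some $y$.''
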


\begin{proof}
Let us sample $(\bfk,\bfz)$ uniformly at random from $[N]^3 \times [N]^{|L_2|} \times [N]^{|R_2|}$. Define the following events
\begin{itemize}
    \item $\mathsf{Good} \colon (\bfk,\bfz) \in \Sspru\!\left(\substack{L_1,L_2 \\ R_1,R_2}\right)$
    \item $\mathsf{K}_1 \colon k_1 \in x \oplus \Dom(L_2)$
    \item $\mathsf{K}_2 \colon k_2 \in x \oplus k_1 \oplus \bigg( \Big( \Im(L_1) \oplus k_3 \Big) \cup \Im(L_2) \bigg)$
    \item $\mathsf{Z}_L \colon \set{\vec{z}_L} \cap \set{x \oplus k_1 \oplus k_2} \neq \varnothing$
\end{itemize}
Suppose $\mathsf{Good}$ occurs, and consider the items in~\Cref{def:good_tuples}. 
For there to be no $y$ such that 
$(\bfk,\bfz) \in \Sspru\!\left(\substack{L_1 \cup \{(x,y)\},L_2 \\ R_1,R_2}\right)$, 
one of the conditions imposed by the addition of $x$ to $\Dom(L_1)$ must be violated. 
These correspond to the events $\mathsf{K}_1$, $\mathsf{K}_2$, and $\mathsf{Z}_L$. 
Otherwise, there always exists some $y$ such that 
$(\bfk,\bfz) \in \Sspru\!\left(\substack{L_1 \cup \{(x,y)\},L_2 \\ R_1,R_2}\right)$. 
Therefore, by the union bound, the fraction of such $(\bfk,\bfz)$ is at most
\[
\Pr[\neg \mathsf{Good}] + \Pr[\mathsf{K}_1] + \Pr[\mathsf{K}_2] + \Pr[\mathsf{Z}_L]
   \;\le\; \frac{22t^2}{N} + \frac{t}{N} + \frac{2t}{N} + \frac{t}{N} 
   \;=\; \frac{t(22t+4)}{N},
\]
where $\Pr[\mathsf{K}_1]$, $\Pr[\mathsf{K}_2]$, and $\Pr[\mathsf{Z}_L]$ are bounded 
by sampling $k_1$, $k_2$, and $\vec{z}_L$, respectively, at the end.
\end{proof}

\begin{lemma} \label{lem:comb_F1Ldagger_y_difference}
For any $t \ge 0, x \in [N]$, $L_1, L_2 \in \Rinj_{\leq t}$, 
$R_1, R_2 \in \RDdist_{\leq t}$, $\bfk \in [N]^3$, $\bfz \in [N]^{|L_2|} \times [N]^{|R_2|}$, if there exists $y \notin \Im(L_1)$ such that $(\bfk, \bfz) \in \Sspru\!\left(\substack{L_1 \cup \set{(x,y)},L_2 \\ R_1,R_2}\right)$, then there are at most $4t$ values of $y \notin \Im(L_1)$ such that $(\bfk, \bfz) \notin \Sspru\!\left(\substack{L_1 \cup \set{(x,y)},L_2 \\ R_1,R_2}\right)$.
\end{lemma}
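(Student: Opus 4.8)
The plan is to fix $x,\bfk,\bfz,L_1,L_2,R_1,R_2$ and count directly the values $y \notin \Im(L_1)$ for which $(\bfk,\bfz)$ fails to lie in $\Sspru\!\left(\substack{L_1 \cup \set{(x,y)},L_2 \\ R_1,R_2}\right)$. The first move is to normalize the situation via monotonicity: by hypothesis there is some $y^\ast \notin \Im(L_1)$ with $(\bfk,\bfz) \in \Sspru\!\left(\substack{L_1 \cup \set{(x,y^\ast)},L_2 \\ R_1,R_2}\right)$, and by the corollary to \Cref{lem:good_tuples_monotonicity} this set is contained in $\Sspru\!\left(\substack{L_1,L_2 \\ R_1,R_2}\right)$. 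Hence \emph{all six} conditions of \Cref{def:good_tuples} already hold for the un-augmented data $(L_1,L_2,R_1,R_2)$, and in particular (taking $y=y^\ast$) for $(L_1\cup\set{(x,y^\ast)},L_2,R_1,R_2)$.

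Next I would isolate the $y$-dependence. Replacing $L_1$ by $L_1 \cup \set{(x,y)}$ changes $\Dom(L_1)$ into $\Dom(L_1)\cup\set{x}$ — which does not depend on $y$ — and $\Im(L_1)$ into $\Im(L_1)\cup\set{y}$. Consequently every condition touching only $R_1,R_2$ (Condition~\ref{item:zR} and the $R$-halves of Conditions~\ref{item:k1}--\ref{item:k3}), Condition~\ref{item:zL_dist}, and Condition~\ref{item:k1} (which sees $L_1$ only through $\Dom(L_1)$) are $y$-independent, or reduce to their value at $y^\ast$, so they cannot be violated. Each of the remaining conditions — \ref{item:k2}, \ref{item:k3}, \ref{item:zL} — splits into a $y$-independent part (the same condition with $\Im(L_1)$ in place of $\Im(L_1)\cup\set{y}$, true because it holds at $y^\ast$) plus one ``new'' constraint forced by the fresh image element $y$. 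Working these out: Condition~\ref{item:k3} is violated only when $k_3\in\set{y}\oplus\Im(L_2)$, i.e.\ $y \in k_3 \oplus \Im(L_2)$, at most $|\Im(L_2)| \le |L_2| \le t$ choices; Condition~\ref{item:zL} only when $y \oplus k_3 \in \set{\vec z_L}$, i.e.\ $y \in k_3 \oplus \set{\vec z_L}$, and since $\vec z_L\in[N]^{|L_2|}_{\dist}$ (forced by Condition~\ref{item:zL_dist} at $y^\ast$) this is at most $|L_2|\le t$ choices; Condition~\ref{item:k2} only when $k_2$ lands in $\big(((\Dom(L_1)\cup\set x)\oplus k_1)\cup\Dom(L_2)\big)\oplus\set{y\oplus k_3}$, i.e.\ $y$ lies in a fixed translate of $((\Dom(L_1)\cup\set x)\oplus k_1)\cup\Dom(L_2)$, contributing at most $2t$ choices. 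Summing the three bounds yields at most $4t$ bad $y$.

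The step I expect to be the main obstacle is the count for Condition~\ref{item:k2}. Its forbidden set is the Minkowski sum $\big((\Dom(L_1\cup\set{(x,y)})\oplus k_1)\cup\Dom(L_2)\big)\oplus\big((\Im(L_1\cup\set{(x,y)})\oplus k_3)\cup\Im(L_2)\big)$, and \emph{both} factors enlarge when $(x,y)$ is inserted, so one must argue carefully that, after subtracting the $y$-independent contributions already known to avoid $k_2$, the only genuinely new, $y$-varying part is the single translate $\big(((\Dom(L_1)\cup\set x)\oplus k_1)\cup\Dom(L_2)\big)\oplus\set{y\oplus k_3}$ — in particular the cross-term $\set{x\oplus k_1}\oplus\set{y\oplus k_3}$ is already a point of this set — so that Condition~\ref{item:k2} rules out only $O(t)$ values of $y$ rather than $O(t^2)$, and a careful accounting of the new domain element $x$ (which is genuinely new only when $|\Dom(L_1)|$ is not already maximal) pins the constant at $2t$. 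The analogous bookkeeping has already appeared in the proof of \Cref{lem:count_x} via the events $\mathsf{K}_1,\mathsf{K}_2,\mathsf{Z}_L$, which should make the remaining verifications routine.
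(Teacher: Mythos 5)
Your proof is correct and follows essentially the same route as the paper's: isolate the $y$-dependence (only $\Im(L_1\cup\{(x,y)\})$ varies as $y$ does), observe that Items~1, 4, and~6 of \Cref{def:good_tuples} are $y$-independent, and bound the new bad $y$ forced by Items~2, 3, and~5 by $2t$, $t$, $t$. The opening monotonicity step is a harmless but unnecessary detour (the paper just uses the hypothesis at $y^\ast$ directly), and your worry about Item~2's Minkowski sum is resolved exactly as in the paper; in fact both you and the paper slightly undercount by omitting the fresh domain element $x$, which makes the true bound $2t+1$ rather than $2t$, an off-by-one that is immaterial to every downstream use of this lemma.
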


\begin{proof}
Suppose there exists some $y \notin \Im(L_1)$ such that $(\bfk, \bfz) \in \Sspru\!\left(\substack{L_1 \cup \set{(x,y)},L_2 \\ R_1,R_2}\right)$. According to the definition of good tuples, $(\bfk, \bfz)$ satisfies all conditions in~\Cref{def:good_tuples}. 
Now, suppose we attempt to vary $y$ in such a way that $(\bfk, \bfz)$ violates one of the conditions. Since we are only modifying $\Im(L_1 \cup \set{(x,y)})$, Items~1, 4, and~6 continue to hold. We now examine Item~2. If we replace $y$ with a different value $y'$, the additional imposed constraint is
\begin{align*}
    k_2 \notin \bigg( \big(\Dom(L_1) \oplus k_1 \big) \cup \Dom(L_2) \bigg) \oplus y' \oplus k_3.
\end{align*}
Thus, in order for this condition to be violated, we must have
\begin{align*}
    y' \in \bigg( \big(\Dom(L_1) \oplus k_1 \big) \cup \Dom(L_2) \bigg) \oplus k_2 \oplus k_3,
\end{align*}
and there are at most $2t$ such values of $y'$. Similarly, there are at most $t$ values of $y'$ that violate Item~3 and at most $t$ values of $y'$ that violate Item~5. 
Hence, the total number of such $y'$ is bounded by $4t$.
\end{proof}

\begin{lemma} \label{lem:comb_F1L_tuple_difference}
For any $t \ge 0, x \in [N]$, $L_1, L_2 \in \Rinj_{\leq t}$, 
$R_1, R_2 \in \RDdist_{\leq t}$, define the following two sets:
\begin{align*}
    & \Psi_{\phantom{x,}L_1,R_1,L_2,R_2} \coloneqq 
    \set{
    (y, \bfk, \bfz) \colon 
        (\bfk, \bfz) \in \Sspru\left(\substack{L_1,L_2, \\ R_1,R_2}\right), 
        y \notin \Im(L_1)  \cup \qty( \Im(L_2^{(k_2,\vec{z}_L)}) \oplus k_3 )
    } \\
    & \Phi_{x,L_1,R_1,L_2,R_2} \coloneqq
    \set{
    (y, \bfk, \bfz) \colon 
        y \notin \Im(L_1),
        (\bfk, \bfz) \in \Sspru\left(\substack{L_1 \cup \set{(x,y)},L_2, \\ R_1,R_2}\right)
    }.
\end{align*}
Then $\Psi_{\phantom{x,}L_1,R_1,L_2,R_2} \supseteq \Phi_{x,L_1,R_1,L_2,R_2}$ and the size of their difference is at most $t(22t+8)N^{|L_2|+|R_2|+3}$.
\end{lemma}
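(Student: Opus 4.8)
The plan is to handle the two assertions of the lemma separately: the set containment $\Psi \supseteq \Phi$ follows from monotonicity of good tuples plus the defining conditions of good tuples, and the bound on $|\Psi \setminus \Phi|$ follows from a two-case argument built on \Cref{lem:comb_F1Ldagger_y_difference} and \Cref{lem:count_x}.

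\textbf{Containment.} Take $(y, \bfk, \bfz) \in \Phi_{x,L_1,R_1,L_2,R_2}$, so $y \notin \Im(L_1)$ and $(\bfk,\bfz) \in \Sspru\!\left(\substack{L_1 \cup \set{(x,y)},L_2 \\ R_1,R_2}\right)$. First I would invoke the monotonicity corollary of \Cref{lem:good_tuples_monotonicity} (Item~1), which gives $\Sspru\!\left(\substack{L_1 \cup \set{(x,y)},L_2 \\ R_1,R_2}\right) \subseteq \Sspru\!\left(\substack{L_1,L_2 \\ R_1,R_2}\right)$ for $y \notin \Im(L_1)$, so $(\bfk,\bfz) \in \Sspru\!\left(\substack{L_1,L_2 \\ R_1,R_2}\right)$. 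It then remains to verify the extra constraint $y \notin \Im(L_2^{(k_2,\vec z_L)}) \oplus k_3$ of $\Psi$. Using \Cref{def:aug_relations}, $\Im(L_2^{(k_2,\vec z_L)}) = \set{\vec z_L} \cup \Im(L_2)$, so it suffices to show $y \notin (\Im(L_2) \oplus k_3) \cup (\set{\vec z_L} \oplus k_3)$. Since $(\bfk,\bfz)$ is good with respect to $(L_1 \cup \set{(x,y)}, L_2, R_1, R_2)$: Condition~\ref{item:k3} of \Cref{def:good_tuples} gives $k_3 \notin \Im(L_1 \cup \set{(x,y)}) \oplus \Im(L_2) \supseteq \set{y} \oplus \Im(L_2)$, hence $y \notin \Im(L_2) \oplus k_3$; and Condition~\ref{item:zL} gives $\set{\vec z_L} \cap (\Im(L_1 \cup \set{(x,y)}) \oplus k_3) = \varnothing$, hence $y \oplus k_3 \notin \set{\vec z_L}$, i.e.\ $y \notin \set{\vec z_L} \oplus k_3$. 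This yields $(y,\bfk,\bfz) \in \Psi_{\phantom{x,}L_1,R_1,L_2,R_2}$.

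\textbf{Bounding the difference.} If $(y,\bfk,\bfz) \in \Psi \setminus \Phi$, then since $y \notin \Im(L_1)$ holds (it is part of membership in $\Psi$), the failure of membership in $\Phi$ forces $(\bfk,\bfz) \notin \Sspru\!\left(\substack{L_1 \cup \set{(x,y)},L_2 \\ R_1,R_2}\right)$, while $(\bfk,\bfz) \in \Sspru\!\left(\substack{L_1,L_2 \\ R_1,R_2}\right)$. I would split the admissible pairs $(\bfk,\bfz)$ into two classes. \emph{Class A}: some $y^* \notin \Im(L_1)$ satisfies $(\bfk,\bfz) \in \Sspru\!\left(\substack{L_1 \cup \set{(x,y^*)},L_2 \\ R_1,R_2}\right)$; then \Cref{lem:comb_F1Ldagger_y_difference} bounds the number of "bad" $y$ by $4t$, and summing over all $(\bfk,\bfz) \in [N]^3 \times [N]^{|L_2|} \times [N]^{|R_2|}$ contributes at most $4t\cdot N^{|L_2|+|R_2|+3}$. \emph{Class B}: no such $y^*$ exists; by \Cref{lem:count_x} there are at most $\tfrac{t(22t+4)}{N}\,N^{|L_2|+|R_2|+3}$ such $(\bfk,\bfz)$, each admitting at most $N$ choices of $y$, contributing at most $t(22t+4)\cdot N^{|L_2|+|R_2|+3}$. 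Adding the two bounds gives $|\Psi \setminus \Phi| \le (4t + 22t^2 + 4t)\,N^{|L_2|+|R_2|+3} = t(22t+8)\,N^{|L_2|+|R_2|+3}$.

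\textbf{Main obstacle.} The counting half is a routine union bound once the right partition is chosen, and it draws entirely on the two combinatorial lemmas already proved. The genuinely content-bearing point is the containment $\Psi \supseteq \Phi$: one must see that the seemingly ad hoc requirement ``$y \notin \Im(L_2^{(k_2,\vec z_L)}) \oplus k_3$'' appearing in $\Psi$ is not a loss of generality at all but is \emph{forced} by the good-tuple conditions after $(x,y)$ is appended to $L_1$ — so I expect the bookkeeping of matching exactly Conditions~\ref{item:k3} and~\ref{item:zL} to the augmented-relation description of $\Im(L_2^{(k_2,\vec z_L)})$ to be where care is required.
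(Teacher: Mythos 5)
Your proof is correct and follows essentially the same structure as the paper's. The counting half is identical: your Class A / Class B split is exactly the paper's $\mathsf{BAD}$-set dichotomy, drawing on \Cref{lem:count_x} and \Cref{lem:comb_F1Ldagger_y_difference} and yielding the same $t(22t+8)N^{|L_2|+|R_2|+3}$ arithmetic. For the containment, the paper routes through \Cref{lem:good_tuples_satisfy_conditions} and Items 1--2 of \Cref{lem:conditions_robust_decodability} (distinctness and disjointness of the augmented relations) to deduce $y \notin \Im(L_2^{(k_2,\vec z_L)}) \oplus k_3$, whereas you unpack $\Im(L_2^{(k_2,\vec z_L)}) = \set{\vec z_L} \cup \Im(L_2)$ and read the constraint directly off Conditions~\ref{item:k3} and~\ref{item:zL} of \Cref{def:good_tuples}; these are logically the same derivation, and your version is arguably a touch more self-contained since it bypasses the robust-decodability machinery.
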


\begin{proof}
Consider elements in $\Phi_{x,L_1,R_1,L_2,R_2}$. First, from~\Cref{lem:good_tuples_satisfy_conditions} and Items~1 and~2 in~\Cref{lem:conditions_robust_decodability}, we have 
\[
(L_1 \cup \set{(x,y)})^{(k_1,k_3)} \in \RIdist
\quad \text{and} \quad
\Im\left((L_1 \cup \set{(x,y)})^{(k_1,k_3)}\right) \cap \Im(L_2^{(k_2,\vec{z}_L)}) = \varnothing,
\]
which implies $y \notin \Im(L_1) \cup \left( \Im(L_2^{(k_2,\vec{z}_L)}) \oplus k_3 \right)$. Then by monotonicity (Item~1 in~\Cref{lem:good_tuples_monotonicity}), we have 
\[
\Sspru\left(\substack{L_1 \cup \set{(x,y)},L_2, \\ R_1,R_2}\right)
\subseteq \Sspru\left(\substack{L_1, L_2, \\ R_1, R_2}\right).
\]
This proves $\Psi_{\phantom{x,}L_1,R_1,L_2,R_2} \supseteq \Phi_{x,L_1,R_1,L_2,R_2}$.

Next, we bound the size of their difference. We denote by $\mathsf{BAD}$ the set of 
$(\bfk,\bfz) \in \Sspru\!\left(\substack{L_1,L_2 \\ R_1,R_2}\right)$ for which no 
$y \notin \Im(L_1)$ satisfies 
$(\bfk,\bfz) \in \Sspru\!\left(\substack{L_1 \cup \{(x,y)\},L_2 \\ R_1,R_2}\right)$. 
By~\Cref{lem:count_x}, we have 
\[
|\mathsf{BAD}| \le t(22t+4)N^{|L_2|+|R_2|+2}.
\]
Now, if we enumerate $(y,\bfk,\bfz)$ in $\Psi_{\phantom{x,}L_1,R_1,L_2,R_2}$, then one of the following holds:
\begin{enumerate}
    \item If $(\bfk,\bfz) \in \mathsf{BAD}$, then every $y \notin \Im(L_1)$ satisfies 
    $(\bfk,\bfz) \notin \Sspru\!\left(\substack{L_1 \cup \{(x,y)\},L_2 \\ R_1,R_2}\right)$, which in turn implies $(y,\bfk,\bfz) \notin \Phi_{x,L_1,R_1,L_2,R_2}$.
    \item If $(\bfk,\bfz) \notin \mathsf{BAD}$, then by~\Cref{lem:comb_F1Ldagger_y_difference} 
    there are at most $4t$ values of $y \notin \Im(L_1)$ such that 
    $(\bfk,\bfz) \notin \Sspru\!\left(\substack{L_1 \cup \{(x,y)\},L_2 \\ R_1,R_2}\right)$. All other $y$ satisfy $(y,\bfk,\bfz) \in \Phi_{x,L_1,R_1,L_2,R_2}$.
\end{enumerate}
Therefore, the size of their difference is at most
\begin{align*}
    |\mathsf{BAD}| \cdot N + N^{|L_2|+|R_2|+3} \cdot 4t
    \le t(22t+8)N^{|L_2|+|R_2|+3}.
    \tag*{\qedhere}
\end{align*}
\end{proof}

The following lemmas will be used in~\Cref{sec:spru:2nd_oracle}.

\begin{lemma} \label{lem:count_x_z}
For any $t \ge 0, x \in [N], L_1, L_2 \in \Rinj_{\leq t}$, $R_1, R_2 \in \RDdist_{\leq t}$, there are at most a $(22t^2+10t+1)/N$ fraction of elements $(z,\bfk,\vec{z}_L,\vec{z}_R) \in [N] \times [N]^3 \times [N]^{|L_2|} \times [N]^{|R_2|}$ for which no $y \notin \Im(L_2)$ satisfies $(\bfk,\vec{z}^{\,(i \gets z)}_L,\vec{z}_R) \in \Sspru\!\left(\substack{L_1,L_2 \cup \{(x,y)\} \\ R_1,R_2}\right)$ where $i$ is the index such that $y \in_i \Im(L_2) \cup \set{y}$.
\end{lemma}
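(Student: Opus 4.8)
The plan is to follow the template of the proof of~\Cref{lem:count_x}, adapting it from a query to $F_1$ (which appends a single pair to $L_1$) to a query to $F_2$ (which, in the language of augmented relations, appends a $k_2$-correlated pair and therefore also inserts a fresh coordinate $z$ into $\vec{z}_L$). I would sample $(z,\bfk,\vec{z}_L,\vec{z}_R)$ uniformly and let $\mathsf{Good}$ denote the event $(\bfk,\vec{z}_L,\vec{z}_R)\in\Sspru\left(\substack{L_1,L_2 \\ R_1,R_2}\right)$. By~\Cref{lem:number_of_good_tuples} we have $\Pr[\neg\mathsf{Good}]\le 22t^2/N$, and by Item~2 of~\Cref{lem:good_tuples_monotonicity} no choice of $y$ can work once $\mathsf{Good}$ fails. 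Hence it suffices to show that, conditioned on $\mathsf{Good}$, the event ``no $y\notin\Im(L_2)$ works'' lies inside a union of a constant number of $y$-independent events on $(z,\bfk,\vec{z}_L)$ whose probabilities sum to at most $(10t+1)/N$.

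To identify those events I would write out the six conditions of~\Cref{def:good_tuples} for the grown configuration $\bigl(L_1,\,L_2\cup\set{(x,y)},\,R_1,R_2\bigr)$ with the tuple $(\bfk,\vec{z}_L^{\,(i \gets z)},\vec{z}_R)$ and cancel everything already guaranteed by $\mathsf{Good}$. The new obligations arise solely from $x$ entering $\Dom(L_2)$, $y$ entering $\Im(L_2)$, and $z$ joining $\vec{z}_L$. The terms involving $y$ (coming from Items~2, 3, 5 and the requirement $y\notin\Im(L_2)$) rule out only $O(t)$ values of $y$, so for $N$ exceeding that count they never obstruct. The dangerous $y$-independent obligations are: $k_1\notin x\oplus\Dom(L_1)$ (Item~1); $k_2\notin x\oplus\bigl((\Im(L_1)\oplus k_3)\cup\Im(L_2)\bigr)$ (Item~2); $z\notin\set{\vec{z}_L}$ (Item~4); and, from Item~5, that $z$ avoid $(\Im(L_1)\oplus k_3)\cup\Im(L_2)\cup\bigl(((\Dom(L_1)\oplus k_1)\cup\Dom(L_2))\oplus k_2\bigr)\cup\set{x\oplus k_2}$ while $x\oplus k_2\notin\set{\vec{z}_L}$. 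The key point to establish is that if $\mathsf{Good}$ holds and none of these fails, then each of the six conditions for the grown configuration holds for all but $O(t)$ choices of $y\notin\Im(L_2)$, so a valid $y$ exists.

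For the probabilities I would, for each event, sample the relevant coordinate ($k_1$, $k_2$, or $z$) last and use $\abs{L_1},\abs{L_2}\le t$: the Item-1 event costs $\le t/N$, the Item-2 event $\le 2t/N$, $z\in\set{\vec{z}_L}$ costs $\le t/N$, the $z$-set event $\le(4t+1)/N$, and $x\oplus k_2\in\set{\vec{z}_L}$ costs $\le t/N$, for a total of at most $(9t+1)/N$. Adding $\Pr[\neg\mathsf{Good}]\le 22t^2/N$ and applying a union bound gives $(22t^2+9t+1)/N\le(22t^2+10t+1)/N$, as claimed.

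The step I expect to be the main obstacle is the middle one: the bookkeeping of expanding all six conditions for the grown configuration, cleanly separating $y$-dependent from $y$-independent terms, and verifying that the short list above is exhaustive---so that $\mathsf{Good}$ together with the negation of those events really does force a witness $y$. This is the analogue of a step only sketched in the proof of~\Cref{lem:count_x}; it is somewhat more delicate here because the augmented relation $L_2^{(k_2,\vec{z}_L)}$ grows by two elements rather than one, which is precisely what introduces the extra ``$z$-events'' and the larger additive term compared with the $t(22t+4)/N$ bound there.
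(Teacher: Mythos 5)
Your proposal is correct and takes essentially the same approach as the paper: sample $(z,\bfk,\vec z_L,\vec z_R)$ uniformly, isolate the event $\mathsf{Good}$ and bound its complement by \Cref{lem:number_of_good_tuples}, then observe that if $\mathsf{Good}$ holds, only the $y$-independent obligations introduced by adding $x$ to $\Dom(L_2)$ and $z$ to $\vec z_L$ (your $\mathsf K_1$, $\mathsf K_2$, the two $z$-events, and the $\vec z_L$-event) can block every $y$, and union-bound them by sampling the relevant coordinate last. Your accounting is in fact slightly tighter than the paper's (you correctly read off $z\ne x\oplus k_2$ and $x\oplus k_2\notin\{\vec z_L\}$ from Item~5 and get $9t+1$; the paper's $\mathsf Z$ and $\mathsf Z_L$ additionally list $z=x$ and $x\in\{\vec z_L\}$, giving $10t+1$), but both land within the claimed $(22t^2+10t+1)/N$.
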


\begin{proof}
Let us sample $(z,\bfk,\vec{z}_L,\vec{z}_R)$ uniformly at random from $[N] \times [N]^3 \times [N]^{|L_2|} \times [N]^{|R_2|}$. Define the following events
\begin{itemize}
    \item $\mathsf{Good} \colon (\bfk,\vec{z}_L,\vec{z}_R) \in \Sspru\!\left(\substack{L_1,L_2 \\ R_1,R_2}\right)$
    \item $\mathsf{K}_1 \colon k_1 \in \Dom(L_1) \oplus x$
    \item $\mathsf{K}_2 \colon k_2 \in x \oplus \bigg( \Big( \Im(L_1) \oplus k_3 \Big) \cup \Im(L_2) \bigg)$
    \item $\mathsf{Z}_L \colon z \in \set{\vec{z}_L}$ {\bf or} $x \oplus k_2 \in \set{\vec{z}_L}$ {\bf or} $x \in \set{\vec{z}_L}$
    \item $\mathsf{Z} \colon z \in \Big( \Im(L_1) \oplus k_3 \Big) \cup \Im(L_2) \cup \Bigg( \bigg( \Big( \Dom(L_1) \oplus k_1 \Big) \cup \Dom(L_2) \bigg) \oplus k_2 \Bigg)$ {\bf or} $z = x$
\end{itemize}
Suppose $\mathsf{Good}$ occurs, and consider the items in~\Cref{def:good_tuples}. For there to be no $y$ such that $(\bfk,\vec{z}^{\,(i \gets z)}_L,\vec{z}_R) \in \Sspru\!\left(\substack{L_1,L_2 \cup \{(x,y)\} \\ R_1,R_2}\right)$, one of the conditions imposed by the addition of (i) $x$ to $\Dom(L_1)$ or (ii) $z$ to $\vec{z}_L$ must be violated. These correspond to the events $\mathsf{K}_1$, $\mathsf{K}_2$, $\mathsf{Z}_L$, and $\mathsf{Z}$. Otherwise, there always exists some $y$ such that $(\bfk,\vec{z}^{\,(i \gets z)}_L,\vec{z}_R) \in \Sspru\!\left(\substack{L_1,L_2 \cup \{(x,y)\} \\ R_1,R_2}\right)$. Therefore, by the union bound, the fraction of such $(z,\bfk,\vec{z}_L,\vec{z}_R)$ is at most
\[
\Pr[\neg \mathsf{Good}] + \Pr[\mathsf{K}_1] + \Pr[\mathsf{K}_2] + \Pr[\mathsf{Z}_L] + \Pr[\mathsf{Z}]
   \;\le\; \frac{22t^2}{N} + \frac{t}{N} + \frac{2t}{N} + \frac{3t}{N} + \frac{4t+1}{N} 
   \;=\; \frac{22t^2+10t+1}{N},
\]
where $\Pr[\mathsf{K}_1]$, $\Pr[\mathsf{K}_2]$, $\Pr[\mathsf{Z}_L]$, and $\Pr[\mathsf{Z}]$ are bounded by sampling $k_1$, $k_2$, $\vec{z}_L$, and $z$, respectively, at the end.
\end{proof}

\begin{lemma} \label{lem:comb_F2Ldagger_y_difference}
For any $t \ge 0, x \in [N]$, $L_1, L_2 \in \Rinj_{\leq t}$, 
$R_1, R_2 \in \RDdist_{\leq t}$, $\bfk \in [N]^3$, $\vec{z}_L \in [N]^{|L_2|-1}$, $z \in [N]$ and $\vec{z}_R \in [N]^{|R_2|}$, if there exists $y \notin \Im(L_2)$ such that $(\bfk, \vec{z}^{\,(i \gets z)}_L, \vec{z}_R) \in \Sspru\left(\substack{L_1, L_2 \cup \set{(x,y)} \\ R_1, R_2}\right)$ where $i$ is the index such that $y \in_i \Im(L_2) \cup \set{y}$, then there are at most $4t+1$ values of $y \notin \Im(L_2)$ such that $(\bfk, \vec{z}^{\,(i \gets z)}_L, \vec{z}_R) \notin \Sspru\left(\substack{L_1, L_2 \cup \set{(x,y)} \\ R_1, R_2}\right)$ where $i$ is the index such that $y \in_i \Im(L_2) \cup \set{y}$.
\end{lemma}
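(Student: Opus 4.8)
The plan is to follow the template of the proof of~\Cref{lem:comb_F1Ldagger_y_difference}, the only structural difference being that here the pair $(x,y)$ is inserted into $L_2$ (rather than $L_1$) and simultaneously $z$ is spliced into $\vec z_L$ at the rank-determined position. First I would use the hypothesis to fix a witness $y^{\ast} \notin \Im(L_2)$, together with the index $i^{\ast}$ (the rank of $y^{\ast}$ in $\Im(L_2)\cup\set{y^{\ast}}$), for which $(\bfk,\vec z_L^{\,(i^{\ast}\gets z)},\vec z_R) \in \Sspru\!\left(\substack{L_1,\,L_2\cup\set{(x,y^{\ast})}\\ R_1,\,R_2}\right)$. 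By~\Cref{def:good_tuples} this tuple then satisfies all six conditions with respect to $(L_1,\,L_2\cup\set{(x,y^{\ast})},\,R_1,\,R_2)$. The goal is to bound the number of $y \notin \Im(L_2)$ for which the corresponding tuple leaves $\Sspru$, and show it is at most $4t+1$.

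The key observation is that varying $y$ changes the data feeding into~\Cref{def:good_tuples} only through $\Im(L_2\cup\set{(x,y)}) = \Im(L_2)\cup\set{y}$: the domain $\Dom(L_2\cup\set{(x,y)}) = \Dom(L_2)\cup\set{x}$ does not depend on $y$, and although the insertion index $i=i(y)$ alters the \emph{ordering} of $\vec z_L^{\,(i\gets z)}$, every condition in~\Cref{def:good_tuples} refers only to the underlying set $\set{\vec z_L^{\,(i\gets z)}} = \set{\vec z_L}\cup\set{z}$, which is $y$-independent --- this is exactly the order-insensitivity already invoked in the proof of~\Cref{lem:good_tuples_monotonicity}. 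Hence Items~1, 4 and~6 of~\Cref{def:good_tuples} continue to hold for every admissible $y$ as soon as they hold for $y^{\ast}$, and only Items~2, 3 and~5 --- the ones referencing $\Im(L_2)$ --- can possibly be newly violated.

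It then remains to count. Item~3 requires $k_3 \notin (\Im(L_1)\oplus(\Im(L_2)\cup\set{y}))\cup(\Im(R_1)\oplus\Im(R_2))$, whose only $y$-dependent clause is $k_3 \notin \Im(L_1)\oplus\set{y}$; this fails for at most $|\Im(L_1)|\le t$ values of $y$. Item~2 requires $k_2$ to avoid $\bigl((\Dom(L_1)\oplus k_1)\cup\Dom(L_2)\cup\set{x}\bigr)\oplus\bigl((\Im(L_1)\oplus k_3)\cup\Im(L_2)\cup\set{y}\bigr)$ together with an $R$-version that is independent of $y$; expanding the sumset, the $y$-dependent part is $k_2 \notin \bigl((\Dom(L_1)\oplus k_1)\cup\Dom(L_2)\cup\set{x}\bigr)\oplus\set{y}$, which fails for at most $|\Dom(L_1)|+|\Dom(L_2)|+1 \le 2t+1$ values of $y$. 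Item~5 requires $\set{\vec z_L^{\,(i\gets z)}}$ to be disjoint from $(\Im(L_1)\oplus k_3)\cup(\Im(L_2)\cup\set{y})\cup\bigl(\cdots\bigr)$, whose only $y$-dependent clause is $\set{y}\cap\set{\vec z_L^{\,(i\gets z)}} = \varnothing$, failing for at most $|\set{\vec z_L^{\,(i\gets z)}}| \le |L_2| \le t$ values of $y$. A union bound over the three contributions yields at most $t+(2t+1)+t = 4t+1$ bad values of $y$, as claimed.

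I do not expect a genuine obstacle here: the proof is of the same routine combinatorial flavor as~\Cref{lem:comb_F1Ldagger_y_difference}. The two points that require some care are (i) the order-insensitivity step --- arguing that the rank $i=i(y)$, and hence the literal vector $\vec z_L^{\,(i\gets z)}$ which does vary with $y$, is irrelevant because the good-tuple predicate is symmetric in the coordinates of $\vec z_L$; and (ii) the precise bookkeeping of which clauses of the relevant sumsets are $y$-dependent, so that the constant comes out to exactly $4t+1$ rather than something marginally larger (the extra ``$+1$'' arising from the single new domain element $x$ in Item~2).
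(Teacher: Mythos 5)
Your proof is correct and follows the same approach as the paper: both note that varying $y$ only affects $\Im(L_2\cup\set{(x,y)})$, that the good-tuple predicate is insensitive to the ordering of $\vec z_L^{\,(i\gets z)}$ (so only the $y$-independent set $\set{\vec z_L}\cup\set{z}$ matters), and then tally the $y$-dependent violations of Items~2, 3 and~5. You make the counting explicit where the paper simply delegates to the argument of \Cref{lem:comb_F1Ldagger_y_difference}; your identification of the ``$+1$'' as coming from the new domain element $x$ in Item~2 is precisely the right bookkeeping.
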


\begin{proof}
Suppose there exists some $y \notin \Im(L_2)$ such that $(\bfk, \vec{z}^{\,(i \gets z)}_L, \vec{z}_R) \in \Sspru\left(\substack{L_1, L_2 \cup \set{(x,y)} \\ R_1, R_2}\right)$. According to the definition of good tuples, $(\bfk, \vec{z}^{\,(i \gets z)}_L, \vec{z}_R)$ satisfies all conditions in~\Cref{def:good_tuples}. 
Now, suppose we attempt to vary $y$ in such a way that $(\bfk, \vec{z}^{\,(i \gets z)}_L, \vec{z}_R)$ violates one of the conditions. Notice that the index $i$ might vary with the value of $y$ because it is defined to be the index such that $y \in_i \Im(L_2) \cup \set{y}$. Since we are only modifying $\Im(L_2 \cup \{(x,y)\})$, and the conditions in~\Cref{def:good_tuples} depend only on the set $\{\vec{z}^{\,(i \gets z)}_L\}$ rather than on the ordering of $\vec{z}^{\,(i \gets z)}_L$, Items~1, 4, and~6 continue to hold. We can apply the same argument as in the proof of~\Cref{lem:comb_F1Ldagger_y_difference} to show that the total number of such $y'$ is $4t+1$. This completes the proof.
\end{proof}

\begin{lemma} \label{lem:comb_F2L_tuple_difference}
For any $x \in [N]$, $L_1, L_2 \in \Rinj_{\leq t}$, 
$R_1, R_2 \in \RDdist_{\leq t}$, define the following two sets:
\begin{align*}
    & \Psi_{\phantom{x,}L_1,R_1,L_2,R_2} \coloneqq 
    \set{
    (y, z, \bfk, \bfz) \colon 
        (\bfk,\bfz) \in \Sspru\left(\substack{L_1,L_2 \\ R_1,R_2}\right),
        z, y \notin \Im(L_1^{(k_1,k_3)} \cup L_2^{(k_2,\vec{z}_L)}),
        y \neq z'
    } \\
    & \Phi_{x,L_1,R_1,L_2,R_2} \coloneqq
    \set{
    (y, z, \bfk, \bfz) \colon 
        y \notin \Im(L_2),
        i \st y\, \in_i\, \Im(L_2) \cup \set{y},
        (\bfk,\vec{z}_L^{\,(i \gets z)},\vec{z}_R) \in \Sspru\left(\substack{L_1,L_2 \cup \set{(x,y)} \\ R_1,R_2}\right)
    }.
\end{align*}
Then $\Psi_{\phantom{x,}L_1,R_1,L_2,R_2} \supseteq \Phi_{x,L_1,R_1,L_2,R_2}$ and the size of their difference is at most $(22t^2+14t+2)N^{|L_2|+|R_2|+4}$.
\end{lemma}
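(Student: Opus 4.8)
The plan is to mirror the proof of~\Cref{lem:comb_F1L_tuple_difference}, replacing the insertion of $(x,y)$ into $L_1$ by the \emph{simultaneous} insertion of $(x,y)$ into $L_2$ and of $z$ into $\vec{z}_L$ at the position $i$ fixed by the rank of $y$ in $\Im(L_2)\cup\set{y}$, and using the ``second-oracle'' combinatorial estimates~\Cref{lem:count_x_z,lem:comb_F2Ldagger_y_difference} in place of~\Cref{lem:count_x,lem:comb_F1Ldagger_y_difference}. First I would establish the containment $\Psi_{\phantom{x,}L_1,R_1,L_2,R_2} \supseteq \Phi_{x,L_1,R_1,L_2,R_2}$; then I would bound $|\Psi \setminus \Phi|$ by splitting on whether the triple $(z,\bfk,\bfz)$ is ``bad'' in the sense of~\Cref{lem:count_x_z}.

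For the containment, take $(y,z,\bfk,\bfz) \in \Phi_{x,L_1,R_1,L_2,R_2}$, so that $(\bfk,\vec{z}_L^{\,(i\gets z)},\vec{z}_R) \in \Sspru\!\left(\substack{L_1,\,L_2\cup\set{(x,y)} \\ R_1,\,R_2}\right)$. By~\Cref{lem:good_tuples_satisfy_conditions} this tuple satisfies in particular Items~1 and~2 of~\Cref{lem:conditions_robust_decodability}. Since inserting $z$ into $\vec{z}_L$ at position $i$ and $(x,y)$ into $L_2$ enlarges the image of the augmented relation by exactly $\set{z,y}$ (the new pairs being $(x,z)$ and $(z\oplus k_2,y)$), i.e. $\Im\!\big((L_2\cup\set{(x,y)})^{(k_2,\vec{z}_L^{\,(i\gets z)})}\big) = \Im(L_2^{(k_2,\vec{z}_L)})\cup\set{z,y}$, the distinctness condition forces $z,y\notin\Im(L_2^{(k_2,\vec{z}_L)})$ and $z\neq y$, while the disjointness condition forces $z,y\notin\Im(L_1^{(k_1,k_3)})$; together this is exactly the membership condition of $\Psi$. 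Finally, monotonicity (Item~2 of~\Cref{lem:good_tuples_monotonicity}, applied with $\vec{z}_L^{\,(i\gets z)}$ in the role of ``$\vec{z}_L$'', using $(\vec{z}_L^{\,(i\gets z)})_{-i}=\vec{z}_L$) gives $(\bfk,\vec{z}_L,\vec{z}_R)\in\Sspru\!\left(\substack{L_1,\,L_2 \\ R_1,\,R_2}\right)$, so $(y,z,\bfk,\bfz)\in\Psi_{\phantom{x,}L_1,R_1,L_2,R_2}$.

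For the counting, let $\mathsf{BAD}$ be the set of $(z,\bfk,\vec{z}_L,\vec{z}_R)\in[N]\times[N]^3\times[N]^{|L_2|}\times[N]^{|R_2|}$ admitting no $y\notin\Im(L_2)$ with $(\bfk,\vec{z}_L^{\,(i\gets z)},\vec{z}_R)\in\Sspru\!\left(\substack{L_1,\,L_2\cup\set{(x,y)} \\ R_1,\,R_2}\right)$; by~\Cref{lem:count_x_z}, $|\mathsf{BAD}|\le(22t^2+10t+1)N^{|L_2|+|R_2|+3}$. Enumerating $(y,z,\bfk,\bfz)\in\Psi$: if $(z,\bfk,\bfz)\in\mathsf{BAD}$, every admissible $y$ fails the defining condition of $\Phi$, contributing at most $|\mathsf{BAD}|\cdot N\le(22t^2+10t+1)N^{|L_2|+|R_2|+4}$ elements to $\Psi\setminus\Phi$; if $(z,\bfk,\bfz)\notin\mathsf{BAD}$, some $y$ works, so by~\Cref{lem:comb_F2Ldagger_y_difference} at most $4t+1$ values of $y$ fail (every other $y$ lying in $\Phi$ by the containment argument), contributing at most $(4t+1)N^{|L_2|+|R_2|+4}$. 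Summing yields $|\Psi\setminus\Phi|\le(22t^2+14t+2)N^{|L_2|+|R_2|+4}$, as claimed.

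I expect the only real difficulty to be bookkeeping: one must verify that the double insertion (into $L_2$ and into $\vec{z}_L$) changes $\Im(L_2^{(k_2,\vec{z}_L)})$ by exactly $\set{z,y}$, that the index $i$ (the rank of $y$ in $\Im(L_2)\cup\set{y}$) is used consistently across~\Cref{lem:count_x_z,lem:comb_F2Ldagger_y_difference} and~\Cref{lem:good_tuples_monotonicity}, and that the inequality between the two fresh image elements produced by $\cI$-distinctness of the augmented relation matches the inequality appearing in the definition of $\Psi$. No new idea beyond those already used for~\Cref{lem:comb_F1L_tuple_difference} is needed.
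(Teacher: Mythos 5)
Your proof is correct and follows essentially the same approach as the paper's: establish the containment via \Cref{lem:good_tuples_satisfy_conditions}, Items~1--2 of \Cref{lem:conditions_robust_decodability}, and Item~2 of \Cref{lem:good_tuples_monotonicity}, then bound $|\Psi\setminus\Phi|$ by splitting on membership in $\mathsf{BAD}$ using \Cref{lem:count_x_z,lem:comb_F2Ldagger_y_difference}. (The paper's statement carries a typo, ``$y\neq z'$'' where $z'$ is undefined; your reading of it as $y\neq z$ is the intended one, and your computation of $\Im\big((L_2\cup\{(x,y)\})^{(k_2,\vec{z}_L^{\,(i\gets z)})}\big)$ as $\Im(L_2^{(k_2,\vec{z}_L)})\cup\{z,y\}$ makes this explicit where the paper leaves it implicit.)
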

\begin{proof}
Consider elements in $\Phi_{x,L_1,R_1,L_2,R_2}$. First, from~\Cref{lem:good_tuples_satisfy_conditions} and Items~1 and~2 in~\Cref{lem:conditions_robust_decodability}, we have 
\[
(L_2 \cup \set{(x,y)})^{(k_2,\vec{z}_L^{\,(i \gets z)})} \in \RIdist
\quad \text{and} \quad
\Im(L_1^{(k_1,k_3)}) \cap \Im\left((L_2 \cup \set{(x,y)})^{(k_2,\vec{z}_L^{\,(i \gets z)})}\right) 
= \varnothing,
\]
which implies $z, y \notin \Im(L_1^{(k_1,k_3)} \cup L_2^{(k_2,\vec{z}_L)}) \, \land \, y \neq z'$. Then by monotonicity (Item~2 in~\Cref{lem:good_tuples_monotonicity}), we have 
\[
(\bfk,\vec{z}_L^{\,(i \gets z)},\vec{z}_R) \in \Sspru\left(\substack{L_1,L_2 \cup \set{(x,y)} \\ R_1,R_2}\right)
\implies (\bfk,\vec{z}_L,\vec{z}_R) \in \Sspru\left(\substack{L_1,L_2 \\ R_1,R_2}\right).
\]
This proves $\Psi_{\phantom{x,}L_1,R_1,L_2,R_2} \supseteq \Phi_{x,L_1,R_1,L_2,R_2}$.

Next, we bound the size of their difference. We denote by $\mathsf{BAD}$ the set of 
$(z,\bfk,\bfz) \in [N] \times \Sspru\!\left(\substack{L_1,L_2 \\ R_1,R_2}\right)$ for which no $y \notin \Im(L_2)$ satisfies $(\bfk,\vec{z}_L^{\,(i \gets z)},\vec{z}_R) \in \Sspru\left(\substack{L_1,L_2 \cup \set{(x,y)} \\ R_1,R_2}\right)$. By~\Cref{lem:count_x_z}, we have 
\[
|\mathsf{BAD}| = (22t^2+10t+1)N^{|L_2|+|R_2|+3}.
\]
Now, if we enumerate $(y,z,\bfk,\bfz)$ in $\Psi_{\phantom{x,}L_1,R_1,L_2,R_2}$, then one of the following holds:
\begin{enumerate}
    \item If $(z,\bfk,\bfz) \in \mathsf{BAD}$, then every $y \notin \Im(L_2)$ satisfies 
    $(\bfk,\bfz) \notin \Sspru\left(\substack{L_1,L_2 \cup \set{(x,y)} \\ R_1,R_2}\right)$, which in turn implies $(y,\bfk,\bfz) \notin \Phi_{x,L_1,R_1,L_2,R_2}$.
    \item If $(z,\bfk,\bfz) \notin \mathsf{BAD}$, then by~\Cref{lem:comb_F2Ldagger_y_difference} there are at most $4t+1$ values of $y \notin \Im(L_2)$ for which $(\bfk,\bfz) \notin \Sspru\left(\substack{L_1,L_2 \cup \set{(x,y)} \\ R_1,R_2}\right)$. All other $y$ satisfy $(y,\bfk,\bfz) \in \Phi_{x,L_1,R_1,L_2,R_2}$.
\end{enumerate}
Therefore, the size of their difference is at most
\begin{align*}
    |\mathsf{BAD}| \cdot N + N^{|L_2|+|R_2|+4} \cdot (4t+1)
    \le (22t^2+14t+2)N^{|L_2|+|R_2|+4}.
    \tag*{\qedhere}
\end{align*}
\end{proof}

\subsection{Defining the Approximate Isometry $\Ospru$}

Before defining $\Ospru$, which intuitively maps a view in hybrid $\hyb_3$ to a uniform superposition of consistent views in hybrid $\hyb_4$, we first define the following operators, which mimic each step of enumerating good tuples in the previous subsection. 

\begin{definition}[Operator $\cS_{k_1,k_3}$]    \label{def:S_k1_k3}
Define the operator $\cS_{k_1,k_3}$ such that for any $L_1, L_2 \in \Rinj, R_1, R_2 \in \RDdist$,
\begin{align}
\cS_{k_1,k_3} \colon \ket{L_1}_{\reg{S}_1} \ket{R_1}_{\reg{T}_1} \ket{L_2}_{\reg{S}_2} \ket{R_2}_{\reg{T}_2} 
\mapsto \frac{1}{\sqrt{N^2}} \sum_{ \substack{ 
    k_1 \notin \cB_1(L_1, L_2, R_1, R_2) \\ 
    k_3 \notin \cB_3(L_1, L_2, R_1, R_2)
} } \ket{L_1}_{\reg{S}_1} \ket{R_1}_{\reg{T}_1} \ket{L_2}_{\reg{S}_2} \ket{R_2}_{\reg{T}_2} \ket{k_1}_{\reg{K}_1} \ket{k_3}_{\reg{K}_3}.
\end{align}
Otherwise, $\cS_{k_1,k_3}$ maps all the other basis vectors to the zero vector.
\end{definition}

\begin{definition}[Operator $\cS_{k_2}$]    \label{def:S_k2}
Define the operator $\cS_{k_2}$ such that for any $L_1, L_2 \in \Rinj, R_1, R_2 \in \RDdist, k_1 \notin \cB_1(L_1, L_2, R_1, R_2), k_3 \notin \cB_3(L_1, L_2, R_1, R_2)$,
\begin{multline}
\cS_{k_2} \colon \ket{L_1}_{\reg{S}_1} \ket{R_1}_{\reg{T}_1} \ket{L_2}_{\reg{S}_2} \ket{R_2}_{\reg{T}_2} \ket{k_1}_{\reg{K}_1} \ket{k_3}_{\reg{K}_3} \mapsto \\
\frac{1}{\sqrt{N}} 
\sum_{k_2 \notin \cB_2(L_1, L_2, R_1, R_2, k_1, k_3)} 
\ket{L_1}_{\reg{S}_1} \ket{R_1}_{\reg{T}_1} \ket{L_2}_{\reg{S}_2} \ket{R_2}_{\reg{T}_2}
\ket{k_1}_{\reg{K}_1} \ket{k_2}_{\reg{K}_2} \ket{k_3}_{\reg{K}_3}.
\end{multline}
Otherwise, $\cS_{k_2}$ maps all the other basis vectors to the zero vector.
\end{definition}

\begin{definition}[Operator $\cS_{\vec{z}}$]    \label{def:S_vecz}
Define the operator $\cS_{\vec{z}}$ such that for any $L_1, L_2 \in \Rinj, R_1, R_2 \in \RDdist, k_1 \notin \cB_1(L_1, L_2, R_1, R_2), k_3 \notin \cB_3(L_1, L_2, R_1, R_2), k_2 \notin \cB_2(L_1, L_2, R_1, R_2, k_1, k_3)$,
\begin{multline}
\cS_{\vec{z}} \colon \ket{L_1}_{\reg{S}_1} \ket{R_1}_{\reg{T}_1} \ket{L_2}_{\reg{S}_2} \ket{R_2}_{\reg{T}_2} \ket{k_1}_{\reg{K}_1} \ket{k_2}_{\reg{K}_2} \ket{k_3}_{\reg{K}_3} \mapsto \\
\frac{1}{\sqrt{N^{|L_2|+|R_2|}}} 
\sum_{ \substack{ 
\vec{z}_L \notin \cB_L\left(\substack{L_1,L_2,\\ R_1,R_2,\\ k_1,k_2,k_3}\right),
\vec{z}_R \notin \cB_R\left(\substack{L_1,L_2,\\ R_1,R_2,\\ k_1,k_2,k_3}\right)
} }
\ket{L_1}_{\reg{S}_1} \ket{R_1}_{\reg{T}_1} \ket{L_2}_{\reg{S}_2} \ket{R_2}_{\reg{T}_2} 
\ket{\vec{z}_L}_{\reg{Z_L}} \ket{\vec{z}_R}_{\reg{Z_R}} 
\ket{k_1}_{\reg{K}_1} \ket{k_2}_{\reg{K}_2} \ket{k_3}_{\reg{K}_3}.
\end{multline}
Otherwise, $\cS_{\vec{z}}$ maps all the other basis vectors to the zero vector.
\noindent 
\end{definition}

Now, we define the operator $\Ospru$.
\begin{definition}[Operator $\Ospru$] \label{def:Ospru}
Define the operator 
\begin{align}
\Ospru \coloneqq \Osprus^\dagger \cdot \cS_{\vec{z}} \cdot \cS_{k_2} \cdot \cS_{k_1,k_3}.
\end{align}
\end{definition}

\noindent Note that $\cS_{k_1,k_3}, \cS_{k_2}, \cS_{\vec{z}}$ are contractions, that is, their operator norms are all bounded $1$. Thus, $\Ospru$ is \emph{not} a partial isometry. Importantly, the action of $\Ospru$ satisfies the following.
\begin{lemma}   \label{lem:Ospru}
For any $L_1, L_2 \in \Rinj, R_1, R_2 \in \RDdist$,
\begin{multline*}
\Ospru \colon \ket{L_1}_{\reg{S}_1} \ket{R_1}_{\reg{T}_1} \ket{L_2}_{\reg{S}_2} \ket{R_2}_{\reg{T}_2} \mapsto \\
\frac{1}{\sqrt{N^{|L_2|+|R_2|+3}}}
\sum_{(k_1,k_2,k_3,\vec{z}_L,\vec{z}_R) \in \Sspru\left(\substack{L_1,L_2 \\ R_1,R_2}\right)}
\ket{L^{(k_1,k_3)}_1\cup L_2^{(k_2,\vec{z}_L)}}_{\reg{S}} 
\ket{R^{(k_1,k_3)}_1\cup R_2^{(k_2,\vec{z}_R)}}_{\reg{T}}
\ket{k_1}_{\reg{K_1}}
\ket{k_2}_{\reg{K_2}}
\ket{k_3}_{\reg{K_3}}.
\end{multline*}
\end{lemma}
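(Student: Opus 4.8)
The plan is to apply the four operators composing $\Ospru = \Osprus^\dagger \cdot \cS_{\vec{z}} \cdot \cS_{k_2} \cdot \cS_{k_1,k_3}$ to the basis state $\ket{L_1}_{\reg{S}_1}\ket{R_1}_{\reg{T}_1}\ket{L_2}_{\reg{S}_2}\ket{R_2}_{\reg{T}_2}$ one at a time, tracking how the constraints accumulate, and then to invert the last step using the decodability machinery. Since $L_1, L_2 \in \Rinj$ and $R_1, R_2 \in \RDdist$ by hypothesis, $\cS_{k_1,k_3}$ acts nontrivially and produces $\tfrac{1}{\sqrt{N^2}}\sum \ket{\cdot}\ket{k_1}_{\reg{K_1}}\ket{k_3}_{\reg{K_3}}$ over all $(k_1,k_3)$ with $k_1 \notin \cB_1(L_1,L_2,R_1,R_2)$ and $k_3 \notin \cB_3(L_1,L_2,R_1,R_2)$. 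Every term in this superposition is, by construction, precisely a basis state on which \Cref{def:S_k2} declares $\cS_{k_2}$ to act nontrivially, so applying $\cS_{k_2}$ appends a uniform superposition over $k_2 \notin \cB_2(L_1,L_2,R_1,R_2,k_1,k_3)$ with an extra factor $1/\sqrt{N}$; similarly, every resulting term satisfies the hypotheses of \Cref{def:S_vecz}, so $\cS_{\vec{z}}$ appends a uniform superposition over $\vec{z}_L \notin \cB_L$ and $\vec{z}_R \notin \cB_R$ with an extra factor $1/\sqrt{N^{|L_2|+|R_2|}}$. No terms are dropped at any stage, and multiplying the normalizations gives $1/\sqrt{N^{|L_2|+|R_2|+3}}$. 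By the enumeration procedure described just after \Cref{def:good_tuples} — choosing $(k_1,k_3)$, then $k_2$, then $(\vec{z}_L,\vec{z}_R)$, avoiding the respective bad sets — the support of the resulting sum is exactly $\Sspru\left(\substack{L_1,L_2 \\ R_1,R_2}\right)$. Hence after $\cS_{\vec{z}}\cS_{k_2}\cS_{k_1,k_3}$ the state is
\[
\frac{1}{\sqrt{N^{|L_2|+|R_2|+3}}}\sum_{(k_1,k_2,k_3,\vec{z}_L,\vec{z}_R)\in\Sspru\left(\substack{L_1,L_2 \\ R_1,R_2}\right)} \ket{L_1}_{\reg{S}_1}\ket{R_1}_{\reg{T}_1}\ket{L_2}_{\reg{S}_2}\ket{R_2}_{\reg{T}_2}\ket{\vec{z}_L}_{\reg{Z_L}}\ket{\vec{z}_R}_{\reg{Z_R}}\ket{k_1}_{\reg{K_1}}\ket{k_2}_{\reg{K_2}}\ket{k_3}_{\reg{K_3}}.
\]

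It then remains to apply $\Osprus^\dagger$ termwise. Fix a tuple $(k_1,k_2,k_3,\vec{z}_L,\vec{z}_R)\in\Sspru\left(\substack{L_1,L_2 \\ R_1,R_2}\right)$. By \Cref{lem:good_tuples_satisfy_conditions} it is robustly decodable, hence in particular decodable, so \Cref{eq:Osplit_on_augmented_relation} gives
\[
\Osprus \ket{L^{(k_1,k_3)}_1 \cup L_2^{(k_2,\vec{z}_L)}}_{\reg{S}}\ket{R^{(k_1,k_3)}_1 \cup R_2^{(k_2,\vec{z}_R)}}_{\reg{T}}\ket{\bfk}_{\reg{K}} = \ket{L_1}_{\reg{S}_1}\ket{R_1}_{\reg{T}_1}\ket{L_2}_{\reg{S}_2}\ket{R_2}_{\reg{T}_2}\ket{\bfz}_{\reg{Z}}\ket{\bfk}_{\reg{K}}.
\]
Since $\Osprus$ is a partial isometry (\Cref{lem:D_is_partial_isometry}) and the augmented-relation basis state on the left-hand side lies in its domain, the projector $\Osprus^\dagger\Osprus$ fixes it; applying $\Osprus^\dagger$ to both sides therefore shows that $\Osprus^\dagger$ sends $\ket{L_1}_{\reg{S}_1}\ket{R_1}_{\reg{T}_1}\ket{L_2}_{\reg{S}_2}\ket{R_2}_{\reg{T}_2}\ket{\bfz}_{\reg{Z}}\ket{\bfk}_{\reg{K}}$ back to $\ket{L^{(k_1,k_3)}_1 \cup L_2^{(k_2,\vec{z}_L)}}_{\reg{S}}\ket{R^{(k_1,k_3)}_1 \cup R_2^{(k_2,\vec{z}_R)}}_{\reg{T}}\ket{\bfk}_{\reg{K}}$. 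Pulling the linear map $\Osprus^\dagger$ through the sum and substituting this identity for each term yields exactly the claimed expression. (For the final state to be a genuine — subnormalized — superposition one may additionally note that the summed kets are orthonormal across distinct good tuples, which follows from the injectivity of $\Dec$ on $\Supp(\Dec)$ in \Cref{lem:D_is_partial_isometry}, but this is not needed for the termwise computation.)

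\textbf{Main obstacle.} The argument is essentially definition-chasing once the supporting lemmas are available; the step requiring the most care is the claim in the first paragraph that the three nested superpositions $\cS_{\vec{z}}\cS_{k_2}\cS_{k_1,k_3}$ compose without annihilating any term and that their combined support equals $\Sspru\left(\substack{L_1,L_2 \\ R_1,R_2}\right)$. Checking this amounts to verifying (i) that the ``nontrivial action'' preconditions built into \Cref{def:S_k2} and \Cref{def:S_vecz} (namely $k_1\notin\cB_1$, $k_3\notin\cB_3$, and additionally $k_2\notin\cB_2$) are automatically satisfied by every term emitted by the preceding operator, and (ii) that avoiding $\cB_1,\cB_3,\cB_2,\cB_L,\cB_R$ in this staged manner is equivalent to satisfying all six conditions of \Cref{def:good_tuples} — recalling that $\cB_L$ (resp. $\cB_R$) already packages both the distinctness requirement and the disjointness requirement on $\vec{z}_L$ (resp. $\vec{z}_R$). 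Both points are immediate from inspecting the definitions, so I do not expect a genuine difficulty here.
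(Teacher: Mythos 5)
Your proof is correct and follows essentially the same approach as the paper: first show that $\cS_{\vec z}\cS_{k_2}\cS_{k_1,k_3}$ produces the uniform superposition over $\Sspru\!\left(\substack{L_1,L_2\\R_1,R_2}\right)$, then use decodability of good tuples (\Cref{lem:good_tuples_satisfy_conditions}) to evaluate $\Osprus^\dagger$ termwise. The only cosmetic difference is that you argue via \Cref{eq:Osplit_on_augmented_relation} and the partial-isometry identity $\Osprus^\dagger\Osprus=\Pi_{\Supp(\Dec)}$, whereas the paper phrases the same step as ``applying $\Osprus^\dagger$ is the same as applying $\Enc$''; these are the same observation.
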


\begin{proof}
Recall~\Cref{def:S_k1_k3,def:S_k2,def:S_vecz}, we have
\begin{align*}
&\ket{L_1}_{\reg{S}_1} \ket{R_1}_{\reg{T}_1} \ket{L_2}_{\reg{S}_2} \ket{R_2}_{\reg{T}_2}
\xmapsto{\cS_{\vec{z}} \cdot \cS_{k_2} \cdot \cS_{k_1,k_3}} \\
&\hspace{.2\textwidth} \frac{1}{\sqrt{N^{|L_2|+|R_2|+3}}}
\sum_{(\bfk,\bfz) \in \Sspru\left(\substack{L_1,L_2 \\ R_1,R_2}\right)}
\ket{L_1}_{\reg{S}_1} \ket{R_1}_{\reg{T}_1} \ket{L_2}_{\reg{S}_2} \ket{R_2}_{\reg{T}_2} 
\ket{\vec{z}_L}_{\reg{Z_L}} \ket{\vec{z}_R}_{\reg{Z_R}} 
\ket{k_1}_{\reg{K}_1} \ket{k_2}_{\reg{K}_2} \ket{k_3}_{\reg{K}_3}.
\end{align*}
From~\Cref{lem:good_tuples_satisfy_conditions}, every $(\bfk,\bfz) \in \Sspru\left(\substack{L_1,L_2 \\ R_1,R_2}\right)$ is decodable \wrt $(L_1,R_1,L_2,R_2)$. Thus, $(L_1,R_1,L_2,R_2,\bfk,\bfz)$ is in $\Supp(\Dec)$. Therefore, applying $\Osprus^{\dagger}$ is equivalently to applying $\Enc$, which the inverse of $\Dec$ defined in the proof of~\Cref{lem:D_is_partial_isometry}. This completes the proof.
\end{proof}

 The following lemma will be used in~\Cref{sec:spru:1st_oracle,sec:spru:2nd_oracle}.
\begin{lemma} \label{lem:punc_S}
Let $\set{\cP_\tau}_\tau$ be a collection of sets where the index $\tau$ ranges over $(y \in [N], L_1 \in \RIdist, R_1 \in \RDdist, L_2 \in \RIdist, R_2 \in \RDdist)$ and $\cP_\tau \subseteq [N]^3 \times [N]^{|L_2|} \times [N]^{|R_2|}$. Define the operator 
\begin{align*}
\Ospru^{\bullet}
& \colon \ket{y}_{\reg{A}} \ket{L_1}_{\reg{S}_1} \ket{R_1}_{\reg{T}_1} \ket{L_2}_{\reg{S}_2} \ket{R_2}_{\reg{T}_2} \\
& \mapsto 
\ket{y}_{\reg{A}} \;
\frac{1}{\sqrt{N^{|L_2|+|R_2|+3}}}
\sum_{   \substack{
    (\bfk,\bfz) \in \Sspru\left(\substack{L_1,L_2 \\ R_1,R_2}\right) \colon \\
    (\bfk,\bfz) \notin \cP_{y,L_1,R_1,L_2,R_2}
}   }
\ket{L^{(k_1,k_3)}_1 \cup L_2^{(k_2,\vec{z}_L)}}_{\reg{S}} 
\ket{R^{(k_1,k_3)}_1 \cup R_2^{(k_2,\vec{z}_R)}}_{\reg{T}}
\ket{k}_{\reg{K}}.
\end{align*}
If there exists $\delta \ge 0$ such that for any $\tau$,
\begin{align*}
    \frac{\big|\cP_{y,L_1,R_1,L_2,R_2} \cap \Sspru\left(\substack{L_1,L_2 \\ R_1,R_2}\right)\big|}{N^{|L_2|+|R_2|+3}} \le \delta,
\end{align*}
then
\begin{align*}
    \|\Ospru^{\bullet} - \Ospru\|_{\op} = \sqrt{\delta}.
\end{align*}
\end{lemma}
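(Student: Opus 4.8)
The plan is to reduce the operator-norm estimate to a counting statement by exploiting the fact that $\Ospru$ and $\Ospru^{\bullet}$ act ``almost diagonally'' with respect to the standard basis. The crucial structural input is decodability of good tuples: by~\Cref{lem:good_tuples_satisfy_conditions} every $(\bfk,\bfz) \in \Sspru\left(\substack{L_1,L_2 \\ R_1,R_2}\right)$ is decodable \wrt $(L_1,R_1,L_2,R_2)$, so by~\Cref{lem:D_is_partial_isometry} the inverse map $\Enc$ of $\Dec$ on $\Supp(\Dec)$ is injective. Hence, as $(L_1,R_1,L_2,R_2)$ ranges over $\RIdist \times \RDdist \times \RIdist \times \RDdist$ and $(\bfk,\bfz)$ ranges over $\Sspru\left(\substack{L_1,L_2 \\ R_1,R_2}\right)$, the output basis vectors
\[
\ket{L_1^{(k_1,k_3)} \cup L_2^{(k_2,\vec{z}_L)}}_{\reg{S}}\,\ket{R_1^{(k_1,k_3)} \cup R_2^{(k_2,\vec{z}_R)}}_{\reg{T}}\,\ket{\bfk}_{\reg{K}}
\]
are pairwise distinct, and the register $\reg{A}$ merely carries $\ket{y}$ along. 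It follows that $\Ospru$, $\Ospru^{\bullet}$, and their difference each send distinct standard basis vectors $\ket{y}\ket{L_1}\ket{R_1}\ket{L_2}\ket{R_2}$ of the domain to vectors lying in mutually orthogonal subspaces.

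Given this, I would apply~\Cref{lem:op_norm_orthogonal} to the operator $A = \Ospru - \Ospru^{\bullet}$, with $\cB$ the standard basis of its domain, obtaining
\[
\|\Ospru^{\bullet} - \Ospru\|_{\op} = \max_{y, L_1, R_1, L_2, R_2}\big\|(\Ospru - \Ospru^{\bullet})\,\ket{y}\ket{L_1}\ket{R_1}\ket{L_2}\ket{R_2}\big\|_2 .
\]
By the description of $\Ospru$ in~\Cref{lem:Ospru} and the definition of $\Ospru^{\bullet}$ in the statement, on input $\ket{y}\ket{L_1}\ket{R_1}\ket{L_2}\ket{R_2}$ the difference is exactly the partial sum
\[
\frac{1}{\sqrt{N^{|L_2|+|R_2|+3}}}\sum_{\substack{(\bfk,\bfz) \in \Sspru\left(\substack{L_1,L_2 \\ R_1,R_2}\right)\colon \\ (\bfk,\bfz) \in \cP_{y,L_1,R_1,L_2,R_2}}} \ket{y}_{\reg{A}}\,\ket{L_1^{(k_1,k_3)} \cup L_2^{(k_2,\vec{z}_L)}}_{\reg{S}}\,\ket{R_1^{(k_1,k_3)} \cup R_2^{(k_2,\vec{z}_R)}}_{\reg{T}}\,\ket{\bfk}_{\reg{K}},
\]
and, again by injectivity of $\Enc$ (now applied within the fixed tuple $(L_1,R_1,L_2,R_2)$), the summands are pairwise orthogonal unit vectors, each with amplitude $N^{-(|L_2|+|R_2|+3)/2}$.

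Consequently the squared norm above equals $\big|\cP_{y,L_1,R_1,L_2,R_2} \cap \Sspru\left(\substack{L_1,L_2 \\ R_1,R_2}\right)\big| \big/ N^{|L_2|+|R_2|+3}$, which by hypothesis is at most $\delta$. Taking square roots and maximizing over $(y,L_1,R_1,L_2,R_2)$ yields $\|\Ospru^{\bullet} - \Ospru\|_{\op} \le \sqrt{\delta}$, the desired bound. The only step requiring genuine care is the non-collision claim of the first paragraph — that no two distinct combinations of $(L_1,R_1,L_2,R_2)$ with good tuples $(\bfk,\bfz)$ produce the same augmented relation–key triple — and this is precisely the injectivity of $\Dec$ on $\Supp(\Dec)$ recorded in~\Cref{lem:D_is_partial_isometry}; everything else is routine bookkeeping.
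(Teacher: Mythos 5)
Your proposal is correct and takes essentially the same route as the paper. The paper's proof takes an arbitrary normalized state, applies the partial isometry $\Osprus$ (justified by \Cref{lem:good_tuples_satisfy_conditions}) to unwind the augmented-relation encoding into a product basis, and reads off the squared norm as a sum of $|\alpha_\tau|^2$ weighted by the counting ratio $|\cP_\tau \cap \Sspru|/N^{|L_2|+|R_2|+3}$; you instead invoke \Cref{lem:op_norm_orthogonal} to reduce to a single basis vector and then argue orthogonality of the summands directly. Both arguments hinge on exactly the same structural fact — injectivity of $\Dec$ on $\Supp(\Dec)$ (\Cref{lem:D_is_partial_isometry}) together with the fact that every good tuple is decodable (\Cref{lem:good_tuples_satisfy_conditions}) — so the $\Osprus$-application in the paper and your explicit orthogonality reduction are interchangeable implementations of the same idea, yielding the same bound $\le\sqrt{\delta}$.
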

\noindent Namely, $\Ospru^{\bullet}$ is further controlled by the register $\reg{A}$ and 
imposes extra conditions ensuring that good tuples $(\bfk,\bfz)$ do not lie in some ``bad set'' $\cP_\tau$. We provide the proof in~\Cref{sec:lemmas_F}.

\subsection{Main Lemmas} \label{sec:main_lemmas}
\noindent We introduce important lemmas regarding $\Ospru$ below. Their proofs are deferred to further subsections. We will first use these lemmas to prove~\Cref{lem:spru:induction} in~\Cref{sec:spru:induction}.

\begin{lemma}[$\Ospru$ is Close to a Partial Isometry]  \label{lem:spru:iso}
There exists a partial isometry $\wt{\Ospru}$ such that for any integer $t \geq 0$,
\[
\|(\wt{\Ospru} - \Ospru) \Pi_{\leq t}\|_{\op}
\leq O( \sqrt{t^2/N} ).
\]
\end{lemma}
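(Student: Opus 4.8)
The operator $\Ospru = \Osprus^\dagger \cdot \cS_{\vec z} \cdot \cS_{k_2} \cdot \cS_{k_1,k_3}$ is a composition of four pieces. The last three are "sampling" operators, each of which sends a basis state to a uniform superposition over the indices avoiding a "bad set", and the first, $\Osprus^\dagger = \Enc$, is a genuine partial isometry by \Cref{lem:D_is_partial_isometry}. So the only source of non-isometry is that $\cS_{k_1,k_3}, \cS_{k_2}, \cS_{\vec z}$ renormalize by fixed factors $1/\sqrt{N^2}, 1/\sqrt{N}, 1/\sqrt{N^{|L_2|+|R_2|}}$ rather than by the true number of good choices. The plan is therefore: (1) show that each $\cS$ operator, when restricted to the relevant domain, is close in operator norm to a genuine partial isometry obtained by renormalizing correctly; (2) compose, using that the product of things close to partial isometries is close to a partial isometry, and being a bit careful about the fact that the true partial isometry should be the one that picks up the correctly-normalized superposition over $\Sspru$.

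**The plan.** Concretely, define $\wt\Ospru$ to be the partial isometry that maps $\ket{L_1}\ket{R_1}\ket{L_2}\ket{R_2}$ to the *normalized* version of the RHS in \Cref{lem:Ospru}, i.e.\ with $1/\sqrt{N^{|L_2|+|R_2|+3}}$ replaced by $1/\sqrt{|\Sspru(\substack{L_1,L_2\\R_1,R_2})|}$; on basis states outside $\Rinj \times \RDdist \times \Rinj \times \RDdist$ send to zero. Since by \Cref{lem:Ospru} the (subnormalized) images $\Ospru\ket{L_1}\ket{R_1}\ket{L_2}\ket{R_2}$ are mutually orthogonal for distinct $(L_1,R_1,L_2,R_2)$ — they differ on the register contents — and the normalized versions $\wt\Ospru\ket{L_1}\ket{R_1}\ket{L_2}\ket{R_2}$ are the same states scaled to unit length, $\wt\Ospru$ is indeed a partial isometry. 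It remains to bound $\|(\wt\Ospru - \Ospru)\Pi_{\le t}\|_{\op}$. I would invoke \Cref{lem:op_norm_orthogonal}: for each basis state $\ket{L_1}\ket{R_1}\ket{L_2}\ket{R_2}$ with $L_1,L_2 \in \Rinj_{\le t}$, $R_1,R_2 \in \RDdist_{\le t}$, the vectors $(\wt\Ospru-\Ospru)\ket{L_1}\ket{R_1}\ket{L_2}\ket{R_2}$ are pairwise orthogonal, so the operator norm on $\Pi_{\le t}$ equals the max over such basis states of the Euclidean norm. For a single basis state, $(\wt\Ospru - \Ospru)\ket\cdot = (c_{\mathrm{true}} - c_{\mathrm{nominal}})\sum_{\text{good}}\ket\cdot$ where $c_{\mathrm{nominal}} = N^{-(|L_2|+|R_2|+3)/2}$ and $c_{\mathrm{true}} = |\Sspru|^{-1/2}$; its squared norm is $|\Sspru| \cdot (c_{\mathrm{true}} - c_{\mathrm{nominal}})^2 = \big(1 - \sqrt{|\Sspru|}\,c_{\mathrm{nominal}}\big)^2 = \big(1 - \sqrt{|\Sspru|/N^{|L_2|+|R_2|+3}}\big)^2$.

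**Closing the estimate.** By \Cref{lem:number_of_good_tuples}, $|\Sspru(\substack{L_1,L_2\\R_1,R_2})| \ge (1 - 22t^2/N)\,N^{|L_2|+|R_2|+3}$, so $|\Sspru|/N^{|L_2|+|R_2|+3} \ge 1 - 22t^2/N$, hence $1 - \sqrt{|\Sspru|/N^{|L_2|+|R_2|+3}} \le 1 - \sqrt{1 - 22t^2/N} \le 22t^2/N$ (using $1 - \sqrt{1-x} \le x$ for $x \in [0,1]$; and when $22t^2/N > 1$ the bound $O(\sqrt{t^2/N})$ is trivial). Therefore $\|(\wt\Ospru - \Ospru)\ket{L_1}\ket{R_1}\ket{L_2}\ket{R_2}\|_2 \le 22t^2/N$, and $\|(\wt\Ospru - \Ospru)\Pi_{\le t}\|_{\op} \le 22t^2/N = O(t^2/N) \le O(\sqrt{t^2/N})$ for $t^2 \le N$, which gives the claimed bound. (One slightly cleaner alternative, if one wants to route through \Cref{lem:punc_S}: take $\cP_\tau = \varnothing$ there to learn that $\Ospru$ itself equals $\Ospru^\bullet$ trivially, which does not directly help; so I prefer the direct computation above.)

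**Main obstacle.** The only real subtlety is justifying that $\wt\Ospru$, defined by the correctly-normalized superposition, is genuinely a partial isometry — one must check the images are orthonormal, which follows because distinct source tuples $(L_1,R_1,L_2,R_2)$ produce superpositions over databases/keys $(L,R,\bfk)$ living in disjoint sets (an element of $\Supp(\Dec)$ decodes back to a unique source by \Cref{lem:D_is_partial_isometry}), so there is no overlap. After that the argument is purely the arithmetic of comparing two normalization constants, controlled by \Cref{lem:number_of_good_tuples}. I do not anticipate a serious difficulty; the bookkeeping of which registers $\wt\Ospru$ acts on (padding $\reg S_1\reg T_1\reg S_2\reg T_2$ versus $\reg{STK_1K_2K_3Z_LZ_R}$, and whether the $\reg Z$ registers are retained or traced) is the most error-prone part, and I would state explicitly that $\wt\Ospru$ has the same domain and codomain registers as $\Ospru$.
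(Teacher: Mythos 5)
Your proposal is correct and takes a genuinely different (and arguably cleaner) route than the paper.

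The paper defines $\wt{\Ospru}$ as the composition $\Osprus^\dagger \cdot \wt{\Ospru}_{\vec z} \cdot \wt{\Ospru}_{k_2} \cdot \wt{\Ospru}_{k_1,k_3}$, where each $\wt{\Ospru}_\bullet$ is obtained from the corresponding $\cS_\bullet$ by replacing the nominal normalization with the exact count of surviving choices at that stage; it then bounds the difference $\|(\Ospru-\wt{\Ospru})\Pi_{\le t}\|$ by the sum of three stage-by-stage errors via \Cref{lem:isometries_close}. Because each stage's normalization depends on the earlier key choices, the paper's $\wt{\Ospru}$ applied to a basis state is a \emph{non-uniform} superposition over $\Sspru$. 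You instead define $\wt{\Ospru}$ in one shot as the normalized uniform superposition over $\Sspru$, verify it is a partial isometry by the disjoint-support argument (which is essentially \Cref{lem:D_is_partial_isometry} plus \Cref{lem:good_tuples_satisfy_conditions}), and then compute the single-basis-state error directly. This is a genuinely different (and distinct) partial isometry from the paper's. Your orthogonality argument for invoking \Cref{lem:op_norm_orthogonal} is sound since $\wt{\Ospru}\ket{\cdot}$ and $\Ospru\ket{\cdot}$ share the same support, which is disjoint across distinct source tuples. The arithmetic $\|(\wt\Ospru-\Ospru)\ket{\cdot}\|_2 = 1-\sqrt{|\Sspru|/N^{|L_2|+|R_2|+3}} \le 22t^2/N$ is correct, and you correctly note the $22t^2/N \ge 1$ case is trivial since both operators are contractions. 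In fact your bound $O(t^2/N)$ is quantitatively \emph{stronger} than the paper's $O(t/\sqrt{N})$ in the regime $t^2 \le N$, though this does not change the final $O(t^2/\sqrt{N})$ bound in \Cref{lem:spru:induction}. One minor remark: your final concern about the $\reg{Z}$ registers is moot---they are purely intermediate in the definition of $\Ospru$ (consumed by $\Osprus^\dagger$), so $\wt\Ospru$ and $\Ospru$ automatically share domain registers $\reg{S_1 T_1 S_2 T_2}$ and codomain registers $\reg{S\,T\,K_1 K_2 K_3}$.
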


\noindent We provide the proof of above lemma in~\Cref{sec:spru:iso}.

\begin{lemma}[Closeness of the First Oracle]    \label{lem:1st_oracle}
For any integer $t \geq 0$,
\begin{itemize}
    \item {\bf Forward query:} $\|(X^{k_3} F X^{k_1} \Ospru - \Ospru F_1) \Pi_{\leq t}\|_{\op} \leq O ( \sqrt{t/N} )$,
    \item {\bf Inverse query:} $\|(X^{k_1} F^{\dagger} X^{k_3} \Ospru - \Ospru F_1^{\dagger}) \Pi_{\leq t}\|_{\op} \leq O ( \sqrt{t/N} )$.
\end{itemize}
\end{lemma}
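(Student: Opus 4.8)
\noindent\emph{Proof plan.}\ I will prove the forward-query bound in detail; the inverse-query bound is entirely analogous and I indicate the changes at the end. Fix a basis state $\ket{\psi_0}=\ket{x}_{\reg{A}}\ket{L_1}_{\reg{S}_1}\ket{R_1}_{\reg{T}_1}\ket{L_2}_{\reg{S}_2}\ket{R_2}_{\reg{T}_2}$ in the image of $\Pi_{\le t}$ (so $L_1,L_2\in\Rinj_{\le t}$ and $R_1,R_2\in\RDdist_{\le t}$). First I would expand $F$ and $F_1$ via Definition~\ref{def:FL_FR_F}, writing $F=F^{L}(\id-F^{R}F^{R,\dagger})+(\id-F^{L}F^{L,\dagger})F^{R,\dagger}$ and likewise for $F_1$, and split $X^{k_3}FX^{k_1}\Ospru-\Ospru F_1$ into a \emph{recording part} $X^{k_3}F^{L}X^{k_1}\Ospru-\Ospru F_1^{L}$ and a \emph{correction part} $X^{k_3}(F-F^{L})X^{k_1}\Ospru-\Ospru(F_1-F_1^{L})$ --- the latter involving only the $F^{R}$/$F^{R,\dagger}$ pieces --- and bound the two on the image of $\Pi_{\le t}$ separately. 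Since $\Ospru$ maps $\mathrm{Im}(\Pi_{\le t})$ into relation states of size $O(t)$, all the estimates for $F$ on the $\Pi_{\le t}$-subspace apply after rescaling $t$ by a constant.

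For the recording part I would evaluate both operators on $\ket{\psi_0}$ explicitly, using Lemma~\ref{lem:Ospru} for the action of $\Ospru$, the definitions of $F^{L}$ and $F_1^{L}$, the fact that $X^{k_1},X^{k_3}$ merely relabel the query register $\reg{A}$, and the identity $L_1^{(k_1,k_3)}\cup L_2^{(k_2,\vec{z}_L)}\cup\{(x\oplus k_1,\,y\oplus k_3)\}=(L_1\cup\{(x,y)\})^{(k_1,k_3)}\cup L_2^{(k_2,\vec{z}_L)}$. One finds that both operators send $\ket{\psi_0}$ to a superposition, with the common normalization $N^{-(|L_2|+|R_2|+4)/2}$, of the \emph{same} orthonormal family $\ket{y}_{\reg{A}}\ket{(L_1\cup\{(x,y)\})^{(k_1,k_3)}\cup L_2^{(k_2,\vec{z}_L)}}_{\reg{S}}\ket{R_1^{(k_1,k_3)}\cup R_2^{(k_2,\vec{z}_R)}}_{\reg{T}}\ket{\bfk}_{\reg{K}}$, the only difference being the index set over which $(y,\bfk,\bfz)$ ranges: for $X^{k_3}F^{L}X^{k_1}\Ospru$ it is $\Psi_{L_1,R_1,L_2,R_2}$ and for $\Ospru F_1^{L}$ it is $\Phi_{x,L_1,R_1,L_2,R_2}$, exactly the sets of Lemma~\ref{lem:comb_F1L_tuple_difference}. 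Hence the recording-part operator maps $\ket{\psi_0}$ to $N^{-(|L_2|+|R_2|+4)/2}\sum_{(y,\bfk,\bfz)\in\Psi\setminus\Phi}(\cdots)$, of squared norm $|\Psi\setminus\Phi|/N^{|L_2|+|R_2|+4}\le t(22t+8)/N$ by Lemma~\ref{lem:comb_F1L_tuple_difference}. To upgrade this per-input estimate to an operator-norm bound I would apply Lemma~\ref{lem:op_norm_orthogonal}, so it remains to check that the images of distinct basis states are orthogonal: from any output vector one reads off $y$ and $\bfk$ from the $\reg{A}$- and $\reg{K}$-registers, recovers $x$ (up to the shift by $k_1$) as the unique preimage, in the $\reg{S}$-relation, of the newly added image point $y\oplus k_3$, and then recovers $(L_1,R_1,L_2,R_2,\vec{z}_L,\vec{z}_R)$ by feeding $(M\setminus\{(x\oplus k_1,y\oplus k_3)\},\,R_1^{(k_1,k_3)}\cup R_2^{(k_2,\vec{z}_R)},\,\bfk)$ to the decoder $\Dec$ of Definition~\ref{def:Ospru_split}; this is legitimate because every tuple in $\Psi$ is good, hence decodable (Lemma~\ref{lem:good_tuples_satisfy_conditions} and Definition~\ref{def:decodable_tuples}). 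This yields the operator-norm bound on the recording part.

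For the correction part I would write $F-F^{L}=(\id-F^{L}F^{L,\dagger}-F^{L}F^{R})F^{R,\dagger}$ (and similarly with subscript $1$), and, since all of $F^{L},F^{R},F^{L,\dagger},F^{R,\dagger}$ and their subscript-$1$ analogues are contractions, split $\ket{\psi_0}$ according to whether $x\in\Dom(R_1)$. When $x\notin\Dom(R_1)$, $F_1^{R,\dagger}\ket{\psi_0}=0$, whereas $F^{R,\dagger}X^{k_1}\Ospru\ket{\psi_0}$ is supported only on the at most $|\Dom(R_2)|+|\{\vec{z}_R\}|\le 2t$ keys $k_1$ with $x\oplus k_1\in\Dom(R_2)\cup\{\vec{z}_R\}$ --- using the disjointness $\Dom(R_1^{(k_1,k_3)})\cap\Dom(R_2^{(k_2,\vec{z}_R)})=\varnothing$ of Lemma~\ref{lem:conditions_robust_decodability} --- so it has norm $O(\sqrt{t}/N)$. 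When $x\in\Dom(R_1)$, say $(x,y_0)\in R_1$, a direct computation with Lemma~\ref{lem:Ospru} shows that $F^{R,\dagger}X^{k_1}\Ospru$ deletes $(x\oplus k_1,y_0\oplus k_3)$ from the merged $\reg{T}$-register and outputs $\ket{y_0\oplus k_3}_{\reg{A}}$, which matches $\Ospru F_1^{R,\dagger}$ (which deletes $(x,y_0)$ from $R_1$ and outputs $\ket{y_0}_{\reg{A}}$) once the outer $X^{k_3}$ is applied, up to the passage from $\Sspru\left(\substack{L_1,L_2\\ R_1,R_2}\right)$ to $\Sspru\left(\substack{L_1,L_2\\ R_1\setminus\{(x,y_0)\},R_2}\right)$, which is controlled by monotonicity (Lemma~\ref{lem:good_tuples_monotonicity}) and the $R$-side analogue of Lemma~\ref{lem:comb_F1Ldagger_y_difference}. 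Finally, the wrappers $\id-F^{L}F^{L,\dagger}-F^{L}F^{R}$ (resp.\ with subscript $1$) act trivially on the relevant subspace up to a small error: the $F^{L,\dagger}F^{R,\dagger}$- and $F^{L,\dagger}F^{R}$-type cross terms are killed by the monogamy bound of Lemma~\ref{lem:FLdagger_U_FR:zero}, the $F^{L,\dagger}F^{L}$-type terms by Lemma~\ref{lem:domain_FL_FR}, and the negligible fraction of good tuples for which $y_0\oplus k_3\in\Im(L_2^{(k_2,\vec{z}_L)})$ is excised via a punctured $\Ospru$ and Lemma~\ref{lem:punc_S}. Summing these contributions and applying the triangle inequality with the recording part completes the forward-query case.

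For the inverse query I would instead expand $F^{\dagger}=(\id-F^{R}F^{R,\dagger})F^{L,\dagger}+F^{R}(\id-F^{L}F^{L,\dagger})$ and likewise for $F_1^{\dagger}$; the recording part becomes $X^{k_1}F^{L,\dagger}X^{k_3}\Ospru$ versus $\Ospru F_1^{L,\dagger}$, where $F_1^{L,\dagger}$ \emph{deletes} a pair from $L_1$ and $X^{k_1}F^{L,\dagger}X^{k_3}$ deletes a pair from the merged relation. The three possible types of the deleted pair --- lying in $L_1^{(k_1,k_3)}$, or among the source-vertices, or among the target-vertices of $G^{\ell}_{L,k_2}$ --- are described precisely by Equations~\eqref{eq:robust_isolate}, \eqref{eq:robust_source}, and~\eqref{eq:robust_target} of Lemma~\ref{lem:conditions_robust_decodability}; the index-set discrepancy is bounded by Lemma~\ref{lem:comb_F1Ldagger_y_difference} in place of Lemma~\ref{lem:comb_F1L_tuple_difference}, orthogonality of images again follows from (robust) decodability (Lemmas~\ref{lem:D_is_partial_isometry} and~\ref{lem:good_tuples_satisfy_conditions}), and the correction part is treated as above with the roles of $L$ and $R$ interchanged. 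I expect the main obstacle to be the bookkeeping inside the recording part: verifying that the explicit evaluations of $X^{k_3}F^{L}X^{k_1}\Ospru$, $\Ospru F_1^{L}$ (and their inverse analogues) really do produce the index sets of Lemmas~\ref{lem:comb_F1L_tuple_difference}/\ref{lem:comb_F1Ldagger_y_difference}, and establishing that the difference operator's images over distinct inputs are mutually orthogonal --- which is exactly where decodability of good tuples is indispensable.
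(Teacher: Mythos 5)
Your forward \emph{recording} part ($F^{L}$-piece) matches the paper's \Cref{lem:fl1_fr1} in substance: the explicit evaluation of both operators, the identification of the index sets $\Psi$ and $\Phi$ from \Cref{lem:comb_F1L_tuple_difference}, and the orthogonality-via-decoding argument are all sound, precisely because $F^{L}$ and $F_1^{L}$ \emph{add} a pair, so the output determines the input (the paper packages this decoding as the explicit partial isometry $\cI=X^{k_3}X^{k_1}\Osprus F^L_{\extract}X^{k_3}$, but your informal version is the same idea). The two-part split (recording vs.\ correction) is a harmless re-packaging of the paper's four-term expansion of $F$.

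The gap is in every piece that involves a \emph{deletion} operator: $F^{R,\dagger}/F_1^{R,\dagger}$ in the forward correction part, and $F^{L,\dagger}/F_1^{L,\dagger}$ in the inverse recording part. You assert that ``orthogonality of images again follows from (robust) decodability,'' and then take a per-basis-state norm estimate and maximize, implicitly invoking \Cref{lem:op_norm_orthogonal}. But that orthogonality genuinely fails for a deletion. For example $F_1^{L,\dagger}$ sends \emph{every} basis state $\ket{y}\ket{L_1'\cup\{(x,y)\}}\ket{R_1}\ket{L_2}\ket{R_2}$, for each $y\notin\Im(L_1')$, to the same vector $N^{-1/2}\ket{x}\ket{L_1'}\ket{R_1}\ket{L_2}\ket{R_2}$: the deleted $y$ is not recoverable, so distinct inputs have identical (hence non-orthogonal) images, and \Cref{lem:op_norm_orthogonal} gives you nothing. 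The same problem hits $F_1^{R,\dagger}$ on the $x\in\Dom(R_1)$ branch, as the deleted first coordinate is erased. The paper avoids this entirely: it never evaluates the adjoint-containing difference on basis states. Instead, \Cref{lem:fl1dagger_fr1dagger} splits an arbitrary $\Pi_{\le t}$-state $\ket{\psi}$ into $\Pi^{\Im(F_1^L)}\ket{\psi}+(\id-\Pi^{\Im(F_1^L)})\ket{\psi}$; the image part factors through $F_1^L\ket{\phi}$ and reduces to \Cref{lem:fl1_fr1} together with $F^{L,\dagger}F^L\approx\id$ (\Cref{lem:domain_FL_FR}); and the complement part is bounded by the ``image lemma'' \Cref{lem:spru:fl1rev:zero}, whose proof crucially exploits the cancellation $\sum_{y\notin\Im(L_1')}\alpha_{y,b,L_1'\cup\{(x,y)\},R_1,L_2,R_2}=0$ encoded in the hypothesis $F_1^{L,\dagger}\ket{\psi}=0$, followed by Cauchy--Schwarz and \Cref{lem:comb_F1Ldagger_y_difference}. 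This image/complement decomposition and the zero-condition cancellation are the missing ingredients; without them your per-basis-state bound does not control the operator norm of the adjoint pieces.
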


\noindent We provide the proof of above lemma in~\Cref{sec:spru:1st_oracle}.

\begin{lemma}[Closeness of the Second Oracle]   \label{lem:spru:2nd_oracle}
For any integer $t \geq 0$,
\begin{itemize}
    \item {\bf Forward query:} $\|(F X^{k_2} F \Ospru - \Ospru F_2) \Pi_{\leq t}\|_{\op} \leq O(t/\sqrt{N})$,
    \item {\bf Inverse query:} $\|(F^{\dagger} X^{k_2} F^{\dagger} \Ospru - \Ospru F_2^{\dagger}) \Pi_{\leq t}\|_{\op} \leq O(t/\sqrt{N})$.
\end{itemize}
\end{lemma}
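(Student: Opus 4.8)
The plan is to mirror the proof of \Cref{lem:1st_oracle}: first replace the oracles by their ``$F^L$-branches'', then expand both sides on a fixed basis vector via \Cref{lem:Ospru}, and finally match the two superpositions term by term, controlling the mismatch with \Cref{lem:count_x_z}, \Cref{lem:comb_F2Ldagger_y_difference}, \Cref{lem:comb_F2L_tuple_difference} and \Cref{lem:punc_S}. For the reduction I would use that $F - F^L = -F^L F^R F^{R,\dagger} + (\id - F^L F^{L,\dagger})F^{R,\dagger}$ annihilates every basis vector $\ket{x}_{\reg A}\ket{L}_{\reg S}\ket{R}_{\reg T}$ with $x \notin \Dom(R)$, while for $x \in \Dom(R)$ one has $\|F^{R,\dagger}\ket{x}_{\reg A}\ket{L}_{\reg S}\ket{R}_{\reg T}\|_2 = N^{-1/2}$ (the vector overlaps the image of $F^R$ only in one normalized vector, with amplitude $N^{-1/2}$), so, all of $F^L, F^R, \id - F^L F^{L,\dagger}$ being contractions, $\|(F - F^L)\ket{x}_{\reg A}\ket{L}_{\reg S}\ket{R}_{\reg T}\|_2 \le 2N^{-1/2}$; since these images are mutually orthogonal, \Cref{lem:op_norm_orthogonal} gives $\|(F - F^L)\Pi\|_{\op} = O(N^{-1/2})$ on the bounded-database subspace where $X^{k_2}F\Ospru\Pi_{\leq t}$ is supported, and likewise $\|(F_2 - F_2^L)\Pi_{\leq t}\|_{\op} = O(N^{-1/2})$; combined with $\|\Ospru\|_{\op} \le 1$ and $\|F^L\|_{\op}, \|X^{k_2}\|_{\op} \le 1$, this reduces the lemma to bounding $\|(F^L X^{k_2} F^L \Ospru - \Ospru F_2^L)\Pi_{\leq t}\|_{\op}$ and its inverse analogue by $O(t/\sqrt N)$.

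\emph{Forward query.} Fix a basis vector $\ket{\tau} := \ket{x}_{\reg A}\ket{L_1}_{\reg S_1}\ket{R_1}_{\reg T_1}\ket{L_2}_{\reg S_2}\ket{R_2}_{\reg T_2}$ with $L_1, L_2 \in \Rinj_{\leq t}$, $R_1, R_2 \in \RDdist_{\leq t}$. Applying $F_2^L$ and then \Cref{lem:Ospru}, and naming the $\vec z_L$-coordinate created by the new pair $z$, the state $\Ospru F_2^L\ket{\tau}$ is a sum with common amplitude $N^{-(|L_2|+|R_2|+5)/2}$ over tuples $(y,z,\bfk,\bfz)$ in the set $\Phi_{x,L_1,R_1,L_2,R_2}$ of \Cref{lem:comb_F2L_tuple_difference}, of the vectors $\ket{y}_{\reg A}\ket{L_1^{(k_1,k_3)} \cup L_2^{(k_2,\vec z_L)} \cup \{(x,z),(z\oplus k_2,y)\}}_{\reg S}\ket{R_1^{(k_1,k_3)} \cup R_2^{(k_2,\vec z_R)}}_{\reg T}\ket{\bfk}_{\reg K}$. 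Likewise \Cref{lem:Ospru} followed by $F^L X^{k_2} F^L$, whose two $F^L$'s insert a fresh $k_2$-correlated pair $\{(x,z'),(z'\oplus k_2,y')\}$ with $z' \notin \Im(L_1^{(k_1,k_3)} \cup L_2^{(k_2,\vec z_L)})$ and $y' \notin \Im(L_1^{(k_1,k_3)} \cup L_2^{(k_2,\vec z_L)}) \cup \{z'\}$, yields the same vectors with the same amplitude, indexed by $(y',z',\bfk,\bfz)$ ranging over $\Psi_{\phantom{x,}L_1,R_1,L_2,R_2} \supseteq \Phi_{x,L_1,R_1,L_2,R_2}$. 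On $\Phi_{x,L_1,R_1,L_2,R_2}$ the two sums coincide term by term, and these terms are orthonormal because each such merged database is the augmented relation of $(L_1, L_2 \cup \{(x,y)\}; R_1, R_2)$ for a good (hence robustly decodable) tuple, so $\Dec$ recovers $(y,z,\bfk,\bfz)$ uniquely (\Cref{lem:good_tuples_satisfy_conditions}, \Cref{lem:D_is_partial_isometry}, \Cref{lem:good_tuples_monotonicity}). To kill the leftover $\Psi \setminus \Phi$ part I would pass through a punctured operator $\Ospru^\bullet$ from \Cref{lem:punc_S} — puncturing, on input $x$, the $O(t^2/N)$-fraction of good tuples admitting no valid fresh pair (\Cref{lem:count_x_z}) and those producing one of the $\le 4t+1$ inadmissible outputs of \Cref{lem:comb_F2Ldagger_y_difference} — for which $F^L X^{k_2} F^L \Ospru^\bullet \Pi_{\leq t} = \Ospru^\bullet F_2^L \Pi_{\leq t}$ exactly; then $\|\Ospru^\bullet - \Ospru\|_{\op} = O(t/\sqrt N)$ by \Cref{lem:punc_S} and \Cref{lem:comb_F2L_tuple_difference}, and a triangle inequality closes the forward case.

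\emph{Inverse query.} This is dual. If $(x,y) \in L_2$ then $\Ospru F_2^{L,\dagger}\ket{y}_{\reg A}\ket{L_1}_{\reg S_1}\ket{R_1}_{\reg T_1}\ket{L_2}_{\reg S_2}\ket{R_2}_{\reg T_2} = N^{-1/2}\ket{x}_{\reg A} \otimes \Ospru(\ket{L_1}\ket{R_1}\ket{L_2 \setminus \{(x,y)\}}\ket{R_2})$, and it vanishes otherwise; meanwhile $F^{L,\dagger} X^{k_2} F^{L,\dagger}$, acting (after \Cref{lem:Ospru}) on the merged database, first removes on input $y$ the unique entry with image $y$, which the good-tuple disjointness conditions force to be the ``target half'' $(z_{L,i} \oplus k_2, y)$ of the $k_2$-pair carrying $y$, so that \eqref{eq:robust_target} applies (the now-unpaired source half is reassigned to the isolated part), and then, after $X^{k_2}$ and on input $z_{L,i}$, removes the ``source half'' $(x, z_{L,i})$ via \eqref{eq:robust_source}; the net decoded effect is exactly the deletion of $(x,y)$ from $L_2$ together with the forgetting of the intermediate value $z_{L,i}$. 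Summing over the freed $z_{L,i}$ — all but $\le 4t+1$ of which keep the tuple good whenever any does (\Cref{lem:comb_F2Ldagger_y_difference}), the exceptional fraction being $O(t^2/N)$ (\Cref{lem:count_x_z}) — reproduces $\Ospru F_2^{L,\dagger}$ up to $O(t/\sqrt N)$, again organized through \Cref{lem:punc_S}; the event that $y \notin \Im(L_2)$ yet $y$ coincides with some other image value in the merged $L$ is excluded by the good-tuple conditions, and the $F^{L,\dagger}$--$F^{R,\dagger}$ cross terms are $O(N^{-1/2})$ by the same reasoning as in the reduction step.

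\emph{Expected main obstacle.} The substance of the argument is the term-by-term matching: reconciling the canonical-ordering convention on $\Im(L_2)$ (which pins the slot of the freshly created $\vec z_L$-coordinate) with the unordered merged database; identifying, in the inverse query, precisely which half of which $k_2$-pair each $F^{L,\dagger}$ strikes and invoking \eqref{eq:robust_source}--\eqref{eq:robust_target} correctly; and verifying that every non-generic event — cross terms between the $F^L$ and $F^R$ branches, collisions of the fresh pair with existing entries, the intermediate value landing in a forbidden set — has density $O(t^2/N)$ over the good tuples, so that \Cref{lem:punc_S} can absorb it into the $O(t/\sqrt N)$ error. The reduction to $F^L$, though it hinges on the (initially counterintuitive) fact that the backward branch is $N^{-1/2}$-suppressed on a single basis vector, should be comparatively routine.
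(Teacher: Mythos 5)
The reduction step at the top of your proposal is wrong, and it is load-bearing. You claim that $\|(F - F^L)\Pi_{\le t}\|_{\op} = O(N^{-1/2})$ on the grounds that $\|(F-F^L)\ket{x}\ket{L}\ket{R}\|_2 \le 2N^{-1/2}$ for each basis vector and that these images are mutually orthogonal, so that \Cref{lem:op_norm_orthogonal} applies. The orthogonality claim is false. Take $L=\varnothing$ and $R=\{(x,y_0)\}$. Then
\[
F^{R,\dagger}\ket{x}_{\reg A}\ket{\varnothing}_{\reg S}\ket{\{(x,y_0)\}}_{\reg T}
= \tfrac{1}{\sqrt N}\,\ket{y_0}_{\reg A}\ket{\varnothing}_{\reg S}\ket{\varnothing}_{\reg T},
\]
and this output is \emph{independent of $x$}, so that $(F-F^L)\ket{x}\ket{\varnothing}\ket{\{(x,y_0)\}}$ is the same vector for every $x$. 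Summing over $x$ against the (perfectly legal, bounded-database) input $\tfrac{1}{\sqrt N}\sum_x \ket{x}\ket{\varnothing}\ket{\{(x,y_0)\}}$, these $N$ contributions add coherently rather than in quadrature, and one obtains $\|(F-F^L)\tfrac{1}{\sqrt N}\sum_x\ket{x}\ket{\varnothing}\ket{\{(x,y_0)\}}\|_2 = \Theta(1)$. This is no accident: a state in which $\reg A$ is entangled with $\reg T$ is produced naturally after an inverse query, and on such states the backward branch $F^{R,\dagger}$ of $F$ carries $\Theta(1)$ weight — it is what makes $F$ (approximately) invert the previously recorded inverse query. The restriction ``to the bounded-database subspace where $X^{k_2}F\Ospru\Pi_{\le t}$ is supported'' does not help, since after $\Ospru$ and one application of $F$ the state can be exactly of this entangled form. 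The same flaw infects the symmetric claim $\|(F_2-F_2^L)\Pi_{\le t}\|_{\op}=O(N^{-1/2})$ and the remark at the end of your inverse-query paragraph that ``the $F^{L,\dagger}$--$F^{R,\dagger}$ cross terms are $O(N^{-1/2})$ by the same reasoning.''

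Consequently the backward branches $F^{R,\dagger}$, $F_2^{R,\dagger}$ (and their adjoints) cannot be discarded as error; they must be \emph{matched} against one another. This is what the paper's proof does: it expands $F X^{k_2} F$ by writing the outer $F$ as $F^L + A$ and the inner $F$ as $F^{R,\dagger} + B$, so that the only term that is genuinely negligible is $AB$, which contains $F^{R,\dagger} X^{k_2} F^L$ in the middle and is killed by \Cref{lem:FLdagger_U_FR:zero} (a monogamy-of-entanglement fact, not an orthogonality argument). The remaining pieces are matched one-for-one against the four terms of the expansion $F_2 = F_2^L - F_2^L F_2^R F_2^{R,\dagger} + F_2^{R,\dagger} - F_2^L F_2^{L,\dagger}F_2^{R,\dagger}$, using \Cref{lem:fl2_fr2} for the $L/R$ branches, \Cref{lem:fl2dagger_fr2dagger} for the $L^\dagger/R^\dagger$ branches, and \Cref{cor:spru:fl2proj} for the projector products $F_2^L F_2^{L,\dagger}$, $F_2^R F_2^{R,\dagger}$. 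The term-by-term matching machinery in the rest of your proposal (\Cref{lem:count_x_z}, \Cref{lem:comb_F2Ldagger_y_difference}, \Cref{lem:comb_F2L_tuple_difference}, \Cref{lem:punc_S}, \eqref{eq:robust_source}--\eqref{eq:robust_target}) is the right toolkit and does appear inside the paper's proofs of those sublemmas, but it is deployed against individual branches, not after a reduction to a single branch; without repairing the reduction your argument cannot reach the claimed bound.
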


\noindent We provide the proof of above lemma in~\Cref{sec:spru:2nd_oracle}.

\subsection{Statistical Closeness between $\hyb_3$ and $\hyb_4$:~Proving~\Cref{lem:spru:induction}}
\label{sec:spru:induction}

Now, we use the lemmas in~\Cref{sec:main_lemmas} to prove~\Cref{lem:spru:induction}. The structure of the proof is similar to the commutator-style analysis in~\cite{CMS19,DFMS22}.

\begin{proof}[Proof of~\Cref{lem:spru:induction}]
We first introduce some notations. Let $\ket{\psi_0}$ denote the initial state in hybrid $\hyb_3$, \ie
\[
\ket{\psi_0} \coloneqq \ket{0}_{\reg{A}}\ket{0}_{\reg{B}}\ket{\varnothing}_{\reg{S_1}}\ket{\varnothing}_{\reg{T_1}}\ket{\varnothing}_{\reg{S_2}}\ket{\varnothing}_{\reg{T_2}}.
\]
For $i \in [4t]$, let $\ket{\psi_i}$ denote the state right after the $i$-th query,
\[
\ket{\psi_i} \coloneqq \cO_i A_i \ket{\psi_{i-1}},
\]
where $\cO_i$ cycles through $F_1, F_2, F_1^{\dagger}, F_2^{\dagger}$ according to $i \bmod 4$. Similarly, denote the initial state in $\hyb_4$ by
\[
\ket{\phi_0} \coloneqq 
\frac{1}{\sqrt{N^3}} \sum_{k_1,k_2,k_3 \in [N]} \ket{0}_{\reg{A}} \ket{0}_{\reg{B}} \ket{\varnothing}_{\reg{S}} \ket{\varnothing}_{\reg{T}} \ket{k_1}_{\reg{K_1}} \ket{k_2}_{\reg{K_2}} \ket{k_3}_{\reg{K_3}}.
\]
For $i \in [4t]$, let $\ket{\psi_i}$ denote the state right after the $i$-th query,
\[
\ket{\phi_i} \coloneqq \cO_i A_i \ket{\phi_{i-1}},
\]
where $\cO_i$ cycles through $X^{k_3} F X^{k_1}, F X^{k_2} F, X^{k_1} F^{\dagger} X^{k_3}, F^{\dagger} X^{k_2} F^{\dagger}$ according to $i \bmod 4$. \\

\noindent Now, we prove that $\|\Ospru\ket{\psi_i}-\ket{\phi_i}\|_2 = O(i^2/\sqrt{N})$ for $i \in [4t]$ by induction. \\

\noindent \textbf{Base case ($i=0$):} $\Ospru \ket{\psi_0} = \ket{\phi_0}$ holds trivially. \\

\noindent \textbf{Induction step:} Suppose $\|\Ospru\ket{\psi_{i-1}}-\ket{\phi_{i-1}}\|_2 = O((i-1)^2/\sqrt{N})$. Consider the following four cases: \\

\noindent {\bf Case 1. $i \equiv 1 \bmod{4}$}:
\begin{align*}
& \|\Ospru\ket{\psi_{i}}-\ket{\phi_{i}}\|_2 \\
= & \|\Ospru F_{1} A_i \ket{\psi_{i-1}} - X^{k_3} F X^{k_1} A_i \ket{\phi_{i-1}}\|_2
\tag{by expanding the definition of $\ket{\psi_{i}}$ and $\ket{\phi_{i}}$} \\
\leq & \|\Ospru F_1 A_i \ket{\psi_{i-1}} \rcolor{- X^{k_3} F X^{k_1} A_i \Ospru \ket{\psi_{i-1}}} \|_2 \\
& \hspace{.15\textwidth} + \| \rcolor{X^{k_3} F X^{k_1} A_i \Ospru \ket{\psi_{i-1}}} - X^{k_3} F X^{k_1} A_i \ket{\phi_{i-1}}\|_2 
\tag{by the triangle inequality} \\
= & \|(\Ospru F_i - X^{k_3} F X^{k_1} \Ospru) A_i \ket{\psi_{i-1}}\|_2
+ \|X^{k_3} F X^{k_1} A_i (\Ospru \ket{\psi_{i-1}} - \ket{\phi_{i-1}})\|_2 
\tag{since $\Ospru$ and $A_i$ commute} \\
\leq & \|(\Ospru F_1 - X^{k_3} F X^{k_1} \Ospru) \Pi_{\leq t}\|_{\op}
+ \norm{\Ospru \ket{\psi_{i-1}} - \ket{\phi_{i-1}}}_2 
\tag{by~\Cref{lem:op_norm}} \\
= & O(i^2/\sqrt{N}).
\tag{by~\Cref{lem:1st_oracle} and the induction hypothesis}
\end{align*}

\noindent Other three cases follow from the same argument. Hence, the induction holds true. In particular, when $i = 4t$, we have
\begin{equation} \label{eq:induction}
\|\Ospru\ket{\psi_{4t}} -\ket{\phi_{4t}}\|_2 = O(t^2/\sqrt{N}).
\end{equation}
Let $\wt{\Ospru}$ be the partial isometry guaranteed to exist in~\Cref{lem:spru:iso}. By the triangle inequality, \Cref{lem:spru:iso}, and~\Cref{eq:induction}, we have
\begin{align}   \label{eq:psi_phi_4t}
\|\wt{\Ospru} \ket{\psi_{4t}} - \ket{\phi_{4t}}\|_2
\leq \|\wt{\Ospru} \ket{\psi_{4t}} - \Ospru \ket{\psi_{4t}}\|_2 + \|\Ospru \ket{\psi_{4t}}-\ket{\phi_{4t}}\|_2
= O \left(t^2/\sqrt{N}\right).
\end{align}

\noindent Finally, the trace distance between the output of~$\hyb_3$ and that of~$\hyb_4$ satisfies
\begin{align}
& \TD(\rho_3, \rho_4) \\
& = \TD(\Tr_{\reg{S_1S_2T_1T_2}}(\ketbra{\psi_{4t}}{\psi_{4t}}),\Tr_{\reg{STK_1K_2K_3}}(\ketbra{\phi_{4t}}{\phi_{4t}})) 
\notag \\
& = \TD(\Tr_{\reg{STK_1K_2K_3}}(\wt{\Ospru} \ketbra{\psi_{4t}}{\psi_{4t}}\wt{\Ospru}^{\dagger}),\Tr_{\reg{STK_1K_2K_3}}(\ketbra{\phi_{4t}}{\phi_{4t}}))  
\label{eq:isometry_partial_trace} \\
& \leq \TD(\wt{\Ospru} \ketbra{\psi_{4t}}{\psi_{4t}} \wt{\Ospru}^{\dagger}, \ketbra{\phi_{4t}}{\phi_{4t}})
\tag{trace distance is non-increasing under partial trace} \\
& \leq \|\wt{\Ospru} \ket{\psi_{4t}} - \ket{\phi_{4t}}\|_2 
\tag{the trace distance between pure states is bounded by their Euclidean distance} \\
& = O(t^2/\sqrt{N}),
\tag{by~\Cref{eq:psi_phi_4t}}
\end{align}
where~\Cref{eq:isometry_partial_trace} is because $\wt{\Ospru}$ is a partial isometry that acts on the registers being traced out, and $\ket{\psi_{4t}}$ is in the domain of $\Ospru'$. This completes the proof of~\Cref{lem:spru:induction}.
\end{proof}

\section{$\Ospru$ is Close to a Partial Isometry: Proving~\Cref{lem:spru:iso}} 
\label{sec:spru:iso}
We will define the ``normalized'' version of $\Ospru_{k_1,k_3}, \Ospru_{k_2}, \Ospru_{\vec{z}}$ such that the coefficients match the number of terms in the sum. First, define the partial isometry $\wt{\Ospru}_{k_1,k_3}$ such that for any $L_1, L_2 \in \Rinj, R_1, R_2 \in \RDdist$,
\begin{align*}
\wt{\Ospru}_{k_1,k_3} \colon
&\ket{L_1}_{\reg S_1}\ket{R_1}_{\reg T_1}\ket{L_2}_{\reg S_2}\ket{R_2}_{\reg T_2}
\\
&\mapsto
\frac{1}{\sqrt{(N-\abs{\cB_1(L_1,L_2,R_1,R_2)})(\,N-\abs{\cB_3(L_1,L_2,R_1,R_2)}\,)}}\\
&\hspace{.2\textwidth} \times \sum_{\substack{
    k_1 \notin \cB_1(L_1,L_2,R_1,R_2)\\
    k_3 \notin \cB_3(L_1,L_2,R_1,R_2)
}}
\ket{L_1}_{\reg S_1}\ket{R_1}_{\reg T_1}\ket{L_2}_{\reg S_2}\ket{R_2}_{\reg T_2}
\ket{k_1}_{\reg K_1}\ket{k_3}_{\reg K_3}\,.
\end{align*}

\noindent Next, define the partial isometry $\wt{\Ospru}_{k_2}$ such that for any $L_1, L_2 \in \Rinj, R_1, R_2 \in \RDdist, k_1 \notin \cB_1(L_1, L_2,$ $R_1, R_2), k_3 \notin \cB_3(L_1, L_2, R_1, R_2)$,
\begin{align*}
\wt{\Ospru}_{k_2} \colon
&\ket{L_1}_{\reg{S}_1} \ket{R_1}_{\reg{T}_1} \ket{L_2}_{\reg{S}_2} \ket{R_2}_{\reg{T}_2} \ket{k_1}_{\reg{K}_1} \ket{k_3}_{\reg{K}_3}\\
&\mapsto
\frac{1}{\sqrt{N - \abs{\cB_2(L_1, L_2, R_1, R_2, k_1, k_3)}}}\\
&\hspace{.15\textwidth} \times \sum_{k_2 \notin \cB_2(L_1, L_2, R_1, R_2, k_1, k_3)}
\ket{L_1}_{\reg{S}_1}\ket{R_1}_{\reg{T}_1}\ket{L_2}_{\reg{S}_2}\ket{R_2}_{\reg{T}_2}\ket{k_1}_{\reg{K}_1}\ket{k_2}_{\reg{K}_2}\ket{k_3}_{\reg{K}_3}\,.
\end{align*}

\noindent Finally, define the partial isometry $\wt{\Ospru}_{\vec{z}}$ such that for any $L_1, L_2 \in \Rinj, R_1, R_2 \in \RDdist, k_1 \notin \cB_1(L_1, L_2,$ $R_1, R_2), k_3 \notin \cB_3(L_1, L_2, R_1, R_2), k_2 \notin \cB_2(L_1, L_2, R_1, R_2, k_1, k_3)$,
\begin{align*}
\wt{\Ospru}_{\vec{z}} \colon
&\ket{L_1}_{\reg{S}_1} \ket{R_1}_{\reg{T}_1} \ket{L_2}_{\reg{S}_2} \ket{R_2}_{\reg{T}_2} \ket{k_1}_{\reg{K}_1} \ket{k_2}_{\reg{K}_2} \ket{k_3}_{\reg{K}_3}\\
&\mapsto
\frac{1}{\sqrt{N^{|L_2|} - \abs{\cB_L \left(\substack{L_1,L_2,\\ R_1,R_2,\\ k_1,k_2,k_3} \right)}}} 
\frac{1}{\sqrt{N^{|R_2|} - \abs{\cB_R \left(\substack{L_1,L_2,\\ R_1,R_2,\\ k_1,k_2,k_3} \right)}}} \\
& \hspace{.1\textwidth} \times \sum_{  \substack{
\vec{z}_L \notin \cB_L\left(\substack{L_1,L_2,\\ R_1,R_2,\\ k_1,k_2,k_3}\right), 
\vec{z}_R \notin \cB_R\left(\substack{L_1,L_2,\\ R_1,R_2,\\ k_1,k_2,k_3}\right)
}   }
\ket{L_1}_{\reg{S}_1} \ket{R_1}_{\reg{T}_1} \ket{L_2}_{\reg{S}_2} \ket{R_2}_{\reg{T}_2} \ket{\vec{z}_L}_{\reg{Z_L}} \ket{\vec{z}_R}_{\reg{Z_R}} \ket{k_1}_{\reg{K}_1} \ket{k_2}_{\reg{K}_2} \ket{k_3}_{\reg{K}_3}.
\end{align*}

\begin{lemma} \label{lem:isometries_close}
For any interger $t \ge 0$,
\begin{align*}
& \|(\Ospru_{k_1,k_3} - \wt{\Ospru}_{k_1,k_3}) \Pi_{\leq t}\|_{\op} \leq O(\sqrt{t^2/N}), \\
& \|(\Ospru_{k_2} - \wt{\Ospru}_{k_2}) \Pi_{\leq t}\|_{\op} \leq O(\sqrt{t^2/N}), \\
& \|(\Ospru_{\vec{z}} - \wt{\Ospru}_{\vec{z}}) \Pi_{\leq t}\|_{\op} \leq O(\sqrt{t^2/N}).
\end{align*}
\end{lemma}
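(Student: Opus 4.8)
The plan is to observe that in each pair the operator $\cS_{k_1,k_3}$ (\resp $\cS_{k_2}$, $\cS_{\vec z}$) acts on every basis vector of its domain by producing \emph{exactly} the same superposition as the partial isometry $\wt{\Ospru}_{k_1,k_3}$ (\resp $\wt{\Ospru}_{k_2}$, $\wt{\Ospru}_{\vec z}$), up to a single multiplicative constant that is within $O(t^2/N)$ of $1$; the operator–norm bound then follows from~\Cref{lem:op_norm_orthogonal}. Concretely, fix a basis vector $\ket v$ in the image of $\Pi_{\le t}$, of the form $\ket{L_1}_{\reg S_1}\ket{R_1}_{\reg T_1}\ket{L_2}_{\reg S_2}\ket{R_2}_{\reg T_2}$ (with the classical key registers prepended in the $\cS_{k_2}$ and $\cS_{\vec z}$ cases), where $L_1,L_2\in\RIdist_{\le t}$ and $R_1,R_2\in\RDdist_{\le t}$. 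Comparing~\Cref{def:S_k1_k3} with the definition of $\wt{\Ospru}_{k_1,k_3}$ (and likewise for the other two pairs): the two operators sum over the \emph{same} index set and differ only in the prefactor, so
\[
\cS_{k_1,k_3}\ket v = c_v\,\wt{\Ospru}_{k_1,k_3}\ket v,\qquad c_v = \sqrt{\Big(1-\tfrac{|\cB_1|}{N}\Big)\Big(1-\tfrac{|\cB_3|}{N}\Big)}\in[0,1],
\]
and analogously $\cS_{k_2}\ket v = \sqrt{1-|\cB_2|/N}\cdot\wt{\Ospru}_{k_2}\ket v$ and $\cS_{\vec z}\ket v = \sqrt{(1-|\cB_L|/N^{|L_2|})(1-|\cB_R|/N^{|R_2|})}\cdot\wt{\Ospru}_{\vec z}\ket v$, all $\cB$'s evaluated at the arguments determined by $v$. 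When the classical prefix of $v$ is ``bad'' (so that, e.g., $\cS_{k_2}$ or $\cS_{\vec z}$ sends $\ket v$ to $0$), the corresponding $\wt{\Ospru}$ annihilates $\ket v$ too and the difference vanishes; so it suffices to treat the ``active'' basis vectors, for which $\wt{\Ospru}\ket v$ is a unit vector and hence $\|(\cS - \wt{\Ospru})\ket v\|_2 = 1-c_v$ (writing $\cS,\wt{\Ospru}$ generically for any of the three pairs).

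Next I would bound $1-c_v$. Using $1-\sqrt{1-x}\le x$ for $x\in[0,1]$ and $(1-\alpha)(1-\beta)\ge 1-\alpha-\beta$, one gets $1-c_v \le |\cB_1|/N + |\cB_3|/N$ in the first case, $1-c_v\le |\cB_2|/N$ in the second, and $1-c_v\le |\cB_L|/N^{|L_2|} + |\cB_R|/N^{|R_2|}$ in the third. Since $L_1,L_2\in\RIdist_{\le t}$ and $R_1,R_2\in\RDdist_{\le t}$, \Cref{lem:number_of_good_tuples} gives $|\cB_1|/N,\,|\cB_3|/N \le 2t^2/N$, $|\cB_2|/N\le 8t^2/N$, and $|\cB_L|/N^{|L_2|},\,|\cB_R|/N^{|R_2|}\le 5t^2/N$. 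In every case therefore $\|(\cS - \wt{\Ospru})\ket v\|_2 \le O(t^2/N) \le O(\sqrt{t^2/N})$, the last step being the claimed bound (for $t^2>N$ it holds trivially, as both operators are contractions, so the difference has norm at most $2$).

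Finally I would pass from the per-vector estimate to the operator norm via~\Cref{lem:op_norm_orthogonal}. Each of $\cS$ and $\wt{\Ospru}$ is a ``read-and-append'' map: it leaves the registers recording $v$ intact and writes new values only into fresh, disjoint key/$\vec z$ registers. Hence for distinct domain basis vectors $\ket v\ne\ket{v'}$ the images $\wt{\Ospru}\ket v$ and $\wt{\Ospru}\ket{v'}$ are orthogonal, and so are $(\cS-\wt{\Ospru})\ket v = (c_v-1)\wt{\Ospru}\ket v$ and $(\cS-\wt{\Ospru})\ket{v'}$. Applying~\Cref{lem:op_norm_orthogonal} to $(\cS-\wt{\Ospru})\Pi_{\le t}$ with the orthonormal basis of $\mathrm{Im}(\Pi_{\le t})$ yields $\|(\cS-\wt{\Ospru})\Pi_{\le t}\|_{\op} = \max_v \|(\cS-\wt{\Ospru})\ket v\|_2 \le O(\sqrt{t^2/N})$, which gives the three bounds in the statement.

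There is no genuine obstacle here; the only points requiring care are (i) checking that in each pair $\cS$ and $\wt{\Ospru}$ really sum over the \emph{same} set of appended values, so that the difference is a pure rescaling by $c_v\in[0,1]$, and (ii) invoking the correct item of~\Cref{lem:number_of_good_tuples} together with the elementary inequality controlling $1-\sqrt{1-x}$. The orthogonality bookkeeping needed for~\Cref{lem:op_norm_orthogonal} is immediate from the read-and-append structure of these operators.
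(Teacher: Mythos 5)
Your proof is correct and follows essentially the same route as the paper's: invoke~\Cref{lem:op_norm_orthogonal} via the read-and-append orthogonality, then bound the per-basis-vector norm using~\Cref{lem:number_of_good_tuples}; you have simply carried out explicitly the "elementary calculation" the paper refers to (the observation that $\cS$ and $\wt{\Ospru}$ differ only by the scalar $c_v$, and the inequalities $1-\sqrt{1-x}\le x$, $(1-\alpha)(1-\beta)\ge 1-\alpha-\beta$). In fact your per-vector bound of $O(t^2/N)$ is slightly tighter than the stated $O(\sqrt{t^2/N})$, which the paper appears to have simply rounded up.
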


\begin{proof}
Since $\Ospru_{k_1,k_3} - \wt{\Ospru}_{k_1,k_3}$ preserve the orthogonality of input of the form $\ket{L_1} \ket{R_1} \ket{L_2}\ket{R_2}$. Thus, by~\Cref{lem:op_norm_orthogonal}, it is suffices to maximize 
\begin{align*}
& \|(\Ospru_{k_1,k_3} - \wt{\Ospru}_{k_1,k_3}) \Pi_{\leq t} \ket{L_1} \ket{R_1} \ket{L_2}\ket{R_2}\|_2.
\end{align*}
This follow from~\Cref{lem:number_of_good_tuples} and an elementary calculation. Items~2 and~3 follow similarly.
\end{proof}

\begin{proof}[Proof of~\Cref{lem:spru:iso}]
Define the operator $\wt{\Ospru} \coloneqq {\Osprus}^\dagger \cdot \wt{\Ospru}_{\vec{z}} \cdot \wt{\Ospru}_{k_2} \cdot \wt{\Ospru}_{k_1,k_3}$. To see that $\wt{\Ospru}$ is a partial isometry, one can easily verify that $\wt{\Ospru}$ preserves the inner product between basis vectors in the domain.\footnote{In contrast to isometries, partial isometries are \emph{not} necessarily closed under composition.} Then we obtain
\begin{align}
& \|(\Ospru - \wt{\Ospru}) \Pi_{\leq t}\|_{\op}
\notag \\
& = \|{\Osprus}^\dagger \cdot (\Ospru_{\vec{z}} \cdot \Ospru_{k_2} \cdot \Ospru_{k_1,k_3} - \wt{\Ospru}_{\vec{z}} \cdot \wt{\Ospru}_{k_2} \cdot \wt{\Ospru}_{k_1,k_3}) \Pi_{\leq t}\|_{\op} 
\notag \\
& = \|(\Ospru_{\vec{z}} \cdot \Ospru_{k_2} \cdot \Ospru_{k_1,k_3} - \wt{\Ospru}_{\vec{z}} \cdot \wt{\Ospru}_{k_2} \cdot \wt{\Ospru}_{k_1,k_3}) \Pi_{\leq t}\|_{\op} 
\tag{since $\|{\Osprus}^\dagger\|_{\op} = \|\Osprus\|_{\op} = 1$} \\
& \leq \|(\Ospru_{\vec{z}} - \wt{\Ospru}_{\vec{z}}) \Pi_{\leq t}\|_{\op}
+ \|(\Ospru_{k_2} - \wt{\Ospru}_{k_2})  \Pi_{\leq t}\|_{\op}
+ \|(\Ospru_{k_1,k_3} - \wt{\Ospru}_{k_1,k_3})  \Pi_{\leq t}\|_{\op} 
\label{eq:triangle_ineq} \\
& = O (\sqrt{t^2/N}),
\tag{by~\Cref{lem:isometries_close}}
\end{align}
where \Cref{eq:triangle_ineq} uses (i) the triangle inequality; (ii) that $\Pi_{\le t}$ commutes with each of $\Ospru_{\vec z}, \Ospru_{k_2}, \Ospru_{k_1,k_3}$; (iii) that the operator norm is submultiplicative; and (iv) that $\Ospru_{\vec z}, \Ospru_{k_2}, \Ospru_{k_1,k_3}$ are partial isometries (so their operator norm is $1$). This completes the proof of~\Cref{lem:spru:iso}.
\end{proof}

\section{Closeness of the First Oracle: Proving~\Cref{lem:1st_oracle}}
\label{sec:spru:1st_oracle}

Our approach is through expanding oracles $F_1,F_2,F$ as a sum of smaller terms. Then we carefully bound each pair of terms. Before proving~\Cref{lem:1st_oracle}, we first introduce several lemmas. 

\subsection{Closeness of $F_1^L$ and $F_1^R$}
Intuitively, the following lemma states the following. Suppose the adversary in hybrid $\hyb_3$ has made $t = \poly(\secp)$ queries in total at the moment. Then the state obtained by applying $F_1^L$ followed by $\Ospru$ is negligibly close to the state obtained by applying $\Ospru$ followed by $X^{k_3} F^L X^{k_1}$. The intuition is straightforward. If we apply $F_1^L$ and then $\Ospru$, the resulting state is entirely supported by decomposable relations with the correct number of correlated pairs, owing to the definition of $\Ospru$. On the other hand, if we first apply $\Ospru$, then $X^{k_3} F^L X^{k_1}$, there is a small chance that the $y$ sampled by $F^L$ might generate unwanted correlated pairs. Fortunately, since the relations are of polynomial size, all but a negligible fraction of $y$ behave correctly.

\begin{lemma}[Closeness of $F_1^L$ and $F_1^R$] \label{lem:fl1_fr1}
For any integer $t \geq 0$,
\begin{align*}
& \|( X^{k_3} F^{L} X^{k_1} \Ospru - \Ospru F_1^{L} ) \Pi_{\le t} \|_{\op}
= O( \sqrt{t/N} ) \\
& \|( X^{k_1} F^R X^{k_3} \Ospru - \Ospru F_1^R ) \Pi_{\leq t }\|_{\op} 
= O( \sqrt{t/N} ).
\end{align*}
\end{lemma}

\begin{proof}
Fix $t \in \N, x \in [N], L_1, L_2 \in \Rinj_{\leq t}$, and $ R_1, R_2 \in \RDdist_{\leq t}$. We start by calculating the following states:
\begin{align*}  
& \ket{\psi_{x,L_1,R_1,L_2,R_2}}_{\reg{AST} \reg{K}_1 \reg{K}_2 \reg{K}_3} \coloneqq 
X^{k_3} F^L X^{k_1} \Ospru \ket{x}_{\reg{A}} \ket{L_1}_{\reg{S}_1} \ket{R_1}_{\reg{T}_1} \ket{L_2}_{\reg{S}_2} \ket{R_2}_{\reg{T}_2}, \\
& \ket{\phi_{x,L_1,R_1,L_2,R_2}}_{\reg{AST} \reg{K}_1\reg{K}_2\reg{K}_3} \coloneqq 
\Ospru F_1^L \ket{x}_{\reg{A}} \ket{L_1}_{\reg{S}_1} \ket{R_1}_{\reg{T}_1} \ket{L_2}_{\reg{S}_2} \ket{R_2}_{\reg{T}_2}.
\end{align*}
To simplify notation, we write $\ket{\bfk}_{\reg{K}}$ as shorthand for $\ket{k_1}_{\reg{K}_1} \ket{k_2}_{\reg{K}_2} \ket{k_3}_{\reg{K}_3}$, and $(\bfk, \bfz) \in \Sspru\left(\substack{L_1,L_2, \\ R_1,R_2}\right)$ as shorthand for $(k_1, k_2, k_3, \vec{z}_L, \vec{z}_R) \in \Sspru\left(\substack{L_1,L_2, \\ R_1,R_2}\right)$.

\myparagraph{Computing $\ket{\psi_{x,L_1,R_1,L_2,R_2}}$} Expanding the definitions of $\Ospru$ and $F^L$, we have
\begin{align}
& \ket{x}_{\reg{A}} \ket{L_1}_{\reg{S}_1} \ket{R_1}_{\reg{T}_1} \ket{L_2}_{\reg{S}_2} \ket{R_2}_{\reg{T}_2}
\notag \\
& \xmapsto{\Ospru} \frac{1}{\sqrt{N^{|L_2|+|R_2|+3}}} 
\sum_{ \substack{
    (\bfk, \bfz) \in \Sspru\left(\substack{L_1,L_2, \\ R_1,R_2}\right)
} }
\ket{x}_{\reg{A}}
\ket{L_1^{(k_1,k_3)} \cup L_2^{(k_2,\vec{z}_L)}}_{\reg{S}} 
\ket{R^{(k_1,k_3)}_1 \cup R_2^{(k_2,\vec{z}_R)}}_{\reg{T}}
\ket{\bfk}_{\reg{K}}
\tag{by~\Cref{lem:Ospru}} \\
& \xmapsto{X^{k_3} F^L X^{k_1}}
\frac{1}{\sqrt{N^{|L_2|+|R_2|+4}}} 
\sum_{ \substack{
    (\bfk, \bfz) \in \Sspru\left(\substack{L_1,L_2, \\ R_1,R_2}\right) \\
    y \notin \Im(L_1^{(k_1,k_3)} \cup L_2^{(k_2,\vec{z}_L)})
} }
\ket{y \oplus k_3}_{\reg{A}}
\ket{L_1^{(k_1,k_3)} \cup L_2^{(k_2,\vec{z}_L)} \cup \set{(x \oplus k_1, y)}}_{\reg{S}}
\ket{R^{(k_1,k_3)}_1 \cup R_2^{(k_2,\vec{z}_R)}}_{\reg{T}}
\ket{\bfk}_{\reg{K}}
\tag{by~\Cref{eq:def:F_L}} \\
& = \frac{1}{\sqrt{N^{|L_2|+|R_2|+4}}}
\sum_{ \substack{ 
    (\bfk, \bfz) \in \Sspru\left(\substack{L_1,L_2, \\ R_1,R_2}\right) \\
    y \oplus k_3 \notin \Im(L_1^{(k_1,k_3)} \cup L_2^{(k_2,\vec{z}_L)}) 
} }
\ket{y}_{\reg{A}}
\ket{L_1^{(k_1,k_3)} \cup L_2^{(k_2,\vec{z}_L)} \cup \set{(x \oplus k_1, y \oplus k_3)}}_{\reg{S}}
\ket{R^{(k_1,k_3)}_1 \cup R_2^{(k_2,\vec{z}_R)}}_{\reg{T}} 
\ket{\bfk}_{\reg{K}}
\tag{by relabeling $y \mapsto y \oplus k_3$} \\
& = \frac{1}{\sqrt{N^{|L_2|+|R_2|+4}}}
\sum_{ \substack{ 
    (\bfk, \bfz) \in \Sspru\left(\substack{L_1,L_2, \\ R_1,R_2}\right) \\
    y \notin \Im(L_1)  \cup \qty( \Im(L_2^{(k_2,\vec{z}_L)}) \oplus k_3 )
} }
\ket{y}_{\reg{A}} 
\ket{\left(L_1 \cup \set{(x,y)}\right)^{(k_1,k_3)} \cup L_2^{(k_2,\vec{z}_L)}}_{\reg{S}}
\ket{R^{(k_1,k_3)}_1 \cup R_2^{(k_2,\vec{z}_R)}}_{\reg{T}} 
\ket{\bfk}_{\reg{K}}, \label{eq:psi}
\end{align}
where the last line is by the definition of augmented relations in~\Cref{eq:aug_L1}.

\myparagraph{Computing $\ket{\phi_{x,L_1,R_1,L_2,R_2}}$}
Similarly, expanding the definitions of $\Ospru$ and $F_1^L$, we have
\begin{align}
& \frac{1}{\sqrt{N^{|L_2|+|R_2|+4}}} 
\sum_{ \substack{ 
    y \notin \Im(L_1), \\
    (\bfk, \bfz) \in \Sspru\left(\substack{L_1 \cup \set{(x,y)},L_2, \\ R_1,R_2}\right)
} } 
\ket{y}_{\reg{A}}
\ket{\left( L_1 \cup \set{(x,y)} \right)^{(k_1,k_3)} \cup L_2^{(k_2,\vec{z}_L)}}_{\reg{S}}
\ket{R^{(k_1,k_3)}_1 \cup R_2^{(k_2,\vec{z}_R)}}_{\reg{T}}
\ket{\bfk}_{\reg{K}}. \label{eq:phi}
\end{align}

\myparagraph{Orthogonality}
Consider distinct $(x,L_1,R_1,L_2,R_2)$ and $(x',L'_1,R'_1,L'_2,R'_2)$. We claim that 
\begin{itemize}
    \item $\ket{\psi_{x,L_1,R_1,L_2,R_2}}$ is orthogonal to $\ket{\psi_{x',L'_1,R'_1,L'_2,R'_2}}$,
    \item $\ket{\phi_{x,L_1,R_1,L_2,R_2}}$ is orthogonal to $\ket{\phi_{x',L'_1,R'_1,L'_2,R'_2}}$,
    \item $\ket{\psi_{x,L_1,R_1,L_2,R_2}}$ is orthogonal to $\ket{\phi_{x',L'_1,R'_1,L'_2,R'_2}}$.
\end{itemize}
They together implies that $\ket{\psi_{x,L_1,R_1,L_2,R_2}} - \ket{\phi_{x,L_1,R_1,L_2,R_2}}$ is orthogonal to $\ket{\psi_{x',L'_1,R'_1,L'_2,R'_2}} - \ket{\phi_{x',L'_1,R'_1,L'_2,R'_2}}$. Thus, by~\Cref{lem:op_norm_orthogonal}, it suffices to maximize the norm over input states of the form $\ket{x} \ket{L_1} \ket{R_1} \ket{L_2} \ket{R_2}$. To prove the claim, we define the operator
\[
\cI \coloneqq X^{k_3} \cdot X^{k_1} \cdot \Osprus \cdot F_{\extract}^L \cdot X^{k_3}
\]
where the partial isometry $F_L^{\extract}$ is defined in~\Cref{eq:FL_extract}. Note that $\cI$ preserves inner product between the states under consideration. To see this, we may compute the states obtained by applying $\cI$ to them:
\begin{align}
& \ket{\psi_{x,L_1,R_1,L_2,R_2}}
\notag \\ 
&  \xmapsto{F_{\extract}^L \cdot X^{k_3}} 
\frac{1}{\sqrt{N^{|L_2|+|R_2|+4}}} 
\sum_{ \substack{ 
    (\bfk, \bfz) \in \Sspru\left(\substack{L_1,L_2, \\ R_1,R_2}\right) \\
    y \notin \Im(L_1)  \cup \qty( \Im(L_2^{(k_2,\vec{z}_L)}) \oplus k_3 )
} }
\ket{y \oplus k_3}_{\reg{A}'} 
\ket{x \oplus k_1}_{\reg{A}} 
\ket{L_1^{(k_1,k_3)} \cup L_2^{(k_2,\vec{z}_L)}}_{\reg{S}}
\ket{R^{(k_1,k_3)}_1 \cup R_2^{(k_2,\vec{z}_R)}}_{\reg{T}} 
\ket{\bfk}_{\reg{K}}
\tag{by~\Cref{eq:psi,eq:FL_extract}} \\
& \xmapsto{X^{k_3} \cdot X^{k_1} \cdot \Osprus} 
\frac{1}{\sqrt{N^{|L_2|+|R_2|+4}}} 
\sum_{ \substack{ 
    (\bfk, \bfz) \in \Sspru\left(\substack{L_1,L_2, \\ R_1,R_2}\right) \\
    y \notin \Im(L_1)  \cup \qty( \Im(L_2^{(k_2,\vec{z}_L)}) \oplus k_3 )
} }
\ket{y}_{\reg{A}'} 
\ket{x}_{\reg{A}} 
\ket{L_1}_{\reg{S}_1} \ket{L_2}_{\reg{S}_2} 
\ket{R_1}_{\reg{T}_1} \ket{R_2}_{\reg{T}_2} 
\ket{\bfz}_{\reg{Z}}
\ket{\bfk}_{\reg{K}}, \label{eq:Ipsi}
\end{align}
where the last line is by~\Cref{def:Ospru_split}. From the above calculation, it is clear that $\cI \ket{\psi_{x,L_1,R_1,L_2,R_2}}$ is orthogonal to $\cI \ket{\psi_{x',L'_1,R'_1,L'_2,R'_2}}$ whenever $(x, L_1, L_2, {R}_1, {R}_2) \neq (x', L'_1, L'_2, {R'}_1, {R'}_2)$. Moreover, $X^{k_3} \ket{\psi_{x,L_1,R_1,L_2,R_2}}$ is in the domain of the partial isometry $F_L^{\extract}$, and $F_L^{\extract} X^{k_3} \ket{\psi_{x,L_1,R_1,L_2,R_2}}$ is in the domain of the partial isometry $\Osprus$. Thus, $\cI$ preserves the inner product between $\ket{\psi_{x,L_1,R_1,L_2,R_2}}$ and $\ket{\psi_{x',L'_1,R'_1,L'_2,R'_2}}$, which implies Item~1, namely, $\ket{\psi_{x,L_1,R_1,L_2,R_2}}$ is orthogonal to $\ket{\psi_{x',L'_1,R'_1,L'_2,R'_2}}$.  \\

\noindent Similarly, we have
\begin{align}
\ket{\phi_{x,L_1,R_1,L_2,R_2}}
\xmapsto{\cI} 
\frac{1}{\sqrt{N^{|L_2|+|R_2|+4}}} 
\sum_{ \substack{ 
    y \notin \Im(L_1), \\
    (\bfk, \bfz) \in \Sspru\left(\substack{L_1 \cup \set{(x,y)},L_2, \\ R_1,R_2}\right)
} } 
\ket{y}_{\reg{A}'} 
\ket{x}_{\reg{A}} 
\ket{L_1}_{\reg{S}_1} \ket{L_2}_{\reg{S}_2} 
\ket{R_1}_{\reg{T}_1} \ket{R_2}_{\reg{T}_2}
\ket{\bfz}_{\reg{Z}} 
\ket{\bfk}_{\reg{K}}.  \label{eq:Iphi}
\end{align}
Likewise, $\cI \ket{\phi_{x,L_1,R_1,L_2,R_2}}$ is orthogonal to $\cI \ket{\phi_{x',L'_1,R'_1,L'_2,R'_2}}$ and $\cI$ preserves the inner product between $\ket{\phi_{x,L_1,R_1,L_2,R_2}}$ and $\ket{\phi_{x',L'_1,R'_1,L'_2,R'_2}}$. Thus, $\ket{\phi_{x,L_1,R_1,L_2,R_2}}$ is orthogonal to $\ket{\phi_{x',L'_1,R'_1,L'_2,R'_2}}$, proving Item~2. Finally, from the above calculation, we can easily conclude that $\cI \ket{\psi_{x,L_1,R_1,L_2,R_2}}$ is orthogonal to $\ket{\phi_{x',L'_1,R'_1,L'_2,R'_2}}$
which imply that $\ket{\psi_{x,L_1,R_1,L_2,R_2}}$ is orthogonal to $\ket{\phi_{x',L'_1,R'_1,L'_2,R'_2}}$, proving Item~3.

\myparagraph{Wrap-up}
According to the above argument and~\Cref{lem:op_norm_orthogonal}, it suffices to bound the maximum of
\[
\|\ket{\psi_{x,L_1,R_1,L_2,R_2}} - \ket{\phi_{x,L_1,R_1,L_2,R_2}}\|_2.
\]
over all $x \in [N], L_1, L_2 \in \Rinj_{\leq t}, R_1, R_2 \in \RDdist_{\leq t}$. From the above calculation, this is equivalently reduced to bounding
\[
\|\cI \ket{\psi_{x,L_1,R_1,L_2,R_2}} - \cI \ket{\phi_{x,L_1,R_1,L_2,R_2}}\|_2.
\]

\noindent Finally, by~\Cref{lem:comb_F1L_tuple_difference}, we obtain
\begin{align*}
\|\cI \ket{\psi_{x,L_1,R_1,L_2,R_2}} - \cI \ket{\phi_{x,L_1,R_1,L_2,R_2}}\|^2_2 
=  O( t/N ).
\end{align*}
\noindent This concludes the proof of~\Cref{lem:fl1_fr1}.
\end{proof}

\subsection{Closeness of $F_1^{L,\dagger}$ and $F_1^{R,\dagger}$}

The following lemma implies that in~$\hyb_3$, any state orthogonal to the image of $F_1^L$ remains nearly orthogonal to the image of $F^L$ after the action of $X^{k_3} \Ospru$. Intuitively, this prevents unintended ``cancellation'' between oracle calls to $F^L$.

\begin{lemma}[Image Lemma for $F_1^L$] \label{lem:spru:fl1rev:zero}
For any integer $t \geq 0$ and any normalized state $\ket{\psi}$ on registers $\reg{A}, \reg{B}, \reg{S}_1, \reg{T}_1, \reg{S}_2, \reg{T}_2$ such that $\Pi_{\leq t} \ket{\psi} = \ket{\psi}$ and $F_1^{L,\dagger} \ket{\psi} = 0$, it holds that
\[
\|F^{L,\dagger} X^{k_3} \Ospru \ket{\psi}\|_2 
= O( \sqrt{t/N} ).
\]
\end{lemma}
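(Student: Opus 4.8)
The plan is to leverage the orthogonality structure established in the proof of \Cref{lem:fl1_fr1} together with the combinatorial counting lemmas. First I would expand $\ket{\psi}$ in the standard basis of relation states, $\ket{\psi} = \sum_{x,L_1,R_1,L_2,R_2} \alpha_{x,L_1,R_1,L_2,R_2} \ket{x}_{\reg{A}}\ket{L_1}_{\reg{S}_1}\ket{R_1}_{\reg{T}_1}\ket{L_2}_{\reg{S}_2}\ket{R_2}_{\reg{T}_2}\ket{\cdot}_{\reg{B}}$, where the sum is over $L_1,L_2 \in \Rinj_{\le t}$, $R_1,R_2\in\RDdist_{\le t}$ because $\Pi_{\le t}\ket\psi=\ket\psi$. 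The condition $F_1^{L,\dagger}\ket\psi = 0$ says that $\ket\psi$ is orthogonal to the image of $F_1^L$; concretely, by \Cref{fact:image_FL} and \Cref{eq:FL_dagger}, the component of $\ket\psi$ of size $\ell$ (in the combined $\reg{S}_1$-register) that "looks like" $F_1^L$ applied to a size-$(\ell-1)$ state must vanish. The key point is to translate this orthogonality, stated with respect to $F_1^L$ on the $\hyb_3$-side databases, into a near-orthogonality to the image of $F^L$ on the $\hyb_4$-side merged database after applying $X^{k_3}\Ospru$.

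The main step is to compute $F^{L,\dagger} X^{k_3} \Ospru \ket{\psi}$ explicitly. Using \Cref{lem:Ospru} to expand $\Ospru$, then $X^{k_3}$, then $F^{L,\dagger}$ via \Cref{eq:FL_dagger}, the operator $F^{L,\dagger}$ removes a pair $(x',y')$ from the merged relation $L_1^{(k_1,k_3)}\cup L_2^{(k_2,\vec z_L)}$ — but it can remove \emph{any} such pair, whereas on the $\hyb_3$ side $F_1^{L,\dagger}$ only removes pairs from $L_1$. The three cases of which pair gets removed — an isolated vertex of $G^\ell_{L,k_2}$ (corresponding to $L_1$), a source vertex, or a target vertex (the latter two corresponding to $L_2$) — are exactly the three cases analyzed in \Cref{lem:conditions_robust_decodability}, equations \eqref{eq:robust_isolate}, \eqref{eq:robust_source}, \eqref{eq:robust_target}. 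The contribution from isolated-vertex removals, after applying $\Osprus$-style decoding, reassembles (up to the good-tuple normalization) into $\Ospru X^{k_3} F_1^{L,\dagger}\ket\psi$-type terms plus small error; since $F_1^{L,\dagger}\ket\psi=0$, the "main" contribution vanishes, and one is left only with the error from the discrepancy between the good-tuple sets $\Sspru\left(\substack{L_1,L_2\\R_1,R_2}\right)$ and $\Sspru\left(\substack{L_1\cup\{(x,y)\},L_2\\R_1,R_2}\right)$ — this is precisely bounded by \Cref{lem:comb_F1L_tuple_difference} (giving an $O(t/N)$ fraction, hence $O(\sqrt{t/N})$ in norm). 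The contributions from source- and target-vertex removals are genuinely new terms with no counterpart in $F_1^{L,\dagger}\ket\psi$; I would bound these directly by counting: for fixed $(L_1,L_2,R_1,R_2,x,y)$ there are at most $|L_2|\le t$ such pairs, each contributing a $1/N$ factor from the $F^{L,\dagger}$ normalization, and the resulting vectors are orthogonal across different removed pairs (by tracking the $\reg{S}$-register size and content), so their total squared norm is $O(t/N)$.

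To make the orthogonality bookkeeping clean I would, as in the proof of \Cref{lem:fl1_fr1}, introduce an auxiliary partial isometry $\cI$ built from $F^L_{\extract}$, $\Osprus$, and the $X^{k_j}$'s, apply it to both sides, and argue that the resulting vectors indexed by distinct $(x,L_1,R_1,L_2,R_2)$ are mutually orthogonal; then \Cref{lem:op_norm_orthogonal} reduces the operator-norm bound to maximizing $\|F^{L,\dagger}X^{k_3}\Ospru\ket{x}\ket{L_1}\ket{R_1}\ket{L_2}\ket{R_2}\|_2$ over a single basis input subject to the local orthogonality constraint coming from $F_1^{L,\dagger}\ket\psi=0$. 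The hard part will be carefully handling the "mixed" terms where $F^{L,\dagger}$ removes a \emph{target} vertex $(z_{L,i}\oplus k_2, y_i)$: by \eqref{eq:robust_target} this spits out $L_1\cup\{(x_i\oplus k_1, z_{L,i}\oplus k_3)\}$ on the $\reg{S}_1$ side, i.e. it looks like it \emph{added} an element to $L_1$ rather than leaving it alone, so one must verify that these cannot secretly overlap with the main vanishing contribution and must be controlled purely by the $1/N$-per-pair counting. I expect this case analysis — ensuring the source/target contributions are orthogonal both to each other and to the isolated-vertex (vanishing) contribution, so that no cancellation can inflate the norm — to be the main obstacle; once that is in place, the bound $O(\sqrt{t/N})$ follows from summing $O(t)$ terms each of squared norm $O(1/N)$.
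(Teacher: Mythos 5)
Your high-level decomposition is exactly the paper's: translate $F_1^{L,\dagger}\ket{\psi}=0$ into a zero condition on the coefficients $\alpha$, then split the pair removed by $F^{L,\dagger}$ from the merged relation into the three cases (isolated / source / target of $G^{\ell}_{L,k_2}$), with the isolated part vanishing up to a good-tuple discrepancy and the source/target parts bounded unconditionally. However, the central mechanism you propose for the isolated case does not work. You want to apply the $\cI$-reduction and \Cref{lem:op_norm_orthogonal} to ``maximize over a single basis input subject to the local orthogonality constraint.'' The constraint $F_1^{L,\dagger}\ket{\psi}=0$ is \emph{global}: it forces the sum $\sum_{y\notin\Im(L'_1)}\alpha_{y,b,L'_1\cup\{(x,y)\},R_1,L_2,R_2}$ to vanish, which couples infinitely many basis vectors and cannot be imposed on a single basis state. \Cref{lem:op_norm_orthogonal} computes an unconstrained operator norm and gives you no way to restrict to inputs orthogonal to $\Im(F_1^L)$. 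The paper instead works directly with the $\alpha$ expansion: after applying the partial isometry $\Osprus X^{k_1}$, it uses the zero condition to replace the (good-$y$) sum by minus the (bad-$y$) sum, invokes \Cref{lem:comb_F1Ldagger_y_difference} (note: not \Cref{lem:comb_F1L_tuple_difference}, which is for \Cref{lem:fl1_fr1}) to get the either-zero-or-at-most-$g(t)$-bad-$y$ dichotomy, and then Cauchy--Schwarz over the bad $y$'s yields $O(t/N)$ after averaging over $(\bfk,\bfz)$. The Cauchy--Schwarz step is essential and has no analogue in a per-basis-state argument.

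The source/target part also needs more care than ``$O(t)$ terms each of squared norm $O(1/N)$''. You are applying it to a superposition, so without an orthogonality argument across \emph{different} input indices $(x,L_1,R_1,L_2,R_2)$ (which fails: after a target removal the decoded state has $L_1$ grown by one entry, so images from different inputs can collide) the per-input bound does not sum to an $\ell_2$ bound. The paper sidesteps this entirely with the punctured operator $\Ospru^{\bullet}$ of \Cref{lem:punc_S}: define the bad set $\cP_{y,L_1,R_1,L_2,R_2}=\{(\bfk,\bfz):y\oplus k_3\in\Im(L_2^{(k_2,\vec z_L)})\}$, observe it has fractional measure $\le 2t/N$, so $\|(\Ospru-\Ospru^{\bullet})\Pi_{\le t}\|_{\op}=O(\sqrt{t/N})$, and $\Pi_2 X^{k_3}\Ospru^{\bullet}=0$ by construction. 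If you want to avoid the $\Ospru^{\bullet}$ trick you must supply the missing orthogonality argument, which is nontrivial.
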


\begin{proof}

Suppose $\ket{\psi}$ can be written as
\[
\ket{\psi}
= \sum_{ \substack{
    y, b \\
    L_1, L_2, R_1, R_2
} }
\alpha_{y,b,L_1,R_1,L_2,R_2} \ket{y}_{\reg{A}} \ket{b}_{\reg{B}} \ket{L_1}_{\reg{S}_1} \ket{R_1}_{\reg{T}_1} \ket{L_2}_{\reg{S}_2} \ket{R_2}_{\reg{T}_2},
\]
where $y \in [N]$, $L_1, L_2 \in \RIdist_{\le t}$ and $R_1, R_2 \in \RDdist_{\le t}$; recall that $\reg{B}$ is the adversary's auxiliary register, and $b$ ranges from some finite set that we do not explicitly specify.

\myparagraph{Zero condition} The premise implies that
\begin{align*}
0 & = F_1^{L,\dagger} \cdot \ket{\psi} \\
& = F_1^{L,\dagger} \cdot
\sum_{ \substack{
    y, b \\
    L_1, L_2, R_1, R_2
} }
\alpha_{y,b,L_1,R_1,L_2,R_2} \ket{y}_{\reg{A}} \ket{b}_{\reg{B}} \ket{L_1}_{\reg{S}_1} \ket{R_1}_{\reg{T}_1} \ket{L_2}_{\reg{S}_2} \ket{R_2}_{\reg{T}_2} \\
& =  \frac{1}{\sqrt{N}}
\sum_{ \substack{
    b, L_1, L_2, R_1, R_2 \\ 
    (x, y) \in L_1
} }
\alpha_{y,b,L_1,R_1,L_2,R_2} \ket{x}_{\reg{A}} \ket{b}_{\reg{B}} \ket{L_1 \setminus \set{(x,y)}}_{\reg{S}_1} \ket{R_1}_{\reg{T}_1} \ket{L_2}_{\reg{S}_2} \ket{R_2}_{\reg{T}_2}.
\tag{by~\Cref{eq:FL_dagger}}
\end{align*}
By re-writing $L_1 = L'_1 \cup \set{(x,y)}$, we obtain
\begin{align*}
& \frac{1}{\sqrt{N}}
\sum_{ \substack{
    x, b \\
    L'_1, L_2, R_1, R_2 \\
    y \notin \Im(L'_1)
} }
\alpha_{y,b,L'_1\cup\set{(x,y)},R_1,L_2,R_2} \ket{x}_{\reg{A}} \ket{b}_{\reg{B}} \ket{L'_1}_{\reg{S}_1} \ket{R_1}_{\reg{T}_1} \ket{L_2}_{\reg{S}_2} \ket{R_2}_{\reg{T}_2} \\
& = \frac{1}{\sqrt{N}}
\sum_{ \substack{
    x, b \\
    L'_1, L_2, R_1, R_2
} }
\left( \sum_{ \substack{
y \notin \Im(L'_1)
} }
\alpha_{y, b, L'_1 \cup \set{(x,y)},R_1,L_2,R_2} \right) \ket{x}_{\reg{A}} \ket{b}_{\reg{B}} \ket{L'_1}_{\reg{S}_1} \ket{R_1}_{\reg{T}_1} \ket{L_2}_{\reg{S}_2} \ket{R_2}_{\reg{T}_2},
\end{align*}
where $x \in [N]$ and $L'_1 \in \RIdist_{\le t-1}$. Therefore, for any fixed $x \in [N]$, $b$, $L'_1  \in \Rinj_{\leq t-1}, L_2 \in \Rinj_{\leq t}$, and $R_1, R_2 \in \RDdist_{\leq t}$, it holds that
\begin{equation} \label{eq:zero_condition}
\sum_{y \notin \Im(L'_1)} \alpha_{y, b, L'_1 \cup \set{(x,y)}, R_1, L_2, R_2} = 0.
\end{equation}

\myparagraph{Computing $F^{L,\dagger} X^{k_3} \Ospru \ket{\psi}$} Next, we will compute $F^{L,\dagger} X^{k_3} \Ospru \ket{\psi}$. Firstly, by~\Cref{lem:Ospru}, we obtain
\begin{align*}
\ket{\psi}
\xmapsto{X^{k_3} \Ospru}
\sum_{ \substack{
    y, b \\
    L_1, L_2, R_1, R_2 \\ 
    (\bfk, \bfz) \in \Sspru\left(\substack{L_1,L_2 \\ R_1,R_2}\right)
} }
\frac{\alpha_{y,b,L_1,R_1,L_2,R_2}}{\sqrt{N^{|L_2|+|R_2|+3}}}
\ket{y \oplus k_3}_{\reg{A}} \ket{b}_{\reg{B}} 
\ket{L^{(k_1,k_3)}_1 \cup L^{(k_2,\vec{z}_L)}_2}_{\reg{S}} 
\ket{R^{(k_1,k_3)}_1 \cup R^{(k_2,\vec{z}_R)}_2}_{\reg{T}}
\ket{\bfk}_{\reg{K}}.
\end{align*}
Before we move on to apply $F^{L,\dagger}$, recall~\Cref{eq:FL_dagger}. For the expression to be nonzero, it is necessary that $y \oplus k_3 \in L^{(k_1,k_3)}_1 \cup L^{(k_2,\vec{z}_L)}_2$. By~\Cref{lem:good_tuples_satisfy_conditions}, $L^{(k_1,k_3)}_1$ and $L^{(k_2,\vec{z}_L)}_2$ are disjoint. Therefore, $(x, y \oplus k_3)$ must belong to exactly one of the following: (i) $(x, y \oplus k_3) \in L^{(k_1,k_3)}_1$, (ii) $(x, y \oplus k_3) \in L^{(k_2,\vec{z}_L)}_2$. Define two projectors
\begin{align*}
    &\Pi_1 \coloneqq 
    \sum_{  \substack{ 
        (L,\bfk) \colon G^{\cI}_{L,k_2} \text{ is decomposable} \\ 
        y \in \Im(V_{\isolate}(G^{\ell}_{L,k_2}))
    } } 
    \ketbra{y}{y}_{\reg{A}} \otimes \ketbra{L}{L}_{\reg{S}} \otimes \proj{\bfk}_{\reg{K}},\\
    &\Pi_2 \coloneqq 
    \sum_{  \substack{ 
        (L,\bfk) \colon G^{\ell}_{L,k_2} \text{ is decomposable} \\ 
        y \in \Im(V_{\target}(G^{\ell}_{L,k_2}) \cup V_{\source}(G^{\ell}_{L,k_2}))
    } } 
    \ketbra{y}{y}_{\reg{A}} \otimes \ketbra{L}{L}_{\reg{S}} \otimes \proj{\bfk}_{\reg{K}}
\end{align*}
Thus, it holds that 
\begin{align*}
F^{L,\dagger} X^{k_3} \Ospru \ket{\psi} 
= F^{L,\dagger} \Pi_1 X^{k_3} \Ospru \ket{\psi} 
+ F^{L,\dagger} \Pi_2 X^{k_3} \Ospru \ket{\psi}.    
\end{align*}
By the triangle inequality, it suffices to the bound the norm of each term.

\myparagraph{Bounding $F^{L,\dagger} \Pi_1 X^{k_3} \Ospru \ket{\psi}$}
Using~\Cref{eq:FL_dagger}, we obtain
\begin{align*}
& F^{L,\dagger} \Pi_1 X^{k_3} \Ospru \ket{\psi} \\
& = \sum_{ \substack{
    b, L_1, L_2, R_1, R_2 \\
    (\bfk,\bfz) \in \Sspru\left(\substack{L_1,L_2 \\ R_1,R_2}\right) \\
    x, y \colon (x, y \oplus k_3) \in L_1^{(k_1,k_3)}
} }
\frac{\alpha_{y,b,L_1,R_1,L_2,R_2}}{\sqrt{N^{|L_2|+|R_2|+4}}}
\ket{x}_{\reg{A}} 
\ket{b}_{\reg{B}} 
\ket{L^{(k_1,k_3)}_1 \cup L^{(k_2,\vec{z}_L)}_2 \setminus \set{(x,y \oplus k_3)}}_{\reg{S}}
\ket{R^{(k_1,k_3)}_1 \cup R^{(k_2,\vec{z}_R)}_2}_{\reg{T}}
\ket{\bfk}_{\reg{K}}.
\end{align*}
Crucially, we use~\Cref{lem:conditions_robust_decodability} to conclude that $V_{\isolate}\left(G^{\ell}_{L^{(k_1,k_3)}_1 \cup L^{(k_2,\vec{z}_L)}_2,k_2}\right) = L^{(k_1,k_3)}_1$. Thus, we are summing over elements in $L^{(k_1,k_3)}_1$ in the above line. \\

\noindent By substituting $x = x' \oplus k_1$ and using that $(\hat{x} \oplus k_1, \hat{y} \oplus k_3) \in L^{(k_1,k_3)}_1$ if and only if $(\hat{x}, \hat{y}) \in L_1$, we obtain
\begin{align*}
& \sum_{ \substack{
    b, L_1, L_2, R_1, R_2 \\
    (\bfk,\bfz) \in \Sspru\left(\substack{L_1,L_2 \\ R_1,R_2}\right) \\
    \rcolor{(x', y) \in L_1}
} }
\frac{\alpha_{y, b, L_1, R_1, L_2, R_2}}{\sqrt{N^{|L_2|+|R_2|+4}}}
\ket{\rcolor{x' \oplus k_1}}_{\reg{A}} 
\ket{b}_{\reg{B}}
\ket{(\rcolor{L_1 \setminus \set{(x', y)}})^{(k_1,k_3)} \cup L^{(k_2,\vec{z}_L)}_2}_{\reg{S}}
\ket{R^{(k_1,k_3)}_1 \cup R^{(k_2,\vec{z}_R)}_2}_{\reg{T}}
\ket{\bfk}_{\reg{K}}.
\end{align*}
Substituting $L_1 = L'_1 \cup \set{(x',y)}$, we obtain
\begin{align}   \label{eq:w1_simplifed}
& \sum_{ \substack{
    b, \rcolor{L'_1}, L_2, R_1, R_2 \\
    x', \rcolor{y \notin \Im(L'_1)} \\
    (\bfk,\bfz) \in \Sspru\left(\substack{\rcolor{L'_1 \cup \set{(x',y)}},L_2 \\ R_1,R_2}\right)
} }
\frac{\alpha_{y, b, \rcolor{L'_1 \cup \set{(x',y)}}, R_1, L_2, R_2}}{\sqrt{N^{|L_2|+|R_2|+4}}}
\ket{x' \oplus k_1}_{\reg{A}} \ket{b}_{\reg{B}}
\ket{(\rcolor{L'_1})^{(k_1,k_3)} \cup L^{(k_2,\vec{z}_L)}_2}_{\reg{S}} 
\ket{R^{(k_1,k_3)}_1 \cup R^{(k_2,\vec{z}_R)}_2}_{\reg{T}}
\ket{\bfk}_{\reg{K}}.
\end{align}
We will further simplify the state. By monotonicity (\Cref{lem:good_tuples_monotonicity}), we can see that \Cref{eq:w1_simplifed} is in the domain of the partial isometry $\Osprus$. Thus, we can equivalently evaluate the norm of $\Osprus \cdot X^{k_1} \cdot F^{L,\dagger} \Pi_1 X^{k_3} \Ospru \ket{\psi}$. Thus, applying $\Osprus \cdot X^{k_1}$ to~\Cref{eq:w1_simplifed}, we obtain
\begin{align*}
& \sum_{ \substack{
    b, L'_1, L_2, R_1, R_2 \\
    x', y \notin \Im(L'_1) \\
    (\bfk,\bfz) \in \Sspru\left(\substack{L'_1 \cup \set{(x',y)},L_2 \\ R_1,R_2}\right)
} }
\frac{\alpha_{y, b, L'_1 \cup \set{(x', y)}, R_1, L_2, R_2}}{\sqrt{N^{|L_2|+|R_2|+4}}}
\ket{\rcolor{x'}}_{\reg{A}} \ket{b}_{\reg{B}}
\rcolor{\ket{L'_1}_{\reg{S}_1} \ket{R_1}_{\reg{T}_1} 
\ket{L_2}_{\reg{S}_2} \ket{R_2}_{\reg{T}_2} 
\ket{\bfz}_{\reg{Z}}}
\ket{\bfk}_{\reg{K}}.
\end{align*}
Rearranging, we obtain
\begin{align}
& \sum_{ \substack{
    b, L'_1, L_2, R_1, R_2, x', \bfk, \bfz \\
    y \notin \Im(L'_1) \colon (\bfk,\bfz) \in \Sspru\left(\substack{L'_1 \cup \set{(x',y)},L_2 \\ R_1,R_2}\right)
} }
\frac{\alpha_{y, b, L'_1 \cup \set{(x', y)}, R_1, L_2, R_2}}{\sqrt{N^{|L_2|+|R_2|+4}}}
\ket{x'}_{\reg{A}} \ket{b}_{\reg{B}}
\ket{L'_1}_{\reg{S}_1} \ket{L_2}_{\reg{S}_2} 
\ket{R_1}_{\reg{T}_1} \ket{R_2}_{\reg{T}_2} 
\ket{\bfz}_{\reg{Z}}
\ket{\bfk}_{\reg{K}},
\label{eq:OsXk1w1}
\end{align}
where $x' \in [N], k_1, k_2, k_3 \in [N], \vec{z}_L \in [N]^{|L_2|}_\dist, \vec{z}_R \in [N]^{|R_2|}_\dist$ and we range over $y$ at the end. \\

\noindent The squared norm of~\Cref{eq:OsXk1w1} is
\begin{align*}
\sum_{  \substack{
    b, L'_1, L_2, R_1, R_2, x' \\
    \bfk, \bfz
}   }
\Bigl|
\sum_{y \notin \Im(L'_1) \colon (\bfk,\bfz) \in \Sspru\left(\substack{L'_1 \cup \set{(x',y)},L_2 \\ R_1,R_2}\right)}
\frac{\alpha_{y, b, L'_1 \cup \set{(x', y)}, R_1, L_2, R_2}}{\sqrt{N^{|L_2|+|R_2|+4}}}
\Bigr|^2.
\end{align*}

By~\Cref{lem:comb_F1Ldagger_y_difference}, once $(L'_1, L_2, R_1, R_2, x', \bfk, \bfz)$ is fixed, there is either zero or at least $N - g(t)$ values of $y$ that satisfy the condition, where $g(t) = O(t)$ is some function guaranteed by~\Cref{lem:comb_F1Ldagger_y_difference}. Let $\mathsf{BAD}$ denote the set of tuples for which the latter case holds. We obtain
\begin{align*}
\sum_{b, \rcolor{(L'_1, L_2, R_1, R_2, x', \bfk, \bfz) \in \mathsf{BAD}}}
\Bigl|
\sum_{y \notin \Im(L'_1) \colon (\bfk,\bfz) \in \Sspru\left(\substack{L'_1 \cup \set{(x',y)},L_2 \\ R_1,R_2}\right)}
\frac{\alpha_{y, b, L'_1 \cup \set{(x', y)}, R_1, L_2, R_2}}{\sqrt{N^{|L_2|+|R_2|+4}}}
\Bigr|^2.
\end{align*}
Now, we make crucial use of the condition implied by the premise (\Cref{eq:zero_condition}) to obtain
\begin{align}   \label{eq:w1_norm}
& \sum_{b, (L'_1, L_2, R_1, R_2, x', \bfk, \bfz) \in \mathsf{BAD}}
\Bigl|
\rcolor{-} 
\sum_{y \notin \Im(L'_1) \colon (\bfk,\bfz) \rcolor{\notin} \Sspru\left(\substack{L'_1 \cup \set{(x',y)},L_2 \\ R_1,R_2}\right)}
\frac{\alpha_{y, b, L'_1 \cup \set{(x', y)}, R_1, L_2, R_2}}{\sqrt{N^{|L_2|+|R_2|+4}}}
\Bigr|^2 
\notag \\
= & \sum_{b, (L'_1, L_2, R_1, R_2, x', \bfk, \bfz) \in \mathsf{BAD}}
\frac{1}{N^{|L_2|+|R_2|+4}} \cdot
\Bigl|
\sum_{y \notin \Im(L'_1) \colon (\bfk,\bfz) \notin \Sspru\left(\substack{L'_1 \cup \set{(x',y)},L_2 \\ R_1,R_2}\right)}
\alpha_{y, b, L'_1 \cup \set{(x', y)}, R_1, L_2, R_2}
\Bigr|^2.
\end{align}
Using the Cauchy-Schwarz inequality and the definition of $\mathsf{BAD}$ to bound the number of $y$ in the sum, we can bound~\Cref{eq:w1_norm} by 
\begin{align*}
\sum_{ \substack{
    b, (L'_1, L_2, R_1, R_2, x', \bfk, \bfz) \in \mathsf{BAD} \\
    y \notin \Im(L'_1) \colon (\bfk,\bfz) \notin \Sspru\left(\substack{L'_1 \cup \set{(x',y)},L_2 \\ R_1,R_2}\right)
}   }
\frac{\rcolor{g(t)}}{N^{|L_2|+|R_2|+4}} \cdot
\Bigl|\alpha_{y, b, L'_1 \cup \set{(x', y)}, R_1, L_2, R_2}\Bigr|^2,
\end{align*}
Since we are summing over non-negative terms, by relaxing the constraints, we can bound it by
\begin{align*}
\sum_{  \substack{
    b, \rcolor{L'_1, L_2, R_1, R_2, x', \bfk,\bfz} \\
    \rcolor{y \notin \Im(L'_1)}
}   }
\frac{g(t)}{N^{|L_2|+|R_2|+4}} \cdot
\Bigl|\alpha_{y, b, L'_1 \cup \set{(x', y)}, R_1, L_2, R_2}\Bigr|^2.
\end{align*}
By summing over $(\bfk,\bfz)$, and noting that there are at most $N^{|L_2|+|R_2|+3}$ such tuples, we can bounded it by
\begin{align*}
\frac{g(t)}{N} \cdot 
\sum_{b, L'_1, L_2, R_1, R_2, x', y \notin \Im(L'_1)}
\Bigl|\alpha_{y, b, L'_1 \cup \set{(x', y)}, R_1, L_2, R_2}\Bigr|^2.
\end{align*}
By substituting $L = L'_1 \cup \set{(x',y)}$, we obtain
\begin{align*}
\frac{g(t)}{N} \cdot 
\sum_{b, \rcolor{L_1}, L_2, R_1, R_2, \rcolor{y \in \Im(L_1)}}
\Bigl|\alpha_{y, b, \rcolor{L_1}, R_1, L_2, R_2}\Bigr|^2 
= O(t/N)
\end{align*}
by the normalization condition of $\ket{\psi}$.

\myparagraph{Bounding $F^{L,\dagger} \Pi_2 X^{k_3} \Ospru \ket{\psi}$}

Recall~\Cref{lem:punc_S}. For any $(y,L_1,R_1,L_2,R_2)$, define the set
\begin{align*}
    \cP_{y,L_1,R_1,L_2,R_2} := \set{(\bfk,\bfz) \colon y \oplus k_3 \in \Im(L_2^{(k_2,\vec{z}_L)})}.
\end{align*}
By sampling $k_3$ at the end and the union bound, it is clear that $\cP_{y,L_1,R_1,L_2,R_2}$ occupies at most a $2t/N$ fraction of its universe. Thus, we obtain
\begin{align*}
    \|F^{L,\dagger} \Pi_2 X^{k_3} \Ospru \ket{\psi}\|_2
    & \le \|F^{L,\dagger} \Pi_2 X^{k_3} \Ospru^{\bullet} \ket{\psi}\|_2
    + \|F^{L,\dagger} \Pi_2 X^{k_3} (\Ospru - \Ospru^{\bullet}) \ket{\psi}\|_2\\
    & \le \|\Pi_2 X^{k_3} \Ospru^{\bullet} \ket{\psi}\|_2
    + \|(\Ospru^{\bullet} - \Ospru)\Pi_{\le t}\|_{\op} \\
    & \le O(\sqrt{t/N}).
\end{align*}
The first term is zero by the definitions of $\set{\cP_\tau}_\tau$ and $\Ospru^{\bullet}$. The second term is bounded by~\Cref{lem:punc_S}. This completes the proof of~\Cref{lem:spru:fl1rev:zero}.
\end{proof}

\begin{lemma}[Closeness of $F_1^{L,\dagger}$ and $F_1^{R,\dagger}$] \label{lem:fl1dagger_fr1dagger}
For any integer $t \geq 0$,
\begin{align*}
& \|(X^{k_1} F^{L,\dagger} X^{k_3} \Ospru - \Ospru F_1^{L,\dagger}) \Pi_{\leq t}\|_{\op} 
= O ( \sqrt{t/N} ) \\
& \|(X^{k_1} F^{R,\dagger} X^{k_3} \Ospru - \Ospru F_1^{R,\dagger}) \Pi_{\leq t}\|_{\op} 
= O ( \sqrt{t/N} ).
\end{align*}
\end{lemma}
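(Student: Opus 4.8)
The plan is to split $\ket\psi$ relative to the image of $F_1^L$ and handle the two pieces with tools already in hand: \Cref{lem:spru:fl1rev:zero} for the part orthogonal to $\Image F_1^L$, and \Cref{lem:fl1_fr1} together with \Cref{lem:domain_FL_FR} for the part inside it. Since $X^{k_1}F^{L,\dagger}X^{k_3}\Ospru$ and $\Ospru F_1^{L,\dagger}$ are contractions, we may assume $t<N$. Fix a normalized $\ket\psi$ with $\Pi_{\le t}\ket\psi=\ket\psi$; it suffices to bound $\norm{(X^{k_1}F^{L,\dagger}X^{k_3}\Ospru-\Ospru F_1^{L,\dagger})\ket\psi}_2$. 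Let $P$ be the orthogonal projection onto $\ker(F_1^{L,\dagger})=(\overline{\Image F_1^L})^{\perp}$, and write $\ket\psi=\ket{\psi_\perp}+\ket{\psi_\parallel}$ with $\ket{\psi_\perp}=P\ket\psi$ and $\ket{\psi_\parallel}=(\id-P)\ket\psi$. The key observation is that $F_1^{L,\dagger}$ acts only on $\reg A\reg S_1$ and is graded by the size $|L_1|$ of the first database (it lowers it by one), so $\ker(F_1^{L,\dagger})$ is a graded subspace; hence $P$ commutes with $\Pi_{\le t}$, and both $\ket{\psi_\perp}$ and $\ket{\psi_\parallel}$ stay in the image of $\Pi_{\le t}$.

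For $\ket{\psi_\perp}$ one has $\Ospru F_1^{L,\dagger}\ket{\psi_\perp}=0$, while (if $\ket{\psi_\perp}\ne0$) the normalized vector $\ket{\psi_\perp}/\norm{\ket{\psi_\perp}}_2$ satisfies the hypotheses of \Cref{lem:spru:fl1rev:zero}, so $\norm{F^{L,\dagger}X^{k_3}\Ospru\ket{\psi_\perp}}_2=O(\sqrt{t/N})$, and the unitary $X^{k_1}$ does not change this norm. For $\ket{\psi_\parallel}$, restricting to bounded database size and the finite ancilla makes the ambient space finite-dimensional, so $\Image F_1^L$ is closed and we may write $\ket{\psi_\parallel}=F_1^L\ket\chi$ with $\ket\chi\perp\ker F_1^L$; grading shows $\ket\chi$ has $|L_1|\le t-1$, so $\Pi_{\le t}\ket\chi=\ket\chi$, and the bound $\norm{(F_1^{L,\dagger}F_1^L-\id)\Pi_{\le t}}\le t/N$ from \Cref{lem:domain_FL_FR} gives $\norm{\ket\chi}_2\le 1+O(t/N)$. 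Now chain approximations: $\Ospru F_1^{L,\dagger}\ket{\psi_\parallel}=\Ospru F_1^{L,\dagger}F_1^L\ket\chi=\Ospru\ket\chi$ up to error $O(t/N)$ by \Cref{lem:domain_FL_FR}; by \Cref{lem:fl1_fr1}, $\Ospru F_1^L\ket\chi=X^{k_3}F^LX^{k_1}\Ospru\ket\chi$ up to error $O(\sqrt{t/N})$, hence $X^{k_1}F^{L,\dagger}X^{k_3}\Ospru\ket{\psi_\parallel}=X^{k_1}F^{L,\dagger}F^LX^{k_1}\Ospru\ket\chi$ up to error $O(\sqrt{t/N})$ (apply the contraction $X^{k_1}F^{L,\dagger}X^{k_3}$ and use $X^{k_3}X^{k_3}=\id$); finally $X^{k_1}\Ospru\ket\chi$ is supported on merged databases of total size $O(t)$ (each element of $L_2,R_2$ contributes two pairs to the merged relation), so \Cref{lem:domain_FL_FR} gives $X^{k_1}F^{L,\dagger}F^LX^{k_1}\Ospru\ket\chi=\Ospru\ket\chi$ up to error $O(t/N)$. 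Combining, $\norm{(X^{k_1}F^{L,\dagger}X^{k_3}\Ospru-\Ospru F_1^{L,\dagger})\ket{\psi_\parallel}}_2=O(\sqrt{t/N})$; adding the two contributions by the triangle inequality and taking the supremum over $\ket\psi$ yields the first inequality.

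The second inequality follows by the verbatim argument with $L$ and $R$ interchanged — equivalently $\Dom\leftrightarrow\Image$, $\Rinj\leftrightarrow\RDdist$, and correspondingly the roles of $X^{k_1}$ and $X^{k_3}$ — using the $F^R$-parts of \Cref{lem:fl1_fr1} and \Cref{lem:domain_FL_FR} and the $R$-analogue of \Cref{lem:spru:fl1rev:zero}. The main obstacle is the bookkeeping in the parallel case: one must check that the intermediate state $\Ospru\ket\chi$ has only polynomially-sized databases so that each application of \Cref{lem:domain_FL_FR} costs only $O(t/N)$, and that the $X^{k_1},X^{k_3}$ decorations flanking $F^{L,\dagger}$ (\resp $F^{R,\dagger}$) cancel exactly against those flanking $F^L$ (\resp $F^R$) in \Cref{lem:fl1_fr1} — which is precisely why the decoration in the statement mirrors the adjoint pattern of that lemma.
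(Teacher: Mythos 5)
Your proof follows essentially the same route as the paper's: decompose $\ket\psi$ relative to $\Im(F_1^L)$, dispatch the orthogonal piece via \Cref{lem:spru:fl1rev:zero}, and for the in-image piece write $\ket{\psi_\parallel}=F_1^L\ket\chi$, then swap $\Ospru F_1^L\leftrightarrow X^{k_3}F^LX^{k_1}\Ospru$ via \Cref{lem:fl1_fr1} and replace $F^{L,\dagger}F^L$ and $F_1^{L,\dagger}F_1^L$ by $\id$ via \Cref{lem:domain_FL_FR}. The only difference is that you make explicit the grading argument showing $\ket\chi$ lies in the support of $\Pi_{\le t}$ with norm $1+O(t/N)$, and that the merged database out of $\Ospru$ has size $O(t)$ — details the paper's proof leaves implicit when it passes from vector norms to operator norms against $\Pi_{\le t}$ (and $\Pi_{\le 3t}$). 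One small remark: the ``verbatim with $L\leftrightarrow R$ and $k_1\leftrightarrow k_3$'' argument you give for the second inequality actually establishes $\|(X^{k_3}F^{R,\dagger}X^{k_1}\Ospru-\Ospru F_1^{R,\dagger})\Pi_{\le t}\|_{\op}=O(\sqrt{t/N})$, i.e.\ the adjoint of the second bound in \Cref{lem:fl1_fr1}, which is also what the proof of \Cref{lem:1st_oracle} uses; the decoration $X^{k_1}F^{R,\dagger}X^{k_3}$ in the lemma statement appears to be a transcription slip.
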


\begin{proof}
Let $\Pi^{\Im(F_1^L)}$ denote the projection onto the image of $F_1^L$. For an arbitrary normalized state $\ket{\psi}$ on registers $\reg{A}, \reg{S}_1, \reg{T}_1, \reg{S}_2, \reg{T}_2$ in the subspace of $\Pi_{\leq t}$, we can decompose it as
\[
\ket{\psi}
= \Pi^{\Im(F_1^L)} \ket{\psi} + ( \id - \Pi^{\Im(F_1^L)} ) \ket{\psi}.
\]
We will show the following two bounds
\begin{align}
& \|( X^{k_1} F^{L,\dagger} X^{k_3} \Ospru - \Ospru F_1^{L,\dagger} ) \Pi^{\Im(F_1^L)} \ket{\psi}\|_2 
\leq O ( \sqrt{t/N} ) \label{ineq:Im} \\
& \|( X^{k_1} F^{\dagger,L} X^{k_3} \Ospru - \Ospru F_1^{L,\dagger} ) ( \id - \Pi^{\Im(F_1^L)} ) \ket{\psi}\|_2 
\leq O ( \sqrt{t/N} ) \label{ineq:I-Im}
\end{align}
which then the first bound follows by the triangle inequality. Notice that $F^{L,\dagger} ( \id - \Pi^{\Im(F_1^L)} ) = 0$, so \Cref{ineq:I-Im} follows from~\Cref{lem:spru:fl1rev:zero}. Hence, it remains to prove~\Cref{ineq:Im}. Since $\Pi^{\Im(F_1^L)} \ket{\psi}$ is in the image of $F_1^L$, there exists some state $\ket{\phi}$ such that $\Pi^{\Im(F_1^L)} \ket{\psi} = F_1^L \ket{\phi}$. Now, we bound~\Cref{ineq:Im} by the triangle inequality as follows:
\begin{align}
& \|(X^{k_1} F^{L,\dagger} X^{k_3} \Ospru - \Ospru F_1^{L,\dagger}) \Pi^{\Im(F_1^L)} \ket{\psi}\|_2 
\notag \\
= & \|( X^{k_1} F^{L,\dagger} X^{k_3} \Ospru 
- \Ospru F_1^{L,\dagger} ) F_1^L \ket{\phi}\|_2 
\notag \\
\leq & \|X^{k_1} F^{L,\dagger} X^{k_3} \Ospru F_1^L \ket{\phi}
\rcolor{ - X^{k_1} F^{L,\dagger} F^L X^{k_1}  \Ospru  \ket{\phi} }\|_2 
\label{eq:term_1} \\
& \hspace{.2\textwidth} + \|\rcolor{ X^{k_1} F^{L,\dagger} F^L X^{k_1}  \Ospru  \ket{\phi} } 
- \Ospru F_1^{L,\dagger} F_1^L \ket{\phi} \|_2.
\label{eq:term_2}
\end{align}
To bound~\Cref{eq:term_1}, we have
\begin{align*}
& \|X^{k_1} F^{L,\dagger} X^{k_3} \Ospru F_1^L \ket{\phi}
- X^{k_1} F^{L,\dagger} F^L X^{k_1} \Ospru \ket{\phi} \|_2 \\
= & \|X^{k_1} F^{L,\dagger} X^{k_3} \Ospru F_1^L \ket{\phi}
- X^{k_1} F^{L,\dagger} \rcolor{ X^{k_3} X^{k_3} } F^L X^{k_1} \Ospru \ket{\phi} \|_2 
\tag{since $X^{k_3} X^{k_3} = \id$} \\
= & \|X^{k_1} F^{L,\dagger} X^{k_3} \cdot ( \Ospru F_1^L - X^{k_3} F^L X^{k_1} \Ospru ) \ket{\phi} \|_2 \\
\le & \|X^{k_1} F^{L,\dagger} X^{k_3} \cdot ( \Ospru F_1^L - X^{k_3} F^L X^{k_1} \Ospru ) \Pi_{\leq t} \|_{\op} \\
\leq & \|F^{L,\dagger}\|_{\op} \cdot \|( \Ospru F_1^L - X^{k_3} F^L X^{k_1} \Ospru ) \Pi_{\leq t}\|_{\op} 
\tag{since operator norm is submultiplicative} \\
\leq & O ( \sqrt{t/N} ).
\tag{by the fact that $F^L$ is a contraction and~\Cref{lem:fl1_fr1}}
\end{align*}

\noindent To bound~\Cref{eq:term_2}, we first expand the terms. By~\Cref{lem:domain_FL_FR}, we can replace $F^{L,\dagger} F^L$ with $\id$ as follows:
\begin{align*}
& \|(X^{k_1} F^{L,\dagger} F^L X^{k_1}  \Ospru - \Ospru F_1^{L,\dagger} F_1^L) \Pi_{\le t}\|_{\op}\\
\le & \|(X^{k_1} F^{L,\dagger} F^L X^{k_1} \Ospru - \rcolor{X^{k_1}X^{k_1} \Ospru}) \Pi_{\le t}\|_{\op} 
+ \|(\rcolor{X^{k_1}X^{k_1} \Ospru} - \Ospru F_1^{L,\dagger} F_1^L) \Pi_{\le t}\|_{\op}\\
\le & \|(F^{L,\dagger} F^L - \id) X^{k_1} \Ospru \Pi_{\le t}\|_{\op}
+ \|(\id - F_1^{L,\dagger} F_1^L) \Pi_{\le t}\|_{\op}\\
\le & \|(F^{L,\dagger} F^L - \id)\Pi_{\le 3t}\|_{\op}
+ \|(\id - F_1^{L,\dagger} F_1^L) \Pi_{\le t}\|_{\op}\\
\le & O(t/N).
\end{align*}
This proves the first bound. The second bound follows symmetrically.
\end{proof}

\subsection{Putting Things Together}

Finally, we use the above lemmas to prove~\Cref{lem:1st_oracle}. We restate~\Cref{lem:1st_oracle} for convenience.

\begin{lemma}[\Cref{lem:1st_oracle}, restated]
For any integer $t \geq 0$,
\begin{itemize}
    \item {\bf Forward query:} $\|(X^{k_3} F X^{k_1} \Ospru - \Ospru F_1) \Pi_{\leq t}\|_{\op} \leq O ( \sqrt{t/N} )$,
    \item {\bf Inverse query:} $\|(X^{k_1} F^{\dagger} X^{k_3} \Ospru - \Ospru F_1^{\dagger}) \Pi_{\leq t}\|_{\op} \leq O ( \sqrt{t/N} )$.
\end{itemize}
\end{lemma}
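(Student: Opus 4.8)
The plan is to prove both bounds of \Cref{lem:1st_oracle} by unfolding the defining identity $F = F^L(\id - F^R F^{R,\dagger}) + (\id - F^L F^{L,\dagger})F^{R,\dagger}$ from \Cref{def:FL_FR_F} (and the corresponding $F^\dagger = (\id - F^R F^{R,\dagger})F^{L,\dagger} + F^R(\id - F^L F^{L,\dagger})$), together with the analogous expansions of $F_1$ and $F_1^\dagger$, and then transporting the resulting pieces across $\Ospru$ using \Cref{lem:fl1_fr1,lem:fl1dagger_fr1dagger}, \Cref{lem:domain_FL_FR} (to collapse $F^{L,\dagger}F^L\approx\id$ and $F^{R,\dagger}F^R\approx\id$), and the image bound \Cref{lem:spru:fl1rev:zero} together with its $R$-mirror (proved symmetrically, as in the proof of \Cref{lem:fl1dagger_fr1dagger}). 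The key structural point is that one cannot compare the expansions of $X^{k_3}FX^{k_1}$ and $F_1$ term by term: on the $R$-side the conjugated operators supplied by \Cref{lem:fl1_fr1,lem:fl1dagger_fr1dagger} are $X^{k_1}F^RX^{k_3}$ and $X^{k_1}F^{R,\dagger}X^{k_3}$, and these do not compose into the near-projector $X^{k_1}F^RF^{R,\dagger}X^{k_1}$ that appears when $F$ is unfolded (reconstructing it would instead require $X^{k_3}F^{R,\dagger}X^{k_1}$, which is not what \Cref{lem:fl1dagger_fr1dagger} provides), whereas on the $L$-side $X^{k_3}F^LX^{k_1}$ and $X^{k_1}F^{L,\dagger}X^{k_3}$ do compose into $X^{k_3}F^LF^{L,\dagger}X^{k_3}$. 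So the argument will instead split the input state according to the image of $F_1^R$ (forward query) or $F_1^L$ (inverse query), which produces only the ``harmless'' combinations $F^{R,\dagger}F^R$, $F^{L,\dagger}F^L$ and the near-projectors $F^LF^{L,\dagger}$, $F^RF^{R,\dagger}$.

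First I would establish two auxiliary ``near-projector transfer'' bounds, $\|(X^{k_3}F^LF^{L,\dagger}X^{k_3}\,\Ospru - \Ospru\, F_1^LF_1^{L,\dagger})\Pi_{\leq t}\|_{\op} = O(\sqrt{t/N})$ and $\|(X^{k_1}F^RF^{R,\dagger}X^{k_1}\,\Ospru - \Ospru\, F_1^RF_1^{R,\dagger})\Pi_{\leq t}\|_{\op} = O(\sqrt{t/N})$. To prove the first, decompose a unit vector in the image of $\Pi_{\leq t}$ as $\ket{\xi_0}+\ket{\xi_1}$ with $\ket{\xi_0}$ in the image of $F_1^L$ and $\ket{\xi_1}$ orthogonal to it. On $\ket{\xi_1}$ we have $F_1^{L,\dagger}\ket{\xi_1}=0$, so the left operator is $X^{k_3}F^LF^{L,\dagger}X^{k_3}\Ospru\ket{\xi_1}$, whose norm is $O(\sqrt{t/N})$ since $\|F^{L,\dagger}X^{k_3}\Ospru\ket{\xi_1}\|_2=O(\sqrt{t/N})$ by \Cref{lem:spru:fl1rev:zero} and $F^L$ is a contraction. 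On $\ket{\xi_0}$, writing $\ket{\xi_0}=F_1^L\ket{\chi}$ with $\ket{\chi}=F_1^{L,\dagger}\ket{\xi_0}$, \Cref{lem:domain_FL_FR} gives $F_1^LF_1^{L,\dagger}F_1^L\ket{\chi}\approx F_1^L\ket{\chi}$, while on the other side \Cref{lem:fl1_fr1} replaces $\Ospru F_1^L$ by $X^{k_3}F^LX^{k_1}\Ospru$, after which $F^{L,\dagger}F^L\approx\id$ (\Cref{lem:domain_FL_FR}) and one more application of \Cref{lem:fl1_fr1} leaves both sides equal up to $O(\sqrt{t/N})$. The second bound follows symmetrically from the $R$-mirror of \Cref{lem:spru:fl1rev:zero} and the $F^R$ part of \Cref{lem:fl1_fr1}.

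With these in hand, the forward bound follows by splitting the input $\ket{\psi}$ (supported on $\Pi_{\leq t}$) as $\ket{\psi_0}+\ket{\psi_1}$, $\ket{\psi_0}$ in the image of $F_1^R$ and $\ket{\psi_1}$ orthogonal. On $\ket{\psi_1}$: $F_1^{R,\dagger}\ket{\psi_1}=0$, so $F_1\ket{\psi_1}=F_1^L\ket{\psi_1}$ exactly; on the $\hyb_4$ side the $R$-mirror of \Cref{lem:spru:fl1rev:zero} gives $\|F^{R,\dagger}X^{k_1}\Ospru\ket{\psi_1}\|_2=O(\sqrt{t/N})$, killing both $(\id - F^LF^{L,\dagger})F^{R,\dagger}$ and the correction $F^LF^RF^{R,\dagger}$, so what remains is $X^{k_3}F^LX^{k_1}\Ospru\ket{\psi_1}$ versus $\Ospru F_1^L\ket{\psi_1}$, matched by \Cref{lem:fl1_fr1}. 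On $\ket{\psi_0}=F_1^R\ket{\chi}$: \Cref{lem:domain_FL_FR} gives $(\id - F_1^RF_1^{R,\dagger})F_1^R\approx 0$ and $F_1^{R,\dagger}F_1^R\approx\id$, hence $F_1\ket{\psi_0}\approx(\id - F_1^LF_1^{L,\dagger})\ket{\chi}$; on the $\hyb_4$ side \Cref{lem:fl1_fr1} replaces $\Ospru F_1^R$ by $X^{k_1}F^RX^{k_3}\Ospru$, and then $FF^R = F^L(\id - F^RF^{R,\dagger})F^R + (\id - F^LF^{L,\dagger})F^{R,\dagger}F^R \approx \id - F^LF^{L,\dagger}$ by \Cref{lem:domain_FL_FR}, so the operator becomes $\Ospru\ket{\chi} - X^{k_3}F^LF^{L,\dagger}X^{k_3}\Ospru\ket{\chi}$, which the first near-projector transfer bound matches against $\Ospru(\id - F_1^LF_1^{L,\dagger})\ket{\chi}$. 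Combining the two contributions by the triangle inequality and \Cref{lem:op_norm} gives the forward bound. The inverse bound is entirely symmetric: split $\ket{\psi}$ according to the image of $F_1^L$, use $F^\dagger = (\id - F^RF^{R,\dagger})F^{L,\dagger} + F^R(\id - F^LF^{L,\dagger})$, apply \Cref{lem:spru:fl1rev:zero} directly on the orthogonal part, and on the image part use \Cref{lem:fl1_fr1} (the $F^R$ part), \Cref{lem:domain_FL_FR}, and the second near-projector transfer bound. Throughout, whenever $F^L$, $F^R$ or their adjoints change the length of the purification registers I would silently replace $\Pi_{\leq t}$ by $\Pi_{\leq t+O(1)}$, which is harmless since $\sqrt{(t+O(1))/N}=O(\sqrt{t/N})$.

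I expect the main obstacle to be precisely the conjugation bookkeeping through the near-projectors $F^LF^{L,\dagger}$ and $F^RF^{R,\dagger}$: because $F^RF^{R,\dagger}$ is not a product of the conjugated operators handed to us by \Cref{lem:fl1_fr1,lem:fl1dagger_fr1dagger}, the near-projector transfer bounds (and hence the $\Im(F_1^R)/\Im(F_1^L)$ case split) are genuinely necessary rather than a convenience, and the delicate part of the write-up will be checking that the case split cleanly isolates only $F^{R,\dagger}F^R$, $F^{L,\dagger}F^L$, $F^LF^{L,\dagger}$, $F^RF^{R,\dagger}$ and never leaves a leftover of the form ``$F^R$ applied after $F^L$'' — which would require \Cref{lem:FLdagger_U_FR:zero} to dispose of.
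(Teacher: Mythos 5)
Your proposal is correct and would yield a valid proof, but the ``key structural point'' driving your detour rests on a typo in the paper. The second line of \Cref{lem:fl1dagger_fr1dagger} should read $\|(X^{k_3} F^{R,\dagger} X^{k_1} \Ospru - \Ospru F_1^{R,\dagger}) \Pi_{\leq t}\|_{\op} = O(\sqrt{t/N})$: this is what the advertised $L\leftrightarrow R$ symmetry (which forces $k_1 \leftrightarrow k_3$) of that lemma's proof actually produces, it is the adjoint-compatible partner of $X^{k_1}F^RX^{k_3}$ from \Cref{lem:fl1_fr1}, and it is the form the paper's proof of \Cref{lem:1st_oracle} actually invokes. (One can also check it directly: in the image of $\Ospru$, the domain of $R_1^{(k_1,k_3)}$ is $\Dom(R_1)\oplus k_1$, so $F^{R,\dagger}$ must be pre-conjugated by $X^{k_1}$, not $X^{k_3}$.) With the correction, $(X^{k_1}F^RX^{k_3})(X^{k_3}F^{R,\dagger}X^{k_1}) = X^{k_1}F^RF^{R,\dagger}X^{k_1}$, so the $R$-side near-projector composes exactly as the $L$-side does.

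The paper therefore does go term by term: it expands $F_1$ and $X^{k_3}FX^{k_1}$ into their four summands, inserts $\id = X^{k_i}X^{k_i}$ to rewrite each summand of $X^{k_3}FX^{k_1}$ as a product of the conjugated primitives of \Cref{lem:fl1_fr1,lem:fl1dagger_fr1dagger}, and telescopes each pair. No image/kernel case split and no near-projector transfer lemma are needed for the first oracle. Your route --- splitting the input against $\Im(F_1^R)$ (resp.\ $\Im(F_1^L)$), proving an auxiliary transfer bound $\|(X^{k_3}F^LF^{L,\dagger}X^{k_3}\Ospru - \Ospru F_1^LF_1^{L,\dagger})\Pi_{\leq t}\|_{\op}=O(\sqrt{t/N})$ and its $R$-mirror, and invoking the $R$-mirror of \Cref{lem:spru:fl1rev:zero} --- is valid and you have correctly checked that it never leaves an $F^{R,\dagger} U F^L$ cross-term that would require \Cref{lem:FLdagger_U_FR:zero}. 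But it is essentially the machinery the paper reserves for the \emph{second} oracle (\Cref{lem:spru:2nd_oracle}, \Cref{cor:spru:fl2proj}), where the composite $FX^{k_2}F$ genuinely cannot be matched against $F_2$ by telescoping alone; applied to the first oracle it re-derives inside your proof the same image/kernel dichotomy that \Cref{lem:fl1dagger_fr1dagger} already handles internally. One small point you should make explicit if you write this up: when you set $\ket{\xi_0}=F_1^L\ket{\chi}$ or $\ket{\psi_0}=F_1^R\ket{\chi}$, $F_1^L,F_1^R$ are only near-isometries on $\Pi_{\le t}$ (\Cref{lem:domain_FL_FR}), so $\|\ket{\chi}\|=O(1)$ needs a one-line justification (the paper leaves this implicit in its proofs of \Cref{lem:fl1dagger_fr1dagger} and \Cref{cor:spru:fl2proj} as well).
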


\begin{proof}[Proof of~\Cref{lem:1st_oracle}]
\ \\
\myparagraph{Forward query}
Recall the definition of $F_1$:
\begin{align*}
F_1 
& = F_1^L \cdot (\id - F_1^R \cdot F_1^{R,\dagger}) + (\id - F_1^L \cdot F_1^{L,\dagger}) \cdot F_1^{R,\dagger} \\
& = F_1^L - F_1^L \cdot F_1^R \cdot F_1^{R,\dagger} + F_1^{R,\dagger} - F_1^L \cdot F_1^{L,\dagger} \cdot F_1^{R,\dagger}.
\end{align*}
\noindent Similarly, we expand $X^{k_3} F X^{k_1}$ in the following:
\begin{align*}
X^{k_3} F X^{k_1} 
= & X^{k_3} F^L \cdot ( \id - F^R F^{R,\dagger} ) X^{k_1} + X^{k_3} ( \id - F^L F^{L,\dagger} ) \cdot F^{R,\dagger} X^{k_1} \\
= & X^{k_3} F^L X^{k_1} - X^{k_3} F^L F^R F^{R,\dagger} X^{k_1} + X^{k_3} F^{R,\dagger} X^{k_1} - X^{k_3} F^L F^{L,\dagger} F^{R,\dagger} X^{k_1} \\
= & X^{k_3} F^L X^{k_1} - ( X^{k_3} F^L X^{k_1} ) ( X^{k_1} F^R X^{k_3} )( X^{k_3} F^{R,\dagger} X^{k_1} ) \\
& + X^{k_3} F^{R,\dagger} X^{k_1} - ( X^{k_3} F_L X^{k_1} )( X^{k_1} F^{L,\dagger} X^{k_3} )( X^{k_3} F^{R,\dagger} X^{k_1} ).
\end{align*}
\noindent By the triangle inequality, it suffices to bound each of the following terms:
\begin{align}
& \|(X^{k_3} F X^{k_1} \Ospru - \Ospru F_1) \Pi_{\leq t}\|_{\op}
\notag \\
\leq & \|(X^{k_3} F^L X^{k_1} \Ospru - \Ospru F_1^L) \Pi_{\leq t}\|_{\op} 
\label{eq:1st_oracle_term1} \\
+ & \|( ( X^{k_3} F^L X^{k_1} ) ( X^{k_1} F^R X^{k_3} ) ( X^{k_3} F^{R,\dagger} X^{k_1} ) \Ospru - \Ospru F_1^L F_1^R F_1^{R,\dagger} ) \Pi_{\leq t}\|_{\op}
\label{eq:1st_oracle_term2} \\
+ & \|(X^{k_3} F^{R,\dagger} X^{k_1} \Ospru - \Ospru F_1^{R,\dagger}) \Pi_{\leq t}\|_{\op}
\label{eq:1st_oracle_term3} \\
+ & \|( ( X^{k_3} F^L X^{k_1} ) ( X^{k_1} F^{L,\dagger} X^{k_3} ) ( X^{k_3} F^{R,\dagger} X^{k_1} ) \Ospru - \Ospru F_1^L F_1^{L,\dagger} F_1^{R,\dagger} ) \Pi_{\leq t}\|_{\op}.
\label{eq:1st_oracle_term4}
\end{align}
From~\Cref{lem:fl1_fr1}, we can bound \Cref{eq:1st_oracle_term1} by $O( \sqrt{t/N} )$. From~\Cref{lem:fl1dagger_fr1dagger}, we can bound \Cref{eq:1st_oracle_term3} by $O( \sqrt{t^2/N} )$. To bound~\Cref{eq:1st_oracle_term2}, by the triangle inequality, we have
\begin{align*}
& \|( (X^{k_3} F^L X^{k_1}) (X^{k_1} F^R X^{k_3}) (X^{k_3} F^{R,\dagger} X^{k_1}) \Ospru - \Ospru F_1^L F_1^R F_1^{R,\dagger} ) \Pi_{\leq t}\|_{\op} \\
\leq & \|((X^{k_3} F^L X^{k_1}) (X^{k_1} F^R X^{k_3}) \rcolor{( X^{k_3} F^{R,\dagger} X^{k_1} ) \Ospru} - (X^{k_3} F^L X^{k_1}) (X^{k_1} F^R X^{k_3}) \rcolor{\Ospru F_1^{R,\dagger}}) \Pi_{\leq t}\|_{\op} \\
&\hspace{.1\textwidth} + \|( (X^{k_3} F^L X^{k_1}) \rcolor{(X^{k_1} F^R X^{k_3}) \Ospru} F_1^{R,\dagger} - (X^{k_3} F^L X^{k_1}) \rcolor{\Ospru F_1^R} F_1^{R,\dagger} ) \Pi_{\leq t}\|_{\op} \\
&\hspace{.2\textwidth} + \|( \rcolor{( X^{k_3} F^L X^{k_1} ) \Ospru} F_1^R F_1^{R,\dagger}
- \rcolor{\Ospru F_1^L} F_1^R F_1^{R,\dagger} ) \Pi_{\leq t}\|_{\op} \\
\leq & O ( \sqrt{t/N} )
\end{align*}
where we use (i) that operator norm is submultiplicative; (ii) that $F^L, F^R$ are contractions; and (iii) \Cref{lem:fl1dagger_fr1dagger}, \Cref{lem:fl1_fr1}, and \Cref{lem:fl1_fr1}, in order, for the colored parts. Similarly, \Cref{eq:1st_oracle_term4} is at most $O( t/\sqrt{N} )$ by~\Cref{lem:fl1dagger_fr1dagger}, \Cref{lem:fl1dagger_fr1dagger}, and \Cref{lem:fl1_fr1}. Combining the above terms, we obtain
\begin{align*}
\|(X^{k_3} F X^{k_1} \Ospru - \Ospru F_1) \Pi_{\leq t}\|_{\op} 
= O( \sqrt{t/N} ).
\end{align*}
This proves the first item of~\Cref{lem:1st_oracle}. The proof of the second item follows by symmetry. This completes the proof of~\Cref{lem:1st_oracle}.
\end{proof}

\section{Closeness of the Second Oracle: Proving~\Cref{lem:spru:2nd_oracle}} \label{sec:spru:2nd_oracle}
The high-level proof strategy is similar to that in~\Cref{sec:spru:1st_oracle}, where we bound each pair of terms separately. The main distinction is that a query to $F X^{k_2} F$ involves two calls to $F$, which makes the calculation more intricate. We start by listing some helpful lemmas.

\subsection{Closeness of $F_2^L$ and $F_2^R$}

\begin{lemma}[Closeness of $F_2^L$ and $F_2^R$]   \label{lem:fl2_fr2}
For any integer $t \geq 0$,
\begin{align*}
& \|(F^L X^{k_2} F^L \Ospru - \Ospru F_2^L) \Pi_{\leq t}\|_{\op}
= O( \sqrt{t/N} ) \\
& \|(F^R X^{k_2} F^R \Ospru - \Ospru F_2^R) \Pi_{\leq t}\|_{\op}
= O( \sqrt{t/N} ).
\end{align*}
\end{lemma}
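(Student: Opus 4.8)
The plan is to follow the template of the proof of~\Cref{lem:fl1_fr1}, the new wrinkle being that a query to the second oracle uses \emph{two} nested copies of $F^L$. First I would fix $t\ge 0$, $x\in[N]$, $L_1,L_2\in\Rinj_{\le t}$, $R_1,R_2\in\RDdist_{\le t}$ and set
\[
\ket{\psi_{x,L_1,R_1,L_2,R_2}}\coloneqq F^L X^{k_2} F^L\,\Ospru\,\ket{x}_{\reg{A}}\ket{L_1}_{\reg{S}_1}\ket{R_1}_{\reg{T}_1}\ket{L_2}_{\reg{S}_2}\ket{R_2}_{\reg{T}_2},\qquad \ket{\phi_{x,L_1,R_1,L_2,R_2}}\coloneqq \Ospru\,F_2^L\,\ket{x}_{\reg{A}}\ket{L_1}_{\reg{S}_1}\ket{R_1}_{\reg{T}_1}\ket{L_2}_{\reg{S}_2}\ket{R_2}_{\reg{T}_2}.
\]
The goal is to produce a partial isometry $\cI$ acting only on the traced-out registers that sends both families to manifestly mutually orthogonal canonical states with orthonormal summands; by~\Cref{lem:op_norm_orthogonal} this reduces the operator-norm bound to a pointwise estimate of $\|\ket{\psi_{x,\dots}}-\ket{\phi_{x,\dots}}\|_2$. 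The right choice of extract operator here is
\[
\cI\coloneqq \Osprus\cdot F^L_{\extract}\cdot X^{k_2}\cdot F^L_{\extract},
\]
which peels off the two pairs appended by the two copies of $F^L$ (the middle $X^{k_2}$ undoing the Pauli between them) and then runs $\Dec$ coherently through $\Osprus$ to reverse the database merge.

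Next I would unwind both states. Expanding $\Ospru$ via~\Cref{lem:Ospru} and then applying $F^L$, $X^{k_2}$, $F^L$ in turn (using~\Cref{eq:def:F_L}), one gets that $\ket{\psi_{x,\dots}}$ equals $N^{-(|L_2|+|R_2|+5)/2}$ times the sum over $(\bfk,\bfz)\in\Sspru\left(\substack{L_1,L_2\\ R_1,R_2}\right)$ and fresh coordinates $z,y\notin\Im\bigl(L_1^{(k_1,k_3)}\cup L_2^{(k_2,\vec{z}_L)}\bigr)$ with $y\neq z$ of the vectors $\ket{y}_{\reg{A}}\ket{L_1^{(k_1,k_3)}\cup L_2^{(k_2,\vec{z}_L)}\cup\{(x,z),(z\oplus k_2,y)\}}_{\reg{S}}\ket{R_1^{(k_1,k_3)}\cup R_2^{(k_2,\vec{z}_R)}}_{\reg{T}}\ket{\bfk}_{\reg{K}}$, so the index set is exactly the set $\Psi$ of~\Cref{lem:comb_F2L_tuple_difference}. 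For $\ket{\phi_{x,\dots}}$, $F_2^L$ appends $(x,y)$ to $L_2$ and then $\Ospru$ produces the good-tuple superposition for $\left(\substack{L_1,L_2\cup\{(x,y)\}\\ R_1,R_2}\right)$; writing $i$ for the position of $y$ in $\Im(L_2)\cup\{y\}$ and $z$ for the coordinate inserted into $\vec{z}_L$ at position $i$, the defining identity for augmented relations gives $(L_2\cup\{(x,y)\})^{(k_2,\vec{z}_L^{(i\gets z)})}=L_2^{(k_2,\vec{z}_L)}\cup\{(x,z),(z\oplus k_2,y)\}$, so $\ket{\phi_{x,\dots}}$ has the same shape but with index set the set $\Phi$ of~\Cref{lem:comb_F2L_tuple_difference}. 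By monotonicity (Item~2 of~\Cref{lem:good_tuples_monotonicity}) one has $\Phi\subseteq\Psi$, the two states carry equal coefficients and equal summands over $\Phi$, and $\cI$ — well-defined on both by the distinctness and disjointness guarantees for good tuples (\Cref{lem:good_tuples_satisfy_conditions,lem:conditions_robust_decodability}) — maps $\ket{\psi_{x,\dots}}$ (resp.\ $\ket{\phi_{x,\dots}}$) to a superposition indexed by $\Psi$ (resp.\ $\Phi$) of the canonical orthonormal vectors storing $(y,z,x)$ in fresh query registers and $(L_1,R_1,L_2,R_2,\bfz,\bfk)$ in the database and auxiliary registers. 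Hence $\|\ket{\psi_{x,\dots}}-\ket{\phi_{x,\dots}}\|_2^2=|\Psi\setminus\Phi|/N^{|L_2|+|R_2|+5}$, which is $O(t^2/N)$ by~\Cref{lem:comb_F2L_tuple_difference}, yielding the stated bound. The second estimate ($F^R X^{k_2} F^R$ versus $F_2^R$) follows by the symmetric argument, exchanging $\Dom\leftrightarrow\Im$, $G^{\ell}\leftrightarrow G^{r}$, $F^L_{\extract}\leftrightarrow F^R_{\extract}$, and $L_2^{(k_2,\vec{z}_L)}\leftrightarrow R_2^{(k_2,\vec{z}_R)}$, and invoking the $\Dom$-analogue of~\Cref{lem:comb_F2L_tuple_difference}.

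I expect the main obstacle to be making the extract operator $\cI$ precise and verifying that it really is a partial isometry on the span of all the $\ket{\psi_{x,\dots}}$ and $\ket{\phi_{x,\dots}}$. Concretely one must check that at each stage the pair to be removed is uniquely determined by the current contents of $\reg{A}$ (so both copies of $F^L_{\extract}$ act within their domains), that the database left after the two extractions lies in $\Supp(\Dec)$ with $\Dec$ returning precisely $(L_1,R_1,L_2,R_2,\vec{z}_L,\vec{z}_R)$ (this is where monotonicity is used on the $\ket{\phi}$ side, to pass from the post-query good tuple back to a pre-query one), and that the index sets coming out of the two unwindings really are the sets $\Psi$ and $\Phi$ of~\Cref{lem:comb_F2L_tuple_difference} — the subtle point in that last matching being the reparametrization that trades the $\vec{z}_L$-coordinate introduced by $\Ospru$ for the intermediate variable $z$ produced by the first $F^L$. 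With that bookkeeping settled, the quantitative estimate is immediate from the combinatorial lemma.
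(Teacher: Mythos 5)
Your proof is correct and takes essentially the same approach as the paper's: you choose the same extraction operator $\cI = \Osprus \cdot F^L_{\extract} \cdot X^{k_2} \cdot F^L_{\extract}$, unwind $\ket{\psi}$ and $\ket{\phi}$ via the augmented-relation identity $(L_2\cup\{(x,y)\})^{(k_2,\vec{z}_L^{\,(i\gets z)})} = L_2^{(k_2,\vec{z}_L)}\cup\{(x,z),(z\oplus k_2,y)\}$, identify the index sets with $\Psi$ and $\Phi$ of~\Cref{lem:comb_F2L_tuple_difference}, and invoke monotonicity and orthogonality to reduce to that combinatorial count. One small quantitative note: \Cref{lem:comb_F2L_tuple_difference} actually yields $|\Psi\setminus\Phi|/N^{|L_2|+|R_2|+5}=O(t^2/N)$, hence an operator-norm bound of $O(t/\sqrt{N})$ rather than the $O(\sqrt{t/N})$ stated (the paper's own proof has the same slop when it writes $O(t/N)$ for the squared norm), but this is harmless since~\Cref{lem:spru:2nd_oracle} and the final induction only require $O(t/\sqrt{N})$.
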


\begin{proof}
Fix $t \in \N$, $x \in [N], L_1, L_2 \in \Rinj_{\leq t}$, and $R_1, R_2 \in \RDdist_{\leq t}$. We start by calculating the following states:
\begin{align*}
    & \ket{\psi_{x,L_1,R_1,L_2,R_2}}_{\reg{ASTK}} := F^L X^{k_2} F^L \Ospru \ket{x}_{\reg{A}} \ket{L_1}_{\reg{S}_1} \ket{R_1}_{\reg{T}_1} \ket{L_2}_{\reg{S}_2} \ket{R_2}_{\reg{T}_2}, \\
    & \ket{\phi_{x,L_1,R_1,L_2,R_2}}_{\reg{ASTK}} := \Ospru F_2^L \ket{x}_{\reg{A}} \ket{L_1}_{\reg{S}_1} \ket{R_1}_{\reg{T}_1} \ket{L_2}_{\reg{S}_2} \ket{R_2}_{\reg{T}_2}.
\end{align*}

\myparagraph{Computing $\ket{\psi_{x,L_1,R_1,L_2,R_2}}$}
Expanding the definitions of $\Ospru$ and $F^L$, we have

\begin{align}
& \ket{x}_{\reg{A}} \ket{L_1}_{\reg{S}_1} \ket{R_1}_{\reg{T}_1} \ket{L_2}_{\reg{S}_2} \ket{R_2}_{\reg{T}_2}
\notag \\
& \xmapsto{\Ospru} \frac{1}{\sqrt{N^{|L_2|+|R_2|+3}}}
\sum_{\Sspru\left(\substack{L_1,L_2 \\ R_1,R_2}\right)}
\ket{x}_{\reg{A}}
\ket{L_1^{(k_1,k_3)} \cup L_2^{(k_2,\vec{z}_L)}}_{\reg{S}} 
\ket{R^{(k_1,k_3)}_1 \cup R_2^{(k_2,\vec{z}_R)}}_{\reg{T}}
\ket{\bfk}_{\reg{K}}
\tag{by~\Cref{lem:Ospru}} \\
& \xmapsto{F^L} 
\frac{1}{\sqrt{N^{|L_2|+|R_2|+4}}}
\sum_{  \substack{
    \Sspru\left(\substack{L_1,L_2 \\ R_1,R_2}\right) \\
    z' \notin \Im(L_1^{(k_1,k_3)} \cup L_2^{(k_2,\vec{z}_L)})
}   }
\ket{z'}_{\reg{A}} 
\ket{L_1^{(k_1,k_3)} \cup L_2^{(k_2,\vec{z}_L)} \cup \set{(x,z')}}_{\reg{S}} 
\ket{R^{(k_1,k_3)}_1 \cup R_2^{(k_2,\vec{z}_R)}}_{\reg{T}}
\ket{\bfk}_{\reg{K}}
\tag{by~\Cref{eq:def:F_L}} \\
& \xmapsto{F^L X^{k_2}} \frac{1}{\sqrt{N^{|L_2|+|R_2|+5}}}
\sum_{  \substack{
    \Sspru\left(\substack{L_1,L_2 \\ R_1,R_2}\right) \\
    z', y \notin \Im(L_1^{(k_1,k_3)} \cup L_2^{(k_2,\vec{z}_L)}): \\
    y \neq z'
}   }
\ket{y}_{\reg{A}} 
\ket{L_1^{(k_1,k_3)} \cup L_2^{(k_2,\vec{z}_L)} \cup \set{(x,z'), (z' \oplus k_2, y)}}_{\reg{S}}
\ket{R^{(k_1,k_3)}_1 \cup R_2^{(k_2,\vec{z}_R)}}_{\reg{T}} 
\ket{\bfk}_{\reg{K}}, 
\label{eq:psi_2}
\end{align}
where the last line is by~\Cref{eq:def:F_L}.

\myparagraph{Calculating $\ket{\phi_{x,L_1,R_1,L_2,R_2}}$}
Similarly, expanding the definitions of $\Ospru$ and $F_2^L$, we have
\begin{align}
& \frac{1}{\sqrt{N^{|L_2|+|R_2|+5}}}
\sum_{  \substack{
    y \notin \Im(L_2) \\
    (\bfk,\bfz) \in \Sspru\left(\substack{L_1,L_2 \cup \set{(x,y)}\\ R_1,R_2}\right)
}   }
\ket{y}_{\reg{A}}
\ket{L_1^{(k_1,k_3)} \cup ( L_2 \cup \set{(x,y)} )^{(k_2,\vec{z}_L)}}_{\reg{S}} 
\ket{R^{(k_1,k_3)}_1 \cup R_2^{(k_2,\vec{z}_R)}}_{\reg{T}}
\ket{\bfk}_{\reg{K}}.
\end{align}

\myparagraph{Orthogonality}
Consider distinct $(x,L_1,R_1,L_2,R_2)$ and $(x',L'_1,R'_1,L'_2,R'_2)$. We claim that 
\begin{itemize}
    \item $\ket{\psi_{x,L_1,R_1,L_2,R_2}}$ is orthogonal to $\ket{\psi_{x',L'_1,R'_1,L'_2,R'_2}}$,
    \item $\ket{\phi_{x,L_1,R_1,L_2,R_2}}$ is orthogonal to $\ket{\phi_{x',L'_1,R'_1,L'_2,R'_2}}$,
    \item $\ket{\psi_{x,L_1,R_1,L_2,R_2}}$ is orthogonal to $\ket{\phi_{x',L'_1,R'_1,L'_2,R'_2}}$.
\end{itemize}
They together implies that $\ket{\psi_{x,L_1,R_1,L_2,R_2}} - \ket{\phi_{x,L_1,R_1,L_2,R_2}}$ is orthogonal to $\ket{\psi_{x',L'_1,R'_1,L'_2,R'_2}} - \ket{\phi_{x',L'_1,R'_1,L'_2,R'_2}}$. Thus, by~\Cref{lem:op_norm_orthogonal}, it suffices to maximize the norm over input states of the form $\ket{x} \ket{L_1} \ket{R_1} \ket{L_2} \ket{R_2}$. To prove the claim, we define the operator
\[
\cI \coloneqq \Osprus \cdot F_{\extract}^L \cdot X^{k_2} \cdot F_{\extract}^L
\]
where the partial isometry $F_{\extract}^L$ is defined in~\Cref{eq:FL_extract}. Note that $\cI$ preserves inner product between the states under consideration. To see this, we may compute the states obtained by applying $\cI$ to them:
\begin{align}
&\ket{\psi_{x,L_1,R_1,L_2,R_2}} \\
&\xmapsto{F_{\extract}^L} 
\frac{1}{\sqrt{N^{|L_2|+|R_2|+5}}}
\sum_{  \substack{
    (\bfk,\bfz) \in \Sspru\left(\substack{L_1,L_2 \\ R_1,R_2}\right) \\
    z', y \notin \Im(L_1^{(k_1,k_3)} \cup L_2^{(k_2,\vec{z}_L)}): \\
    y \neq z'
}   }
\ket{y}_{\reg{A}'}
\ket{z' \oplus k_2}_{\reg{A}} 
\ket{L_1^{(k_1,k_3)} \cup L_2^{(k_2,\vec{z}_L)} \cup \set{(x,z')}}_{\reg{S}} 
\ket{R^{(k_1,k_3)}_1 \cup R_2^{(k_2,\vec{z}_R)}}_{\reg{T}} 
\ket{\bfk}_{\reg{K}}
\tag{by~\Cref{eq:FL_extract}} \\
&\xmapsto{F_{\extract}^L X^{k_2}} \frac{1}{\sqrt{N^{|L_2|+|R_2|+5}}}
\sum_{  \substack{
    (\bfk,\bfz) \in \Sspru\left(\substack{L_1,L_2 \\ R_1,R_2}\right) \\
    z', y \notin \Im(L_1^{(k_1,k_3)} \cup L_2^{(k_2,\vec{z}_L)}): \\
    y \neq z'
}   }
\ket{z'}_{\reg{A}''}
\ket{y}_{\reg{A}'}
\ket{x}_{\reg{A}}
\ket{L_1^{(k_1,k_3)} \cup L_2^{(k_2,\vec{z}_L)}}_{\reg{S}} 
\ket{R^{(k_1,k_3)}_1 \cup R_2^{(k_2,\vec{z}_R)}}_{\reg{T}} 
\ket{\bfk}_{\reg{K}}
\tag{by~\Cref{eq:FL_extract}} \\
&\xmapsto{\Osprus} 
\frac{1}{\sqrt{N^{|L_2|+|R_2|+5}}}
\sum_{  \substack{
    (\bfk,\bfz) \in \Sspru\left(\substack{L_1,L_2 \\ R_1,R_2}\right) \\
    z', y \notin \Im(L_1^{(k_1,k_3)} \cup L_2^{(k_2,\vec{z}_L)}): \\
    y \neq z'
}   }
\ket{z'}_{\reg{A}''}
\ket{y}_{\reg{A}'}
\ket{x}_{\reg{A}} 
\ket{L_1}_{\reg{S}_1} \ket{L_2}_{\reg{S}_2} 
\ket{R_1}_{\reg{T}_1} \ket{R_2}_{\reg{T}_2}
\ket{\bfz}_{\reg{Z}}
\ket{\bfk}_{\reg{K}},
\label{eq:Ipsi2}
\end{align}
where the last line is by~\Cref{def:Ospru_split} and noting that the state in the third line is entirely in the domain of $\Osprus$. From the above calculation, it is clear that $\cI \ket{\psi_{x,L_1,R_1,L_2,R_2}}$ is orthogonal to $\cI \ket{\psi_{x',L'_1,R'_1,L'_2,R'_2}}$ whenever $(x, L_1, L_2, {R}_1, {R}_2) \neq (x', L'_1, L'_2, {R'}_1, {R'}_2)$. \\

\noindent Similarly, we have
\begin{align}
&\ket{\phi_{x,L_1,R_1,L_2,R_2}}
\notag \\
&\xmapsto{\cI} \frac{1}{\sqrt{N^{|L_2|+|R_2|+5}}}
\sum_{  \substack{
    y \notin \Im(L_2) \\
    (\bfk,\bfz) \in \Sspru\left(\substack{L_1,L_2 \cup \set{(x,y)} \\ R_1,R_2}\right) \\
    i \st y\, \in_i\, \Im(L_2) \cup \set{y}
}   }
\ket{z_{L,i}}_{\reg{A}''}
\ket{y}_{\reg{A}'}
\ket{x}_{\reg{A}}
\ket{L_1}_{\reg{S}_1} \ket{L_2}_{\reg{S}_2} 
\ket{R_1}_{\reg{T}_1} \ket{R_2}_{\reg{T}_2}  
\ket{\vec{z}_{L,-i}}_{\reg{Z_L}} \ket{\vec{z}_R}_{\reg{Z_R}}
\ket{\bfk}_{\reg{K}}.
\end{align}
Note that $\vec{z}_L$ is of length $\Im(|L_2|)+1$. In the above expression, $i \in [|\Im(L_2)|+1]$ is the index such that $y$ is the $i$-th largest element in $\Im(L_2) \cup \set{y}$; $z_{L,i}$ is the $i$-th coordinate of $\vec{z}_{L}$; and $\vec{z}_{L,-i}$ is the vector obtained by removing the $i$-th coordinate of $\vec{z}_{L}$. \\

\noindent By rearranging, we can express it as
\begin{align}
&\frac{1}{\sqrt{N^{|L_2|+|R_2|+5}}}
\sum_{  \substack{
    y \notin \Im(L_2), \bfk, \rcolor{z \in [N], \vec{z}_L \in [N]^{|L_2|}_\dist}, \vec{z}_R \in [N]^{|R_2|}_\dist: \\
    i \st y\, \in_i\, \Im(L_2) \cup \set{y} \\
    (\bfk,\rcolor{\vec{z}^{\,(i \gets z)}_L},\vec{z}_R) \in \Sspru\left(\substack{L_1,L_2 \cup \set{(x,y)} \\ R_1,R_2}\right)
}   }
\ket{\rcolor{z}}_{\reg{A}''}
\ket{y}_{\reg{A}'}
\ket{x}_{\reg{A}}
\ket{L_1}_{\reg{S}_1} \ket{L_2}_{\reg{S}_2} 
\ket{R_1}_{\reg{T}_1} \ket{R_2}_{\reg{T}_2}
\ket{\rcolor{\vec{z}_L}}_{\reg{Z_L}} \ket{\vec{z}_R}_{\reg{Z_R}}
\ket{\bfk}_{\reg{K}},
\label{eq:Iphi2}
\end{align}
where $\vec{z}^{\,(i \gets z)}_L$ denotes the vector obtained by inserting $z$ into the $i$-th coordinate of $\vec{z}_L$ and shifting all subsequent coordinates by one. \\

\noindent Likewise, $\cI \ket{\phi_{x,L_1,R_1,L_2,R_2}}$ is orthogonal to $\cI \ket{\phi_{x',L'_1,R'_1,L'_2,R'_2}}$ and $\cI$ preserves the inner product between $\ket{\phi_{x,L_1,R_1,L_2,R_2}}$ and $\ket{\phi_{x',L'_1,R'_1,L'_2,R'_2}}$. Thus, $\ket{\phi_{x,L_1,R_1,L_2,R_2}}$ is orthogonal to $\ket{\phi_{x',L'_1,R'_1,L'_2,R'_2}}$, proving Item~2. Finally, from the above calculation, we can easily conclude that $\cI \ket{\psi_{x,L_1,R_1,L_2,R_2}}$ is orthogonal to $\ket{\phi_{x',L'_1,R'_1,L'_2,R'_2}}$
which imply that $\ket{\psi_{x,L_1,R_1,L_2,R_2}}$ is orthogonal to $\ket{\phi_{x',L'_1,R'_1,L'_2,R'_2}}$, proving Item~3.

\myparagraph{Wrap-up}
According to~\Cref{lem:op_norm_orthogonal}, it is sufficient to bound the maximum of
\[
\| \ket{\psi_{x,L_1,R_1,L_2,R_2}} - \ket{\phi_{x,L_1,R_1,L_2,R_2}} \|_2
\]
over all $x \in [N], L_1, L_2 \in \Rinj_{\leq t}, R_1, R_2 \in \RDdist_{\leq t}$. From the above calculation, this is equivalently reduced to bounding
\[
\| \cI \ket{\psi_{x,L_1,R_1,L_2,R_2}} - \cI \ket{\phi_{x,L_1,R_1,L_2,R_2}} \|_2.
\]
Finally, we use~\Cref{,lem:comb_F2L_tuple_difference} to bound the number of terms in~\Cref{eq:Ipsi2,eq:Iphi2} to obtain
\begin{align*}
\|\cI \ket{\psi_{x,L_1,R_1,L_2,R_2}} - \cI \ket{\phi_{x,L_1,R_1,L_2,R_2}}\|^2_2
= O (t/N).
\end{align*}
This concludes the proof of~\Cref{lem:fl2_fr2}.
\end{proof}

\subsection{Closeness of $F_2^{L,\dagger}$ and $F_2^{R,\dagger}$}

\begin{lemma}[Image Lemma for $F_2^L$]  \label{lem:fl2dagger_fr2dagger:zero}
For any integer $t \geq 0$ and any normalized state $\ket{\psi}$ on registers $\reg{A}, \reg{B}, \reg{S}_1, \reg{T}_1, \reg{S}_2, \reg{T}_2$ such that $\Pi_{\leq t} \ket{\psi} = \ket{\psi}$ and $F_2^{L,\dagger} \ket{\psi} = 0$, it holds that
\[
\|F^{L,\dagger} \Ospru \ket{\psi}\|_2 = O(\sqrt{t/N}).
\]
\end{lemma}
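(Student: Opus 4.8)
The plan is to mirror the proof of \Cref{lem:spru:fl1rev:zero}, the image lemma for the first oracle, with the two databases interchanged: there the hypothesis $F_1^{L,\dagger}\ket{\psi}=0$ was played against the \emph{isolated}-vertex block of the decoded graph, whereas here $F_2^{L,\dagger}\ket{\psi}=0$ will be played against the \emph{target}-vertex block, which is exactly the part of $L_1^{(k_1,k_3)}\cup L_2^{(k_2,\vec z_L)}$ coming from $L_2$. First I would expand $\ket{\psi}=\sum \alpha_{y,b,L_1,R_1,L_2,R_2}\,\ket{y}_{\reg A}\ket{b}_{\reg B}\ket{L_1}_{\reg S_1}\ket{R_1}_{\reg T_1}\ket{L_2}_{\reg S_2}\ket{R_2}_{\reg T_2}$ with $L_1,L_2\in\RIdist_{\le t}$, $R_1,R_2\in\RDdist_{\le t}$, and derive from $F_2^{L,\dagger}\ket{\psi}=0$ — via \Cref{eq:FL_dagger} applied on the $\reg S_2$ register, followed by rewriting $L_2=L_2'\cup\set{(x,y)}$ — the \emph{zero condition}: for every fixed $x,b,L_1,L_2',R_1,R_2$,
\[
\sum_{y\notin\Im(L_2')}\alpha_{y,b,L_1,R_1,L_2'\cup\set{(x,y)},R_2}=0.
\]

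Next I would compute $\Ospru\ket{\psi}$ by \Cref{lem:Ospru} and then apply $F^{L,\dagger}$, which acts nontrivially only on terms with $y\in\Im(L_1^{(k_1,k_3)}\cup L_2^{(k_2,\vec z_L)})$. By \Cref{lem:good_tuples_satisfy_conditions,lem:conditions_robust_decodability} every surviving term is supported on a decomposable graph $G^{\ell}_{L,k_2}$ with $\Im(V_{\isolate})=\Im(L_1^{(k_1,k_3)})$, $\Im(V_{\source})=\set{\vec z_L}$, and $\Im(V_{\target})=\Im(L_2)$, so I can write $F^{L,\dagger}\Ospru\ket{\psi}=F^{L,\dagger}\Pi_{\target}\Ospru\ket{\psi}+F^{L,\dagger}\Pi_{\isolate\cup\source}\Ospru\ket{\psi}$, where $\Pi_{\target}$ (respectively $\Pi_{\isolate\cup\source}$) projects onto the part where $y$ is in the image of a target vertex (respectively an isolated or source vertex) of $G^{\ell}_{L,k_2}$, and bound the two summands separately. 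The error summand $F^{L,\dagger}\Pi_{\isolate\cup\source}\Ospru\ket{\psi}$ is handled by \Cref{lem:punc_S} with the bad set $\cP_{y,L_1,R_1,L_2,R_2}:=\set{(\bfk,\bfz):y\in(\Im(L_1)\oplus k_3)\cup\set{\vec z_L}}$: sampling $k_3$, respectively $\vec z_L$, last shows $\cP$ occupies at most a $2t/N$ fraction of the good tuples, while $F^{L,\dagger}\Pi_{\isolate\cup\source}\Ospru^{\bullet}\ket{\psi}=0$ because $\Ospru^{\bullet}$ only produces terms whose $y$ lies outside the images of the isolated and source vertices; hence this summand has norm $O(\sqrt{t/N})$.

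The main summand $F^{L,\dagger}\Pi_{\target}\Ospru\ket{\psi}$ is the heart of the proof, following the ``Bounding $F^{L,\dagger}\Pi_1 X^{k_3}\Ospru\ket{\psi}$'' step of \Cref{lem:spru:fl1rev:zero}. Here $F^{L,\dagger}$ deletes the unique target vertex $(z_{L,i^*}\oplus k_2,\,y_{i^*})$ whose range equals $y=y_{i^*}$; by robust decodability (\Cref{eq:robust_target}) the resulting state lies in the domain of $\Osprus$, so its norm is unchanged under $\Osprus$ (after a cosmetic $X^{k_2}$ on $\reg A$), which maps it to a state with $L_2$ shrunk by $(x_{i^*},y_{i^*})$ and $L_1$ enlarged by the synthetic isolated pair $(x_{i^*}\oplus k_1,\,z_{L,i^*}\oplus k_3)$. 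Re-parametrizing $\vec z_L=\vec z_L^{\,\prime\,(i^*\gets z)}$ so that the synthetic pair and all output registers become $y$-independent, the squared norm turns into a sum over the remaining parameters of $\big|\,N^{-(|L_2|+|R_2|+4)/2}\sum_{y}\alpha_{y,b,L_1,R_1,L_2'\cup\set{(x,y)},R_2}\,\big|^2$, where $y$ ranges over the ``good'' values. By \Cref{lem:comb_F2Ldagger_y_difference}, for each fixing of the remaining parameters the good $y$ are either absent or miss at most $O(t)$ values; on the latter set the zero condition lets me replace $\sum_{y\ \text{good}}\alpha$ by $-\sum_{y\ \text{bad}}\alpha$, and Cauchy--Schwarz over the $O(t)$ bad values, followed by relaxing the constraints and summing, bounds the whole thing by $O(t/N)\cdot\norm{\ket{\psi}}_2^2=O(t/N)$. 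Adding the two summands gives $\norm{F^{L,\dagger}\Ospru\ket{\psi}}_2=O(\sqrt{t/N})$.

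I expect the genuine obstacle to be the bookkeeping in the main summand: unlike the first-oracle case, deleting a target vertex from the merged relation is \emph{not} the same as deleting a pair of $L_2$, since it orphans a source vertex that must be reabsorbed into $L_1$ as a synthetic isolated vertex, so the $\vec z_L$-parametrization (following \Cref{lem:comb_F2L_tuple_difference,lem:comb_F2Ldagger_y_difference}) must be chosen so that the decoded registers become $y$-independent and the zero condition applies verbatim. The remaining points — that $F^{L,\dagger}(\Pi_{\target}+\Pi_{\isolate\cup\source})=F^{L,\dagger}$ on the image of $\Ospru$, and that $\Osprus$ is norm-preserving on each post-deletion state — follow routinely from \Cref{lem:good_tuples_satisfy_conditions,lem:conditions_robust_decodability}.
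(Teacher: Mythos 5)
Your proposal follows essentially the same approach as the paper's own proof: the same zero condition, the same split $F^{L,\dagger}\Ospru\ket{\psi} = F^{L,\dagger}\Pi_1\Ospru\ket{\psi} + F^{L,\dagger}\Pi_2\Ospru\ket{\psi}$ into an isolated-plus-source block (bounded by \Cref{lem:punc_S} with the same bad set) and a target block (unpacked by $\Osprus$ via \Cref{eq:robust_target} and the partial isometries $F^L_{\extract},X^{k_1},X^{k_2},X^{k_3}$, re-parametrized via $\vec z_L=\vec q_L^{\,(i\gets z)}$, and closed with \Cref{lem:comb_F2Ldagger_y_difference}, the zero condition, and Cauchy--Schwarz). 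You also correctly identified the genuine subtlety the paper confronts: deleting a target vertex orphans its source partner, which must be reabsorbed into $L_1$ as a synthetic isolated pair before the decoded registers become $y$-independent and the zero condition can be invoked.
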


\begin{proof}
Suppose $\ket{\psi}$ can be written as
\[
\ket{\psi}
= \sum_{ \substack{
    y, b \\
    L_1, L_2, R_1, R_2
} }
\alpha_{y,b,L_1,R_1,L_2,R_2} \ket{y}_{\reg{A}} \ket{b}_{\reg{B}} \ket{L_1}_{\reg{S}_1} \ket{R_1}_{\reg{T}_1} \ket{L_2}_{\reg{S}_2} \ket{R_2}_{\reg{T}_2},
\]
where $y \in [N]$, $L_1, L_2 \in \RIdist_{\le t}$ and $R_1, R_2 \in \RDdist_{\le t}$; recall that $\reg{B}$ is the adversary's auxiliary register, and $b$ ranges from some finite set that we do not explicitly specify.

\myparagraph{Zero condition} 
The premise implies that
\begin{align*}
0 & = F_2^{L,\dagger} \cdot \ket{\psi}_{\reg{AB}\reg{S}_1\reg{T}_1\reg{S}_2\reg{T}_2} \\
& = F_2^{L,\dagger} \cdot
\sum_{ \substack{
    y, b \\
    L_1, L_2, R_1, R_2
} }
\alpha_{y,b,L_1,R_1,L_2,R_2} \ket{y}_{\reg{A}} \ket{b}_{\reg{B}} \ket{L_1}_{\reg{S}_1} \ket{R_1}_{\reg{T}_1} \ket{L_2}_{\reg{S}_2} \ket{R_2}_{\reg{T}_2} \\
& = \frac{1}{\sqrt{N}}
\sum_{ \substack{
    b, L_1, L_2, R_1, R_2\\
    (x,y) \in L_2
}   }
\alpha_{y,b,L_1,R_1,L_2,R_2} \ket{x}_{\reg{A}} \ket{b}_{\reg{B}} \ket{L_1}_{\reg{S}_1} \ket{R_1}_{\reg{T}_1} \ket{L_2 \setminus \set{(x,y)}}_{\reg{S}_2} \ket{R_2}_{\reg{T}_2}.
\tag{by~\Cref{eq:FL_dagger}}
\end{align*}
By re-writing $L_2 = L'_2 \cup \set{(x,y)}$, we obtain
\begin{align*}
&\frac{1}{\sqrt{N}}
\sum_{ \substack{
    x, b \\
    L_1, L'_2, R_1, R_2 \\
    y \notin \Im(L'_2)
} }
\alpha_{y,b,L_1,R_1,L'_2 \cup \set{(x,y)},R_2} 
\ket{x}_{\reg{A}} \ket{b}_{\reg{B}} \ket{L_1}_{\reg{S}_1} \ket{R_1}_{\reg{T}_1} \ket{L'_2}_{\reg{S}_2} \ket{R_2}_{\reg{T}_2} \\
& = \frac{1}{\sqrt{N}}
\sum_{ \substack{
    x, b \\
    L_1, L'_2, R_1, R_2
} }
\left( \sum_{ \substack{
y \notin \Im(L'_2)
} }
\alpha_{y, b, L_1, R_1, L'_2  \cup \set{(x,y)}, R_2} \right) 
\ket{x}_{\reg{A}} \ket{b}_{\reg{B}} \ket{L_1}_{\reg{S}_1} \ket{R_1}_{\reg{T}_1} \ket{L'_2}_{\reg{S}_2} \ket{R_2}_{\reg{T}_2}.
\end{align*}
Hence, for any $x \in [N]$, $b$, and $L_1 \in \Rinj_{\leq t}, L'_2 \in \Rinj_{\leq t-1}, R_1, R_2 \in \RDdist_{\leq t}$, it holds that
\begin{equation} \label{eq:zero_condition_II}
\sum_{y \notin \Im(L'_2)} \alpha_{y, b, L_1, R_1, L'_2  \cup \set{(x,y)}, R_2} = 0.
\end{equation}

\myparagraph{Computing $F^{L,\dagger} \Ospru \ket{\psi}$}
Next, we will compute $F^{L,\dagger} \Ospru \ket{\psi}$. Firstly, by~\Cref{lem:Ospru}, we obtain
\begin{align*}
\ket{\psi}
\xmapsto{\Ospru} 
\sum_{  \substack{
    y, b, \\
    L_1, L_2, R_1, R_2 \\
    (\bfk, \bfz) \in \Sspru\left(\substack{L_1, L_2 \\ R_1, R_2}\right)
}   }
\frac{\alpha_{y,b,L_1,R_1,L_2,R_2}}{\sqrt{N^{|L_2|+|R_2|+3}}}
\ket{y}_{\reg{A}} 
\ket{b}_{\reg{B}} 
\ket{L_1^{(k_1,k_3)} \cup L_2^{(k_2,\vec{z}_L)}}_{\reg{S}}
\ket{R^{(k_1,k_3)}_1 \cup R_2^{(k_2,\vec{z}_R)}}_{\reg{T}} 
\ket{\bfk}_{\reg{K}}.
\end{align*}
Before we move on to apply $F^{L,\dagger}$, recall~\Cref{eq:FL_dagger}. For the expression to be nonzero, it is necessary that $y \oplus k_3 \in L^{(k_1,k_3)}_1 \cup L^{(k_2,\vec{z}_L)}_2$. By~\Cref{lem:conditions_robust_decodability}, $L^{(k_1,k_3)}_1$ and $L^{(k_2,\vec{z}_L)}_2$ are disjoint. Therefore, $(x, y)$ must belong to exactly one of the following: (i) $(x, y) \in L^{(k_1,k_3)}_1$, (ii) $(x, y) \in L^{(k_2,\vec{z}_L)}_2$. Define two projectors
\begin{align*}
    &\Pi_1 \coloneqq 
    \sum_{  \substack{ 
        (L,\bfk) \colon G^{\ell}_{L,k_2} \text{ is decomposable} \\ 
        y \in \Im(V_{\isolate}(G^{\ell}_{L,k_2}) \cup V_{\source}(G^{\ell}_{L,k_2}))
    } } 
    \ketbra{y}{y}_{\reg{A}} \otimes \ketbra{L}{L}_{\reg{S}} \otimes \proj{\bfk}_{\reg{K}},\\
    &\Pi_2 \coloneqq 
    \sum_{  \substack{ 
        (L,\bfk) \colon G^{\ell}_{L,k_2} \text{ is decomposable} \\ 
        y \in \Im(V_{\target}(G^{\ell}_{L,k_2}))
    } } 
    \ketbra{y}{y}_{\reg{A}} \otimes \ketbra{L}{L}_{\reg{S}} \otimes \proj{\bfk}_{\reg{K}}
\end{align*}
Thus, it holds that 
\begin{align*}
F^{L,\dagger} X^{k_3} \Ospru \ket{\psi} 
= F^{L,\dagger} \Pi_1 X^{k_3} \Ospru \ket{\psi} 
+ F^{L,\dagger} \Pi_2 X^{k_3} \Ospru \ket{\psi}.    
\end{align*}
By the triangle inequality, it suffices to the bound the norm of each term.

\myparagraph{Bounding $F^{L,\dagger} \Pi_1 \Ospru \ket{\psi}$}

Recall~\Cref{lem:punc_S}. For any $(y,L_1,R_1,L_2,R_2)$, define the set
\begin{align*}
    \cP_{y,L_1,R_1,L_2,R_2} \coloneqq \set{(\bfk,\bfz) \colon y \in \Im(L^{(k_1,k_3)}_1 \cup L_{2,\source}^{(k_2,\vec{z}_L)})}.
\end{align*}
By respectively sampling $k_3$ and $\vec{z}_L$ at the end and the union bound, it is clear that $\cP_{y,L_1,R_1,L_2,R_2}$ occupies at most a $2t/N$ fraction of its universe. Thus, we obtain
\begin{align*}
    \|F^{L,\dagger} \Pi_1 \Ospru \ket{\psi}\|_2
    & \le \|F^{L,\dagger} \Pi_1 \Ospru^{\bullet} \ket{\psi}\|_2
    + \|F^{L,\dagger} \Pi_1 (\Ospru - \Ospru^{\bullet}) \ket{\psi}\|_2\\
    & \le \|\Pi_1 \Ospru^{\bullet} \ket{\psi}\|_2
    + \|(\Ospru^{\bullet} - \Ospru)\Pi_{\le t}\|_{\op} \\
    & \le O(\sqrt{t/N}).
\end{align*}
The first term is zero by the definitions of $\set{\cP_\tau}_\tau$ and $\Ospru^{\bullet}$. The second term is bounded by~\Cref{lem:punc_S}

\myparagraph{Bounding $F^{L,\dagger} \Pi_2 \Ospru \ket{\psi}$}
Using~\Cref{eq:FL_dagger}, we obtain
\begin{align*}
&F^{L,\dagger} \Pi_2 \Ospru \ket{\psi} \\
&= \sum_{ \substack{
    b, L_1, L_2, R_1, R_2, \\
    (\bfk, \bfz) \in \Sspru\left(\substack{L_1, L_2 \\ R_1, R_2}\right) \\
    (x, y) \in L^{(k_2,\vec{z}_L)}_{2,\target}
} }
\frac{\alpha_{y,b,L_1,R_1,L_2,R_2}}{\sqrt{N^{|L_2|+|R_2|+4}}}
\ket{x}_{\reg{A}} \ket{b}_{\reg{B}} 
\ket{L^{(k_1,k_3)}_1 \cup L^{(k_2,\vec{z}_L)}_2 \setminus \set{(x, y)}}_{\reg{S}} 
\ket{R^{(k_1,k_3)}_1 \cup R^{(k_2,\vec{z}_R)}_2}_{\reg{T}} 
\ket{\bfk}_{\reg{K}}.
\end{align*}

\noindent By substituting $L_2 = L'_2 \cup \set{(x',y)}$ and $x = z_{L,i} \oplus k_2$, where $i$ is the index such that $y \, \in_i \Im(L'_2) \cup \set{y}$, we obtain
\begin{align*}
&\sum_{ \substack{
    b, L_1, \rcolor{L'_2}, R_1, R_2 \\
    \rcolor{x', y \notin \Im(L'_2)} \\
    (\bfk, \bfz) \in \Sspru\left(\substack{L_1, \rcolor{L'_2 \cup \set{(x,y)}} \\ R_1, R_2}\right) \\
    \rcolor{i \st y \, \in_i \Im(L'_2) \cup \set{y}}
} }
\frac{\alpha_{y, b, L_1, R_1, \rcolor{L'_2 \cup \set{(x',y)}}, R_2}}{\sqrt{N^{\rcolor{(|L'_2|+1)} + |R_2| + 4}}} \ket{\rcolor{z_{L,i} \oplus k_2}}_{\reg{A}} \ket{b}_{\reg{B}} \\
&\hspace{.2\textwidth} \otimes \ket{L^{(k_1,k_3)}_1 \cup (\rcolor{L'_2 \cup \set{(x',y)}})^{(k_2,\vec{z}_L)} \setminus \set{(\rcolor{z_{L,i} \oplus k_2}, y)}}_{\reg{S}} 
\ket{R^{(k_1,k_3)}_1 \cup R^{(k_2,\vec{z}_R)}_2}_{\reg{T}} 
\ket{\bfk}_{\reg{K}}.
\end{align*}

\noindent By~\Cref{lem:good_tuples_satisfy_conditions}, the state is in the domain of the partial isometry $\Osprus$. Thus, by~\Cref{eq:robust_target}, we can instead calculate the norm of the following state:
\begin{align*}
&\xmapsto{\Osprus}
\sum_{ \substack{
    b, L_1, L'_2, R_1, R_2 \\
    x', y \notin \Im(L'_2) \\
    (\bfk, \bfz) \in \Sspru\left(\substack{L_1, L'_2 \cup \set{(x,y)} \\ R_1, R_2}\right) \\
    i \st y \, \in_i \Im(L'_2) \cup \set{y}
} }
\frac{\alpha_{y, b, L_1, R_1, L'_2 \cup \set{(x',y)}, R_2}}{\sqrt{N^{|L'_2| + |R_2| + 5}}} 
\ket{z_{L,i} \oplus k_2}_{\reg{A}} 
\ket{b}_{\reg{B}} \\
&\hspace{.3\textwidth} \otimes 
\ket{L_1 \cup \set{(x' \oplus k_1, z_{L,i} \oplus k_3)}}_{\reg{S}_1}
\ket{R_1}_{\reg{T}_1}
\ket{L'_2}_{\reg{S}_2}
 \ket{R_2}_{\reg{T}_2}
 \ket{\vec{z}_{L,-i}}_{\reg{Z_L}} 
 \ket{\vec{z}_R}_{\reg{Z_R}} 
\ket{\bfk}_{\reg{K}}.
\end{align*}

\noindent We now apply the following sequence of partial isometries to simply the expression without changing the norm:
\begin{align*}
&\xmapsto{F^L_{\extract} \cdot X^{k_3} \cdot X^{k_2}} \\
&\sum_{ \substack{
    b, L_1, L'_2, R_1, R_2 \\
    x', y \notin \Im(L'_2) \\
    (\bfk, \bfz) \in \Sspru\left(\substack{L_1, L'_2 \cup \set{(x,y)} \\ R_1, R_2}\right) \\
    i \st y \, \in_i \Im(L'_2) \cup \set{y}
} }
\frac{\alpha_{y, b, L_1, R_1, L'_2 \cup \set{(x',y)}, R_2}}{\sqrt{N^{|L'_2| + |R_2| + 5}}} 
\ket{\rcolor{z_{L,i} \oplus k_3}}_{\reg{A'}} 
\ket{\rcolor{x' \oplus k_1}}_{\reg{A}}
\ket{b}_{\reg{B}}
\ket{\rcolor{L_1}}_{\reg{S}_1}
\ket{R_1}_{\reg{T}_1}
\ket{L'_2}_{\reg{S}_2}
 \ket{R_2}_{\reg{T}_2}
 \ket{\vec{z}_{L,-i}}_{\reg{Z_L}} 
 \ket{\vec{z}_R}_{\reg{Z_R}} 
\ket{\bfk}_{\reg{K}} \\
&\xmapsto{X^{k_3} \otimes X^{k_1}} \\
&\sum_{ \substack{
    b, L_1, L'_2, R_1, R_2 \\
    x', y \notin \Im(L'_2) \\
    (\bfk, \bfz) \in \Sspru\left(\substack{L_1, L'_2 \cup \set{(x,y)} \\ R_1, R_2}\right) \\
    i \st y \, \in_i \Im(L'_2) \cup \set{y}
} }
\frac{\alpha_{y, b, L_1, R_1, L'_2 \cup \set{(x',y)}, R_2}}{\sqrt{N^{|L'_2| + |R_2| + 5}}} 
\ket{\rcolor{z_{L,i}}}_{\reg{A'}} 
\ket{\rcolor{x'}}_{\reg{A}}
\ket{b}_{\reg{B}}
\ket{L_1}_{\reg{S}_1}
\ket{R_1}_{\reg{T}_1}
\ket{L'_2}_{\reg{S}_2}
\ket{R_2}_{\reg{T}_2}
\ket{\vec{z}_{L,-i}}_{\reg{Z_L}} 
\ket{\vec{z}_R}_{\reg{Z_R}} 
\ket{\bfk}_{\reg{K}}.
\end{align*}
By substituting $\vec{q}_L = \vec{z}_{L,-i}$, $z = z_i$, and $\vec{z}_L = \vec{q}^{\,(i \gets z)}_L$, we obtain
\begin{align*}
\sum_{ \substack{
    b, L_1, L'_2, R_1, R_2 \\
    x', y \notin \Im(L'_2) \\
    i \st y \, \in_i \Im(L'_2) \cup \set{y} \\
    \bfk, \rcolor{\vec{q}_L \in [N]^{|L'_2|-1}},\, \vec{z}_R,\, \rcolor{z \in [N]:} \\
    (\bfk, \rcolor{\vec{q}^{\,(i \gets z)}_L}, \vec{z}_R) \in \Sspru\left(\substack{L_1, L'_2 \cup \set{(x,y)} \\ R_1, R_2}\right)
} }
\frac{\alpha_{y, b, L_1, R_1, L'_2 \cup \set{(x',y)}, R_2}}{\sqrt{N^{|L'_2| + |R_2| + 5}}} 
\ket{\rcolor{z}}_{\reg{A'}} 
\ket{x'}_{\reg{A}}
\ket{b}_{\reg{B}}
\ket{L_1}_{\reg{S}_1}
\ket{R_1}_{\reg{T}_1}
\ket{L'_2}_{\reg{S}_2}
\ket{R_2}_{\reg{T}_2}
\ket{\rcolor{\vec{q}_L}}_{\reg{Z_L}} 
\ket{\vec{z}_R}_{\reg{Z_R}} 
\ket{\bfk}_{\reg{K}}.
\end{align*}
We may compute its squared norm:
\begin{align*}
& \sum_{    \substack{
    b, L_1, L'_2, R_1, R_2, x' \\
    \bfk,\, \vec{q}_L \in [N]^{|L'_2|-1},\, \vec{z}_R,\, z \in [N]
}   }
\frac{1}{N^{|L'_2| + |R_2| + 5}} \cdot
\Biggl|
\sum_{  \substack{
    y \notin \Im(L'_2): \\
    i \st y \, \in_i \Im(L'_2) \cup \set{y} \\
    (\bfk, \vec{q}^{\,(i \gets z)}_L, \vec{z}_R) \in \Sspru\left(\substack{L_1, L'_2 \cup \set{(x,y)} \\ R_1, R_2}\right)
}   }
\alpha_{y, b, L_1, R_1, L'_2 \cup \set{(x',y)}, R_2}
\Biggr|^2.
\end{align*}
By~\Cref{lem:comb_F2Ldagger_y_difference}, once $(L'_1, L_2, R_1, R_2, x', \bfk, \vec{q}, \vec{z}_R, z)$ is fixed, there is either zero or at least $N - g(t)$ values of $y$ that satisfy the condition, where $g(t) = O(t)$ is some function guaranteed by~\Cref{lem:comb_F2Ldagger_y_difference}. Let $\mathsf{BAD}$ denote the set of tuples for which the latter case holds. We obtain
\begin{align*}
& \sum_{b, \rcolor{(L_1, L'_2, R_1, R_2, x', \bfk, \vec{q}_L, \vec{z}_R, z) \in \mathsf{Bad}}}
\frac{1}{N^{|L'_2| + |R_2| + 5}} \cdot
\Biggl|
\sum_{  \substack{
    y \notin \Im(L'_2): \\
    i \st y \, \in_i \Im(L'_2) \cup \set{y} \\
    (\bfk, \vec{q}^{\,(i \gets z)}_L, \vec{z}_R) \in \Sspru\left(\substack{L_1, L'_2 \cup \set{(x,y)} \\ R_1, R_2}\right)
}   }
\alpha_{y, b, L_1, R_1, L'_2 \cup \set{(x',y)}, R_2}
\Biggr|^2.
\end{align*}
Now, we make crucial use of the condition implied by the premise (\Cref{eq:zero_condition_II}) to obtain
\begin{align*}
\sum_{b, (L_1, L'_2, R_1, R_2, x', \bfk, \vec{q}_L, \vec{z}_R, z) \in \mathsf{Bad}}
\frac{1}{N^{|L'_2| + |R_2| + 5}} \cdot
\Biggl| \rcolor{-}
\sum_{  \substack{
    y \notin \Im(L'_2): \\
    i \st y \, \in_i \Im(L'_2) \cup \set{y} \\
    (\bfk, \vec{q}^{\,(i \gets z)}_L, \vec{z}_R) \rcolor{\notin} \Sspru\left(\substack{L_1, L'_2 \cup \set{(x,y)} \\ R_1, R_2}\right)
}   }
\alpha_{y, b, L_1, R_1, L'_2 \cup \set{(x',y)}, R_2}
\Biggr|^2.
\end{align*}
Using the Cauchy-Schwarz inequality and the definition of $\mathsf{BAD}$ to bound the number of $y$ in the sum, we can bound it by
\begin{align*}
\sum_{  \substack{
    b, (L_1, L'_2, R_1, R_2, x', \bfk, \vec{q}_L, \vec{z}_R, z) \in \mathsf{Bad}, y \notin \Im(L'_2): \\
    i \st y \, \in_i \Im(L'_2) \cup \set{y} \\
    (\bfk, \vec{q}^{\,(i \gets z)}_L, \vec{z}_R) \notin \Sspru\left(\substack{L_1, L'_2 \cup \set{(x,y)} \\ R_1, R_2}\right)
}   }
\frac{g(t)}{N^{|L'_2| + |R_2| + 5}} \cdot
\Biggl|\alpha_{y, b, L_1, R_1, L'_2 \cup \set{(x',y)}, R_2}\Biggr|^2.
\end{align*}
Since we are summing over non-negative terms, by relaxing the constraints, we can bound it by
\begin{align*}
\sum_{  \substack{
    b, L_1, L'_2, R_1, R_2, x', \bfk, \vec{q}_L, \vec{z}_R, z), y \notin \Im(L'_2)
}   }
\frac{g(t)}{N^{|L'_2| + |R_2| + 5}} \cdot
\Biggl|\alpha_{y, b, L_1, R_1, L'_2 \cup \set{(x',y)}, R_2}\Biggr|^2.
\end{align*}
By summing over $(\bfk, \vec{q}_L, \vec{z}_R, z)$, and noting that there are at most $N^{|L'_2|+|R_2|+4}$ such tuples, we can bounded it by
\begin{align*}
\sum_{  \substack{
    b, x', L_1, L'_2, R_1, R_2, y \notin \Im(L'_2)
}   }
\frac{g(t)}{N} \cdot
\Biggl|\alpha_{y, b, L_1, R_1, L'_2 \cup \set{(x',y)}, R_2}\Biggr|^2.
\end{align*}
By substituting $L = L'_2 \cup \set{(x',y)}$, we obtain
\begin{align*}
\frac{g(t)}{N} \cdot 
\sum_{b, L_1, \rcolor{L_2}, R_1, R_2, \rcolor{y \in \Im(L_2)}}
\Bigl|\alpha_{y, b, L_1, R_1, \rcolor{L_2}, R_2}\Bigr|^2 
= O(t/N)
\end{align*}
by the normalization condition of $\ket{\psi}$.
\end{proof}

\begin{lemma}[Closeness of $F_2^{L,\dagger}$ and $F_2^{R,\dagger}$]   \label{lem:fl2dagger_fr2dagger}
For any integer $t \geq 0$,
\begin{align*}
& \|(F^{L,\dagger} X^{k_2} F^{L,\dagger} \Ospru - \Ospru F_2^{L,\dagger}) \Pi_{\le t}\|_{\op} 
= O( \sqrt{t/N} ) \\
& \|(F^{R,\dagger} X^{k_2} F^{R,\dagger} \Ospru - \Ospru F_2^{R,\dagger}) \Pi_{\le t}\|_{\op} 
= O( \sqrt{t/N} ).
\end{align*}
\end{lemma}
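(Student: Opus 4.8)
The plan is to follow the same two-step template as the proof of \Cref{lem:fl1dagger_fr1dagger}, adapted to the fact that an inverse query to the second oracle applies $F^{L,\dagger}$ twice — note that $(F^L X^{k_2} F^L)^{\dagger} = F^{L,\dagger} X^{k_2} F^{L,\dagger}$, so this lemma is exactly the inverse-query counterpart of \Cref{lem:fl2_fr2}. Fix $t$ and an arbitrary normalized state $\ket{\psi}$ on $\reg{A}\reg{B}\reg{S}_1\reg{T}_1\reg{S}_2\reg{T}_2$ with $\Pi_{\le t}\ket{\psi}=\ket{\psi}$; it suffices to bound $\|(F^{L,\dagger} X^{k_2} F^{L,\dagger} \Ospru - \Ospru F_2^{L,\dagger})\ket{\psi}\|_2$. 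Let $\Pi^{\Im(F_2^L)}$ be the orthogonal projection onto the image of $F_2^L$ and split $\ket{\psi} = \Pi^{\Im(F_2^L)}\ket{\psi} + (\id - \Pi^{\Im(F_2^L)})\ket{\psi}$.

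For the orthocomplement piece $\ket{\chi} := (\id - \Pi^{\Im(F_2^L)})\ket{\psi}$: the term $\Ospru F_2^{L,\dagger}\ket{\chi}$ vanishes because $F_2^{L,\dagger}$ annihilates $(\Im F_2^L)^{\perp}$, and since $\ket{\chi}$ is a subnormalized $\Pi_{\le t}$-state with $F_2^{L,\dagger}\ket{\chi}=0$, \Cref{lem:fl2dagger_fr2dagger:zero} gives $\|F^{L,\dagger}\Ospru\ket{\chi}\|_2 = O(\sqrt{t/N})$; bounding $\|F^{L,\dagger} X^{k_2} F^{L,\dagger}\Ospru\ket{\chi}\|_2$ by this (both $F^{L,\dagger}$ and $X^{k_2}$ are contractions) finishes this piece.

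For the image piece I would write $\Pi^{\Im(F_2^L)}\ket{\psi} = F_2^L\ket{\phi}$ with $\ket{\phi}$ supported on $\Pi_{\le t-1}$ and $\|\ket{\phi}\|_2 = 1+O(t/N)$ (using \Cref{lem:domain_FL_FR}), then insert the intermediate vector $F^L X^{k_2} F^L \Ospru\ket{\phi}$, so that by the triangle inequality the quantity is at most $\|F^{L,\dagger} X^{k_2} F^{L,\dagger}(\Ospru F_2^L - F^L X^{k_2} F^L\Ospru)\ket{\phi}\|_2 + \|(F^{L,\dagger} X^{k_2} F^{L,\dagger} F^L X^{k_2} F^L\Ospru - \Ospru F_2^{L,\dagger} F_2^L)\ket{\phi}\|_2$. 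The first summand is $O(\sqrt{t/N})$ by \Cref{lem:fl2_fr2} and $\|F^{L,\dagger} X^{k_2} F^{L,\dagger}\|_{\op}\le 1$. In the second summand I repeatedly use \Cref{lem:domain_FL_FR} to replace $F^{L,\dagger}F^L$ and $F_2^{L,\dagger}F_2^L$ by the identity — the key check being that $\Ospru$ at most doubles a relation's size, so $F^L\Ospru\ket{\phi}$ still sits in $\Pi_{\le O(t)}$ and each replacement costs only $O(t/N)$ — collapsing both $F^{L,\dagger} X^{k_2} F^{L,\dagger} F^L X^{k_2} F^L\Ospru\ket{\phi}$ and $\Ospru F_2^{L,\dagger} F_2^L\ket{\phi}$ to $\Ospru\ket{\phi}$, so the second summand is $O(t/N)$. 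Adding the two pieces and taking the supremum over $\ket{\psi}$ gives the first bound; the $F^{R,\dagger}$ bound is the mirror image of this argument, using the $\reg{T}$/$R$-versions of \Cref{fact:image_FL,lem:domain_FL_FR,lem:fl2_fr2} and the $F_2^R$-analogue of \Cref{lem:fl2dagger_fr2dagger:zero}.

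I do not expect a real obstacle here: the genuinely hard estimates — commuting $\Ospru$ past the two-query block $F^L X^{k_2} F^L$, and the monogamy-style ``no unintended cancellation'' bound — are already packaged in \Cref{lem:fl2_fr2,lem:fl2dagger_fr2dagger:zero}. The only thing requiring attention is the bookkeeping of register sizes (since $\Ospru$ roughly doubles the size of a relation and each $F^L$ adds one more pair, one must confirm that \Cref{lem:domain_FL_FR} is applied at an index that is still $O(t)$), together with the harmless fact that $F_2^L$ is only approximately norm-preserving; both are handled exactly as in the proof of \Cref{lem:fl1dagger_fr1dagger}.
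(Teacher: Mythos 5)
Your proposal matches the paper's proof essentially step for step: the same decomposition of $\ket{\psi}$ via $\Pi^{\Im(F_2^L)}$, the same use of \Cref{lem:fl2dagger_fr2dagger:zero} on the orthocomplement piece, the same insertion of the intermediate vector $F^L X^{k_2} F^L \Ospru\ket{\phi}$ on the image piece, and the same cleanup of $F^{L,\dagger}F^L$ and $F_2^{L,\dagger}F_2^L$ via \Cref{lem:domain_FL_FR}. Your bookkeeping remarks about the norm of $\ket{\phi}$ and the $\Pi_{\le O(t)}$ index at which \Cref{lem:domain_FL_FR} is invoked are both valid and are in fact slightly more careful than the paper, which leaves them implicit.
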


\begin{proof}
Let $\Pi^{\Im(F_2^L)}$ denote the projection onto the image of $F_2^L$. For an arbitrary state $\ket{\psi}$ in the subspace of $\Pi_{\le t}$, we can decompose it as
\begin{align*}
    \ket{\psi} = \Pi^{\Im(F_2^L)} \ket{\psi} + (\id - \Pi^{\Im(F_2^L)}) \ket{\psi}.
\end{align*}
We will show the following two bounds
\begin{align}
&\|(F^{L,\dagger} X^{k_2} F^{L,\dagger} \Ospru - \Ospru F_2^{L,\dagger}) \Pi^{\Im(F_2^L)} \ket{\psi}\|_2 \le O( \sqrt{t/N} )
\label{eq:psi_FL2_Im} \\
&\|(F^{L,\dagger} X^{k_2} F^{L,\dagger} \Ospru - \Ospru F_2^{L,\dagger}) (\id - \Pi^{\Im(F_2^L)}) \ket{\psi}\|_2 \le O( \sqrt{t/N} ).
\label{eq:psi_FL2_Not_Im}
\end{align}
which then complete the proof by the triangle inequality. Notice that $F_2^{L, \dagger} (\id - \Pi^{\Im(F_2^L)}) = 0$. Thus, \Cref{eq:psi_FL2_Not_Im} can be bounded as follows:
\begin{align}
& \|(F^{L,\dagger} X^{k_2} F^{L,\dagger} \Ospru - \Ospru F_2^{L,\dagger}) (\id - \Pi^{\Im(F_2^L)}) \ket{\psi}\|_2
\notag \\
= & \|F^{L,\dagger} X^{k_2} F^{L,\dagger} \Ospru (\id - \Pi^{\Im(F_2^L)}) \ket{\psi}\|_2
\notag \\
\leq & \|F^{L,\dagger} \Ospru (\id - \Pi^{\Im(F_2^L)}) \ket{\psi}\|_2
\tag{by~\Cref{lem:op_norm}} \\
\leq & O(\sqrt{t/N}).
\tag{by~\Cref{lem:fl2dagger_fr2dagger:zero}}
\end{align}
Hence, it suffices to bound~\Cref{eq:psi_FL2_Im}. Since $\Pi^{\Im(F_2^L)} \ket{\psi}$ is in the image of $F_2^L$, there exists some state $\ket{\phi}$ such that $\Pi^{\Im(F_2^L)} \ket{\psi} = F_2^L \ket{\phi}$. Now. we bound~\Cref{eq:psi_FL2_Im} by the triangle inequality as follows:
\begin{align}
& \|(F^{L,\dagger} X^{k_2} F^{L,\dagger} \Ospru - \Ospru F_2^{L,\dagger}) \Pi^{\Im(F_2^L)} \ket{\psi}\|_2 
\notag \\
= & \|(F^{L,\dagger} X^{k_2} F^{L,\dagger} \Ospru - \Ospru F_2^{L,\dagger}) F_2^L \ket{\phi}\|_2 
\notag \\
\le & \|(F^{L,\dagger} X^{k_2} F^{L,\dagger} \Ospru F_2^L - \rcolor{F^{L,\dagger} X^{k_2} F^{L,\dagger} F^L X^{k_2} F^L \Ospru}) \ket{\phi}\|_2 
\label{eq:term1} \\
&\hspace{.2\textwidth} + \|(\rcolor{F^{L,\dagger} X^{k_2} F^{L,\dagger} F^L X^{k_2} F^L \Ospru} - \Ospru F_2^{L,\dagger} F_2^L) \ket{\phi}\|_2.
\label{eq:term2}
\end{align}
We use~\Cref{lem:fl2_fr2} and that operator norm is submultiplicative to bound~\Cref{eq:term1}. Finally, we use~\Cref{lem:domain_FL_FR} to bound~\Cref{eq:term2} by replacing $F^{L,\dagger} F^L$ and $F_2^{L,\dagger} F_2^L$ with the identity. This completes the proof.
\end{proof}

\noindent We need the following corollaries for the next subsection. The structure of the proof is similar to that of~\Cref{lem:fl2dagger_fr2dagger}. We sketch the proof below and omit the details.

\begin{corollary}[Closeness of $F_2^L F_2^{L,\dagger}$ and $F_2^R F_2^{R,\dagger}$]   \label{cor:spru:fl2proj}
For any integer $t \ge 0$,
\[
\|(F^L F^{L,\dagger} \Ospru - \Ospru F_2^L F_2^{L,\dagger}) \Pi_{\le t}\|_{\op}
= O(\sqrt{t/N})
\quad \text{and} \quad
\|(F^R F^{R,\dagger} \Ospru - \Ospru F_2^R F_2^{R,\dagger}) \Pi_{\le t}\|_{\op}
= O(\sqrt{t/N}).
\]
\end{corollary}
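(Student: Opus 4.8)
The plan is to mimic the proof of~\Cref{lem:fl2dagger_fr2dagger}, splitting an arbitrary input according to the image of $F_2^L$. Concretely, I would fix a normalized state $\ket\psi$ with $\Pi_{\le t}\ket\psi = \ket\psi$ and write $\ket\psi = F_2^L\ket\phi + \ket{\psi^\perp}$, where $F_2^L\ket\phi = \Pi^{\Im(F_2^L)}\ket\psi$ is the component in the image of $F_2^L$ and $\ket{\psi^\perp} = (\id - \Pi^{\Im(F_2^L)})\ket\psi$ satisfies $F_2^{L,\dagger}\ket{\psi^\perp} = 0$. Since $F_2^L$ only enlarges $|L_2|$ by one and leaves the sizes of $L_1,R_1,R_2$ untouched, one can take $\ket\phi$ supported on relations with $|L_2|\le t-1$ and check that both summands stay in the support of $\Pi_{\le t}$; this bookkeeping is what makes the later lemmas applicable. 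By the triangle inequality it then suffices to bound $\|(F^L F^{L,\dagger}\Ospru - \Ospru F_2^L F_2^{L,\dagger})v\|_2$ for $v\in\{F_2^L\ket\phi,\ \ket{\psi^\perp}\}$.

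For the orthogonal-complement piece, $\Ospru F_2^L F_2^{L,\dagger}\ket{\psi^\perp}$ vanishes outright because $F_2^{L,\dagger}\ket{\psi^\perp}=0$, so I only need $\|F^L F^{L,\dagger}\Ospru\ket{\psi^\perp}\|_2$; using that $F^L$ is a contraction this is at most $\|F^{L,\dagger}\Ospru\ket{\psi^\perp}\|_2$, which is $O(\sqrt{t/N})$ by the Image Lemma~\Cref{lem:fl2dagger_fr2dagger:zero} (whose hypotheses $F_2^{L,\dagger}\ket{\psi^\perp}=0$ and $\Pi_{\le t}\ket{\psi^\perp}=\ket{\psi^\perp}$ were just arranged, after rescaling $\ket{\psi^\perp}$ to unit norm).

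For the image piece, I would chain the already-proven closeness statements: on the one hand $F^L F^{L,\dagger}\Ospru F_2^L\ket\phi \approx F^L F^{L,\dagger}F^L X^{k_2}F^L\Ospru\ket\phi$ by~\Cref{lem:fl2_fr2}, then $\approx F^L X^{k_2}F^L\Ospru\ket\phi$ by replacing $F^{L,\dagger}F^L$ with $\id$ via~\Cref{lem:domain_FL_FR}, and then $\approx \Ospru F_2^L\ket\phi$ by~\Cref{lem:fl2_fr2} again; on the other hand $\Ospru F_2^L F_2^{L,\dagger}F_2^L\ket\phi \approx \Ospru F_2^L\ket\phi$ by replacing $F_2^{L,\dagger}F_2^L$ with $\id$, again via~\Cref{lem:domain_FL_FR}. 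Each~\Cref{lem:fl2_fr2} step costs $O(\sqrt{t/N})$ and each~\Cref{lem:domain_FL_FR} step costs $O(t/N)$ — here one must note that $\Ospru$ inflates the total database size by only a constant factor, so the relevant cutoff is $\Pi_{\le O(t)}$ and the bound is still $O(t/N)$ — giving $\|(F^L F^{L,\dagger}\Ospru - \Ospru F_2^L F_2^{L,\dagger})F_2^L\ket\phi\|_2 = O(\sqrt{t/N})$. Adding the two contributions and maximizing over $\ket\psi$ in the support of $\Pi_{\le t}$ yields the claimed $O(\sqrt{t/N})$ operator-norm bound, and the $F^R$ version follows verbatim with the $R$-analogues of~\Cref{lem:fl2_fr2} and~\Cref{lem:fl2dagger_fr2dagger:zero}.

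There is no real conceptual obstacle here — this is precisely why the paper says it sketches the proof and omits details. The one place that calls for genuine care is the size accounting: the image/complement decomposition must respect the $\Pi_{\le t}$ cutoff, and the supports of the intermediate states (after applying $\Ospru$, which roughly triples relation sizes, and one or two copies of $F^L$) must be shown to stay within $\Pi_{\le O(t)}$, so that~\Cref{lem:domain_FL_FR} and~\Cref{lem:fl2dagger_fr2dagger:zero} really do apply with the stated error terms.
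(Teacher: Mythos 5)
Your proposal is correct and follows the same approach as the paper's own proof sketch: decompose $\ket\psi$ via $\Pi^{\Im(F_2^L)}$, kill the complementary piece using \Cref{lem:fl2dagger_fr2dagger:zero}, and chain $\Ospru F_2^L \approx F^L X^{k_2}F^L\Ospru$ together with $F^{L,\dagger}F^L\approx\id$ on the image piece. You also make explicit the orthogonal-complement contribution, which the paper's sketch leaves implicit (it is handled exactly as in \Cref{lem:fl2dagger_fr2dagger}), and your bookkeeping remark about the $\Pi_{\le O(t)}$ cutoff under $\Ospru$ is the right thing to check.
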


\begin{proof}[Proof sketch]
Let $\Pi^{\Im(F_2^L)} \ket{\psi} = F_2^L \ket{\phi}$. We will show that
\begin{align*}
    F^L F^{L,\dagger} \Ospru \Pi^{\Im(F_2^L)} \ket{\psi} 
    \approx \Ospru F_2^L F_2^{L,\dagger} \Pi^{\Im(F_2^L)} \ket{\psi}.
\end{align*}
For the left-hand side, consider the following sequence of hybrids:
\begin{align*}
& F^L F^{L,\dagger} \Ospru \Pi^{\Im(F_2^L)} \ket{\psi}
\notag \\
= & F^L F^{L,\dagger} \Ospru F_2^L \ket{\phi}
\notag \\
\approx & F^L F^{L,\dagger} F^L X^{k_2} F^L \Ospru \ket{\phi}
\tag{$\Ospru F_2^L \approx F^L X^{k_2} F^L \Ospru$ by~\Cref{lem:fl2_fr2}} \\
\approx & F^L X^{k_2} F^L \Ospru \ket{\phi}.
\tag{$F_2^{L,\dagger} F_2^L \approx \id$ by~\Cref{lem:domain_FL_FR}}
\end{align*}
For the right-hand side, consider the following sequence of hybrids:
\begin{align*}
& \Ospru F_2^L F_2^{L,\dagger} \Pi^{\Im(F_2^L)} \ket{\psi}
\notag \\
= & \Ospru F_2^L F_2^{L,\dagger} F_2^L \ket{\phi}
\notag \\
\approx & \Ospru F_2^L \ket{\phi}
\tag{$F_2^{L,\dagger} F_2^L \approx \id$ by~\Cref{lem:domain_FL_FR}} \\
\approx & F^L X^{k_2} F^L \Ospru \ket{\phi}.
\tag{$\Ospru F_2^L \approx F^L X^{k_2} F^L \Ospru$ by~\Cref{lem:fl2_fr2}}
\end{align*}
This completes the proof.
\end{proof}

\subsection{Putting Things Together}
We use the above lemmas to prove~\Cref{lem:spru:2nd_oracle}. We restate~\Cref{lem:spru:2nd_oracle} for convenience.
\begin{lemma}[\Cref{lem:spru:2nd_oracle}, restated]
For any integer $t \geq 0$,
\begin{itemize}
    \item {\bf Forward query:} $\|(F X^{k_2} F \Ospru - \Ospru F_2) \Pi_{\leq t}\|_{\op} \leq O(t/\sqrt{N})$,
    \item {\bf Inverse query:} $\|(F^{\dagger} X^{k_2} F^{\dagger} \Ospru - \Ospru F_2^{\dagger}) \Pi_{\leq t}\|_{\op} \leq O(t/\sqrt{N})$.
\end{itemize}
\end{lemma}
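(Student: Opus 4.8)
The plan is to follow the template of the proof of~\Cref{lem:1st_oracle}, but using the ``double-query'' building blocks whose closeness has already been established: $F^L X^{k_2} F^L$, which~\Cref{lem:fl2_fr2} intertwines with $F_2^L$ through $\Ospru$; $F^{R,\dagger} X^{k_2} F^{R,\dagger}$, which~\Cref{lem:fl2dagger_fr2dagger} intertwines with $F_2^{R,\dagger}$; and the ``projection'' operators $F^L F^{L,\dagger}$, $F^R F^{R,\dagger}$, which~\Cref{cor:spru:fl2proj} intertwines with $F_2^L F_2^{L,\dagger}$ and $F_2^R F_2^{R,\dagger}$. As in~\Cref{lem:1st_oracle}, I would expand
\[
F_2 = F_2^L \;-\; F_2^L F_2^R F_2^{R,\dagger} \;+\; F_2^{R,\dagger} \;-\; F_2^L F_2^{L,\dagger} F_2^{R,\dagger}
\]
and bound the four corresponding differences separately by the triangle inequality; the two composite terms are handled by telescoping through the building-block lemmas, using that $F^L X^{k_2} F^L$, $F^R F^{R,\dagger}$, etc.\ are contractions, that the operator norm is submultiplicative, and that each individual factor changes the relation sizes by $O(1)$ so that the lemmas (which hold for every $t$) may be invoked at each step.

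The new ingredient, absent from the first-oracle argument because there $X^{k_3} F X^{k_1}$ is a single copy of $F$, is that $F X^{k_2} F$ contains \emph{two} copies of $F$. I would first show that, up to an $O(t/\sqrt{N})$ error in operator norm on $\Pi_{\le t}$, $F X^{k_2} F$ reduces to the clean operator
\[
F^L X^{k_2} F^L \;-\; (F^L X^{k_2} F^L)\,(F^R F^{R,\dagger}) \;+\; F^{R,\dagger} X^{k_2} F^{R,\dagger} \;-\; (F^L F^{L,\dagger})\,(F^{R,\dagger} X^{k_2} F^{R,\dagger}).
\]
To reach this, expand each copy of $F = F^L(\id - F^R F^{R,\dagger}) + (\id - F^L F^{L,\dagger}) F^{R,\dagger}$. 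Every cross-term containing an adjacency $F^{R,\dagger} X^{k_2} F^L$ or $F^{L,\dagger} X^{k_2} F^R$ is negligible by~\Cref{lem:FLdagger_U_FR:zero} applied with $U = X^{k_2}$ (which acts non-trivially on $\reg{A}$), hence may be discarded; every adjacency $F^{L,\dagger} F^L$ or $F^{R,\dagger} F^R$ is replaced by $\id$ using~\Cref{lem:domain_FL_FR}; and the remaining ``mixed record/un-record'' terms (those in which one factor $F^L$ records while the other factor $F^{R,\dagger}$ un-records) are shown to vanish after composition with $\Ospru$ on $\Pi_{\le t}$ by an argument parallel to the Image Lemmas~\Cref{lem:spru:fl1rev:zero} and~\Cref{lem:fl2dagger_fr2dagger:zero}: the databases in the image of $\Ospru$ have size $O(t)$, so the fresh value returned by the un-recording $F^{R,\dagger}$ lands in $\Dom(R)$ only with probability $O(t/N)$, and~\Cref{lem:punc_S} then bounds the resulting error. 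Once $F X^{k_2} F$ is in the clean form above, its four summands are intertwined with the matching summands of $F_2$ exactly as in~\Cref{lem:1st_oracle}: $F^L X^{k_2} F^L\,\Ospru \approx \Ospru F_2^L$ by~\Cref{lem:fl2_fr2}; $(F^L X^{k_2} F^L)(F^R F^{R,\dagger})\,\Ospru \approx (F^L X^{k_2} F^L)\,\Ospru F_2^R F_2^{R,\dagger} \approx \Ospru F_2^L F_2^R F_2^{R,\dagger}$ by~\Cref{cor:spru:fl2proj} then~\Cref{lem:fl2_fr2}; $F^{R,\dagger} X^{k_2} F^{R,\dagger}\,\Ospru \approx \Ospru F_2^{R,\dagger}$ by~\Cref{lem:fl2dagger_fr2dagger}; and $(F^L F^{L,\dagger})(F^{R,\dagger} X^{k_2} F^{R,\dagger})\,\Ospru \approx \Ospru F_2^L F_2^{L,\dagger} F_2^{R,\dagger}$ by~\Cref{lem:fl2dagger_fr2dagger} then~\Cref{cor:spru:fl2proj}. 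Summing the four bounds gives the forward-query estimate $O(t/\sqrt{N})$.

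The inverse-query bound follows symmetrically. One uses $F^\dagger = (\id - F^R F^{R,\dagger}) F^{L,\dagger} + F^R(\id - F^L F^{L,\dagger})$ and $F_2^\dagger = (\id - F_2^R F_2^{R,\dagger}) F_2^{L,\dagger} + F_2^R(\id - F_2^L F_2^{L,\dagger})$, and repeats the argument with the roles of $L$ and $R$ exchanged and adjoints taken throughout, invoking the $R$-halves of~\Cref{lem:fl2_fr2,lem:fl2dagger_fr2dagger} and~\Cref{cor:spru:fl2proj}; equivalently, one takes the adjoint of the forward-query statement, using $X^{k_2} = X^{k_2,\dagger}$ and that $\wt{\Ospru}$ is a partial isometry. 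I expect the main obstacle to be the second step: correctly enumerating the terms of the double expansion of $F X^{k_2} F$ and verifying that the ``mixed'' ones contribute only $O(t/\sqrt{N})$ once composed with $\Ospru$, which is where all the combinatorial content (the good-tuple counting lemmas together with~\Cref{lem:punc_S}) is spent; by contrast, the telescoping and triangle-inequality bookkeeping in the final step is routine once the building-block lemmas are in hand, just as in~\Cref{lem:1st_oracle}.
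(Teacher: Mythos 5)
Your proposal follows essentially the same route as the paper: expand both copies of $F$ via $F = F^L(\id - F^R F^{R,\dagger}) + (\id - F^L F^{L,\dagger})F^{R,\dagger}$, discard the cross terms that contain an adjacency $F^{R,\dagger} X^{k_2} F^L$ or $F^{L,\dagger} X^{k_2} F^R$ using~\Cref{lem:FLdagger_U_FR:zero}, collapse $F^{L,\dagger}F^L$ and $F^{R,\dagger}F^R$ to $\id$ via~\Cref{lem:domain_FL_FR}, and then intertwine the surviving four-term clean operator with the four terms of $F_2$ through~\Cref{lem:fl2_fr2,lem:fl2dagger_fr2dagger,cor:spru:fl2proj}, exactly as the paper does.

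One small inaccuracy worth flagging: the residual ``mixed'' terms (in the paper's notation, $F^L X^{k_2} F^L F^{L,\dagger} F^{R,\dagger}\Ospru$ and $F^L(\id - F^R F^{R,\dagger})X^{k_2}F^{R,\dagger}\Ospru$) are not killed by a freestanding Image-Lemma/$\Ospru^{\bullet}$-style counting argument as you suggest; the paper instead decomposes along $\Pi^{\Im(F_2^R)}$, writes the image part as $F_2^R\ket{\phi}$, commutes $\Ospru F_2^R \approx F^R X^{k_2}F^R\Ospru$ by~\Cref{lem:fl2_fr2}, contracts $F^{R,\dagger}F^R\approx\id$ by~\Cref{lem:domain_FL_FR}, and is left with a factor $F^{L,\dagger} X^{k_2} F^R$ (resp.\ $(\id - F^R F^{R,\dagger})F^R$) whose operator norm is $O(t/\sqrt{N})$ by~\Cref{lem:FLdagger_U_FR:zero} (resp.\ $O(t/N)$ by~\Cref{lem:domain_FL_FR}). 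This is a direct algebraic telescoping rather than a fresh combinatorial count, though of course the outcome and the final error budget are the same.

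Also note you cannot simply take adjoints of the forward-query bound to get the inverse-query one, because $\Ospru$ is only approximately an isometry; the paper handles the inverse query by the symmetric expansion, as you also suggest in your first sentence for that part. With that caveat, the proposal is a correct sketch of the paper's proof.
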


\begin{proof}[Proof of~\Cref{lem:spru:2nd_oracle}] 
\ \\
\myparagraph{Forward query:}
Recall the definition of $F_2$:
\begin{align}
F_2 
& = F_2^L \cdot (\id - F_2^R \cdot F_2^{R,\dagger}) + (\id - F_2^L \cdot F_2^{L,\dagger}) \cdot F_2^{R,\dagger} 
\notag \\
& = F_2^L - F_2^L \cdot F_2^R \cdot F_2^{R,\dagger} + F_2^{R,\dagger} - F_2^L \cdot F_2^{L,\dagger} \cdot F_2^{R,\dagger}.
\label{eq:F2_expand}
\end{align}
We expand $F X^{k_2} F$ in the following way:
\begin{align*}
& F \cdot X^{k_2} \cdot F \\
& = \Big(F^L + \underbrace{(\id - F^L \cdot F^R - F^L \cdot F^{L,\dagger}) \cdot F^{R,\dagger}}_{A}\Big)
\cdot X^{k_2} \cdot
\Big(F^{R,\dagger} +  \underbrace{F^L \cdot (\id - F^{R,\dagger} \cdot F^R - F^{L,\dagger} \cdot F^{R,\dagger})}_{B}\Big) \\
& = F^L X^{k_2} F^{R,\dagger} + F^L X^{k_2} B + A X^{k_2} F^{R,\dagger} + A B.
\end{align*}
Here, the term $AB$ can be viewed as a negligibly small error. Since there is $F^{R,\dagger} X^{k_2} F^L$ in the middle of $AB$, its operator norm is at most $O(t/\sqrt{N})$ by~\Cref{lem:FLdagger_U_FR:zero}. Thus, it suffices to show the closeness of the remaining terms. We expand and arrange them in the following way:
\begin{align*}
& F^L X^{k_2} F^{R,\dagger} + F^L X^{k_2} B + A X^{k_2} F^{R,\dagger} \\
= & F^L X^{k_2} F^L - F^L X^{k_2} F^L F^R F^{R,\dagger} + F^{R,\dagger} X^{k_2} F^{R,\dagger} - F^L F^{L,\dagger} F^{R,\dagger} X^{k_2} F^{R,\dagger} \\
&\hspace{.3\textwidth} - F^L X^{k_2} F^L F^{L,\dagger} F^{R,\dagger}
+ F^L (\id - F^R F^{R,\dagger} ) X^{k_2} F^{R,\dagger}.
\end{align*}
In what follows, we will show each term in the second line is negligibly close to a corresponding term in~\Cref{eq:F2_expand} in operator norm, and both terms in the third line have negligibly small operator norms. Concretely, we have the following claims, which together imply the lemma:
\begin{enumerate}
    \item $\|(F^L X^{k_2} F^L \Ospru - \Ospru F_2^L) \Pi_{\le t}\|_{\op} \le O(\sqrt{t/N})$
    \item $\|(F^L X^{k_2} F^L F^R F^{R,\dagger} \Ospru - \Ospru F_2^L F_2^R F_2^{R,\dagger}) \Pi_{\le t}\|_{\op} \le O(\sqrt{t/N})$
    \item $\|(F^{R,\dagger} X^{k_2} F^{R,\dagger} \Ospru - \Ospru F_2^{R,\dagger}) \Pi_{\le t}\|_{\op} \le O(\sqrt{t/N})$
    \item $\|(F^L F^{L,\dagger} F^{R,\dagger} X^{k_2} F^{R,\dagger} \Ospru - \Ospru F_2^L F_2^{L,\dagger} F_2^{R,\dagger}) \Pi_{\le t}\|_{\op} \le O(\sqrt{t/N})$
    \item $\|F^L X^{k_2} F^L F^{L,\dagger} F^{R,\dagger} \Ospru \Pi_{\le t}\|_{\op} \le O(t/\sqrt{N})$
    \item $\|F^L (\id - F^R F^{R,\dagger} ) X^{k_2} F^{R,\dagger} \Ospru \Pi_{\le t}\|_{\op} \le O(\sqrt{t/N})$
\end{enumerate}
Items~1 and~3 immediately follow from~\Cref{lem:fl2_fr2,lem:fl2dagger_fr2dagger}, respectively. To prove Item~2, we use the following sequence of hybrids to sketch the proof:
\begin{align*}
& F^L X^{k_2} F^L F^R F^{R,\dagger} \Ospru \\
\approx & F^L X^{k_2} F^L \rcolor{\Ospru F_2^R F_2^{R,\dagger}} 
\tag{$F^L F^R F^{R,\dagger} \Ospru \approx \Ospru F_2^R F_2^{R,\dagger}$ by~\Cref{cor:spru:fl2proj}} \\
\approx & \rcolor{\Ospru F_2^L} F_2^R F_2^{R,\dagger}.
\tag{$F^L X^{k_2} F^L \Ospru \approx \Ospru F_2^L$ by~\Cref{lem:fl2_fr2}}
\end{align*}
Item~4 can be proven in a similar way using~\Cref{lem:fl2dagger_fr2dagger} and~\Cref{cor:spru:fl2proj}. To prove Item~5, we use a similar idea as in the proof of~\Cref{lem:fl2dagger_fr2dagger}. Suppose $\Pi^{\Im(F_2^R)} \ket{\psi} = F_2^R \ket{\phi}$. We can bound it as follows:
\begin{align*}
& \|F^L X^{k_2} F^L F^{L,\dagger} F^{R,\dagger} \Ospru \Pi^{\Im(F_2^R)} \ket{\psi}\|_2 \\
= & \|F^L X^{k_2} F^L F^{L,\dagger} F^{R,\dagger} \Ospru F_2^R \ket{\phi}\|_2 \\
\le & \|F^L X^{k_2} F^L F^{L,\dagger} F^{R,\dagger} \rcolor{F^R X^{k_2} F^R \Ospru} \ket{\phi}\|_2 + O(\sqrt{t/N})
\tag{$\Ospru F_2^R \approx F^R X^{k_2} F^R$ by~\Cref{lem:fl2_fr2}} \\
\le & \|F^L X^{k_2} F^L F^{L,\dagger} X^{k_2} F^R \Ospru \ket{\phi}\|_2 + O(\sqrt{t/N})
\tag{$F^{R,\dagger} F^R \approx \id$ by~\Cref{lem:domain_FL_FR}} \\
\le & \|F^{L,\dagger} X^{k_2} F^R \|_{\op} + O(\sqrt{t/N})
\tag{by~\Cref{lem:op_norm}}\\
\le & O(t/\sqrt{N}).
\tag{by~\Cref{lem:FLdagger_U_FR:zero}}
\end{align*}
To prove Item~6, we use the same decomposition on $\ket{\psi}$ and obtain:
\begin{align*}
& \|F^L (\id - F^R F^{R,\dagger} ) X^{k_2} F^{R,\dagger} \Ospru \Pi^{\Im(F_2^R)} \ket{\psi}\|_2 \\
= & \|F^L (\id - F^R F^{R,\dagger} ) X^{k_2} F^{R,\dagger} \Ospru F_2^R \ket{\phi}\|_2 \\
\le & \|F^L (\id - F^R F^{R,\dagger} ) X^{k_2} F^{R,\dagger} \rcolor{F^R X^{k_2} F^R \Ospru} \ket{\phi}\|_2 + O(\sqrt{t/N}) 
\tag{$\Ospru F_2^R \approx F^R X^{k_2} F^R$ by~\Cref{lem:fl2_fr2}} \\
\le & \|F^L (\id - F^R F^{R,\dagger} ) F^R \Ospru \ket{\phi}\|_2 + O(\sqrt{t/N}) 
\tag{$F^{R,\dagger} F^R \approx \id$ by~\Cref{lem:domain_FL_FR}} \\
\le & O(\sqrt{t/N}).
\tag{$F^{R,\dagger} F^R \approx \id$ by~\Cref{lem:domain_FL_FR}}
\end{align*}
This completes the proof.
\end{proof}

\section*{Acknowledgments}
PA, AG and YTL are supported by the National Science Foundation under the grants FET-2329938, CAREER-2341004 and, FET-2530160.

\printbibliography

\newpage
\appendix
\section{Path-Recording with Independent Left and Right Operators}

For the strong path-recording isometry defined in~\cite{MH25}, the left (resp. right) isometry $V^{L}$ (resp. $V^{R}$) outputs strings $y$ (resp. $x$) that are in in the image (resp. domain) of \emph{both} the left and right relation states in the purifying register.  In our case, however, it will be easier to work with a similar pair of isometries that only look at their own relation state, as opposed to both relations taken together.  In this section we define a pair of new isometries, $F^{L}$ and $F^{R}$ that are independent of each other. 

\subsection{Closeness to the Path-Recording Isometries} \label{sec:FvsV}

\begin{lemma}   \label{lem:FL_VL_FR_VR_close}
For any integer $t \geq 0$,
\[
\|(V^L-F^L) \Pi_{\leq t}\|_{\op} \leq \sqrt{\frac{t(t+2)}{N}}
\quad \text{and} \quad
\|(V^R-F^R) \Pi_{\leq t}\|_{\op} \leq \sqrt{\frac{t(t+2)}{N}}.
\]
\end{lemma}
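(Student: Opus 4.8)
\textbf{Proof strategy for \Cref{lem:FL_VL_FR_VR_close}.}
The plan is to bound $\|(V^L - F^L)\Pi_{\le t}\|_{\op}$ by exploiting the fact, recorded in \Cref{fact:image_FL}, that both $F^L$ and $V^L$ preserve orthogonality of the standard basis vectors $\ket{x}_{\reg A}\ket{L}_{\reg S}\ket{R}_{\reg T}$ of the domain. Indeed, since $V^L$ is a partial isometry with domain the span of all relation states, and $F^L$ acts on the same basis while also preserving orthogonality of images, the difference $(V^L - F^L)$ also preserves orthogonality on this basis: for distinct triples $(x,L,R)\ne(x',L',R')$, the output vectors $(V^L-F^L)\ket{x,L,R}$ and $(V^L-F^L)\ket{x',L',R'}$ live in orthogonal subspaces (the $\reg S$ register records $L\cup\{(x,y)\}$ versus $L'\cup\{(x',y')\}$, which are distinguishable since one can read off $(x,y)$ up to the symmetry already present in the relation state, and the $\reg A$ output is determined). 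Therefore, by \Cref{lem:op_norm_orthogonal}, it suffices to compute
\[
\max_{\substack{x\in[N],\,L\in\RIdist,\,R\in\RDdist\\ |L|+|R|\le t}} \big\|(V^L - F^L)\ket{x}_{\reg A}\ket{L}_{\reg S}\ket{R}_{\reg T}\big\|_2.
\]

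The key computation is then elementary. Write $M := |\Im(L\cup R)|$ and $m := |\Im(L)|$, so $0 \le M - m \le |R| \le t$ and $M \le |L|+|R| \le t$. We have
\[
V^L\ket{x,L,R} = \frac{1}{\sqrt{N-M}}\sum_{y\notin\Im(L\cup R)} \ket{y}\ket{L\cup\{(x,y)\}}\ket{R},
\qquad
F^L\ket{x,L,R} = \frac{1}{\sqrt{N}}\sum_{y\notin\Im(L)} \ket{y}\ket{L\cup\{(x,y)\}}\ket{R}.
\]
The vectors $\ket{y}\ket{L\cup\{(x,y)\}}\ket{R}$ for distinct $y$ are mutually orthogonal, so
\[
\big\|(V^L-F^L)\ket{x,L,R}\big\|_2^2
= \Big(\tfrac{1}{\sqrt{N-M}} - \tfrac{1}{\sqrt N}\Big)^2 (N-M) + \tfrac{1}{N}\,\big|\Im(L\cup R)\setminus\Im(L)\big|,
\]
where the first term accounts for the $N-M$ values of $y$ on which both operators are supported (with amplitude discrepancy) and the second term accounts for the at most $M-m \le t$ values of $y$ that are in $\Im(R)\setminus\Im(L)$, on which $F^L$ places amplitude $1/\sqrt N$ but $V^L$ places $0$. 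The first term equals $\big(1 - \sqrt{(N-M)/N}\big)^2 = \big(1-\sqrt{1-M/N}\big)^2 \le (M/N)^2 \le M/N$ using $1-\sqrt{1-u}\le u$ for $u\in[0,1]$, and the second is at most $t/N$. Hence the squared norm is at most $M/N + t/N \le t/N + t/N = 2t/N$; a slightly sharper bookkeeping (tracking $M\le t$ and $M-m\le t$ but $M + (M-m) \le$ related to $t$) gives the stated $t(t+2)/N$. The bound for $V^R, F^R$ follows by the symmetric argument, exchanging the roles of $\Dom$ and $\Im$ and of the registers $\reg S, \reg T$.

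\textbf{Main obstacle.} The only genuinely delicate point is the orthogonality claim that lets us invoke \Cref{lem:op_norm_orthogonal}: one must check that $(V^L - F^L)$ maps the orthonormal basis $\{\ket{x,L,R}\}$ to a set of pairwise orthogonal (subnormalized) vectors. This reduces to checking that the union of the images of $V^L$ and of $F^L$, restricted to a fixed domain basis vector, is orthogonal to the corresponding union for any other domain basis vector — which follows because the $\reg S$ register after the map records the relation $L\cup\{(x,y)\}$ and one recovers $(x,y)$ as the unique added pair, so different $(x,L,R)$ yield supports on disjoint sets of $(\reg A,\reg S,\reg T)$ basis states. Once this is in hand, the rest is the routine scalar estimate above, and the restriction to $\Pi_{\le t}$ is exactly what bounds $M$ and $M-m$ by $t$. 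I would also double-check the constant: the clean argument gives $2t/N$ inside the square root, but the paper claims $t(t+2)/N$; reconciling these amounts to noting $M \le t$ forces $M-m \le t-|L|$, and $(M/N)^2 \le M/N$ can be kept as $(M/N)^2$ when $M$ is small, yielding the quadratic-in-$t$ form the authors state — a careful but routine tightening.
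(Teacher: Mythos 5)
Your proof is correct, and it takes a genuinely different route from the paper's. The key observation --- that $V^L - F^L$ maps the orthonormal input basis $\{\ket{x}\ket{L}\ket{R}\}$ to pairwise orthogonal (subnormalized) vectors --- is sound: each output is supported on states of the form $\ket{y}\ket{L\cup\{(x,y)\}}\ket{R}$, and because $L\cup\{(x,y)\}$ is $\cI$-distinct when $y\notin\Im(L)$, the added pair $(x,y)$ is recoverable as the unique pair with second coordinate $y$ (exactly what the operator $F^L_{\extract}$ of \Cref{eq:FL_extract} is built on), so distinct input triples yield outputs with disjoint supports. This lets you invoke \Cref{lem:op_norm_orthogonal} and reduce the operator norm to a maximum over single basis vectors, where the scalar estimate is immediate. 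The paper instead works with a general normalized $\ket{\psi}$, splits $(V^L-F^L)\ket{\psi}$ into the same two orthogonal pieces $\ket{v}$ and $\ket{w}$, and bounds $\braket{v}{v}$ via Cauchy--Schwarz together with the overcount ``$|L|+1$ choices of $(x,L)$ with $L'=L\cup\{(x,y)\}$'' --- an overcount that, as your argument shows, is actually always at most $1$ once $L'$ is $\cI$-distinct. Consequently your bound of $\sqrt{2t/N}$ is in fact \emph{tighter} than the paper's $\sqrt{t(t+2)/N}$ for all $t\ge 1$, and your closing remarks about ``reconciling'' the constants are a red herring: since $2t\le t(t+2)$, your estimate already implies the stated inequality, and no further bookkeeping is needed.
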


\begin{proof}
Consider an arbitrary (normalized) state in the support of $\Pi_{\leq t}$
\[
\ket{\psi}_{\reg{AST}} = \sum_{x,L,R} \alpha_{x,L,R} \ket{x}_{\reg{A}} \ket{L}_{\reg{S}} \ket{R}_{\reg{T}},
\]
where $\alpha_{x,L,R} = 0$ whenever $|L|$ or $|R| > t$. Then we expand out
\begin{align*}
V^L \ket{\psi}_{\reg{AST}}
&= \sum_{x,L,R} \frac{\alpha_{x,L,R}}{\sqrt{N-|\Im(L\cup R)|}}
   \sum_{y \notin \Im(L \cup R)}
   \ket{y}_{\reg A}\ket{L \cup \{(x,y)\}}_{\reg S}\ket{R}_{\reg T}, \; \text{and} \\
F^L \ket{\psi}_{\reg{AST}}
&= \sum_{x,L,R} \frac{\alpha_{x,L,R}}{\sqrt{N}}
   \sum_{y \notin \Im(L)}
   \ket{y}_{\reg A}\ket{L \cup \{(x,y)\}}_{\reg S}\ket{R}_{\reg T}.
\end{align*}
Subtracting, 
\begin{align*}
(V^L & - F^L) \ket{\psi}_{\reg{AST}} \\
& = \underbrace{\sum_{x,L,R} \alpha_{x,L,R} \sum_{y \notin \Im(L\cup R)} \ket{y}_{\reg{A}} \ket{L \cup \set{(x,y)}}_{\reg{S}} \ket{R}_{\reg{T}} \left(\frac{1}{\sqrt{N-|\Im(L\cup R)|}}-\frac{1}{\sqrt{N}}\right)}_{\ket{v}} \\
& + \underbrace{\sum_{x,L,R} \alpha_{x,L,R} \sum_{y \in \Im(R) \setminus \Im(L)} \ket{y}_{\reg{A}} \ket{L \cup \set{(x,y)}}_{\reg{S}} \ket{R}_{\reg{T}} \left(-\frac{1}{\sqrt{N}}\right)}_{\ket{w}}.
\end{align*}
Note that $\ket{w}$ and $\ket{v}$ are orthogonal, since $\ket{v}$ is a superposition of states $\ket{y}\ket{L'}\ket{R}$ where $y$ appears exactly once in $\Im(L')$ and does not appear in $\Im(R)$, while $\ket{w}$ is a superposition of states $\ket{y}\ket{L'}\ket{R}$ where $y$ appears exactly once in both $\Im(L')$ and $\Im(R)$. Thus,
\[
\norm{(V^L - F^L)\ket{\psi}}^2_2 = \braket{v}{v} + \braket{w}{w}.
\]

\myparagraph{Bounding $\braket{v}{v}$} Similar to~\cite{MH25}, by changing the order of summation, we can rewrite $\ket{v}$ as
    \begin{align*}
        \ket{v} = \sum_{\substack{y,L',R}} \ket{y} \ket{L'} \ket{R} \left( \sum_{\substack{(x,L): \\ L' = L \cup \{(x,y)\},\\ y\not\in \Im(L \cup R)}} \alpha_{x,L,R} \left(\frac{1}{\sqrt{N - \abs{\Im(L \cup R)}}} - \frac{1}{\sqrt{N}}\right) \right),
    \end{align*}
    and thus
    \begin{align*}
        \braket{v}{v} &= \sum_{\substack{y,L',R}} \left( \sum_{\substack{(x,L):\\ L' = L \cup \{(x,y)\},\\ y\not\in \Im(L \cup R)}} \alpha_{x,L,R} \left(\frac{1}{\sqrt{N - \abs{\Im(L \cup R)}}} - \frac{1}{\sqrt{N}}\right) \right)^2\\
        &\leq \sum_{\substack{y,L',R}} \left(\sum_{\substack{(x,L):\\ L' = L \cup \{(x,y)\},\\ y\not\in \Im(L \cup R)}} \abs{\alpha_{x,L,R}}^2 \right) \cdot \left(\sum_{\substack{(x,L):\\ L' = L \cup \{(x,y)\},\\ y\not\in \Im(L \cup R)}} \left(\frac{1}{\sqrt{N - \abs{\Im(L \cup R)}}} - \frac{1}{\sqrt{N}}\right)^2 \right),
    \end{align*}
    where the last inequality is by Cauchy-Schwarz. We can bound the summand by writing
    \begin{align*}
        \sum_{\substack{(x,L):\\ L' = L \cup \{(x,y)\},\\ y\not\in \Im(L \cup R)}} \left(\frac{1}{\sqrt{N - \abs{\Im(L \cup R)}}} - \frac{1}{\sqrt{N}}\right)^2 &= \sum_{\substack{(x,L):\\ L' = L \cup \{(x,y)\},\\ y\not\in \Im(L \cup R)}} \left(\frac{\sqrt{N} - \sqrt{N - \abs{\Im(L \cup R)}}}{\sqrt{N(N - \abs{\Im(L \cup R)})}}\right)^2\\
        & \leq \sum_{\substack{(x,L):\\ L' = L \cup \{(x,y)\},\\ y\not\in \Im(L \cup R)}} \left(\frac{\sqrt{\abs{\Im(L \cup R)}}}{\sqrt{N(N - \abs{\Im(L \cup R)})}}\right)^2 \tag{since $\sqrt{a} - \sqrt{b} \leq \sqrt{a-b}$ when $a\geq b \geq 0$}\\
        &= \sum_{\substack{(x,L):\\ L' = L \cup \{(x,y)\},\\ y\not\in \Im(L \cup R)}} \frac{\abs{\Im(L \cup R)}}{N(N - \abs{\Im(L \cup R)})}\\
        & \leq \frac{(\abs{L} + 1) \cdot \abs{\Im(L \cup R)}}{N(N - \abs{\Im(L \cup R)})} 
    \end{align*}
    where the last inequality uses the fact that for any fixed $L'$, there are at most $\abs{L} +1$ choices of $(x,L)$ that can satisfy $L' = L \cup \{(x,y)\}$. Thus,
    \begin{align*}
        \braket{v}{v} &\leq \frac{(\abs{L} + 1) \cdot \abs{\Im(L \cup R)}}{N(N - \abs{\Im(L \cup R)})} \cdot \sum_{\substack{y,L',R}} \left(\sum_{\substack{(x,L):\\ L' = L \cup \{(x,y)\},\\ y\not\in \Im(L \cup R)}} \abs{\alpha_{x,L,R}}^2 \right) \\
        &=\frac{(\abs{L} + 1) \cdot \abs{\Im(L \cup R)}}{N(N - \abs{\Im(L \cup R)})} \cdot \sum_{\substack{x,L,R}} \abs{\alpha_{x,L,R}}^2 \cdot \left( \sum_{y \in [N]} \indic(y \not\in \Im(L \cup R)) \right) \\
        &\leq \frac{(\abs{L} + 1) \cdot \abs{\Im(L \cup R)}}{N} \cdot \sum_{\substack{x,L,R}} \abs{\alpha_{x,L,R}}^2 = \frac{(\abs{L} + 1) \cdot \abs{\Im(L \cup R)}}{N}.
    \end{align*}

\myparagraph{Bounding $\braket{w}{w}$} We know that 
    $$\ket{w} = \frac{-1}{\sqrt{N}}\sum_{y,(L',R)}\ket{y}\ket{L'}\ket{R}\sum_{\substack{(x,L):\\ L'=L\cup\set{(x,y)}\\ y\in \Im(L\cup R)\setminus\Im(L)}} \alpha_{x,L,R}$$

    Then 
    \begin{align*}
        \braket{w}{w} &= \frac{1}{N}\sum_{y,(L',R)}\left|\sum_{\substack{(x,L):\\ L'=L\cup\set{(x,y)}\\ y\in \Im(L\cup R)\setminus\Im(L)}} \alpha_{x,L,R}\right|^{2} \leq \frac{1}{N}\sum_{y,(L',R)}\sum_{\substack{(x,L):\\ L'=L\cup\set{(x,y)}\\ y\in \Im(L\cup R)\setminus\Im(L)}} \left|\alpha_{x,L,R}\right|^{2} \\
        &= \frac{1}{N}\sum_{x,L,R} \left|\alpha_{x,L,R}\right|^{2}\left(\sum_{\substack{y\in \Im(L\cup R)\setminus\Im(L)}}1\right) \leq \frac{t}{N} \sum_{x,L,R} \left|\alpha_{x,L,R}\right|^{2} = \frac{t}{N}
    \end{align*}

\noindent Hence, it holds that
\[
\|(V^L - F^L) \Pi_{\leq t}\|_{\op}
\leq \sqrt{\frac{t(t+2)}{N}}.
\]
By a symmetric argument, we have
\[
\|(V^R - F^R) \Pi_{\leq t}\|_{\op}
\leq \sqrt{\frac{t(t+2)}{N}}. \qedhere
\]
\end{proof}

\noindent We have the following corollaries.
\begin{corollary}   \label{cor:FL_dagger_VL_dagger_FR_dagger_VR_dagger_close}
For any integer $t \geq 0$,
\[
\|(V^{L,\dagger} - F^{L,\dagger}) \Pi_{\leq t}\|_{\op}
\leq \sqrt{\frac{t(t+2)}{N}}
\quad \text{and} \quad
\|(V^{R,\dagger} - F^{R,\dagger}) \Pi_{\leq t}\|_{\op}
\leq \sqrt{\frac{t(t+2)}{N}}.
\]
\end{corollary}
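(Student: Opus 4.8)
The final statement to prove is \Cref{cor:FL_dagger_VL_dagger_FR_dagger_VR_dagger_close}, which bounds $\|(V^{L,\dagger}-F^{L,\dagger})\Pi_{\le t}\|_{\op}$ and $\|(V^{R,\dagger}-F^{R,\dagger})\Pi_{\le t}\|_{\op}$ by $\sqrt{t(t+2)/N}$, given \Cref{lem:FL_VL_FR_VR_close} which gives exactly the same bound for $V^L-F^L$ and $V^R-F^R$.

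The plan is essentially a one-line argument using duality of operator norms together with the fact that $\Pi_{\le t}$ is a projector. First I would recall that for any operator $A$, $\|A\|_{\op}=\|A^\dagger\|_{\op}$, and that taking adjoints reverses composition: $(V^L\Pi_{\le t})^\dagger = \Pi_{\le t}^\dagger V^{L,\dagger} = \Pi_{\le t} V^{L,\dagger}$, since $\Pi_{\le t}$ is self-adjoint. So $\|(V^L-F^L)\Pi_{\le t}\|_{\op} = \|\Pi_{\le t}(V^{L,\dagger}-F^{L,\dagger})\|_{\op}$. The subtlety is that \Cref{cor:FL_dagger_VL_dagger_FR_dagger_VR_dagger_close} asks about $(V^{L,\dagger}-F^{L,\dagger})\Pi_{\le t}$, with the projector on the \emph{right}, not the left. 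So the argument needs one more observation: the image of $\Pi_{\le t}$ (relations of total size $\le t$) is mapped by $V^{L,\dagger}$ and $F^{L,\dagger}$ into the span of relation states of total size $\le t-1 \le t$, hence $\Pi_{\le t}(V^{L,\dagger}-F^{L,\dagger})\Pi_{\le t} = (V^{L,\dagger}-F^{L,\dagger})\Pi_{\le t}$. Concretely, from \Cref{eq:FL_dagger}, $F^{L,\dagger}$ deletes a pair from $L$, decreasing $|L|$ by one, and similarly $V^{L,\dagger}$ acts on relation states decreasing total size; so after applying either operator to a state in the range of $\Pi_{\le t}$, the result still lies in the range of $\Pi_{\le t}$ (indeed of $\Pi_{\le t-1}$), so re-applying $\Pi_{\le t}$ on the left is the identity on that subspace.

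Putting these together: $\|(V^{L,\dagger}-F^{L,\dagger})\Pi_{\le t}\|_{\op} = \|\Pi_{\le t}(V^{L,\dagger}-F^{L,\dagger})\Pi_{\le t}\|_{\op} \le \|\Pi_{\le t}(V^{L,\dagger}-F^{L,\dagger})\|_{\op} = \|(V^L-F^L)\Pi_{\le t}\|_{\op} \le \sqrt{t(t+2)/N}$, where the first equality uses the size-reduction observation, the inequality uses that $\Pi_{\le t}$ is a contraction, the second equality uses $\|A\|_{\op}=\|A^\dagger\|_{\op}$ and self-adjointness of $\Pi_{\le t}$, and the final bound is \Cref{lem:FL_VL_FR_VR_close}. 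The identical argument with $R$ in place of $L$ (using \Cref{eq:FR_dagger} for the size-reduction step) handles the second inequality. Alternatively, and perhaps even more cleanly, one can skip the size-reduction observation by noting directly that $\|\Pi_{\le t} M\|_{\op} = \|M^\dagger \Pi_{\le t}\|_{\op}$ for any $M$, applied with $M = V^{L,\dagger}-F^{L,\dagger}$: this gives $\|(V^{L,\dagger}-F^{L,\dagger})\Pi_{\le t}\|_{\op}$ directly equals... wait, that would need $\Pi_{\le t}$ on the left of $V^{L,\dagger}-F^{L,\dagger}$, which is not what \Cref{lem:FL_VL_FR_VR_close} provides either; so the size-reduction step (or its analogue) really is needed to move the projector from one side to the other.

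I do not expect any genuine obstacle here — this is a routine adjoint/duality argument. The only point requiring a moment's care is the placement of $\Pi_{\le t}$: \Cref{lem:FL_VL_FR_VR_close} has the projector to the right of $V^L-F^L$, and the corollary wants it to the right of $V^{L,\dagger}-F^{L,\dagger}$, so one cannot merely transpose; one must additionally invoke that $V^{L,\dagger}$ and $F^{L,\dagger}$ do not increase the relation size, so that $\Pi_{\le t}$ acts trivially after them on the relevant subspace. Everything else is $\|A\|_{\op}=\|A^\dagger\|_{\op}$, self-adjointness and idempotency of $\Pi_{\le t}$, and submultiplicativity of the operator norm.
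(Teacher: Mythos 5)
Your proof is correct and uses essentially the same two ingredients as the paper's: the identity $\|A\|_{\op}=\|A^\dagger\|_{\op}$ together with a size-monotonicity observation for the path-recording operators. The paper takes adjoints first and then commutes the projector through $V^L-F^L$ (using that $V^L,F^L$ increase size, giving $\Pi_{\le t}(V^L-F^L)=(V^L-F^L)\Pi_{\le t-1}$), while you first insert $\Pi_{\le t}$ on the left (using that $V^{L,\dagger},F^{L,\dagger}$ decrease size), drop the right projector by submultiplicativity, and only then take adjoints — the same argument reshuffled, arriving at the same bound.
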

\begin{proof}
Using the fact that $\|A\|_{\op} = \|A^\dagger\|_{\op}$ for any operator $A$, we have 
\begin{align*}
    \|(V^{L,\dagger} - F^{L,\dagger}) \Pi_{\leq t}\|_{\op} 
    = \|\Pi_{\leq t} (V^L - F^L) \|_{\op}.
\end{align*}
Next, since applying $V^L$ and $F^L$ only increase the size of each relation on the register $\reg{S}$ by one, we have 
\begin{align*}
    \Pi_{\leq t} (V^L - F^L) = (V^L - F^L) \Pi_{\leq t-1}.    
\end{align*}
Finally, the desired bound then follows from~\Cref{lem:FL_VL_FR_VR_close}. The second bound follows by the same argument.
\end{proof}

\noindent Finally, we have:

\begin{lemma}   \label{lem:F_V_close}
For any integer $t \geq 0$,
\[
\|(V - F) \Pi_{\leq t} \|_{\op}
\leq 8 \cdot \sqrt{\frac{(t+2)(t+4)}{N}}.
\]
\end{lemma}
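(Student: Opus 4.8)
\textbf{Proof plan for~\Cref{lem:F_V_close}.}
The plan is to bound $\|(V-F)\Pi_{\le t}\|_{\op}$ by expanding both $V$ and $F$ into their defining four-term sums and comparing them term by term, exactly as was done for the forward first-oracle lemma but now without any conjugating $X^{k}$'s. Recall
\[
V = V^L - V^L V^R V^{R,\dagger} + V^{R,\dagger} - V^L V^{L,\dagger} V^{R,\dagger},
\qquad
F = F^L - F^L F^R F^{R,\dagger} + F^{R,\dagger} - F^L F^{L,\dagger} F^{R,\dagger}.
\]
By the triangle inequality it suffices to bound $\|(V^L-F^L)\Pi_{\le t}\|_{\op}$, $\|(V^{R,\dagger}-F^{R,\dagger})\Pi_{\le t}\|_{\op}$, and the two ``cubic'' differences $\|(V^LV^RV^{R,\dagger} - F^LF^RF^{R,\dagger})\Pi_{\le t}\|_{\op}$ and $\|(V^LV^{L,\dagger}V^{R,\dagger} - F^LF^{L,\dagger}F^{R,\dagger})\Pi_{\le t}\|_{\op}$. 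The first two are handled directly by~\Cref{lem:FL_VL_FR_VR_close} and~\Cref{cor:FL_dagger_VL_dagger_FR_dagger_VR_dagger_close}, each giving $\sqrt{t(t+2)/N}$.

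For the cubic terms I would telescope. Writing $A_1A_2A_3 - B_1B_2B_3 = (A_1-B_1)A_2A_3 + B_1(A_2-B_2)A_3 + B_1B_2(A_3-B_3)$ with $A_i \in \{V^L, V^R, V^{R,\dagger}\}$ and $B_i$ the corresponding $F$-operator, the operator norm of each summand is at most the product of operator norms of the factors, where the difference factor is controlled by~\Cref{lem:FL_VL_FR_VR_close}/\Cref{cor:FL_dagger_VL_dagger_FR_dagger_VR_dagger_close} and the remaining $V$- and $F$-factors are contractions (the $V$'s are partial isometries, hence norm $1$; the $F$'s are contractions as noted after~\Cref{def:FL_FR_F}). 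The one subtlety is that $\Pi_{\le t}$ must be carried through: applying $V^{R,\dagger}$ or $F^{R,\dagger}$ can only shrink relation sizes, and applying $V^L, V^R, F^L, F^R$ increases the relevant relation size by one, so after at most three applications every intermediate state is supported on $\Pi_{\le t+3}$ (or even just $\Pi_{\le t+1}$ after accounting for the $\dagger$'s). Thus each difference factor in the telescoping is evaluated on $\Pi_{\le t+c}$ for a small constant $c$, producing a bound of the form $\sqrt{(t+c)(t+c')/N}$; summing the three telescoped summands for each cubic term and then all four pieces, and absorbing constants, yields the stated $8\sqrt{(t+2)(t+4)/N}$.

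The only place requiring a little care — and the main obstacle — is bookkeeping the projector shifts so that every invocation of~\Cref{lem:FL_VL_FR_VR_close} is on a genuinely valid $\Pi_{\le t+c}$ with $c$ small enough that the final constant works out, and making sure that the $F$-factors appearing to the left/right of a difference factor are indeed contractions on the relevant subspace (they are, unconditionally). A secondary point is orientation: $V^L - F^L$ should be applied on $\Pi_{\le t-1}$-type subspaces when it sits in the middle of a product whose total output size is bounded by $t$, which is exactly the kind of commutation $\Pi_{\le t}(V^L - F^L) = (V^L-F^L)\Pi_{\le t-1}$ already used in~\Cref{cor:FL_dagger_VL_dagger_FR_dagger_VR_dagger_close}; reusing that identity keeps all intermediate projectors at level $\le t+2$ or so. Once the bookkeeping is fixed, the estimate is a routine sum of six or seven terms each of order $\sqrt{t^2/N}$, and choosing the crude constant $8$ and the index $(t+2)(t+4)$ comfortably dominates the sum.
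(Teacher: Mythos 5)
Your proposal matches the paper's proof essentially line for line: the same four-term expansion of $V$ and $F$, the same triangle-inequality split into two linear and two cubic differences, the same telescoping of each cubic term into three products each containing a single $V$-minus-$F$ factor, and the same use of Lemma~\ref{lem:FL_VL_FR_VR_close}, Corollary~\ref{cor:FL_dagger_VL_dagger_FR_dagger_VR_dagger_close}, contractivity of the $F$'s, and the projector-shift bookkeeping (commuting $\Pi_{\le t}$ past $V^L,F^L,V^R,F^R$ to $\Pi_{\le t+1}$ or $\Pi_{\le t+2}$) to get $\sqrt{(t+2)(t+4)/N}$ per telescoped summand. The paper ends up with $2+3+3=8$ summands dominated by $\sqrt{(t+2)(t+4)/N}$, hence the stated constant; your count of ``six or seven'' is slightly off but immaterial, and the argument is otherwise the same.
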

\begin{proof}
Recall the definitions
\[
V = V^L \cdot (\id - V^R \cdot V^{R,\dagger}) + (\id - V^L \cdot V^{L,\dagger}) \cdot V^{R,\dagger}
\quad \text{and} \quad
F = F^L \cdot (\id - F^R \cdot F^{R,\dagger}) + (\id - F^L \cdot F^{L,\dagger}) \cdot F^{R,\dagger}.
\]
By the triangle inequality and the above expression, we have 
\begin{align*}
& \norm{(V - F) \Pi_{\leq t}}_{\op} \\
& \qquad \leq \norm{(V^L - F^L) \Pi_{\leq t}}_{\op}
+ \norm{(V^L V^R V^{R,\dagger} - F^L F^R F^{R,\dagger}) \Pi_{\leq t}}_{\op} \\
& \qquad \qquad + \norm{(V^L V^{L,\dagger} V^{R,\dagger} - F^L F^{L,\dagger} F^{R,\dagger}) \Pi_{\leq t}}_{\op}
+ \norm{(V^{R,\dagger} - F^{R,\dagger}) \Pi_{\leq t}}_{\op}.
\end{align*}
We will bound each term in the rest of the proof. From~\Cref{lem:FL_VL_FR_VR_close} and~\Cref{cor:FL_dagger_VL_dagger_FR_dagger_VR_dagger_close}, we can bound the first term and the last term by
\[
\|(V^L - F^L) \Pi_{\leq t}\|_{\op} \leq \sqrt{\frac{t(t+2)}{N}}
\quad \text{and} \quad
\|(V^{R,\dagger} - F^{R,\dagger}) \Pi_{\leq t}\|_{\op} \leq \sqrt{\frac{t(t+2)}{N}}.
\]
\noindent Next, by the triangle inequality, we can bound the second term by
\begin{align*}
& \|(V^L V^R V^{R,\dagger} - F^L F^R F^{R,\dagger}) \Pi_{\leq t}\|_{\op} \\
& \leq \|(V^L V^R V^{R,\dagger} - F^L V^R V^{R,\dagger})  \Pi_{\leq t}\|_{\op}
+ \|(F^L V^R V^{R,\dagger} - F^L F^R V^{R,\dagger}) \Pi_{\leq t}\|_{\op}
+ \|(F^L F^R V^{R,\dagger} - F^L F^R F^{R,\dagger}) \Pi_{\leq t}\|_{\op} \\
& = \|(V^L - F^L) V^R V^{R,\dagger}  \Pi_{\leq t}\|_{\op}
+ \|F^L (V^R - F^R) F^R V^{R,\dagger} \Pi_{\leq t}\|_{\op}
+ \|F^L F^R (V^{R,\dagger} - F^{R,\dagger}) \Pi_{\leq t}\|_{\op}.
\end{align*}
Since applying $V^R$ and $F^R$ can increase the size of the relation by at most one, while applying $V^\dagger_R$ and $F^\dagger_R$ never increases it, we can bound the above sum as follows:
\begin{align*}
& \leq \|(V^L - F^L) \Pi_{\leq t+1} V^R V^{R,\dagger}  \Pi_{\leq t}\|_{\op} \\
& \hspace{.2\textwidth} + \|F^L \Pi_{\leq t+2} (V^R - F^R) \Pi_{\leq t+1} F^R \Pi_{\leq t} V^{R,\dagger} \Pi_{\leq t}\|_{\op} \\
& \hspace{.4\textwidth} + \|F^L \Pi_{\leq t+2} F^R \Pi_{\leq t+1} (V^{R,\dagger} - F^{R,\dagger}) \Pi_{\leq t}\|_{\op} \\
& \leq \|(V^L - F^L) \Pi_{\leq t+1}\|_{\op} \\
& \hspace{.2\textwidth} + \|F^L \Pi_{\leq t+2}\|_{\op} \cdot \|(V^R - F^R) \Pi_{\leq t+1}\|_{\op} \cdot \|F^L \Pi_{\leq t}\|_{\op} \\
& \hspace{.4\textwidth}  + \|F^L \Pi_{\leq t+2}\|_{\op} \cdot \|F^R \Pi_{\leq t+1}\|_{\op} \cdot \|(V^{R,\dagger} - F^{R,\dagger}) \Pi_{\leq t}\|_{\op} 
\tag{by submultiplicity of operator norm}\\
& \leq 3 \cdot \sqrt{\frac{(t+2)(t+4)}{N}}.
\tag{by~\Cref{lem:FL_VL_FR_VR_close}, \Cref{cor:FL_dagger_VL_dagger_FR_dagger_VR_dagger_close} and the fact that $F^L$ and $F^R$ are contractions}
\end{align*}
Similarly, we can bound the third term by
\[
\|(V^L V^{L,\dagger} V^{R,\dagger} - F^L F^{L,\dagger} F^{R,\dagger}) \Pi_{\leq t}\|_{\op} 
\leq 3 \cdot \sqrt{\frac{(t+2)(t+4)}{N}}.
\]
\noindent Collecting the bounds, we have 
\[
\|(V - F) \Pi_{\leq t}\|_{\op}
\leq 8 \cdot \sqrt{\frac{(t+2)(t+4)}{N}}
\]
as desired.
\end{proof}
\noindent Using a symmetric argument, we also have:
\begin{lemma}   \label{lem:Fdagger_Vdagger_close}
For any integer $t \geq 0$,
\[
\|(V^{\dagger} - F^{\dagger}) \Pi_{\leq t}\|_{\op}
\leq 8 \cdot \sqrt{\frac{(t+2)(t+4)}{N}}.
\]
\end{lemma}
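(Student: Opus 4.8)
The plan is to mirror the proof of \Cref{lem:F_V_close} verbatim, replacing every operator by its adjoint and using the adjoint versions of the building-block estimates. Concretely, I would start from the defining identities
\[
V^{\dagger} = (\id - V^R V^{R,\dagger}) V^{L,\dagger} + V^{R} (\id - V^{L} V^{L,\dagger})
\]
and the analogous expression for $F^{\dagger}$, obtained by taking the dagger of \Cref{def:path_recording} and \Cref{def:FL_FR_F}. Then a single application of the triangle inequality breaks $\|(V^{\dagger}-F^{\dagger})\Pi_{\le t}\|_{\op}$ into four pieces: $\|(V^{L,\dagger}-F^{L,\dagger})\Pi_{\le t}\|_{\op}$, $\|(V^R V^{R,\dagger} V^{L,\dagger} - F^R F^{R,\dagger} F^{L,\dagger})\Pi_{\le t}\|_{\op}$, $\|(V^R V^{L} V^{L,\dagger} - F^R F^{L} F^{L,\dagger})\Pi_{\le t}\|_{\op}$, and $\|(V^{R}-F^{R})\Pi_{\le t}\|_{\op}$.

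The first and last pieces are bounded directly by $\sqrt{t(t+2)/N}$ using \Cref{cor:FL_dagger_VL_dagger_FR_dagger_VR_dagger_close} and \Cref{lem:FL_VL_FR_VR_close} respectively. For each of the two middle three-factor pieces, I would insert two intermediate hybrids (swapping one $V$-factor for an $F$-factor at a time) and apply the triangle inequality again, exactly as in the proof of \Cref{lem:F_V_close}. The only subtlety is bookkeeping of the projectors $\Pi_{\le s}$: applying $V^{L},F^{L}$ (or $V^R,F^R$) increases the size of the relation on the relevant register by at most one, whereas applying $V^{L,\dagger},F^{L,\dagger},V^{R,\dagger},F^{R,\dagger}$ never increases it. So when I thread a telescoping bound through a product like $V^R V^{R,\dagger} V^{L,\dagger}$ acting on the image of $\Pi_{\le t}$, I would track that the rightmost factor sees size $\le t$, the next sees size $\le t$, and the leftmost $V^R$ (or $F^R$) sees size $\le t$ and outputs size $\le t+1$; in particular each difference factor $(V^{L,\dagger}-F^{L,\dagger})$ or $(V^R-F^R)$ is always applied to a state in the image of some $\Pi_{\le s}$ with $s\le t+2$, so \Cref{lem:FL_VL_FR_VR_close} and \Cref{cor:FL_dagger_VL_dagger_FR_dagger_VR_dagger_close} give each such factor a bound of at most $\sqrt{(t+2)(t+4)/N}$. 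Using submultiplicativity of the operator norm and the fact that $F^L,F^R,F^{L,\dagger},F^{R,\dagger}$ (and $V^L,V^R,\dots$) are contractions, each three-factor piece is bounded by $3\sqrt{(t+2)(t+4)/N}$.

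Summing the four contributions yields $\|(V^{\dagger}-F^{\dagger})\Pi_{\le t}\|_{\op}\le 2\sqrt{t(t+2)/N}+6\sqrt{(t+2)(t+4)/N}\le 8\sqrt{(t+2)(t+4)/N}$, which is the claimed bound. I do not expect any genuine obstacle here: the statement is the exact dagger-analogue of \Cref{lem:F_V_close}, and the proof of that lemma already contains every inequality needed. The one place to be careful — and the reason the lemma is stated separately rather than deduced in one line from $\|A\|_{\op}=\|A^{\dagger}\|_{\op}$ — is that $\|(V-F)^{\dagger}\Pi_{\le t}\|_{\op} = \|\Pi_{\le t}(V-F)\|_{\op}$ is not obviously the same as $\|(V^{\dagger}-F^{\dagger})\Pi_{\le t}\|_{\op}$, since $V-F$ does not commute with $\Pi_{\le t}$; hence one really does have to redo the telescoping estimate with the adjoint operators and the adjusted projector indices as above, rather than invoking \Cref{lem:F_V_close} as a black box.
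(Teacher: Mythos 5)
Your proposal is correct and is exactly what the paper means by ``By a symmetric argument'': take the adjoint of the decomposition, telescope each three-factor piece, and invoke \Cref{lem:FL_VL_FR_VR_close} and \Cref{cor:FL_dagger_VL_dagger_FR_dagger_VR_dagger_close} for each difference factor; indeed, since $V^{L,\dagger},V^{R,\dagger},F^{L,\dagger},F^{R,\dagger}$ only \emph{decrease} relation size, the projector indices never exceed $t$ in your telescoping, so every factor is bounded by $\sqrt{t(t+2)/N}$ and the stated $(t+2)(t+4)$ slack is not even needed. One small slip in your final remark: $\|(V-F)^{\dagger}\Pi_{\le t}\|_{\op}$ \emph{is} equal to $\|(V^{\dagger}-F^{\dagger})\Pi_{\le t}\|_{\op}$ by linearity of the adjoint; the genuine obstruction you are pointing at is that both equal $\|\Pi_{\le t}(V-F)\|_{\op}$, which does not follow from $\|(V-F)\Pi_{\le t}\|_{\op}$ because $V-F$ does not commute with $\Pi_{\le t}$.
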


\begin{lemma}   \label{lem:path:FV}
For any adversary $\Adversary$ that makes $t$ forward queries and $t$ inverse queries,
\[
\norm{\ket{\Adversary^{F, F^{\dagger}}}_{\reg{ABST}} 
- \ket{\Adversary^{V, V^{\dagger}}}_{\reg{ABST}}}_2
= O \qty(\sqrt{\frac{t^4}{N}}).
\]
\end{lemma}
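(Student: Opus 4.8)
The plan is to prove \Cref{lem:path:FV} by a standard hybrid argument that swaps the oracle $F$ for $V$ one query at a time, using the operator-norm bounds from \Cref{lem:F_V_close} and \Cref{lem:Fdagger_Vdagger_close}. Concretely, write $\ket{\Adversary_t^{\cO,\cO^{\dagger}}} = \prod_{i=1}^{t}\left(\cO^{\dagger} A_{2i}\, \cO\, A_{2i-1}\right)\ket{0}_{\reg{A}}\ket{0}_{\reg{B}}\ket{\varnothing}_{\reg{S}}\ket{\varnothing}_{\reg{T}}$ and define the $j$-th hybrid state by using $V,V^{\dagger}$ for the first $j$ oracle calls (forward and inverse interleaved as they occur in the product) and $F,F^{\dagger}$ for the remaining $2t-j$ calls. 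The hybrid at $j=0$ is $\ket{\Adversary^{F,F^{\dagger}}}$ and at $j=2t$ it is $\ket{\Adversary^{V,V^{\dagger}}}$; the triangle inequality reduces the lemma to bounding each of the $2t$ consecutive differences.

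For a single step, say the one that replaces the $i$-th oracle call (a forward query, the inverse case being identical after substituting \Cref{lem:Fdagger_Vdagger_close}), the two adjacent hybrids differ only in that one applies $V$ and the other $F$ at position $i$, with everything to the left of position $i$ being a fixed unitary $W$ (a product of the $A$'s and of $V,V^{\dagger}$) and everything to the right being a fixed state $\ket{\chi_{i-1}}$ obtained from $\ket{0}\ket{0}\ket{\varnothing}\ket{\varnothing}$ by applying $A$'s and $F,F^{\dagger}$. Since unitaries and $A$-operations do not change Euclidean norm, the difference of the two hybrid states equals $\|(V-F)\ket{\chi_{i-1}'}\|_2$ where $\ket{\chi_{i-1}'} = A_{i}\ket{\chi_{i-1}}$. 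The key point is that $\ket{\chi_{i-1}'}$ lies in the image of $\Pi_{\le t}$: before the $i$-th call at most $i-1 < 2t$ oracle applications have been made, and each application of $F$, $F^{\dagger}$, $V$, or $V^{\dagger}$ changes $|L|+|R|$ by at most one, so the relation size is at most $t$ (in fact far less); hence $\Pi_{\le t}\ket{\chi_{i-1}'} = \ket{\chi_{i-1}'}$. Therefore by \Cref{lem:op_norm} and \Cref{lem:F_V_close}, $\|(V-F)\ket{\chi_{i-1}'}\|_2 = \|(V-F)\Pi_{\le t}\ket{\chi_{i-1}'}\|_2 \le \|(V-F)\Pi_{\le t}\|_{\op} \le 8\sqrt{(t+2)(t+4)/N}$, and similarly for inverse calls via \Cref{lem:Fdagger_Vdagger_close}.

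Summing over the $2t$ hybrid steps gives
\[
\norm{\ket{\Adversary^{F,F^{\dagger}}} - \ket{\Adversary^{V,V^{\dagger}}}}_2
\le 2t \cdot 8\sqrt{\frac{(t+2)(t+4)}{N}} = O\!\left(\sqrt{\frac{t^4}{N}}\right),
\]
which is the claimed bound. A minor bookkeeping detail is to be careful about the precise projector index in the telescoping: after $i-1$ calls the state is supported on $\Pi_{\le i-1}$, and one should either use the coarse bound $i-1\le t$ uniformly (which suffices, since $\|(V-F)\Pi_{\le t}\|_{\op}$ is what we have), or track the size more tightly to get a slightly better constant — the statement only asks for the $O(\sqrt{t^4/N})$ form, so the uniform bound is enough.

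I do not expect any real obstacle here: the heavy lifting is already done in \Cref{lem:F_V_close} and \Cref{lem:Fdagger_Vdagger_close}. The only thing requiring a moment of care is the argument that intermediate hybrid states stay in the low-size subspace $\Pi_{\le t}$ so that the operator-norm bounds apply — this follows because each of $F,F^{\dagger},V,V^{\dagger}$ changes the total relation size $|L|+|R|$ by at most one, and the $A_i$ act trivially on $\reg{S}\reg{T}$ — and the routine observation that conjugating by unitaries preserves Euclidean distance between pure states.
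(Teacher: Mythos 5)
Your argument is essentially the paper's one-line proof fleshed out: a hybrid that swaps $F$ for $V$ one query at a time, with each step bounded via the operator-norm estimates of Lemmas~\ref{lem:F_V_close} and~\ref{lem:Fdagger_Vdagger_close}. Two small corrections are needed, though neither affects the final asymptotics.

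First, the projector index. The adversary makes $2t$ oracle calls in total (you index them by $i=1,\dots,2t$), and each call changes $|L|+|R|$ by at most one starting from $0$. So before the $i$-th call the state on $\reg{ST}$ satisfies $|L|+|R|\le i-1\le 2t-1$, which is supported on $\Pi_{\le 2t}$, \emph{not} $\Pi_{\le t}$ as you claim; your parenthetical ``(in fact far less)'' and later fallback ``$i-1\le t$'' are both incorrect when $i>t+1$. The fix is simply to invoke Lemma~\ref{lem:F_V_close} with $\Pi_{\le 2t}$, giving a per-step bound of $8\sqrt{(2t+2)(2t+4)/N}=O(t/\sqrt N)$; summing over $2t$ hybrid steps still yields $O(t^2/\sqrt N)=O(\sqrt{t^4/N})$. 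Second, the single-step difference is $\|W(V-F)\ket{\chi'}\|_2$ where the ``outer'' operator $W$ is a product of the unitaries $A_i$ and of oracle calls ($V,V^{\dagger}$ on one side of the hybrid, $F,F^{\dagger}$ on the other). Since $V$ is a partial isometry and $F$ is only a contraction, $W$ is generally a contraction, not a unitary, so the equality you assert should be an inequality $\le$; this is in the right direction and suffices.
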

\begin{proof}
The lemma follows from~\Cref{lem:F_V_close,lem:Fdagger_Vdagger_close} and a sequence of hybrids that replace query to $F$ with $V$ one-by-one.
\end{proof}

\subsection{Missing Proofs in~\Cref{sec:path_recording}}    \label{sec:lemmas_F}

\begin{lemma}[\Cref{lem:domain_FL_FR}, restated]
For any integer $t \ge 0$,
\begin{align*}
    \|(F^{L,\dagger} F^L - \id) \Pi_{\le t}\| \le t/N
    \quad \text{and} \quad
    \|(F^{R,\dagger} F^R - \id) \Pi_{\le t}\| \le t/N \\
\end{align*}
\end{lemma}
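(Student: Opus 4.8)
The plan is to show that $F^{L,\dagger}F^{L}$ acts \emph{diagonally} on the standard basis $\{\ket{x}_{\reg A}\ket{L}_{\reg S}\ket{R}_{\reg T}\}$ with eigenvalues we can write down explicitly, and then to read off the operator norm via \Cref{lem:op_norm_orthogonal}.

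First I would fix $x \in [N]$, $L \in \RIdist$, $R \in \RDdist$ with $|L|+|R| < N$ and compute $F^{L,\dagger}F^{L}\ket{x}_{\reg A}\ket{L}_{\reg S}\ket{R}_{\reg T}$ directly from the definitions. Applying $F^{L}$ via \eqref{eq:def:F_L} gives $\tfrac{1}{\sqrt N}\sum_{y\notin\Im(L)}\ket{y}_{\reg A}\ket{L\cup\{(x,y)\}}_{\reg S}\ket{R}_{\reg T}$. Then I apply $F^{L,\dagger}$ via \eqref{eq:FL_dagger} term by term: for each $y\notin\Im(L)$, because $L$ is $\cI$-distinct and $y\notin\Im(L)$, the unique pair in $L\cup\{(x,y)\}$ with second coordinate $y$ is $(x,y)$, so $F^{L,\dagger}$ sends that term to $\tfrac{1}{\sqrt N}\ket{x}_{\reg A}\ket{L}_{\reg S}\ket{R}_{\reg T}$. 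Summing over the $N-|\Im(L)| = N-|L|$ values of $y$ yields
\[
F^{L,\dagger}F^{L}\ket{x}_{\reg A}\ket{L}_{\reg S}\ket{R}_{\reg T} = \frac{N-|L|}{N}\,\ket{x}_{\reg A}\ket{L}_{\reg S}\ket{R}_{\reg T},
\]
so $(F^{L,\dagger}F^{L}-\id)$ scales this vector by $-|L|/N$.

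Next I would restrict attention to the image of $\id_{\reg A}\otimes\Pi_{\le t}$, which is spanned by the orthonormal family of $\ket{x}_{\reg A}\ket{L}_{\reg S}\ket{R}_{\reg T}$ with $L\in\RIdist$, $R\in\RDdist$ and $|L|+|R|\le t$. On each such vector $(F^{L,\dagger}F^{L}-\id)$ acts by a scalar of absolute value $|L|/N\le t/N$, so by \Cref{lem:op_norm_orthogonal} the operator norm of $(F^{L,\dagger}F^{L}-\id)\Pi_{\le t}$ is at most $t/N$. (When $t\ge N$ the inequality is immediate because $F^{L}$ is a contraction, so $\|F^{L,\dagger}F^{L}-\id\|\le 1\le t/N$.) The bound for $F^{R}$ is obtained by the mirror-image computation: applying $F^{R}$ then $F^{R,\dagger}$ to $\ket{y}_{\reg A}\ket{L}_{\reg S}\ket{R}_{\reg T}$ and using that $R$ is $\cD$-distinct gives $\tfrac{N-|R|}{N}\ket{y}_{\reg A}\ket{L}_{\reg S}\ket{R}_{\reg T}$, and $|R|\le t$ on the image of $\Pi_{\le t}$.

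There is no real obstacle here; the computation is routine. The only point that needs a little attention is checking that no cross terms survive when $F^{L,\dagger}$ is applied after $F^{L}$ — that is, that the only preimage producing $\ket{x}_{\reg A}\ket{L}_{\reg S}\ket{R}_{\reg T}$ is through the newly inserted pair $(x,y)$ — which is precisely where the $\cI$-distinctness of $L$ (resp. $\cD$-distinctness of $R$) is used.
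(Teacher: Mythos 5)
Your proposal is correct and follows essentially the same route as the paper: compute $F^{L,\dagger}F^{L}$ on the basis vectors $\ket{x}_{\reg A}\ket{L}_{\reg S}\ket{R}_{\reg T}$, observe it acts diagonally with eigenvalue $(N-|L|)/N$ using $\cI$-distinctness of $L$ (and the mirror for $F^R$), then read off the operator norm via \Cref{lem:op_norm_orthogonal}. Your remark about the necessity of $\cI$-distinctness to kill cross terms, and your handling of the corner case $t\ge N$, are both sound (and slightly more careful than the paper's terse version).
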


\begin{proof}
For any $x, y \in [N], L \in \Rinj, R \in \RDdist$ such that $|L|+|R| \le t$, from~\Cref{eq:FL_dagger,eq:FR_dagger}, we have
\begin{align*}
F^{L,\dagger} \cdot F^L \cdot \ket{x}_{\reg{A}} \ket{L}_{\reg{S}} 
= \frac{N - |L|}{N} \cdot \ket{x}_{\reg{A}} \ket{L}_{\reg{S}}
\quad \text{and} \quad
F^{R,\dagger} \cdot F^R \cdot \ket{y}_{\reg{A}} \ket{R}_{\reg{T}} 
= \frac{N - |R|}{N} \cdot \ket{y}_{\reg{A}} \ket{R}_{\reg{T}}.
\end{align*}
Therefore, by~\Cref{lem:op_norm}, we have 
\begin{align*}
    \|(F^{L,\dagger} F^L - \id) \Pi_{\le t}\|
    = \max_{x,L \in \RIdist_{\le t}} \frac{|L|}{N} = \frac{t}{N}.
\end{align*}
The second bound follows from the same argument.
\end{proof}

We require the following fact, which is a consequence of~\cite[Definition~37, Claim~22, Equations~(11.22) and (11.26)]{MH25}.
\begin{fact}    \label{fact:operator_E}
There exist operators $E^L$ and $E^R$ that satisfy
\begin{itemize}
    \item For any $t \ge 0$, it holds that
    \begin{align*}
    \|(V^L-E^L) \Pi_{\leq t}\|_{\op} \leq \sqrt{t(t+2)/N}
    \quad \text{and} \quad 
    \|(V^R-E^R) \Pi_{\leq t}\|_{\op} \leq \sqrt{t(t+2)/N}.
    \end{align*}
    \item For any $\ell, r \ge 0$ such that $\ell + r \leq N$, it holds that
    \begin{align*}
    &E^L_{\ell,r} \cdot E^{L,\dagger}_{\ell,r}
    = \sum_{i \in [\ell+1]} 
    \bigg( \Pi_{\ell+1,\reg{S}} \cdot \Pi^{\sf EPR}_{\reg{A},\reg{S}^{(\ell+1)}_{\reg{Y},i}} \cdot \Pi_{\ell+1,\reg{S}} \bigg)
    \otimes \Pi_{r,\reg{T}}, \quad \text{and} \\
    &E^R_{\ell,r} \cdot E^{R,\dagger}_{\ell,r}
    = \sum_{i \in [r+1]}
    \Pi_{\ell,\reg{S}} \otimes 
    \bigg( \Pi_{r+1,\reg{T}} \cdot \Pi^{\sf EPR}_{\reg{A},\reg{T}^{(r+1)}_{\reg{X},i}} \cdot \Pi_{r+1,\reg{T}} \bigg).
    \end{align*}
\end{itemize}
\end{fact}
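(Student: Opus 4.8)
The plan is to obtain Fact~\ref{fact:operator_E} by importing the ``EPR-flagged'' realization of the path-recording isometries from~\cite{MH25} and checking that the two asserted properties are exactly the specialization to $\cI$-distinct relations on $\reg S$ and $\cD$-distinct relations on $\reg T$ of the statements established there. Concretely, I would take $E^L = \bigoplus_{\ell,r} E^L_{\ell,r}$ where $E^L_{\ell,r}$ acts on inputs $\ket{x}_{\reg A}\ket{L}_{\reg S}\ket{R}_{\reg T}$ with $|L|=\ell$, $|R|=r$ by adjoining a fresh maximally entangled pair between $\reg A$ and a new $Y$-slot appended to $\reg S$, i.e. $\tfrac1{\sqrt N}\sum_{y}\ket{y}_{\reg A}\ket{x,y}_{\text{new block}}$, and then projecting the resulting $\ell+1$ blocks of $\reg S$ onto the symmetric relation-state subspace $\Pi_{\ell+1,\reg S}$; $E^R$ is defined symmetrically with a new $X$-slot appended to $\reg T$ and $\Pi_{r+1,\reg T}$. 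This is the operator of~\cite[Definition~37]{MH25}, so the first bullet is~\cite[Claim~22]{MH25} and the two displayed identities are~\cite[Eqs.~(11.22),(11.26)]{MH25}; the remaining work is to verify that restricting to $\RIdist$/$\RDdist$ relations—the only regime this paper's $\Pi_{\le t}$ ever projects onto—changes none of these relations, since for $\cI$-distinct $L$ the symmetrization $\Pi_{\ell+1,\reg S}$ behaves exactly as in the general MH setup.

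For the closeness bound I would, for a self-contained argument, repeat the computation in the proof of~\Cref{lem:FL_VL_FR_VR_close} with $F^L$ replaced by $E^L$. Writing a generic $\ket\psi$ in the image of $\Pi_{\le t}$ as $\sum_{x,L,R}\alpha_{x,L,R}\ket{x}_{\reg A}\ket L_{\reg S}\ket R_{\reg T}$ and subtracting $E^L\ket\psi$ from $V^L\ket\psi$, the difference splits into a normalization-mismatch part (the coefficient discrepancy between $1/\sqrt{N-|\Im(L\cup R)|}$ and $1/\sqrt N$ on values $y\notin\Im(L\cup R)$) and a collision part (values of $y$ that the symmetrization inside $E^L$ admits but $V^L$ forbids, namely $y\in\Im(R)\setminus\Im(L)$ together with any repeated $Y$-slots). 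These two parts live on orthogonal sets of basis states, exactly as in the $F^L$ analysis, so $\|(V^L-E^L)\ket\psi\|_2^2$ is the sum of the two squared norms, and each is bounded by $O(t^2/N)$ by Cauchy--Schwarz together with the $|L|,|R|\le t$ bounds; this yields $\|(V^L-E^L)\Pi_{\le t}\|_{\op}\le\sqrt{t(t+2)/N}$, and the bound for $E^R$ follows by the same argument with the roles of $\reg S/\reg T$ and $X/Y$ interchanged.

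For the operator identities, I would unpack $E^L_{\ell,r}E^{L,\dagger}_{\ell,r}$ directly from the definition: $E^{L,\dagger}$ first applies $\Pi_{\ell+1,\reg S}$ and then contracts the adjoined EPR pair, while $E^L$ reinserts it, and since symmetrization places that pair into whichever of the $\ell+1$ $Y$-slots, the composition collapses to the sum over slot indices $i\in[\ell+1]$ of the single-slot EPR projector $\Pi^{\sf EPR}_{\reg A,\reg S^{(\ell+1)}_{\reg Y,i}}$ conjugated by $\Pi_{\ell+1,\reg S}$, tensored with the untouched projector $\Pi_{r,\reg T}$ onto size-$r$ relations; $E^R_{\ell,r}E^{R,\dagger}_{\ell,r}$ is the mirror statement. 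I expect the main obstacle to be purely one of faithfully aligning conventions with~\cite{MH25}: their relation states, their normalization of $\Pi^{\sf EPR}$, and the meaning of the slot registers must be matched to this paper's notation and to the split $\reg S^{(t)}=(\reg S^{(t)}_X,\reg S^{(t)}_Y)$ so that ``$i$-th $Y$-slot'' denotes the same object on both sides; once that dictionary is fixed every step above is routine bookkeeping.
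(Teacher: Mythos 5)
Your proposal matches the paper exactly: the paper states this as a Fact with no proof, citing precisely \cite[Definition~37, Claim~22, Equations~(11.22) and (11.26)]{MH25}, which is the same importation you describe. Your additional self-contained sketches of the closeness bound and the $E^L E^{L,\dagger}$ identities go beyond what the paper provides but are consistent with the MH25 source material.
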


\noindent The operators $E^L$ and $E^R$ satisfy the following property.
\begin{lemma}   \label{lem:ELdagger_U_ER:zero}
For any integer $t \ge 0$ and any unitary $U$ acting non-trivially on the register $\reg{A}$,
\begin{align*}
\|E^{L,\dagger} U E^R \Pi_{\leq t}\|_{\op} \leq \sqrt{t(t+1)}/N
\quad \text{and} \quad
\|E^{R,\dagger} U E^L \Pi_{\leq t}\|_{\op} \leq \sqrt{t(t+1)}/N \, .
\end{align*}
\end{lemma}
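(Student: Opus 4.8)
The plan is to run a quantitative monogamy‑of‑entanglement argument built directly on the two identities of \Cref{fact:operator_E}. We prove the first inequality; the second follows by interchanging $\reg S\leftrightarrow\reg T$, $L\leftrightarrow R$, and the $X$‑ and $Y$‑subregisters throughout. First I would fix a unit vector $\ket\psi$ with $\Pi_{\le t}\ket\psi=\ket\psi$ (so $\ket\psi$ is supported on relation states of total size $\le t$, hence symmetric on $\reg S,\reg T$) and split it by relation size, $\ket\psi=\sum_{\ell+r\le t}\ket{\psi_{\ell,r}}$, where $\ket{\psi_{\ell,r}}$ has $|L|=\ell$ on $\reg S$ and $|R|=r$ on $\reg T$. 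Since $E^R_{\ell,r}$ raises $|R|$ to $r+1$ without touching $\reg S$, $U$ preserves all sizes, and $E^{L,\dagger}$ lowers $|L|$ by one, the vectors $E^{L,\dagger}UE^R\ket{\psi_{\ell,r}}$ lie in pairwise orthogonal size sectors, so $\|E^{L,\dagger}UE^R\ket\psi\|_2^2=\sum_{\ell,r}\|E^{L,\dagger}_{\ell-1,r+1}UE^R_{\ell,r}\ket{\psi_{\ell,r}}\|_2^2$.

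For a fixed block I would write this summand as $\bra{\psi_{\ell,r}}E^{R,\dagger}_{\ell,r}U^\dagger\bigl(E^L_{\ell-1,r+1}E^{L,\dagger}_{\ell-1,r+1}\bigr)UE^R_{\ell,r}\ket{\psi_{\ell,r}}$ and substitute the first identity of \Cref{fact:operator_E}. Because $UE^R_{\ell,r}\ket{\psi_{\ell,r}}$ already lies in the size‑$(\ell,r+1)$ symmetric sector, the outer size projectors $\Pi_{\ell,\reg S},\Pi_{r+1,\reg T}$ act trivially on this diagonal matrix element, leaving $\sum_{i\in[\ell]}\|\Pi^{\sf EPR}_{\reg A,\reg S^{(\ell)}_{Y,i}}\,UE^R_{\ell,r}\ket{\psi_{\ell,r}}\|_2^2$. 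It therefore suffices to show $\|\Pi^{\sf EPR}_{\reg A,\reg S^{(\ell)}_{Y,i}}\,UE^R_{\ell,r}\|_{\op}^2\le(r+1)/N^2$ for each $i$; summing over $i\in[\ell]$, then over $\ell,r$ using $\ell\le t$, $r+1\le t+1$, $\sum_{\ell,r}\|\psi_{\ell,r}\|_2^2=1$, yields $\|E^{L,\dagger}UE^R\Pi_{\le t}\|_{\op}^2\le t(t+1)/N^2$. For the per‑block bound, rewrite $\|\Pi^{\sf EPR}_{\reg A,\reg S^{(\ell)}_{Y,i}}\,UE^R_{\ell,r}\|_{\op}^2=\|\Pi^\Phi\,E^R_{\ell,r}E^{R,\dagger}_{\ell,r}\,\Pi^\Phi\|_{\op}$, where $\Pi^\Phi:=U^\dagger\Pi^{\sf EPR}_{\reg A,\reg S^{(\ell)}_{Y,i}}U$ is the (rank‑one‑times‑identity) projector onto a maximally entangled state $\ket\Phi$ of $\reg A$ with $\reg S^{(\ell)}_{Y,i}$. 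Plugging in the second identity of \Cref{fact:operator_E} and applying the triangle inequality over its $r+1$ summands, everything reduces to the single estimate $\bigl\|\Pi^\Phi\bigl(\Pi_{\ell,\reg S}\otimes\Pi_{r+1,\reg T}\Pi^{\sf EPR}_{\reg A,\reg T^{(r+1)}_{X,j}}\Pi_{r+1,\reg T}\bigr)\Pi^\Phi\bigr\|_{\op}\le 1/N^2$. This is where monogamy enters: on the image of $\Pi^\Phi$, applying the $\reg A$–$\reg T$ block makes $\reg A$ maximally entangled with a $\reg T$‑register — its reduced operator becomes $\preceq\tfrac1N\id_{\reg A}$ and $\reg A$ is decoupled from $\reg S$, with $\Pi_{\ell,\reg S}$ only replacing the $\reg S$‑marginal by another subnormalized state — so the resulting overlap with the maximally entangled state $\ket\Phi$ between $\reg A$ and $\reg S^{(\ell)}_{Y,i}$, obtained by contracting $\reg A$ out of two maximally entangled states whose register pairs share only $\reg A$, from both sides, is at most $1/N^2$.

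The main obstacle I anticipate is pinning down exactly this last estimate. The $1/N^2$ (rather than the easy $1/N$) is what separates the target $\sqrt{t(t+1)}/N$ from the weaker $\sqrt{t(t+1)/N}$, and getting it requires carefully tracking how the symmetrizing/size projectors $\Pi_{\ell,\reg S},\Pi_{r+1,\reg T}$ of \Cref{fact:operator_E} interact with the two EPR projectors — in particular checking that inserting $\Pi_{r+1,\reg T}$ after $\Pi^{\sf EPR}_{\reg A,\reg T^{(r+1)}_{X,j}}$ cannot reintroduce any $\reg A$–$\reg S$ correlation (it acts on $\reg T$ alone, hence cannot push $\reg A$'s reduced operator above $\tfrac1N\id_{\reg A}$), and that $\Pi_{\ell,\reg S}$ keeps the $\reg S$‑side a legitimate subnormalized state. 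A secondary, harmless point is that the $r+1$ terms in $E^RE^{R,\dagger}$ are not mutually orthogonal, so one genuinely uses the triangle inequality term by term, which is what produces the linear‑in‑$t$ factor. The remaining ingredients — the size grading, the orthogonality of sectors, and the identity $\|M\|_{\op}^2=\|MM^\dagger\|_{\op}$ — are routine.
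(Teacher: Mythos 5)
Your proposal is correct, and it is essentially the same argument as the paper's: block-decompose by relation sizes (your orthogonality-of-sectors step is exactly the paper's invocation of \Cref{lem:op_norm_orthogonal}), insert the first identity of \Cref{fact:operator_E} for $E^{L}E^{L,\dagger}$, rewrite the per-$i$ contribution as $\|\Pi^{\Phi}E^{R}E^{R,\dagger}\Pi^{\Phi}\|_{\op}$ where $\Pi^{\Phi}=U^{\dagger}\Pi^{\sf EPR}_{\reg A,\reg S^{(\ell)}_{Y,i}}U$, insert the second identity, triangle over $j$, and bound each $(i,j)$-term by $1/N^{2}$ via a monogamy-type estimate. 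The counting $\ell\cdot(r+1)\le t(t+1)$ matches the paper's $(\ell+1)\cdot r\le(t+1)t$.

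The obstacle you flag at the end is benign, and I would urge you to just carry out the observation you already made rather than leaving it as a worry. Since $\Pi_{r+1,\reg T}\Pi^{\sf EPR}_{\reg A,\reg T^{(r+1)}_{X,j}}\Pi_{r+1,\reg T}\preceq\Pi^{\sf EPR}_{\reg A,\reg T^{(r+1)}_{X,j}}$ and $\Pi_{\ell,\reg S}\otimes M\preceq I_{\reg S}\otimes M$ for any PSD $M$ on the complementary registers, and since $X\mapsto\Pi^{\Phi}X\Pi^{\Phi}$ is operator-monotone, each term reduces to $\bigl\|\Pi^{\Phi}_{\reg A,\reg S^{(\ell)}_{Y,i}}\Pi^{\sf EPR}_{\reg A,\reg T^{(r+1)}_{X,j}}\Pi^{\Phi}_{\reg A,\reg S^{(\ell)}_{Y,i}}\bigr\|_{\op}$ with no size projectors at all. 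A one-line computation on product bases ($\Pi_{1}\Pi_{2}\Pi_{1}=\tfrac{1}{N^{2}}\Pi_{1}$ when $\Pi_{1},\Pi_{2}$ are EPR projectors of $\reg A$ against disjoint registers) gives exactly $1/N^{2}$, and conjugation by $U$ on $\reg A$ only rotates the maximally entangled state, preserving this. The paper instead carries the size projectors through explicitly and uses the ricochet identity $\Pi^{\sf EPR}_{\reg A,\reg B}\,(O_{\reg A}\otimes I)=\Pi^{\sf EPR}_{\reg A,\reg B}\,(I\otimes O^{\mathsf T}_{\reg B})$ to push $U^{\dagger}_{\reg A}\otimes\Pi_{\ell+1,\reg S}$ onto the $\reg T$- and $\reg S$-sides, extracting the two factors of $1/N$ directly; your PSD-dominance shortcut yields the same constant with less bookkeeping.
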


\begin{proof}
By~\Cref{lem:op_norm_orthogonal}, we have
\begin{align}
\|E^{R,\dagger} U E^L \Pi_{\leq t}\|^2_{\op} 
= \max_{0 \le \ell, r \le t} \|E^{R,\dagger}_{\ell+1,r-1} U E^L_{\ell,r}\|^2_{\op}.
\label{eq:EL_ER_1}
\end{align}
We next bound the squared operator norm in the following. Using the fact $\|A\|^2_{\op} = \|A A^\dagger\|_{\op} = \|A^\dagger A\|_{\op}$, we have
\begin{align}
\eqref{eq:EL_ER_1} =
& \max_{0 \le \ell, r \le t} 
\|E^{R,\dagger}_{\ell+1,r-1} U E^L_{\ell,r} \cdot E^{L,\dagger}_{\ell,r} U^\dagger E^R_{\ell+1,r} \|_{\op} 
\notag \\
\le & \max_{0 \le \ell, r \le t} 
\sum_{i \in [\ell+1]} 
\bigg\|
E^{R,\dagger}_{\ell+1,r-1} \cdot
U \cdot 
\bigg( \Pi_{\ell+1,\reg{S}} \cdot
\Pi^{\sf EPR}_{\reg{A},\reg{S}^{(\ell+1)}_{\reg{Y},i}} \otimes \Pi_{r,\reg{T}} \cdot 
\Pi_{\ell+1,\reg{S}} \bigg) \cdot
U^\dagger \cdot
E^R_{\ell+1,r-1} \bigg\|_{\op}.
\label{eq:EL_ER_2}
\end{align}
Since $U$ only acts non-trivially on the register $\reg{A}$, we simplify the notation as follows:
\begin{align}
\eqref{eq:EL_ER_2}
= \max_{0 \le \ell, r \le t} 
\sum_{i \in [\ell+1]} 
\bigg\|
E^{R,\dagger}_{\ell+1,r-1} \cdot
U_{\reg{A}} \otimes \Pi_{\ell+1,\reg{S}} \cdot
\Pi^{\sf EPR}_{\reg{A},\reg{S}^{(\ell+1)}_{\reg{Y},i}} \otimes \Pi_{r,\reg{T}} \cdot 
U_{\reg{A}}^\dagger \otimes \Pi_{\ell+1,\reg{S}} \cdot 
E^R_{\ell+1,r-1} 
\bigg\|_{\op}.
\label{eq:EL_ER_3}
\end{align}
Using the facts that $\|A A^\dagger\|_{\op} = \|A^\dagger A\|_{\op}$ and $\Pi^{\sf EPR}_{\reg{A},\reg{S}^{(\ell+1)}_{\reg{Y},i}} \otimes \Pi_{r,\reg{T}}$ is a projector, we have
\begin{align}
& \eqref{eq:EL_ER_3} 
\notag \\
& = \max_{0 \le \ell, r \le t} 
\sum_{i \in [\ell+1]} 
\bigg\|
\Pi^{\sf EPR}_{\reg{A},\reg{S}^{(\ell+1)}_{\reg{Y},i}} \otimes \Pi_{r,\reg{T}} \cdot 
U_{\reg{A}}^\dagger \otimes \Pi_{\ell+1,\reg{S}} \cdot 
E^R_{\ell+1,r-1} \cdot
E^{R,\dagger}_{\ell+1,r-1} \cdot
U_{\reg{A}} \otimes \Pi_{\ell+1,\reg{S}} \cdot
\Pi^{\sf EPR}_{\reg{A},\reg{S}^{(\ell+1)}_{\reg{Y},i}} \otimes \Pi_{r,\reg{T}}
\bigg\|_{\op} 
\notag \\
& \le \max_{0 \le \ell, r \le t} 
\sum_{\substack{i \in [\ell+1] \\ j \in [r]}}
\bigg\|
\Pi^{\sf EPR}_{\reg{A},\reg{S}^{(\ell+1)}_{\reg{Y},i}} \otimes \Pi_{r,\reg{T}} \cdot 
U_{\reg{A}}^\dagger \otimes \Pi_{\ell+1,\reg{S}} \cdot
\Pi^{\sf EPR}_{\reg{A},\reg{T}^{(r)}_{\reg{X},j}} \cdot
U_{\reg{A}} \otimes \Pi_{\ell+1,\reg{S}} \cdot
\Pi^{\sf EPR}_{\reg{A},\reg{S}^{(\ell+1)}_{\reg{Y},i}} \otimes \Pi_{r,\reg{T}}
\bigg\|_{\op}
\label{eq:EL_ER_4}
\end{align}
Thus, we have
\begin{align*}
\eqref{eq:EL_ER_4} 
= & \max_{0 \le \ell, r \le t} 
\sum_{\substack{i \in [\ell+1] \\ j \in [r]}} 
\frac{1}{N^2}
\bigg\|
\Pi^{\sf EPR}_{\reg{A},\reg{S}^{(\ell+1)}_{\reg{Y},i}} 
\otimes 
\bigg(
\Pi_{r,\reg{T}} \cdot 
U_{\reg{T}^{(r)}_{\reg{X},j}}^\dagger \cdot
\Pi_{\ell+1,\reg{T}^{(r)}_{\reg{X},j}, \reg{S} \setminus \reg{S}^{(\ell+1)}_{\reg{Y},i}} \cdot
U_{\reg{T}^{(r)}_{\reg{X},j}} \cdot
\Pi_{r,\reg{T}}
\bigg)
\bigg\|_{\op} \\
\leq & \max_{0 \le \ell, r \le t} \frac{(\ell+1)r}{N^2} 
= \frac{(t+1)t}{N^2}.
\end{align*}
This completes the proof.
\end{proof}

\begin{lemma}[\Cref{lem:FLdagger_U_FR:zero}, restated]   
For any integer $t \ge 0$ and any unitary $U$ acting non-trivially on the register $\reg{A}$,
\begin{align*}
\|F^{L,\dagger} U F^R \Pi_{\leq t}\|_{\op} \leq 3\sqrt{t(t+2)/N}
\quad \text{and} \quad
\|F^{R,\dagger} U F^L \Pi_{\leq t}\|_{\op} \leq 3\sqrt{t(t+2)/N}\, .
\end{align*}
\end{lemma}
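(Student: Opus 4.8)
The plan is to reduce \Cref{lem:FLdagger_U_FR:zero} to \Cref{lem:ELdagger_U_ER:zero}, which is already proved for the operators $E^L, E^R$ guaranteed by \Cref{fact:operator_E}. The strategy is a three-step triangle-inequality argument. First I would use \Cref{lem:FL_VL_FR_VR_close} and \Cref{fact:operator_E} to establish that $F^L$ is close to $E^L$ on the relevant subspace: by the triangle inequality,
\[
\|(F^L - E^L)\Pi_{\le t}\|_{\op} \le \|(F^L - V^L)\Pi_{\le t}\|_{\op} + \|(V^L - E^L)\Pi_{\le t}\|_{\op} \le 2\sqrt{t(t+2)/N},
\]
and symmetrically for $F^R$; taking adjoints and accounting for the fact that $F^L$ raises the relation size by one (so that $\Pi_{\le t}(F^L - E^L) = (F^L - E^L)\Pi_{\le t-1}$, cf.\ the argument in \Cref{cor:FL_dagger_VL_dagger_FR_dagger_VR_dagger_close}), one gets the matching bound $\|(F^{L,\dagger} - E^{L,\dagger})\Pi_{\le t}\|_{\op} \le 2\sqrt{t(t+2)/N}$.

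Next I would insert $E^L, E^R$ into the target expression and split via the triangle inequality:
\begin{align*}
\|F^{L,\dagger} U F^R \Pi_{\le t}\|_{\op}
&\le \|(F^{L,\dagger} - E^{L,\dagger}) U F^R \Pi_{\le t}\|_{\op}
+ \|E^{L,\dagger} U (F^R - E^R) \Pi_{\le t}\|_{\op}
+ \|E^{L,\dagger} U E^R \Pi_{\le t}\|_{\op}.
\end{align*}
For the first term, $U$ is unitary and $F^R$ is a contraction (and $F^R \Pi_{\le t}$ has image in $\Pi_{\le t+1}$), so it is bounded by $\|(F^{L,\dagger} - E^{L,\dagger})\Pi_{\le t+1}\|_{\op} \le 2\sqrt{(t+1)(t+3)/N}$; for the second term, $E^{L,\dagger}$ and $U$ are contractions, giving $\|(F^R - E^R)\Pi_{\le t}\|_{\op} \le 2\sqrt{t(t+2)/N}$; and the third term is $\le \sqrt{t(t+1)}/N$ by \Cref{lem:ELdagger_U_ER:zero}. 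Summing these and absorbing constants yields $\|F^{L,\dagger} U F^R \Pi_{\le t}\|_{\op} \le 3\sqrt{t(t+2)/N}$ (the $\sqrt{t(t+1)}/N$ term is lower order, and the two $\approx 2\sqrt{t(t+2)/N}$ terms must be bundled carefully so the final constant is exactly $3$; one may need to be slightly more careful with the index shift, e.g.\ noting that for the first term the relevant projector is really $\Pi_{\le t+1}$, or alternatively noting $F^R\Pi_{\le t}$ lands in $\Pi_{\le t+1}$ but $F^{L,\dagger}$ does not raise size, so a cleaner route bounds the first term by $\|F^{L,\dagger} - E^{L,\dagger}\|_{\op}$ acting after — I'd double-check that the constants genuinely close at $3$ and, if not, state the bound with whatever small constant comes out, since the lemma only needs $O(\sqrt{t(t+2)/N})$). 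The second inequality, $\|F^{R,\dagger} U F^L \Pi_{\le t}\|_{\op} \le 3\sqrt{t(t+2)/N}$, follows by the symmetric argument, exchanging the roles of $L$ and $R$ and using the corresponding half of \Cref{lem:ELdagger_U_ER:zero}.

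The main obstacle I anticipate is purely bookkeeping: tracking the size-truncation projectors $\Pi_{\le t}$ correctly through the composition $F^{L,\dagger} U F^R$, since $F^R$ increases the relation size on $\reg{T}$ by one while $F^{L,\dagger}$ decreases the size on $\reg{S}$ by one and $U$ acts only on $\reg{A}$, so different factors "see" different truncation levels. Getting the constant to be exactly $3$ (rather than, say, $5$) requires either a slightly sharper accounting — e.g.\ observing that in the first term one can move $\Pi_{\le t}$ past $F^R$ to get $\Pi_{\le t+1}$ and that $\sqrt{(t+1)(t+3)/N} \le \sqrt{(t+2)(t+4)/N}$, then checking whether $2+2$ bundles to $3$ via a more careful estimate, or absorbing the $E^L E^R$ cross-term which is genuinely $O(1/N)$ and hence dominated. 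If the honest constant turns out to exceed $3$, the cleanest fix is to observe that \Cref{lem:ELdagger_U_ER:zero} gives a much smaller $\sqrt{t(t+1)}/N$ bound and redistribute, or simply note the statement tolerates any absolute constant; I would not expect any conceptual difficulty beyond this.
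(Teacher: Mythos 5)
Your proposal matches the paper's own proof, which is exactly the one-liner ``It immediately follows from~\Cref{lem:FL_VL_FR_VR_close}, \Cref{fact:operator_E}, \Cref{lem:ELdagger_U_ER:zero} together with the triangle inequality''---i.e.\ chain $F^L \approx V^L \approx E^L$ (and likewise for $R$), insert $E^L, E^R$, and invoke the bound on $E^{L,\dagger} U E^R$. Your caveat about the numerical constant is well taken: a naive four- or five-term triangle decomposition, each contributing $\approx\sqrt{t(t+2)/N}$, sums to more than $3\sqrt{t(t+2)/N}$, and the paper does not supply the extra bookkeeping that would tighten it to exactly $3$; since the lemma is only ever invoked as an $O(\sqrt{t(t+2)/N})$ bound, this does not affect anything downstream, and the same slack is present in the paper's own terse proof.
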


\begin{proof}
It immediately follows from~\Cref{lem:FL_VL_FR_VR_close,fact:operator_E,lem:ELdagger_U_ER:zero} 
together with the triangle inequality.
\end{proof}

\begin{lemma}[\Cref{lem:punc_S}, restated]
Let $\set{\cP_\tau}_\tau$ be collection of sets where the index $\tau$ ranges over $(y \in [N], L_1 \in \RIdist, R_1 \in \RDdist, L_2 \in \RIdist, R_2 \in \RDdist)$ and $\cP_\tau \subseteq [N]^3 \times [N]^{|L_2|} \times [N]^{|R_2|}$. Define the operator 
\begin{align*}
\Ospru^{\bullet}
& \colon \ket{y}_{\reg{A}} \ket{L_1}_{\reg{S}_1} \ket{R_1}_{\reg{T}_1} \ket{L_2}_{\reg{S}_2} \ket{R_2}_{\reg{T}_2} \\
& \mapsto 
\ket{y}_{\reg{A}} \;
\frac{1}{\sqrt{N^{|L_2|+|R_2|+3}}}
\sum_{   \substack{
    (\bfk,\bfz) \in \Sspru\left(\substack{L_1,L_2 \\ R_1,R_2}\right) \colon \\
    (\bfk,\bfz) \notin \cP_{y,L_1,R_1,L_2,R_2}
}   }
\ket{L^{(k_1,k_3)}_1 \cup L_2^{(k_2,\vec{z}_L)}}_{\reg{S}} 
\ket{R^{(k_1,k_3)}_1 \cup R_2^{(k_2,\vec{z}_R)}}_{\reg{T}}
\ket{k}_{\reg{K}}.
\end{align*}
If there exists $\delta \ge 0$ such that for any $\tau$,
\begin{align*}
    \frac{\big|\cP_{y,L_1,R_1,L_2,R_2} \cap \Sspru\left(\substack{L_1,L_2 \\ R_1,R_2}\right)\big|}{N^{|L_2|+|R_2|+3}} \le \delta,
\end{align*}
then
\begin{align*}
    \|\Ospru^{\bullet} - \Ospru\|_{\op} = \sqrt{\delta}.
\end{align*}
\end{lemma}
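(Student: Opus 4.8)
The plan is to apply \Cref{lem:op_norm_orthogonal}: I will show that the operator $\Ospru^{\bullet}-\Ospru$ sends the orthonormal basis $\set{\ket{y}_{\reg A}\ket{L_1}_{\reg S_1}\ket{R_1}_{\reg T_1}\ket{L_2}_{\reg S_2}\ket{R_2}_{\reg T_2}}_{(y,L_1,R_1,L_2,R_2)}$ (with $L_1,L_2\in\Rinj$, $R_1,R_2\in\RDdist$) to a family of pairwise orthogonal vectors, each of norm at most $\sqrt{\delta}$; the claimed operator norm bound then follows immediately, since on every other basis vector both $\Ospru$ and $\Ospru^{\bullet}$ vanish.

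First I would compute the image. By \Cref{lem:Ospru} and the definition of $\Ospru^{\bullet}$, the difference acts on such a basis vector (writing $\tau=(y,L_1,R_1,L_2,R_2)$) as
\begin{align*}
(\Ospru^{\bullet}-\Ospru)\,\ket{y}_{\reg A}\ket{L_1}_{\reg S_1}\ket{R_1}_{\reg T_1}\ket{L_2}_{\reg S_2}\ket{R_2}_{\reg T_2}
= -\,\ket{y}_{\reg A}\,\frac{1}{\sqrt{N^{|L_2|+|R_2|+3}}}\sum_{\substack{(\bfk,\bfz)\in\Sspru\left(\substack{L_1,L_2 \\ R_1,R_2}\right) \\ (\bfk,\bfz)\in\cP_{\tau}}}\ket{L_1^{(k_1,k_3)}\cup L_2^{(k_2,\vec{z}_L)}}_{\reg S}\ket{R_1^{(k_1,k_3)}\cup R_2^{(k_2,\vec{z}_R)}}_{\reg T}\ket{\bfk}_{\reg K}.
\end{align*}
By \Cref{lem:good_tuples_satisfy_conditions} every good tuple is decodable, so \Cref{lem:D_is_partial_isometry} shows that $(\bfk,\bfz)\mapsto(L_1^{(k_1,k_3)}\cup L_2^{(k_2,\vec{z}_L)},\,R_1^{(k_1,k_3)}\cup R_2^{(k_2,\vec{z}_R)},\,\bfk)$ is injective on $\Sspru\!\left(\substack{L_1,L_2 \\ R_1,R_2}\right)$; since relation states indexed by distinct relations are orthonormal and the $\ket{\bfk}$'s are orthonormal, the summands are pairwise orthonormal. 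Hence the image has squared norm $\big|\cP_{\tau}\cap\Sspru\!\left(\substack{L_1,L_2 \\ R_1,R_2}\right)\big|\big/N^{|L_2|+|R_2|+3}\le\delta$.

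Next I would establish pairwise orthogonality of these images across distinct $\tau$, by post-composing with the partial isometry $\Osprus$ of \Cref{def:Ospru_split}. Each summand above lies in $\Supp(\Dec)$ (again by \Cref{lem:good_tuples_satisfy_conditions}), so $\Osprus$ acts there as the inverse $\Enc$ of $\Dec$; by \Cref{eq:Osplit_on_augmented_relation} it sends the image of $(\Ospru^{\bullet}-\Ospru)$ on input $\tau$ to $-\,\ket{y}_{\reg A}\,\frac{1}{\sqrt{N^{|L_2|+|R_2|+3}}}\sum_{(\bfk,\bfz)}\ket{L_1}_{\reg S_1}\ket{R_1}_{\reg T_1}\ket{L_2}_{\reg S_2}\ket{R_2}_{\reg T_2}\ket{\bfz}_{\reg Z}\ket{\bfk}_{\reg K}$, which is manifestly orthogonal to the corresponding vector for any $\tau'\neq\tau$ (they differ on at least one of the registers $\reg A,\reg S_1,\reg T_1,\reg S_2,\reg T_2$). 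Since $\Osprus$ is a partial isometry and all vectors involved lie in its domain, it preserves inner products, so the images of $\Ospru^{\bullet}-\Ospru$ are themselves pairwise orthogonal. Then \Cref{lem:op_norm_orthogonal} gives $\|\Ospru^{\bullet}-\Ospru\|_{\op}=\max_{\tau}\|(\Ospru^{\bullet}-\Ospru)\ket{y}_{\reg A}\ket{L_1}_{\reg S_1}\ket{R_1}_{\reg T_1}\ket{L_2}_{\reg S_2}\ket{R_2}_{\reg T_2}\|_2\le\sqrt{\delta}$. The only point requiring care is the cross-$\tau$ orthogonality; the $\Osprus$-post-composition argument handles it cleanly because decodability guarantees the relevant vectors sit in the domain of the partial isometry $\Osprus$, which then undoes the merge and exposes the orthogonal $\ket{L_1}\ket{R_1}\ket{L_2}\ket{R_2}$ labels, so I do not anticipate any genuine obstacle.
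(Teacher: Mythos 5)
Your proof is correct and takes essentially the same approach as the paper's: both rely on post-composing with the partial isometry $\Osprus$ to unmerge the databases and expose orthogonality of the relevant summands, then read off the norm from the cardinality bound on $\cP_\tau\cap\Sspru$. The only presentational difference is that you explicitly route through \Cref{lem:op_norm_orthogonal} (basis-vector-by-basis-vector, with a separate cross-$\tau$ orthogonality step), whereas the paper expands a general state $\ket{\psi}$ and computes the squared norm in one shot; the underlying computation is identical. Note that both your proof and the paper's actually establish $\le\sqrt{\delta}$, not equality as the lemma's statement says, which is clearly a typo in the lemma.
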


\begin{proof}
For any normalized state $\ket{\psi} = \sum_{y,L_1,R_1,L_2,R_2} \alpha_{y,L_1,R_1,L_2,R_2} \ket{y}_{\reg{A}} \ket{L_1}_{\reg{S}_1} \ket{R_1}_{\reg{T}_1} \ket{L_2}_{\reg{S}_2} \ket{R_2}_{\reg{T}_2}$, we have
\begin{align*}
\|(\Ospru^{\bullet} - \Ospru) \ket{\psi}\|_2
= \bigg\|
\sum_{   \substack{
    y,L_1,R_1,L_2,R_2 \\
    (\bfk,\bfz) \in \Sspru\left(\substack{L_1,L_2 \\ R_1,R_2}\right) \colon \\
    (\bfk,\bfz) \in \cP_{y,L_1,R_1,L_2,R_2}
}   }
\frac{\alpha_{y,L_1,R_1,L_2,R_2}}{\sqrt{N^{|L_2|+|R_2|+3}}}
\ket{y}_{\reg{A}}
\ket{L^{(k_1,k_3)}_1 \cup L_2^{(k_2,\vec{z}_L)}}_{\reg{S}} 
\ket{R^{(k_1,k_3)}_1 \cup R_2^{(k_2,\vec{z}_R)}}_{\reg{T}}
\ket{\bfk}_{\reg{K}}
\bigg\|_2.
\end{align*}
By~\Cref{lem:good_tuples_satisfy_conditions}, we can instead apply $\Osprus$ to the state and bound its squared norm as follows:
\begin{align*}
& \bigg\|
\sum_{   \substack{
    y,L_1,R_1,L_2,R_2 \\
    (\bfk,\bfz) \in \Sspru\left(\substack{L_1,L_2 \\ R_1,R_2}\right) \colon \\
    (\bfk,\bfz) \in \cP_{y,L_1,R_1,L_2,R_2}
}   }
\frac{\alpha_{y,L_1,R_1,L_2,R_2}}{\sqrt{N^{|L_2|+|R_2|+3}}}
\ket{y}_{\reg{A}}
\ket{L_1}_{\reg{S}_1} \ket{R_1}_{\reg{T}_1} 
\ket{L_2}_{\reg{S}_2} \ket{R_2}_{\reg{T}_2}
\ket{\vec{z}_L}_{\reg{Z_L}} \ket{\vec{z}_R}_{\reg{Z_R}}
\ket{\bfk}_{\reg{K}}
\bigg\|^2_2 \\
& = \sum_{y,L_1,R_1,L_2,R_2}
\sum_{   \substack{
    (\bfk,\bfz) \in \Sspru\left(\substack{L_1,L_2 \\ R_1,R_2}\right) \colon \\
    (\bfk,\bfz) \in \cP_{y,L_1,R_1,L_2,R_2}
}   }
\bigg|\frac{\alpha_{y,L_1,R_1,L_2,R_2}}{\sqrt{N^{|L_2|+|R_2|+3}}}\bigg|^2 \\
& = \sum_{y,L_1,R_1,L_2,R_2}
|\alpha_{y,L_1,R_1,L_2,R_2}|^2 \cdot \frac{\big|\cP_{y,L_1,R_1,L_2,R_2} \cap \Sspru\left(\substack{L_1,L_2 \\ R_1,R_2}\right)\big|}{N^{|L_2|+|R_2|+3}} \\
& \le \max_{y,L_1,R_1,L_2,R_2} \frac{\big|\cP_{y,L_1,R_1,L_2,R_2} \cap \Sspru\left(\substack{L_1,L_2 \\ R_1,R_2}\right)\big|}{N^{|L_2|+|R_2|+3}} \\
& \le \delta. \tag*{\qedhere}
\end{align*}
\end{proof}

\section{Missing Proofs in~\Cref{sec:spru}}
\label{sec:missing_defs_proofs}

\subsection{Proof of~\Cref{lem:good_tuples_satisfy_conditions}}

\begin{lemma}[\Cref{lem:good_tuples_satisfy_conditions}, restated]
For any $L_1, L_2 \in \Rinj$, $R_1, R_2 \in \RDdist$, every tuple in $\Sspru\left(\substack{L_1,L_2 \\ R_1,R_2}\right)$ satisfies all conditions in~\Cref{lem:conditions_robust_decodability} and is therefore robustly decodable.
\end{lemma}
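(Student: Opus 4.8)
The plan is to verify, one by one, that each of the three conditions in \Cref{lem:conditions_robust_decodability} (Distinctness, Disjointness, No extra $k_2$-correlated pairs) follows from the six conditions in \Cref{def:good_tuples}. Since \Cref{lem:good_tuples_monotonicity} and later results already use this lemma, and the augmented relations are defined explicitly in \Cref{def:aug_relations}, the proof is essentially a bookkeeping exercise translating the XOR-avoidance conditions on $k_1,k_2,k_3,\vec z_L,\vec z_R$ into statements about (in)distinctness and (dis)jointness of domains and images.

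First I would handle \textbf{Distinctness}. The relation $L_1^{(k_1,k_3)} = \set{(x\oplus k_1, y\oplus k_3) : (x,y)\in L_1}$ is the image of $L_1$ under a bijection on each coordinate, so it is automatically $\cI$-distinct because $L_1\in\Rinj$; similarly $R_1^{(k_1,k_3)}\in\RDdist$ since $R_1\in\RDdist$, with no hypotheses needed. For $L_2^{(k_2,\vec z_L)} = \set{(x_i, z_{L,i}), (z_{L,i}\oplus k_2, y_i)}$, $\cI$-distinctness of the second coordinates requires the $z_{L,i}$'s to be mutually distinct (Item~\ref{item:zL_dist}), the $y_i$'s to be mutually distinct (true since $L_2\in\Rinj$), and $\set{\vec z_L}$ disjoint from $\Im(L_2)$ (part of Item~\ref{item:zL}). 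The analogous argument for $R_2^{(k_2,\vec z_R)}$ uses Items~\ref{item:zL_dist} and~\ref{item:zR}.

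Next, \textbf{Disjointness}: $\Im(L_1^{(k_1,k_3)}) = \Im(L_1)\oplus k_3$ and $\Im(L_2^{(k_2,\vec z_L)}) = \set{\vec z_L}\cup\Im(L_2)$. The requirement $(\Im(L_1)\oplus k_3)\cap\Im(L_2) = \varnothing$ is exactly the condition $k_3\notin\Im(L_1)\oplus\Im(L_2)$ from Item~\ref{item:k3}, and $(\Im(L_1)\oplus k_3)\cap\set{\vec z_L}=\varnothing$ is contained in Item~\ref{item:zL}. The $R$-side disjointness works symmetrically via Items~\ref{item:k3} and~\ref{item:zR}, noting $\Dom(R_1^{(k_1,k_3)}) = \Dom(R_1)\oplus k_1$ and $\Dom(R_2^{(k_2,\vec z_R)}) = \set{\vec z_R}\cup\Dom(R_2)$, and $k_1\notin\Dom(R_1)\oplus\Dom(R_2)$ from Item~\ref{item:k1}.

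The main obstacle — and the step requiring the most care — is \textbf{No extra $k_2$-correlated pairs}. By construction the $|L_2|$ designated pairs $\big((x_i,z_{L,i}),(z_{L,i}\oplus k_2, y_i)\big)$ are $k_2$-correlated, so one must show no \emph{further} pair $((x,y),(x',y'))$ in $L_1^{(k_1,k_3)}\cup L_2^{(k_2,\vec z_L)}$ satisfies $x' = y\oplus k_2$. I would case-split on which of the two sub-relations the source $(x,y)$ and target $(x',y')$ lie in. The within-$L_1^{(k_1,k_3)}$ case needs $k_2\notin(\Dom(L_1)\oplus k_1)\oplus(\Im(L_1)\oplus k_3)$; the cross cases (source in $L_1^{(k_1,k_3)}$, target in $L_2^{(k_2,\vec z_L)}$, and vice versa) need $k_2$ to avoid $(\Dom(L_1)\oplus k_1)\oplus\big(\Im(L_2)\cup\set{\vec z_L}\big)$ and similar — but the targets' first coordinates in $L_2^{(k_2,\vec z_L)}$ split into $\set{x_i}=\Dom(L_2)$ and $\set{z_{L,i}\oplus k_2}$, so one must check that a spurious correlation there is ruled out either by Item~\ref{item:k2} or by Item~\ref{item:zL} (the term $((\Dom(L_1)\oplus k_1)\cup\Dom(L_2))\oplus k_2$ appearing in Item~\ref{item:zL} is precisely what prevents $\set{\vec z_L}$ from colliding with these shifted domains). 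The within-$L_2^{(k_2,\vec z_L)}$ extra-correlation case (beyond the $|L_2|$ built-in ones) again reduces to Items~\ref{item:k2} and~\ref{item:zL_dist}/\ref{item:zL}. The $R$-side is symmetric. Once all cases are exhausted, \Cref{lem:conditions_robust_decodability} gives robust decodability, completing the proof. I would present the $L$-side in full and remark that the $R$-side follows by the same argument with the roles of domain and image interchanged.
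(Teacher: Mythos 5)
Your proof is correct and follows the same direct-verification approach as the paper, which itself simply color-codes the six conditions of \Cref{def:good_tuples} according to which of the three properties (distinctness, disjointness, no extra $k_2$-correlated pairs) each enforces. One small slip: for the $\Dom$-side disjointness of $R_1^{(k_1,k_3)}$ and $R_2^{(k_2,\vec z_R)}$ you first write ``via Items~\ref{item:k3} and~\ref{item:zR}'' when the relevant items are \ref{item:k1} and \ref{item:zR}, though you then correctly invoke $k_1\notin\Dom(R_1)\oplus\Dom(R_2)$ from Item~\ref{item:k1} in the same sentence.
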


\begin{proof}[Proof of~\Cref{lem:good_tuples_satisfy_conditions}]
Recall~\Cref{def:good_tuples}. For convenience, we color the conditions according to the properties they enforce in the following definition: \bcolor{distinctness}, \rcolor{disjointness}, and \gcolor{no extra $k_2$-correlated pairs}.
\begin{enumerate}
    \item $k_1 \notin \rcolor{\Big( \Dom(L_1) \oplus \Dom(L_2) \Big)} \cup \rcolor{\Big( \Dom(R_1) \oplus \Dom(R_2) \Big)}$
    \item \(
        k_2 \notin 
        \gcolor{ \Bigg( \bigg( \Big( \Dom(L_1) \oplus k_1 \Big) \cup \Dom(L_2) \bigg) \oplus \bigg( \Big( \Im(L_1) \oplus k_3 \Big) \cup \Im(L_2) \bigg) \Bigg) } \\
        \cup 
        \gcolor{ \Bigg( \bigg( \Big( \Dom(R_1) \oplus k_1 \Big) \cup \Dom(R_2) \bigg) \oplus \bigg( \Big( \Im(R_1) \oplus k_3 \Big) \cup \Im(R_2) \bigg) \Bigg) }
    \)
    \item $k_3 \notin \rcolor{ \Big( \Im(L_1) \oplus \Im(L_2) \Big) } \cup \rcolor{ \Big( \Im(R_1) \oplus \Im(R_2) \Big) }$
    \item $\vec{z}_L\in \bcolor{ [N]^{|L_2|}_{\dist} }$ and $\vec{z}_R\in \bcolor{ [N]^{|R_2|}_{\dist} }$
    \item $\set{\vec{z}_L}$ and $\rcolor{ \Big( \Im(L_1) \oplus k_3 \Big) } \cup \bcolor{ \Im(L_2) } \cup \gcolor{ \Bigg( \bigg( \Big( \Dom(L_1) \oplus k_1 \Big) \cup \Dom(L_2) \bigg) \oplus k_2 \Bigg) }$ are disjoint.
    \item $\set{\vec{z}_R}$ and $\rcolor{ \Big( \Im(R_1) \oplus k_3 \Big) } \cup \bcolor{ \Im(R_2) } \cup \gcolor{ \Bigg( \bigg( \Big( \Dom(R_1) \oplus k_1 \Big) \cup \Dom(R_2) \bigg) \oplus k_2 \Bigg) }$ are disjoint.
\end{enumerate}
Each condition can be verified directly.
\end{proof}

\end{document}